\documentclass[a4paper,UKenglish,cleveref, autoref]{lipics-v2021}


\newcommand{\ERCagreement}{{\begin{minipage}{.56\textwidth}This paper is part of a project that has received funding from the European Research Council (ERC) under the European Union's Horizon 2020 research and innovation programme (grant agreement No 810115 -- {\sc Dynasnet}). \end{minipage}\hfill\begin{minipage}{.33\textwidth}\includegraphics[width=\textwidth]{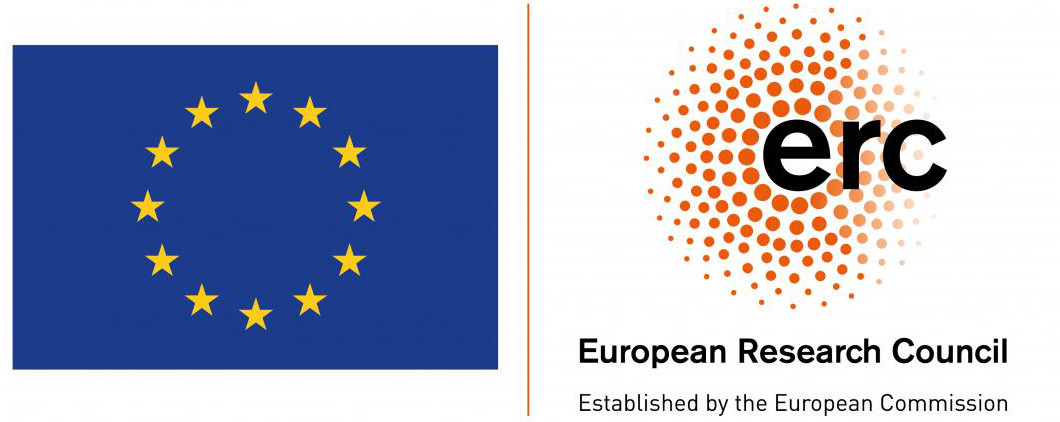}\end{minipage}\hfill}}

\bibliographystyle{plainurl}

\title{Twin-width V: linear minors, modular counting, and matrix multiplication}
\titlerunning{Twin-width V: linear minors, modular counting, and matrix multiplication}


\author{\'{E}douard Bonnet}{Univ Lyon, CNRS, ENS de Lyon, Université Claude Bernard Lyon 1, LIP UMR5668, France \and \url{http://perso.ens-lyon.fr/edouard.bonnet/}}{edouard.bonnet@ens-lyon.fr}{https://orcid.org/0000-0002-1653-5822}{}
\author{Ugo Giocanti}{Univ Lyon, CNRS, ENS de Lyon, Universit\'{e} Claude Bernard Lyon 1, LIP UMR5668, France}{ugo.giocanti@ens-lyon.fr}{}{}
\author{Patrice Ossona de Mendez}{Centre d'Analyse et de Mathématique Sociales CNRS UMR 8557, France \and and Computer Science Institute of Charles University (IUUK), Praha, Czech Republic \and \url{http://cams.ehess.fr/patrice-ossona-de-mendez/} }{pom@ehess.fr}{https://orcid.org/0000-0003-0724-3729}{}
\author{St\'{e}phan Thomass\'{e}}{Univ Lyon, CNRS, ENS de Lyon, Universit\'{e} Claude Bernard Lyon 1, LIP UMR5668, France}{stephan.thomasse@ens-lyon.fr}{}{}


\authorrunning{\'E. Bonnet, U. Giocanti, P. {Ossona de Mendez}, S. Thomassé}

\Copyright{Édouard Bonnet, Ugo Giocanti, Patrice {Ossona de Mendez}, Stéphan Thomassé}

\ccsdesc[100]{Theory of computation → Design and analysis of algorithms → Parameterized complexity and exact algorithms}
\ccsdesc[100]{Theory of computation → Logic → Finite Model Theory}

\keywords{Twin-width, matrices, parity and linear minors, model theory, linear algebra, matrix multiplication, algorithms, computational complexity}

\category{}

\relatedversion{}

\supplement{}

\funding{\ERCagreement This paper was supported by the ANR projects (French National Research Agency) TWIN-WIDTH (ANR-21-CE48-0014-01) and Digraphs (ANR-19-CE48-0013-01).}


\acknowledgements{}

\nolinenumbers 

\hideLIPIcs  

\EventEditors{John Q. Open and Joan R. Access}
\EventNoEds{2}
\EventLongTitle{42nd Conference on Very Important Topics (CVIT 2016)}
\EventShortTitle{CVIT 2016}
\EventAcronym{CVIT}
\EventYear{2016}
\EventDate{December 24--27, 2016}
\EventLocation{Little Whinging, United Kingdom}
\EventLogo{}
\SeriesVolume{42}
\ArticleNo{23}

\usepackage[utf8]{inputenc}  

\usepackage[T1]{fontenc}
\usepackage{lmodern}

\usepackage[colorinlistoftodos,bordercolor=orange,backgroundcolor=orange!20,linecolor=orange,textsize=normalsize]{todonotes}

\usepackage[ruled,vlined,linesnumbered]{algorithm2e}

\usepackage{amsmath}  
\usepackage{amssymb}     
\usepackage{bbm}
\usepackage{tkz-graph}
\usepackage{complexity}

\usepackage{comment}

\usepackage{bm}
\usepackage{hyperref}
\usepackage{mathrsfs}
\usepackage{mathtools} 
\usepackage{paralist}
\usepackage{fixmath}
\usepackage{interval}
\usepackage{dsfont}

\newcommand\pref[1]{($\ref{#1}$)}

\Crefname{figure}{Figure}{Figures}

\crefformat{equation}{\textup{#2(#1)#3}}
\crefrangeformat{equation}{\textup{#3(#1)#4--#5(#2)#6}}
\crefmultiformat{equation}{\textup{#2(#1)#3}}{ and \textup{#2(#1)#3}}
{, \textup{#2(#1)#3}}{, and \textup{#2(#1)#3}}
\crefrangemultiformat{equation}{\textup{#3(#1)#4--#5(#2)#6}}%
{ and \textup{#3(#1)#4--#5(#2)#6}}{, \textup{#3(#1)#4--#5(#2)#6}}{, and \textup{#3(#1)#4--#5(#2)#6}}

\Crefformat{equation}{#2Equation~\textup{(#1)}#3}
\Crefrangeformat{equation}{Equations~\textup{#3(#1)#4--#5(#2)#6}}
\Crefmultiformat{equation}{Equations~\textup{#2(#1)#3}}{ and \textup{#2(#1)#3}}
{, \textup{#2(#1)#3}}{, and \textup{#2(#1)#3}}
\Crefrangemultiformat{equation}{Equations~\textup{#3(#1)#4--#5(#2)#6}}%
{ and \textup{#3(#1)#4--#5(#2)#6}}{, \textup{#3(#1)#4--#5(#2)#6}}{, and \textup{#3(#1)#4--#5(#2)#6}}

\usepackage{xspace}
\usepackage{tikz}

\usetikzlibrary{arrows}
\usetikzlibrary{patterns}
\usetikzlibrary{calc}
\usetikzlibrary{fit}
\usetikzlibrary{shapes}
\usetikzlibrary{positioning}
\usetikzlibrary{math}
\usetikzlibrary{external}

\usepackage{ifthen}

\usepackage[scr=boondox,scrscaled=1.05]{mathalfa}

\makeatletter
\newtheorem*{rep@theorem}{\rep@title}
\newcommand{\newreptheorem}[2]{%
\newenvironment{rep#1}[1]{%
 \def\rep@title{#2 \ref{##1}}%
 \begin{rep@theorem}}%
 {\end{rep@theorem}}}
\makeatother

\newreptheorem{theorem}{Theorem}

\newenvironment{proofofclaim}{\noindent \textsc{Proof of the claim:}}{\hfill$\Diamond$\medskip}

\renewcommand{\geq}{\geqslant}
\renewcommand{\leq}{\leqslant}
\renewcommand{\preceq}{\preccurlyeq}

\newcommand{\tree}{\mathcal{T}}
\newcommand{\bb}{\mathcal{B}}

\newcommand{\CC}{\mathcal C}
\newcommand{\DD}{\mathcal D}

\newcommand{\set}[1]{\ensuremath{\{#1\}}} 
\newcommand{\setof}[2]{\set{#1\mid#2}} 
\newcommand{\from}{\colon} 
\newcommand{\str}[1]{\mathbf{#1}} 
\newcommand{\degr}{\mathrm{deg}}

\DeclareMathOperator{\mt}{mt}

\DeclareMathOperator{\reduct}{\text{\sf Reduct}}

\newcommand*\sg[1]{\left \{ #1 \right \}}



\theoremstyle{definition}

\theoremstyle{remark}

\newcommand{\fomc}{\text{\rm FO+MOD}\xspace}
\newcommand{\fo}{\text{\rm FO}\xspace}
\DeclareMathOperator{\minclos}{Clos_{\text{pm}}}
\DeclareMathOperator{\linmin}{Clos_{\text{lm}}}
\DeclareMathOperator{\tww}{tww}
\newcommand{\Mall}{\mathcal M_{\text{\rm all}}}

\newcommand{\infpar}{\leq_{\text{pm}}}
\newcommand{\suppar}{\geq_{\text{pm}}}
\newcommand{\inflin}{\leq_{\text{lm}}}

\newcommand{\strict}{parity-minor free\xspace}
\newcommand{\lmf}{linear-minor free\xspace}
\newcommand{\lm}{linear minor\xspace}
\newcommand{\lms}{linear minors\xspace}

\newcommand{\pmc}{\textsc{Parity Minor Containment}\xspace}
\newcommand{\imc}{\textsc{Interval Minor Containment}\xspace}
\newcommand{\gmc}{\textsc{$k$-Grid Minor}\xspace}
\newcommand{\mmc}{\textsc{$k$-Mixed Minor}\xspace}

\newcommand{\Oo}{\mathcal O}

\newcommand{\row}{\text{rows}}
\newcommand{\col}{\text{cols}}

\begin{document}

\maketitle

\begin{abstract}
  We continue developing the theory around the twin-width of totally ordered binary structures (or equivalently, matrices over a finite alphabet), initiated in the previous paper of the series.
  We first introduce the notion of parity and linear minors of a matrix, which consists of iteratively replacing consecutive rows or consecutive columns with a linear combination of them.
  We show that a matrix class (i.e., a set of matrices closed under taking submatrices) has bounded twin-width if and only if its linear-minor closure does not contain all matrices.
  We observe that the fixed-parameter tractable (\FPT) algorithm for first-order model checking on structures given with an $\Oo(1)$-sequence (certificate of bounded twin-width) and the fact that first-order transductions of bounded twin-width classes have bounded twin-width, both established in \emph{Twin-width~I}, extend to first-order logic with modular counting quantifiers.
  We make explicit a win-win argument obtained as a by-product of \emph{Twin-width~IV}, and somewhat similar to bidimensionality, that we call rank-bidimensionality.
  This generalizes the seminal work of Guillemot and Marx [SODA '14], which builds on the Marcus-Tardos theorem [JCTA '04].
  It works on general matrices (not only on classes of bounded twin-width) and, for example, yields \FPT~algorithms deciding if a small matrix is a parity or a linear minor of another matrix given in input, or exactly computing the grid or mixed number of a given matrix (i.e., the maximum integer $k$ such that the row set and the column set of the matrix can be partitioned into $k$ intervals, with each of the $k^2$ defined cells containing a non-zero entry, or two distinct rows and two distinct columns, respectively).

  Armed with the above-mentioned extension to modular counting, we show that the twin-width of the product of two conformal matrices $A, B$ over a finite field is bounded by a function of the twin-width of $A$, of $B$, and of the size of the field.
  Furthermore, if $A$ and $B$ are $n \times n$ matrices of twin-width~$d$ over $\mathbb F_q$, we show that $AB$ can be computed in time $\Oo_{d,q}(n^2 \log n)$.
  
  We finally present an \emph{ad hoc} algorithm to efficiently multiply two matrices of bounded twin-width, with a single-exponential dependence in the twin-width bound.
  More precisely, pipelined to observations and results of Pilipczuk et al.~[STACS~'22], we obtain the following.
  If the inputs are given in a compact tree-like form (witnessing twin-width at most~$d$), called twin-decomposition of width~$d$, then two $n \times n$ matrices $A, B$ over $\mathbb F_2$ can be multiplied in time $4^{d+o(d)}n$, in the sense that a twin-decomposition of their product $AB$, with width $2^{d+o(d)}$, is output within that time, and each entry of $AB$ can be queried in time $\Oo_d(\log \log n)$.
  Furthermore, for every $\varepsilon > 0$, the query time can be brought to constant time $\Oo(1/\varepsilon)$ if the running time is increased to near-linear $4^{d+o(d)}n^{1+\varepsilon}$.
  Notably, the running time is sublinear (essentially square root) in the number of (non-zero) entries.
\end{abstract}

\section{Introduction}\label{sec:intro}


Since its introduction, the treewidth of a graph has proved to be a particularly important concept in graph theory, both in finite model theory~\cite{gradel2007finite}, in algorithmic design (see for instance the textbook of Cygan et al.~\cite[Chapter 7]{complexity}) and in structural analysis (see the Graph Minors series of Robertson and Seymour~\cite{RobertsonS83}).
This invariant is strongly related to the concept of graph minor.
Recall that a minor of a graph is a graph obtained by a succession of edge contractions and vertex or edge deletions.
The treewidth of a graph is monotone with respect to this operation, in the sense that the treewidth of a minor of a graph~$G$ cannot be larger than the treewidth of~$G$.
By a classical theorem by Robertson and Seymour~\cite{ROBERTSON198692}, a class of graphs has bounded treewidth if and only if its \emph{minor closure} (that is, the set of all the  minors of graphs in the class) does not contain all grids.
In particular, a~graph with huge treewidth admits a large square grid as a minor. This result, as well as its subsequent qualitative improvements (See~\cite{ChuzhoyT21}, for instance), is the basis of the so-called \emph{bidimensionality} algorithmic technique, a win-win argument leveraging low treewidth or the existence of a large grid minor~\cite{bidim}.

This paper is the fifth of a series dedicated to a novel invariant of binary structures, the \emph{twin-width} (See~\cref{sec:prelim} for formal definitions).
This invariant appeared to be particularly relevant for the study of \emph{ordered} binary structures, and especially matrices over a finite alphabet \cite{twin-width4}. 
Some of our results will only need the matrix entries to belong to a finite alphabet, while some will require the entries to belong to a~finite field.
A \emph{submatrix} of a matrix is obtained by deleting some rows and columns.
Most of our results concern (infinite) sets of matrices.
In our framework, it will be natural to consider sets of matrices closed under the operation of taking a submatrix.
Sets of matrices with this property are called \emph{matrix classes}, analogously to permutation classes, which are classes of permutations closed under taking subpermutations. 
The alphabet or field being fixed (and having at least two elements), a matrix class is said to be \emph{proper} if it does not include all matrices with entries in the prescribed alphabet or field.

The notion of a \emph{rank Latin division} has been introduced in~\cite{twin-width4} (see definition in \cref{sec:strict}).
It consists of a~regular partition of the rows and columns delimiting blocks that either have constant entries or have full rank, in a globally controlled way, where full-rank blocks draw a~universal permutation (see~\cref{fig:latin}). 
Just as the grids acts as a witness of a large treewidth, the rank Latin divisions witness a large twin-width: a matrix has either small twin-width or has a submatrix with a~large rank Latin division.
This is effective: in \FPT~time, either a contraction sequence of the matrix (witnessing that the twin-width is low) is output or a~large rank Latin division is found in a~submatrix (witnessing that the twin-width is high).

In this paper, we introduce an operation that plays a somewhat analogous role with respect to the twin-width of ordered binary structures that taking a minor plays with treewidth.
Applied to 0,1-matrices, this operation consists of a succession of row or column deletions, and replacements of two consecutive rows or columns by their entry-wise sum (modulo $2$).  Any 0,1-matrix obtained this way is a  \emph{parity minor} of the original matrix. 
More generally, when applied to  matrices over a finite field $\mathbb F_p$, this operation consists of~a succession of replacements of two consecutive rows or columns by a~linear combination of these (over $\mathbb F_p$), and any matrix over $\mathbb F_p$ obtained this way is a~\emph{linear minor} of the original matrix. 
We say that a matrix class \emph{excludes} a matrix $M$ as a~linear minor if $M$ is not a linear minor of a~matrix in the class.
As expected, every matrix is a linear minor of any matrix having a~sufficiently large rank Latin division, thus classes with unbounded twin-width do not exclude any matrix as a~linear minor (\Cref{lem:strict-01}). 
It appears that this necessary condition is also sufficient.
 
 \begin{theorem}\label{thm:intro-pm}
 	A matrix class  over a finite field has bounded twin-width if and only if it~excludes some matrix as a linear minor.
 \end{theorem}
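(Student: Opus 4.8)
I would prove the two directions separately. The ``only if'' direction is essentially \Cref{lem:strict-01} (modulo a quantitative strengthening): if a class $\mathcal C$ has bounded twin-width, then by the grid-like characterization of twin-width via rank Latin divisions from \emph{Twin-width~IV}, no matrix in $\mathcal C$ (indeed no submatrix of a matrix in $\mathcal C$) contains an arbitrarily large rank Latin division; since a sufficiently large rank Latin division produces every fixed matrix as a linear minor, $\mathcal C$ must exclude some matrix $M$ as a linear minor (e.g.\ take $M$ large enough that containing $M$ as a linear minor would force an overly large rank Latin division somewhere). The real content is the ``if'' direction: assuming $\mathcal C$ excludes some fixed matrix $M$ as a linear minor, we must bound $\tww(A)$ uniformly over $A \in \mathcal C$.

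For the ``if'' direction I would argue by contraposition: suppose $\tww$ is unbounded on $\mathcal C$. Then, along a sequence of matrices of growing twin-width, the effective rank-bidimensionality machinery (the win-win argument made explicit in this paper, built on Marcus--Tardos and the Guillemot--Marx approach) yields, inside submatrices of members of $\mathcal C$, rank Latin divisions of unbounded order. The key combinatorial step is then: \emph{a rank Latin division of order $k$ admits, as a linear minor, every matrix over $\mathbb F_q$ of size roughly $f(k)$ for some $f \to \infty$}. To see this, one uses that a rank Latin division delimits, along a universal-permutation pattern, many disjoint full-rank blocks over $\mathbb F_q$; by iteratively taking $\mathbb F_q$-linear combinations of consecutive rows within such a block one can realize, in the merged ``super-row,'' any prescribed $\mathbb F_q$-vector on a chosen set of coordinates (this is just that a full-rank matrix has all vectors in its row space once we are over a field, combined with the fact that linear-minor contractions only require \emph{consecutive} rows/columns, which the interval structure of a Latin division supplies). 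Doing this in parallel across the $k$ row-intervals and $k$ column-intervals of the division, and using the universality of the permutation to avoid interference, lets us carve out an arbitrary target matrix. Hence $\mathcal C$ does not exclude any matrix as a linear minor, completing the contrapositive.

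**Main obstacle.** The delicate point is the interaction between the \emph{consecutiveness} constraint in the definition of linear minor and the \emph{global} combinatorial structure of a rank Latin division. Replacing ``two consecutive rows by a linear combination'' is weaker than arbitrary Gaussian elimination, so realizing a prescribed target entry-by-entry requires care: one must show that within each block one can reach the desired row-space vectors using only a sequence of adjacent-row merges (which is fine, since after merging an initial segment the result sits consecutively with the next row), and crucially that the zero/constant regions between the full-rank blocks do not obstruct the construction — here one exploits that the off-pattern blocks of a rank Latin division are constant, so after the appropriate contractions they collapse to controlled rows/columns that can be deleted or absorbed. Managing this bookkeeping uniformly — so that order $k$ of the Latin division translates into a target size $f(k) \to \infty$ independent of $q$ beyond an overall factor — is the technical heart; the rest is assembling the effective rank-bidimensionality statement (Marcus--Tardos $+$ the extraction of a large rank Latin division from large twin-width, already available from \emph{Twin-width~IV}) and checking that all steps stay within a proper subclass of the class of all matrices.
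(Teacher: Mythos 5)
Your sketch of the direction ``linear-minor free $\Rightarrow$ bounded twin-width'' (by contraposition: unbounded twin-width yields arbitrarily large rank Latin divisions, and a rank Latin division of order $n$ realizes every $n\times n$ matrix as a linear minor, using the full-rank blocks along the universal permutation and the constant off-pattern blocks) is exactly the paper's argument (\cref{lem:latin} plus \cref{lem:strict-01}), and the points you flag about consecutiveness are handled essentially as you describe.

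The genuine gap is in the other direction, bounded twin-width $\Rightarrow$ excludes some matrix as a linear minor. You propose to deduce it from the absence of large rank Latin divisions: ``take $M$ large enough that containing $M$ as a linear minor would force an overly large rank Latin division somewhere.'' But that implication is precisely the converse of the combinatorial lemma you proved, and it does not follow from it: a rank Latin division is a \emph{sufficient} condition for containing all matrices as linear minors, not a necessary condition for containing a given complicated $M$. To close this direction one must show that taking linear minors cannot blow up twin-width, i.e., that $N \inflin M$ and $\tww(M)\le d$ imply $\tww(N)\le f(d)$ (whence the linear-minor closure of a bounded twin-width class still has bounded twin-width, hence is proper, e.g.\ because it contains only $2^{\Oo(n)}$ matrices of size $n\times n$ rather than all $q^{n^2}$). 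The paper obtains this by exhibiting the linear-minor closure as an \fomc-transduction (\cref{lem:linmin-clos-transduction}) --- modular counting is genuinely needed to express the field-sum of the entries of a cell --- and by extending the transduction-preservation theorem of \emph{Twin-width~I} to \fomc (\cref{thm:fomc-closure}). This model-theoretic component, which the introduction explicitly flags as the crux of \cref{thm:intro-pm}, is entirely absent from your proposal, and without it (or a direct combinatorial substitute, e.g.\ refining a contraction sequence of $M$ by residue classes to get one for $N$) the ``only if'' direction is unproved.
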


However, our proof of \Cref{thm:intro-pm} involves some (finite) model theoretic arguments. 

 From a model theoretical point of view, a matrix over a finite alphabet of size $p$ is  seen as  a structure with two linearly ordered sets of elements, the row and column index sets, and $p$ binary predicates expressing the presence of a particular symbol at a specific entry. 
 The logical formulas we will consider will allow distinguishing row and column indices, comparing indices of a same sort, and testing whether the entry of the matrix defined by two indices contains a given symbol. 

 It appears that the twin-width of ordered structured behaves very nicely with respect to first-order logic and (as we shall see) its modulo-counting extension.
 This situation is reminiscent of the relation of treewidth (and cliquewidth) with monadic second-order logic \cite{Courcelle1} and its modulo-counting extension \cite{Courcelle00}.

 Indeed, it follows from the results proved  in the first paper of the series~\cite{twin-width1}, that 
 first-order model checking (that is: the problem of  deciding whether a first-order (FO) sentence $\varphi$ is satisfied on a structure) is fixed-parameter tractable on matrices over a~finite alphabet, when parametrized by $\varphi$, the size of the alphabet, and the twin-width of the matrix, provided that some so-called \emph{$d$-sequence} witnessing the upper bound on the twin-width is given together with the matrix.
 Here we observe that this result extends to the more expressive first-order logic with modulo-counting (FO+MOD), which is the logic obtained by adding to the standard first-order constructions new quantifiers $\exists^{i[p]}$, 
 where ``$\exists^{i[p]}x\enspace\varphi(x)$'' expresses that the number of witnesses $x$ for the formula $\varphi$ is congruent to $i$ modulo $p$.

Logical formulas also allow defining new structures from an original structure.  This is the essence of the notion of transduction. A \emph{transduction} of binary structures $\mathsf T$ first color the elements of a given binary structure $\mathbf A$ in all possible ways, thus constructing a set of colored structures. Then,  each of these colored structures gives rise to a new binary structure by means of fixed logical formulas, thus constructing a set $\mathsf T(\mathbf A)$ of derived structures, the transduction of $\mathbf A$ by $\mathsf T$.
A set $\mathcal D$ of structures is a transduction of a set $\mathcal C$ of structures if there exists a transduction $\mathsf T$ with $\mathcal D\subseteq \mathsf T(\mathcal C)$.
A set $\mathcal C$ of structures is \emph{monadically dependent} if the set of all finite graphs is not a transduction of $\mathcal C$. (While this actually follows from~\cite{BS1985monadic}, we will take here this characteristic property as a definition of monadic dependence.) Note that it has been recently proved \cite{MonNIP} that, for hereditary classes of structures (like matrix classes)   monadic dependence coincides with the classical notion of dependence (or NIP), which is one of the most fundamental dividing line in model theory; a proof of such a collapse in the particular case of hereditary classes of ordered graphs was previously shown in \cite{twin-width4}.

In~\cite{twin-width1}, it was proved that for every FO-transduction $\mathsf T$ of binary structures, the maximum twin-width of a structure in $\mathsf T(\mathbf A)$ is bounded by a function (depending on $\mathsf T$) of the twin-width of $\mathbf A$. We also extend this result to FO+MOD-transductions (\Cref{thm:fomc-closure}).
As an example, there is a transduction $\mathsf L_p$ such that for every matrix $M$ over $\mathbb F_p$, the set $\mathsf L_p(M)$ is exactly the set of all linear minors of $M$ (\Cref{lem:min-clos-transduction}). Thus, the closure by linear minors of a~matrix class with bounded twin-width also has bounded twin-width, from which \Cref{thm:intro-pm} follows.

 Together with the results established in~\cite{twin-width4}, this leads to the following equivalence, where the equivalence with the properties in bold is proved in the current paper.
 
 \begin{theorem}
 	Given a matrix class $\mathcal M$ over a finite field, the following are equivalent.
 	\begin{compactenum}[$(i)$]
 		\item  $\mathcal M$ has bounded twin-width;
 		\item  \textbf{$\mathcal M$ excludes a linear minor;}
 		\item $\mathcal M$ is monadically dependent;
 		\item   every matrix class that is an \fo-transduction of $\mathcal M$ is proper;
 		\item  \textbf{every  matrix class that is an \fomc-transduction of $\mathcal M$ is proper;}
 		\item $\mathcal M$ is small (i.e. the number of $n\times n$ matrices in $\mathcal M$ is at most $2^{O(n)}$);
 		\item  \textbf{every \fomc-transduction of $\mathcal M$ is small;}
 		\item $\mathcal M$ is subfactorial (i.e. the number of $n\times n$ matrices in $\mathcal M$ is less than $n!$, for sufficiently large $n$).
 	\end{compactenum}
 	Assuming that \FPT~$\neq$ \AW$[*]$, those conditions are further equivalent to:
 	\begin{compactenum}[$(i)$]
\setcounter{enumi}{8}
 		\item {\rm FO}-model checking is {\FPT} on $\mathcal M$;
 		\item \textbf{\fomc-model checking is {\FPT} on $\mathcal M$}.
 	\end{compactenum}
 \end{theorem}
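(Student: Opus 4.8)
The plan is to graft the four new conditions $(ii)$, $(v)$, $(vii)$, $(x)$ onto the chain of equivalences $(i)\Leftrightarrow(iii)\Leftrightarrow(iv)\Leftrightarrow(vi)\Leftrightarrow(viii)$ — and, under \FPT~$\neq$ \AW$[*]$, also $(ix)$ — already established for matrix classes in~\cite{twin-width4}. It therefore suffices to prove $(i)\Rightarrow(ii)\Rightarrow(i)$, $(i)\Rightarrow(v)\Rightarrow(iv)$, $(i)\Rightarrow(vii)\Rightarrow(vi)$ and $(i)\Rightarrow(x)\Rightarrow(ix)$; the three backward arrows landing in $(iv)$, $(vi)$, $(ix)$ are trivial inclusions, since \fo~formulas are \fomc~formulas and the identity map is an \fomc-transduction. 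The equivalence $(i)\Leftrightarrow(ii)$ is exactly \Cref{thm:intro-pm}: for $(ii)\Rightarrow(i)$, a matrix class of unbounded twin-width contains arbitrarily large rank Latin divisions among the submatrices of its members, and every fixed matrix is a linear minor of a large enough rank Latin division, so such a class excludes no linear minor (\Cref{lem:strict-01}); for $(i)\Rightarrow(ii)$, the linear-minor closure of $\mathcal M$ is $\mathsf L_p(\mathcal M)$ for the \fomc-transduction $\mathsf L_p$ of \Cref{lem:min-clos-transduction}, hence has bounded twin-width by \Cref{thm:fomc-closure} and is therefore not the class of all matrices.

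For $(i)\Rightarrow(v)$ and $(i)\Rightarrow(vii)$ I would use the same engine. Assume $\mathcal M$ has bounded twin-width. By \Cref{thm:fomc-closure}, every \fomc-transduction of $\mathcal M$ again has bounded twin-width. Any such transduction that happens to be a matrix class is then proper, because the class of all matrices has unbounded twin-width; this yields $(v)$. And any such transduction, being a structure class of bounded twin-width, is small by~\cite{twin-width2}; this yields $(vii)$. Conversely, $(v)$ applied to \fo-transductions gives $(iv)$, and $(vii)$ applied to the trivial transduction gives $(vi)$, closing both loops through~\cite{twin-width4}.

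For $(i)\Rightarrow(x)$ I would follow the template of the \fo~model-checking result of~\cite{twin-width4}. On a matrix coming from a class of twin-width at most~$d$, the \FPT~twin-width approximation recalled in the introduction either outputs a contraction sequence of width $\Oo_d(1)$ or exhibits a rank Latin division too large to occur in the class; hence it always returns such a sequence, and feeding it to the \fomc~model-checking algorithm — the modular-counting extension of the algorithm of~\cite{twin-width1} announced in the introduction — decides any fixed \fomc~sentence in \FPT~time. The reverse implication $(x)\Rightarrow(ix)$ holds because \fo-model checking is the special case of \fomc-model checking restricted to \fo~sentences, and $(ix)\Rightarrow(i)$, which is where the hypothesis \FPT~$\neq$ \AW$[*]$ enters, is from~\cite{twin-width4}.

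The substantive content therefore sits in three ingredients proved elsewhere in the paper: \Cref{thm:intro-pm}, the transduction-closure statement \Cref{thm:fomc-closure}, and the \fomc~model-checking algorithm. Within the present proof the only remaining care is routine bookkeeping — checking that an \fomc-transduction of a hereditary matrix class is again hereditary, so that ``proper'' and ``small'' are meaningful, and that the approximation routine is applied uniformly across the class and not to one matrix. I expect the real difficulty to lie in the \fomc~extensions themselves (the model-checking algorithm and, behind it, the transduction theorem): modular-counting quantifiers are not expressible in plain \fo, so the local-type / dynamic-programming invariant carried along a contraction sequence must be refined to record counts modulo~$p$, and one has to verify that this refined invariant is preserved under the contraction steps and under the syntactic manipulations used in the transduction argument.
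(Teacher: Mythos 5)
Your proposal is correct and matches the paper's own proof of this statement (\cref{thm:equiv}) essentially step for step: the non-bold items are imported from~\cite{twin-width4}, $(i)\Leftrightarrow(ii)$ is \cref{lem:bdtww-implies-strict} together with \cref{lem:strict-01}, the implications into $(v)$ and $(vii)$ come from \cref{thm:fomc-closure}, the backward arrows are the trivial inclusions you name, and $(x)$ is obtained from the twin-width approximation algorithm of~\cite{twin-width4} combined with \cref{thm:fomc-mc}. The only cosmetic slip is that the transduction realizing the \emph{linear}-minor closure is \cref{lem:linmin-clos-transduction} rather than \cref{lem:min-clos-transduction} (which handles parity minors), but this does not affect the argument.
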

 
We now consider some consequences of these results.
 
 We call \emph{rank-bidimensional} a parameterized problem defined on matrices whenever the presence of a large rank Latin division in a submatrix incurs an (easy) \FPT~algorithm.
 Thus, we get the following.
 
 \begin{theorem}\label{thm:intro-rank-bidim}
 	Every \fomc-definable rank-bidimensional problem is in \FPT.
 \end{theorem}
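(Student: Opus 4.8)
The plan is to prove \Cref{thm:intro-rank-bidim} by a win--win argument mirroring bidimensionality: it pits ``the matrix has small twin-width, hence the problem is solved by model checking'' against ``the matrix has large twin-width, hence a submatrix has a large rank Latin division, hence the problem is solved by hypothesis''. Fix a \fomc-definable rank-bidimensional problem $\Pi$. Unfolding the definitions, \fomc-definability gives, for each value $k$ of the parameter, a \fomc-sentence $\varphi_k$ expressing membership in $\Pi$ of an instance with parameter $k$, with the family $(\varphi_k)_k$ computable from $k$; rank-bidimensionality gives a function $g$ together with an \FPT~algorithm that solves any parameter-$k$ instance whose input matrix has a submatrix carrying a rank Latin division of order at least $g(k)$.

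First I would run the effective rank Latin division procedure of \cite{twin-width4}: there is a function $f$ and an \FPT~algorithm which, on a matrix $M$ over the fixed finite field and an integer $t$, either outputs an $f(t)$-sequence of $M$ (a contraction sequence witnessing $\tww(M)\le f(t)$) or returns a submatrix of $M$ together with a rank Latin division of order at least $t$ inside it. I would call it with $t:=g(k)$. In the second outcome, rank-bidimensionality applies verbatim to the submatrix returned, and its \FPT~algorithm decides the instance.

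In the first outcome we hold an $f(g(k))$-sequence of $M$, and I would feed it to the \fomc-extension of the model-checking result of \cite{twin-width1} observed in this paper: given $M$, a contraction sequence of $M$ of width $w$, and a \fomc-sentence $\psi$, one decides whether $M\models\psi$ in time bounded by a function of $w$, of $|\psi|$, and of the (fixed) field size, times a polynomial in $|M|$. Taking $w=f(g(k))$ and $\psi=\varphi_k$ decides the instance in \FPT~time in $k$. Exactly one outcome occurs, both branches run in \FPT~time, the overall parameter dependence is the pointwise maximum of the functions involved (the constant field size folded into the parameter), and the procedure deciding which branch applies is itself \FPT; hence $\Pi\in\FPT$.

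The delicate points are pure bookkeeping rather than new mathematics. One must feed \cite{twin-width4} exactly the threshold $g(k)$ demanded by rank-bidimensionality; one must check that the twin-width bound $f(g(k))$ returned in the first branch depends on $k$ alone, so that the \fomc-model-checking step is genuinely \FPT; and one must ensure $\varphi_k$ is produced from $k$ effectively. All of these are immediate from the cited statements, so the real content of the theorem is simply the observation that the rank Latin division dichotomy of \cite{twin-width4} and the \fomc-model-checking algorithm compose into a win--win scheme, exactly as the grid-minor theorem and treewidth dynamic programming do in classical bidimensionality.
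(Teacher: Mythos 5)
Your proposal is correct and follows essentially the same win--win argument as the paper: run the effective dichotomy (packaged in the paper as \cref{thm:approx-tww2}, assembled from the rank Latin division machinery of \cite{twin-width4}) with threshold given by the rank-bidimensionality function, then decide via the hypothesis of rank-bidimensionality in the large-division branch and via \fomc model checking (\cref{thm:fomc-mc}) on the returned bounded-width contraction sequence in the other branch. The bookkeeping points you flag (the threshold fed to the dichotomy, the twin-width bound depending only on $k$, and the effectiveness of $\varphi_k$) are handled identically in the paper's proof.
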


 From~\cref{thm:intro-rank-bidim}, we obtain \FPT~algorithms for deciding if a (small) matrix is a \lm (or parity minor) of another matrix, for exactly computing the grid number, mixed number, and grid rank of a matrix (see~\Cref{sec:prelim} for definitions).
 
 \medskip
 
 Next we show that, over a finite field, the square $M^2$ of a matrix $M$ with bounded twin-width has bounded twin-width, by expressing the squaring operation as an \fomc-transduction.
 From the characterization in terms of large rank Latin division of submatrices, it follows that if two matrices $A$ and $B$ have small twin-width, then so does the matrix $\left(\begin{smallmatrix} 0&A \\ B&0 \end{smallmatrix}\right)$. As $\left(\begin{smallmatrix} 0&A \\ B&0 \end{smallmatrix}\right)^2=\left(\begin{smallmatrix} AB&0 \\ 0&BA \end{smallmatrix}\right)$, we deduce:
 
 \begin{theorem}\label{thm:intro-bdtww-product}
 	There is a computable function $f: \mathbb N^2 \to \mathbb N$ such that the following holds.
 	Let $A$ and $B$ be two conformal matrices over a finite field $\mathbb F_q$, both of twin-width at most~$d$.
 	Then the twin-width of the product $AB$ is at most~$f(d,q)$.
 \end{theorem}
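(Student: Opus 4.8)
As sketched above, one route to \Cref{thm:intro-bdtww-product} squares the block matrix $\left(\begin{smallmatrix}0&A\\B&0\end{smallmatrix}\right)$; I would take a close variant that avoids one technical nuisance. Fix an $m\times n$ matrix $A$ and an $n\times p$ matrix $B$ over $\mathbb F_q$, both of twin-width at most $d$, and set $N:=\left(\begin{smallmatrix}A\\ B^{\mathsf T}\end{smallmatrix}\right)$, the $(m+p)\times n$ matrix obtained by stacking $A$ on top of $B^{\mathsf T}$; since transposition preserves twin-width, $\tww(B^{\mathsf T})\le d$. The plan is to prove (a) $\tww(N)\le g(d,q)$ for some computable $g$, and (b) $AB$ is an \fomc-transduction of $N$. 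Combined with the extension to \fomc of the statement that transductions of bounded twin-width structures have bounded twin-width (\Cref{thm:fomc-closure}), (a) and (b) give $\tww(AB)\le F(\tww(N))\le F(g(d,q))=:f(d,q)$, where $F$ is the computable bound attached by \Cref{thm:fomc-closure} to the transduction of (b) (which depends only on $q$); hence $f$ is computable, as required.

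For (a) I would use the rank Latin division characterization of bounded twin-width from \cite{twin-width4} (recalled in \Cref{sec:strict}) together with the pigeonhole argument already outlined for the block matrix. Suppose some submatrix of $N$ carried a rank Latin division of order $k$. Its row parts are intervals of the ordered row set of that submatrix, and the rows coming from $A$ form an initial segment, so at most one row part straddles the $A$-rows/$B^{\mathsf T}$-rows boundary. Hence at least $\ceil{(k-1)/2}$ row parts lie entirely inside the $A$-rows, or at least that many lie entirely inside the $B^{\mathsf T}$-rows; selecting the majority side and, for each selected row part, the column part carrying its unique full-rank block, we extract a rank Latin division of order $\ceil{(k-1)/2}$ inside a submatrix of $A$ (resp.\ of $B^{\mathsf T}$). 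By the characterization, $\tww(A)$ (resp.\ $\tww(B)$) would then be large, contradicting $\tww(A),\tww(B)\le d$ once $k$ exceeds a threshold depending only on $d$ and $q$. Thus $\tww(N)\le g(d,q)$.

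For (b), the rows of $N$ split as the $A$-rows (the first $m$) and the $B^{\mathsf T}$-rows (the last $p$); one has $A_{xy}=N_{xy}$ for an $A$-row $x$ and $B_{yz}=N_{zy}$ for a $B^{\mathsf T}$-row $z$, so for all such $x,z$ one has $(AB)_{xz}=\sum_y A_{xy}B_{yz}=\sum_y N_{xy}N_{zy}$, the sum ranging over the single column index set of $N$ and taken in $\mathbb F_q$. The transduction uses one unary parameter marking the $A$-rows; its output domain is the set of rows of $N$, split by that parameter into the row sort (the $A$-rows, i.e.\ the rows of $AB$) and the column sort (the $B^{\mathsf T}$-rows, i.e.\ the columns of $AB$), each inheriting the order from $N$; and, for $s\in\mathbb F_q$, the entry-relation $E_s(x,z)$ is defined to hold iff $\sum_y N_{xy}N_{zy}=s$. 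This last condition is \fomc-expressible: grouping the summands by the pair $(a,b)\in\mathbb F_q^2$ of values $(N_{xy},N_{zy})$, the sum equals $\sum_{a,b}c_{ab}\cdot(ab)$ with $c_{ab}=\lvert\{y:N_{xy}=a,\ N_{zy}=b\}\rvert$, and since the additive group of $\mathbb F_q$ has exponent $p:=\operatorname{char}(\mathbb F_q)$, this value depends only on the residues $c_{ab}\bmod p$, each captured by $\exists^{r[p]}y\,(N_{xy}=a\wedge N_{zy}=b)$ with $y$ ranging over the column sort. So ``$\sum_y N_{xy}N_{zy}=s$'' is the finite disjunction, over residue tuples $(r_{ab})$ with $\sum_{a,b}r_{ab}\cdot(ab)=s$ in $\mathbb F_q$, of the conjunctions $\bigwedge_{a,b}\exists^{r_{ab}[p]}y\,(N_{xy}=a\wedge N_{zy}=b)$, a legitimate \fomc-formula.

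The reason to pass to $N=\left(\begin{smallmatrix}A\\ B^{\mathsf T}\end{smallmatrix}\right)$ rather than to square $\left(\begin{smallmatrix}0&A\\B&0\end{smallmatrix}\right)$ is exactly the step I expect to be the crux of the naive approach: in $(M^2)_{ij}=\sum_k M_{ik}M_{kj}$ the contracted index $k$ is a column in the first factor and a row in the second, whereas the two-sorted encoding of a matrix keeps its rows and columns disjoint, with no \fomc-definable identification of the $k$-th row with the $k$-th column, so the naive squaring formula is not even well-formed; the squaring route must therefore equip the (square) matrix with extra structure, while the stacked variant above sidesteps the issue, the contracted index $y$ being a column of $N$ on both sides of the product. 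I would expect the remaining points — chiefly that restricting a rank Latin division to a constant fraction of its parts leaves a genuine (smaller) one, so that (a) really holds, together with the bookkeeping of the disjunction over $\mathbb F_q^2$ — to be routine.
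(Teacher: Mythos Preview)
Your proposal is correct and takes a genuinely different route from the paper. The paper's argument squares $M:=\left(\begin{smallmatrix}0&A\\B&0\end{smallmatrix}\right)$, using the direct identity $\tww(M)=\max(\tww(A),\tww(B),2)$ for the analogue of your step~(a) and an \fomc-interpretation $M\mapsto M^2$ for step~(b). You instead stack $N:=\left(\begin{smallmatrix}A\\B^{\mathsf T}\end{smallmatrix}\right)$ and realise $AB$ directly as an \fomc-transduction of $N$, with the contracted index ranging over the single column sort of $N$. Your diagnosis of the squaring route is accurate: under the bipartite row/column encoding fixed in \cref{subsec:logic}, the body $E_j^M(x,z)\wedge E_k^M(z,y)$ of the paper's squaring formula is unsatisfiable (it forces $z$ to be simultaneously a column and a row), and the $i$th-row-to-$i$th-column bijection of a square matrix is not \fomc-definable in that signature; the paper's interpretation therefore tacitly relies on a one-sorted encoding of the square matrix that it does not spell out. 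Your stacking trick sidesteps this entirely. Conversely, the paper's step~(a) is cleaner than yours, being a one-line contraction-sequence identity rather than a division argument.

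One point to tighten in your step~(a): restricting a rank Latin division to an initial block of its fine row parts and their matching column parts does \emph{not} obviously produce a smaller rank Latin division, because the two-level structure (exactly one full-rank cell per coarse cell) need not survive---the selected column parts fall irregularly across the original coarse column blocks. The bound you actually need follows more cheaply from grid rank: in a rank-$k$ division of $N$, at least $\lceil(k-1)/2\rceil$ row parts lie entirely on one side of the $A$/$B^{\mathsf T}$ boundary, and after merging the $k$ column parts into that many consecutive groups one gets a rank-$\lceil(k-1)/2\rceil$ division of a submatrix of $A$ or of $B^{\mathsf T}$.
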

 Note that, by similar arguments, the sum of two conformal matrices over $\mathbb F_q$, both of twin-width at most~$d$, has twin-width at most $f'(d,q)$, for some computable function $f'$.

 We now consider the problem from an algorithmic point of view.
 From a computational perspective, the data structures used to encode matrices over a finite field are crucial.
 Encoding matrices as bipartite binary structures allows using the machinery developed for (ordered) graphs.
 In this setting, natural witnesses for twin-width boundedness are \mbox{\emph{$d$-sequences}} (or, \emph{contraction sequences}), which we mentioned earlier when  discussing first-order model checking complexity on classes with bounded twin-width (see \Cref{subsec:contraction-seq} for a formal definition).
 A naive implementation of the algorithm presented in~\cite[Theorem 2]{twin-width4} runs in time $\exp(\exp(\Oo(d^2 \log d))) n^3$, and outputs a $2^{\Oo(d^4)}$-sequence if the twin-width is indeed at most~$d$.  We show how to bring the dependence in $n$ down to $\Oo_d(n^2 \log n)$ (\Cref{thm:approx-tww-quad}).
 
 Gajarský et al.~\cite{Gajarsky22}, building on Pilipczuk et al.~\cite{PilipczukSZ22},  showed that given an $n$-vertex graph (or binary structure) $G$ with a~$d$-sequence and a first-order formula $\varphi(x_1, \ldots, x_k)$, one can compute in time $\Oo_{d,\varphi}(n^{1+\varepsilon})$ (resp.~$\Oo_{d,\varphi}(n)$) a data structure that answers for any query $v_1, \ldots, v_k \in V(G)$ whether $G \models \varphi(v_1, \ldots, v_k)$ in time $\Oo_{d,\varphi}(1/\varepsilon)$ (resp.~$\Oo_{d,\varphi}(\log \log n)$).
 This result can be extended to \fomc.
 Then, the squaring operation can be performed by means of a simple interpretation, which gives a near-linear representation of $M^2$ (in the domain size, that is, sublinear in the number of matrix entries) where entries can be queried in constant time.
 
 We thus obtain an algorithm that takes two matrices of bounded twin-width (\emph{without} witnesses) and outputs their product in quasilinear time in the number of entries. 
 
 \begin{theorem}\label{thm:intro-square}
 	Given two $n \times n$ matrices $A$ and $B$ over $\mathbb F_q$, both of twin-width at most~$d$, there is an algorithm to compute their product $AB$ in time $\Oo_{d,q}(n^2 \log n)$. 
 \end{theorem}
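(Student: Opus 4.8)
The plan is to reduce the whole computation to squaring a single matrix whose twin-width we can control, and then to read off each entry of the square by one query to a data structure of the type introduced by Gajarský et al., extended to \fomc. First I would build the $2n\times 2n$ matrix $M:=\left(\begin{smallmatrix}\mathbf 0&A\\ B&\mathbf 0\end{smallmatrix}\right)$ over $\mathbb F_q$, with the two $\mathbf 0$-blocks of size $n\times n$. As noted in the discussion preceding \Cref{thm:intro-bdtww-product}, $M$ has twin-width at most some computable $g(d,q)$: a large rank Latin division in a submatrix of $M$ would, after a pigeonhole step on its regular row- and column-partition, concentrate inside one of the four quadrants of $M$, and since two opposite quadrants are zero this would force a large rank Latin division in a submatrix of $A$ or of $B$, contradicting $\tww(A),\tww(B)\le d$. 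Since $M^2=\left(\begin{smallmatrix}AB&\mathbf 0\\ \mathbf 0&BA\end{smallmatrix}\right)$, the top-left $n\times n$ block of $M^2$ is exactly $AB$, so it suffices to be able to query single entries of $M^2$.

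Because no witness of the twin-width of $A$, $B$ (hence of $M$) is given, the next step is to produce a contraction sequence of $M$ from scratch: applying the approximation algorithm of \Cref{thm:approx-tww-quad} to $M$ with target parameter $g(d,q)$ yields, in time $\Oo_{g(d,q)}(n^2\log n)=\Oo_{d,q}(n^2\log n)$, an $\Oo_{d,q}(1)$-sequence of $M$; this phase is the bottleneck. Then, for each $c\in\mathbb F_q$, the property ``$(M^2)_{x,y}=c$'' is expressible by an \fomc-formula $\varphi_c(x,y)$ over $M$: writing $n_{a,b}(x,y)$ for the number of indices $z$ with $M_{x,z}=a$ and $M_{z,y}=b$, we have $(M^2)_{x,y}=\sum_{a,b\in\mathbb F_q}(ab)\,n_{a,b}(x,y)$ in $\mathbb F_q$, and since $\mathbb F_q$ has characteristic $p$ this value depends only on the residues $n_{a,b}(x,y)\bmod p$; hence $\varphi_c$ is the finite disjunction, over all residue vectors $(r_{a,b})_{a,b}$ with $\sum_{a,b}(ab)r_{a,b}=c$ in $\mathbb F_q$, of the formulas $\bigwedge_{a,b}\exists^{r_{a,b}[p]}z\ \bigl(M_{x,z}=a\wedge M_{z,y}=b\bigr)$. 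This is precisely the squaring interpretation behind \Cref{thm:intro-bdtww-product}, whose only delicate aspect — which I would import rather than redo — is that the count over $z$ must be phrased on a one-sorted encoding of $M$, so that $z$ can simultaneously serve as a column index in the first factor and as a row index in the second.

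Finally, feeding $M$, its $\Oo_{d,q}(1)$-sequence, and the finitely many formulas $\varphi_c$ to the \fomc-extension of the data structure of Gajarský et al.~\cite{Gajarsky22} (building on Pilipczuk et al.~\cite{PilipczukSZ22}) produces, in time $\Oo_{d,q}(n)$, a structure that on query $(x,y)$ returns the unique $c$ with $M\models\varphi_c(x,y)$ in time $\Oo_{d,q}(\log\log n)$. Querying it for each of the $n^2$ pairs $(i,j)$ of row/column indices of $M$ gives $(AB)_{i,j}=(M^2)_{i,j}$, in total time $\Oo_{d,q}(n^2\log\log n)$, and outputting the resulting $n\times n$ matrix completes the algorithm; summing the three phases gives $\Oo_{d,q}(n^2\log n)$. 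The main obstacle sits in the assumed ingredients — the $\Oo_d(n^2\log n)$ twin-width approximation (\Cref{thm:approx-tww-quad}) and the modular-counting extension of the Gajarský–Pilipczuk querying machinery; within this proof itself, the point to check carefully is that squaring over $\mathbb F_q$ is genuinely \fomc-definable, i.e.\ that only the counts modulo the field characteristic matter.
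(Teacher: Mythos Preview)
Your proposal is correct and follows essentially the same route as the paper's proof (Theorem~\ref{thm:matrix-product}): reduce to squaring $\left(\begin{smallmatrix}0&A\\B&0\end{smallmatrix}\right)$, compute an $\Oo_{d,q}(1)$-sequence for it via \Cref{thm:approx-tww-quad} (the bottleneck), express squaring as an \fomc-interpretation, and query each entry of $AB$ via the Gajarsk\'y et al.\ data structure. The paper offers this exact pipeline as one of two alternatives (the other being the ad~hoc squaring of~\Cref{thm: matmult} after converting the sequence to a twin-decomposition), and also gives the cleaner bound $\tww\!\left(\begin{smallmatrix}0&A\\B&0\end{smallmatrix}\right)=\max(\tww(A),\tww(B),2)$ directly, rather than arguing via rank Latin divisions as you do.
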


However, this algorithm is not practical due to the acute dependence on the twin-width bound.
We thus place ourselves in a setting where inputs are already in compact form (witnessing low twin-width).
The use of an adapted internal representation is a classical technique of digital computing (Fourier transform, redundant representation of numbers, etc.). Likewise, it appears that convenient representations of matrices of bounded twin-width for 
matrix computations are~\emph{twin-decompositions}~\cite{twin-width3,twin-width&permutations}.
Informally, a twin-decomposition is a~tree whose leaves are bijectively mapped to the domain of the structure (here, to the row and column indices), and internal nodes are ordered and naturally correspond to contractions.
The binary relations (here, the entries) are encoded by additional edges joining pairs of nodes of the tree, and respecting some specific rules. Every binary structure with bounded twin-width has a twin-decomposition with linearly many extra edges, hence the twin-decomposition forms a degenerate graph.
The \emph{width} of the twin-decomposition is related to this degeneracy (see~\cref{sec:prelim} for precise definitions).
Notice that a~twin-decomposition of constant width takes quasilinear space to describe a~set of binary relations with possibly quadratically many pairs.
We show that a~twin-decomposition can be computed from a~contraction sequence in quadratic time and observe that, conversely, a~contraction sequence can be computed from a~twin-decomposition in linear time.

Our last contribution is an \emph{ad hoc} efficient matrix multiplication algorithm for matrices over $\mathbb F_p$ with bounded twin-width, which we state here in the case of matrices over $\mathbb F_2$.

\begin{theorem}[\cref{thm:intro-square-adhoc}+\cite{PilipczukSZ22}]\label{thm:intro-square-adhoc-chan}
	Let $A$ and $B$ be two $n \times n$ matrices over $\mathbb F_2$ given in the form of twin-decompositions of width at most~$d$.
	For every $\varepsilon > 0$, there is a~$4^{d+o(d)}n^{1+\varepsilon}$-time algorithm that outputs a twin-decomposition of the product $AB$ of width $2^{d+o(d)}$ and a data structure of size $2^{d+o(d)}n^{1+\varepsilon}$ such that querying an entry of $AB$ takes time $\Oo(1/\varepsilon)$.
\end{theorem}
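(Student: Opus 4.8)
The plan is to reduce multiplication over $\mathbb F_2$ to a single squaring, perform that squaring \emph{directly} on twin-decompositions -- this is the content of \cref{thm:intro-square-adhoc} -- and then feed the resulting twin-decomposition of $AB$ into the compact entry-query data structure of Pilipczuk et al.~\cite{PilipczukSZ22}. For the reduction, set $M=\left(\begin{smallmatrix}0&A\\B&0\end{smallmatrix}\right)$, a $2n\times 2n$ matrix over $\mathbb F_2$ with $M^2=\left(\begin{smallmatrix}AB&0\\0&BA\end{smallmatrix}\right)$. From the width-$\le d$ twin-decompositions of $A$ and $B$ one builds a twin-decomposition of $M$: the two zero blocks are homogeneous and contribute no red edges, so it suffices to stitch the two trees under a common root. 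The one delicate point is that squaring sums over the shared index range, so the decomposition of $M$ must be \emph{synchronized}: it must induce the same laminar family on the rows and on the columns under the canonical order-identification of the two copies of $\{1,\dots,2n\}$. Arranging this -- on each half of the index set, by a common refinement of the relevant row- and column-decompositions of $A$ and $B$ -- costs only $o(d)$ in the exponent of the width, so $M$ gets a synchronized twin-decomposition of width $d+o(d)$. Finally, a twin-decomposition of $M^2$ restricts, by deleting the leaves indexing rows $n{+}1,\dots,2n$ and columns $n{+}1,\dots,2n$, to a twin-decomposition of the submatrix $AB$ of no larger width.

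For the ad hoc squaring, suppose we are given a synchronized twin-decomposition of a square $\mathbb F_2$-matrix $M$ of width $w$, i.e.\ a laminar family $\mathcal L$ on the common index set with at most $w$ red (non-homogeneous) block-pairs at each node; we produce a twin-decomposition of $M^2$ in one bottom-up pass. The guiding identity is that over $\mathbb F_2$, for a row-part $X$, a column-part $Y$, and a partition $\mathcal P$ of the middle index refining $\mathcal L$,
\[
(M^2)_{ij}=c_{X,Y}\oplus\bigoplus_{P\in B_X\cup B_Y}\ \bigoplus_{k\in P}M_{ik}M_{kj}\qquad(i\in X,\ j\in Y),
\]
where $c_{X,Y}$ depends only on the parities $|P|\bmod 2$ and the homogeneous values on the ``good'' parts $P\notin B_X\cup B_Y$, and $B_X$ (resp.\ $B_Y$) is the set of at most $w$ parts on which $M[X,\cdot]$ (resp.\ $M[\cdot,Y]$) is red. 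Hence the only obstruction to $M^2[X,Y]$ being homogeneous is the dependence of the second term on $i$ and $j$, which is governed by a bounded \emph{profile} of each row/column of $X$/$Y$: essentially the $\mathbb F_2$-valued partial row/column sums over the $\le w$ locally red sub-parts, together with the analogous data inside $B_X\cap B_Y$ obtained by descending. Refining every part of $\mathcal L$ by its profile makes $M^2$ homogeneous against all ``generic'' opposite parts, while the remaining exceptional interactions are charged to the $O(w^2)$ red sub-part pairs; the refined laminar family is then a valid twin-decomposition of $M^2$ of width $2^{w+o(w)}$. Computing it amounts, at each of the $O(n)$ nodes, to inspecting $O(w)$ red opposite parts, splitting them and $X$ into $2^{w+o(w)}$ pieces each, and spending $\Oo(1)$ arithmetic per resulting block-pair, i.e.\ time $4^{w+o(w)}n$ overall. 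With $w=d+o(d)$ this yields a twin-decomposition of $AB$ of width $2^{d+o(d)}$ in time $4^{d+o(d)}n$.

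It remains to add queries. The compact-representation theorem of \cite{PilipczukSZ22} turns any width-$w$ twin-decomposition of an $n\times n$ matrix over a finite alphabet into, for every $\varepsilon>0$, a data structure of size $\Oo_w(n^{1+\varepsilon})$, built in time $\Oo_w(n^{1+\varepsilon})$, answering entry queries in time $\Oo(1/\varepsilon)$ (and, alternatively, a size-$\Oo_w(n)$ structure with $\Oo_w(\log\log n)$ query time, which gives the variant noted in the abstract). Applying it to the width-$2^{d+o(d)}$ twin-decomposition of $AB$ produced above yields a data structure of size $2^{d+o(d)}n^{1+\varepsilon}$ with $\Oo(1/\varepsilon)$ entry queries, and the total running time is dominated by $4^{d+o(d)}n^{1+\varepsilon}$, as claimed.

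The main obstacle is the width analysis of the squaring step: isolating the correct notion of profile and its recursive refinement so that making $M^2$ homogeneous costs only a single-exponential $2^{d+o(d)}$ factor in the width (rather than a quantity growing much faster in $d$) -- in particular, controlling the doubly-bad parts $P\in B_X\cap B_Y$, where $\bigoplus_{k\in P}M_{ik}M_{kj}$ genuinely couples $i$ and $j$ -- and checking that the refined family, together with all the constants $c_{X,Y}$, is assembled node-by-node from child data in time $4^{d+o(d)}$ per node with no hidden dependence on $n$. The synchronization step in the reduction (producing a single laminar family that is a good decomposition on both sides of $M$ at an $o(d)$ cost in the exponent) is the other point requiring care.
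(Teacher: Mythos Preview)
Your plan is essentially the paper's own: reduce $AB$ to a single squaring via a block matrix, square by refining each part of the contraction sequence according to a parity ``profile'' over its red neighbours, and then plug the resulting twin-decomposition into \cite{PilipczukSZ22}. Your profile is exactly the paper's tuple $(p,b_1,\dots,b_k)$ in \cref{lem: refinement}: split each part $U$ with red neighbours $W_1,\dots,W_k$ according to $\degr_U(x)\bmod 2$ and each $|N(x)\cap W_j|\bmod 2$. The width bound comes from the observation (your ``doubly-bad'' discussion; the paper's \cref{clm: redeg}) that two refined parts can be red in $G^{[2]}$ only if the underlying coarse parts are at red-distance $\le 2$ in $G_i$, giving $(d^2{+}d{+}1)2^{d+1}-1=2^{d+o(d)}$; the running time $4^{d+o(d)}n$ arises because each step touches $O(d)\cdot 2^{O(d)}$ refined parts on either side.

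One genuine divergence: your matrix-centric formulation forces you to invent a ``synchronization'' step --- making the row and column laminar families of $M$ agree --- and you assert this costs only $o(d)$ in the exponent. That claim is unsupported and, as stated (common refinement of two width-$d$ decompositions), is not obviously that cheap. The paper sidesteps the issue entirely by passing to the \emph{graph} viewpoint: an $n\times n$ matrix over $\mathbb F_2$ is the adjacency matrix of a graph on $n$ vertices, a twin-decomposition is a single tree on that vertex set, and $G^{[2]}$ has adjacency matrix $M^2$; there is then nothing to synchronize. For the product, one embeds $A$ and $B$ into one graph (the paper uses the $\left(\begin{smallmatrix}0&A\\B&0\end{smallmatrix}\right)$ trick throughout; a tripartite graph on $3n$ vertices with $A$ between the first two layers and $B$ between the last two works just as well), obtains a width-$O(d)$ decomposition of that graph from the given decompositions of $A$ and $B$, and squares once. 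I would recommend you switch to that framing rather than try to justify the synchronization bound.
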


This result is based on an admittedly  technical  algorithm computing a~twin-decomposition of the square $M^2$ of a matrix $M$ from a~twin-decomposition of $M$, which extends to a matrix multiplication algorithm for matrices each represented by a twin-decomposition.

\begin{theorem}\label{thm:intro-square-adhoc}
	Let $A$ and $B$ be two $n \times n$ matrices over $\mathbb F_2$ given in the form of twin-decompositions of width at most~$d$.
	There is a~$4^{d+o(d)}n$-time algorithm that outputs a~twin-decomposition of the product $AB$ of width $2^{d+o(d)}$.
\end{theorem}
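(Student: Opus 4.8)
The plan is to reduce the general product to squaring, and then to square a matrix ``through'' the contraction tree that accompanies its twin-decomposition.

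First I would reduce to squaring. Writing $M=\left(\begin{smallmatrix}0&A\\ B&0\end{smallmatrix}\right)$ over $\mathbb F_2$, one has $M^2=\left(\begin{smallmatrix}AB&0\\ 0&BA\end{smallmatrix}\right)$. From twin-decompositions of $A$ and $B$ of width $d$, a twin-decomposition of the $2n\times 2n$ matrix $M$ of width $\Oo(d)$ is obtained in time $\Oo(dn)$ by hanging the two row-trees (resp.\ the two column-trees) below a common new root and keeping the two disjoint sets of relational edges; the two all-zero off-diagonal blocks create no extra red edges. Conversely, deleting from a twin-decomposition of $M^2$ the leaves and incident edges that do not pertain to the $AB$-block yields, in linear time and without increasing the width, a twin-decomposition of $AB$. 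Hence it suffices to transform a twin-decomposition of an $N\times N$ matrix $M$ over $\mathbb F_2$ of width $d$ into one of $M^2$ of width $2^{d+o(d)}$, in time $4^{d+o(d)}N$; I will assume $\mathcal T$ denotes (after a preprocessing step discussed below) a single contraction tree valid for both the rows and the columns of $M$, which is what makes the identification of the middle index meaningful.

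The heart of the algorithm is the following identity, which lets one compute $M^2$ block-wise along $\mathcal T$. For a row-bag $R_u$ and a column-bag $C_v$ of $\mathcal T$, the submatrix $M[R_u,C_v]$ is homogeneous --- a single value of $\mathbb F_2$ --- unless $(u,v)$ is one of the $\Oo(d)$ red pairs recorded by the decomposition. For any antichain $\mathcal A$ of ``middle'' bags, $(M^2)_{ij}=\sum_{P\in\mathcal A}\sum_{k\in P}M_{ik}M_{kj}$, and a bag $P\in\mathcal A$ on which both $M[R_u,P]$ and $M[P,C_v]$ are homogeneous, with respective values $a_P$ and $b_P$, contributes the fixed quantity $a_P\,b_P\,(|P|\bmod 2)$ to every entry of $R_u\times C_v$. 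Therefore $M^2[R_u,C_v]$ is a constant $c_{u,v}\in\mathbb F_2$ --- namely the sum of these doubly-homogeneous contributions, which I would maintain incrementally along the edges of $\mathcal T$ after a single bottom-up precomputation of the parities $|P|\bmod 2$ --- corrected only by contributions of the $\Oo(d)$ middle bags non-homogeneous on the $R_u$-side or on the $C_v$-side, and those corrections are resolved by descending in $\mathcal T$ in a recursion of bounded depth per pair. To package this as a genuine twin-decomposition of $M^2$ satisfying the required homogeneity rules, I would refine each bag of $\mathcal T$ according to the (length-$\Oo(d)$) pattern recording how its rows, resp.\ columns, behave on the relevant problematic middle bags; this is what raises the number of bags and the red degree to $2^{d+o(d)}$, and processing one refined bag, which amounts to enumerating pairs of such patterns, costs $4^{d+o(d)}$, for a total of $4^{d+o(d)}N$. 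A final compression step --- re-certifying the obtained relational structure by the effective Marcus--Tardos / rank-Latin-division machinery and then converting back and forth between contraction sequence and twin-decomposition, both available in (near-)linear time as recalled in the introduction --- merges redundant homogeneous siblings, absorbs the $o(d)$ slack, and outputs a bona fide twin-decomposition of width $2^{d+o(d)}$.

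I expect the main obstacle to be everything hiding in the phrase ``a single contraction tree valid for both the rows and the columns'': the row-tree and the column-tree of $M$ need not agree, whereas squaring forces the middle index to be organised by one tree for $M[R_u,\cdot]$ and by another for $M[\cdot,C_v]$. Reconciling them --- whether by taking a suitable common refinement and bounding its red degree using the bipartite homogeneity rules, or by interleaving the two contraction sequences in time --- and then proving that the corrections produced by squaring do not cascade across the levels of $\mathcal T$ (this is exactly where the $\mathbb F_2$-linearity is used, since contributions of non-homogeneous middle bags can cancel) so that the red degree of the output provably stays within $2^{d+o(d)}$ while the incremental bookkeeping runs in amortised $4^{d+o(d)}$ per node, is the delicate part. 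Checking that the final object satisfies every clause in the definition of a twin-decomposition (rather than merely being a bounded-degeneracy relational description) is the remaining --- and, as the authors warn, rather technical --- chore.
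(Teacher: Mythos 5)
Your core mechanism is the one the paper uses: reduce the product to squaring via $\left(\begin{smallmatrix}0&A\\ B&0\end{smallmatrix}\right)^2=\left(\begin{smallmatrix}AB&0\\ 0&BA\end{smallmatrix}\right)$, then refine each part $U$ of the contraction sequence by the parity pattern of its vertices' degrees into $U$ itself and into each of the at most $d$ red neighbours of $U$. The paper proves (\cref{lem: refinement}) that this refinement can be completed into a $\bigl((d^2+d+1)2^{d+1}-1\bigr)$-sequence for the square, which is exactly your $2^{d+o(d)}$ bound, and your observation that a doubly-homogeneous middle bag $P$ contributes $a_P b_P(|P|\bmod 2)$ is the same parity bookkeeping the paper does with the bits $s_U=|U|\bmod 2$. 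The obstacle you flag as the main one --- reconciling a row-tree with a column-tree --- is a non-issue in the paper's formalism: it symmetrizes (as in \cref{sec:comput_seq}) and runs the whole argument on a single graph contraction tree whose leaves carry both indices.

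The genuine gap is in how you turn the computed parities into an actual twin-decomposition of the square. Your ``final compression step'' reaches for the effective Marcus--Tardos / rank-Latin-division machinery, i.e., essentially \cref{thm:approx-tww-quad}: that algorithm runs in $\Oo_d(n^2\log n)$ time, not linear, and only certifies a width of the form $\exp(\exp(\exp(\Oo(d^4))))$, so invoking it destroys both the $4^{d+o(d)}n$ running time and the $2^{d+o(d)}$ width you are supposed to output. The paper never re-certifies anything; it assembles the output directly by three further linear-time dynamic-programming passes over the refined tree $\mathcal T'$: first it computes a set $\mathcal B_1$ of labelled ``partial'' transversal edges together with node bits $\alpha_U$ so that adjacency in $M^2$ is the parity of a sum taken along the two branches (\cref{lem: partial-edges}); then a bottom-up pass lifts these edges as high as possible and records exactly where black edges and non-edges of the intermediate trigraphs disappear (\cref{lemma: b1tob2}); finally a top-down pass pushes the residual labels down to decide the status of each recorded location (\cref{lem: b2tob}). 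Verifying that the result satisfies the border-crossing and edge-partition axioms of a twin-decomposition is precisely where the theorem's work lives, and it is the part your proposal leaves unsupplied.
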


The entries of $AB$ can then be queried in time essentially the height of the twin-decomposition, which can be made logarithmic.
However, by computing the data structure introduced by Pilipczuk et al.~\cite{PilipczukSZ22} in $\Oo(d' n^{1+\varepsilon})$ time and space where $d'$ upperbounds the width of a~twin-decomposition of the matrix, the entry queries can be performed in $\Oo(1/\varepsilon)$ time. 
\cref{thm:intro-square-adhoc,thm:intro-square-adhoc-chan} carry over on any finite field $\mathbb F_q$ with running time $q^{2d+o(d)}n$ and $2^{\Oo_q(d)}n^{1+\varepsilon}$, respectively.

An intriguing question concerns the existence of such results over infinite fields (starting with $\mathbb Q$).
We do not have a direct definition of twin-width of matrices over $\mathbb Q$ based on contraction sequences. However linear-minor freeness naturally carries to infinite fields, and thus, it is natural to consider that a class of matrices over $\mathbb Q$ has bounded twin-width if its closure under linear minors is not the set of all matrices. This can be equivalently stated via the notion of \emph{grid rank} of a matrix $M$, i.e., the largest $k$ for which there is a $k\times k$ subdivision of $M$ in which every block has rank at least $k$. Note that if a matrix has grid rank $k$, then any linear minor has grid rank at most $k$. Indeed, one can even show that a class of matrices has bounded grid rank if and only if it does not contain some matrix as a linear minor. We believe that computing the product of two matrices over $\mathbb Q$ with bounded grid rank should be done in almost quadratic time, however, we lack a structural decomposition as in the finite field case.

\medskip

There is a vast literature on computing matrix multiplication, or other natural primitives of linear algebra, on classes of structured matrices.
We give a few references on rank-structured matrices (see for instance~\cite{Eidelman99,Hackbusch99,Borm03,Vandebril05,Sa18}) and matrices of bounded treewidth.

A~square matrix has quasiseparable order~$s$ if all its submatrices that are completely above the main diagonal, or completely below it, have rank at most~$s$.
Note that on adjacency matrices this is equivalent to the \emph{linear rank-width} parameter (a dense analogue of pathwidth). 
Pernet~\cite{Pernet16} shows that multiplying two $n \times n$ matrices with quasiseparable order~$s$ can be done in time $\Oo(s^{\omega-2} n^2)$, where $\omega$ is the exponent of matrix multiplication, or $\Oo(s^3 n)$ if the matrices are given in a suitable compact form~\cite{PernetS18}.
The closely-related semiseparable matrices also have efficient multiplication algorithms~\cite{Xia10}.
So-called\footnote{These classes involve a hierarchical (dichotomic) block decomposition into blocks that are either of low rank (if far from the main diagonal) or recursively in the class (otherwise); see~\cite{Hackbusch99,Borm03} for definitions.} {$\mathscr H$-matrices} (for hierarchical matrices) and $\mathscr H^2$-matrices admit almost linear-time algorithms for vector-matrix multiplication~\cite{Borm03}. 

One can naturally extend the treewidth graph parameter to 0,1-matrices $M$ by considering the treewidth of the bipartite graph whose biadjacency matrix is $M$.
Fomin et al.~\cite{FominLSPW18} show how to compute the determinant, the rank, and to solve a linear system defined by an $n \times n$ matrix of treewidth $k$, in time $k^{\Oo(1)}n$.
It was recently shown by Dong et al.~\cite{Dong21} how to solve linear programs in expected almost linear-time on matrices of bounded treewidth.

\section{Preliminaries}\label{sec:prelim}

We denote by $[i,j]$ the set of integers $\{i,i+1,\ldots,j-1,j\}$, and $[i]$ is a short-hand for $[1,i]$.
We use the standard graph-theoretic notations: $V(G)$, $E(G)$, $N_G[S]$, $N_G(S)$, $G[S]$ respectively denote the vertex set, edge set, closed neighborhood of $S$, open neighborhood of $S$, and subgraph induced by $S$, of a graph $G$.
Given a matrix $M$, we may interchangeably denote by $M_{x,y}$ or $M[x,y]$ the entry of $M$ at row $x$ and column $y$.

\subsection{Binary structures and matrices}\label{subsec:logic}

A relational \emph{signature} $\Sigma$ is a finite set of relation symbols~$R$, each with a specified arity $r\in\mathbb N$. 
\mbox{A \emph{$\Sigma$-structure}~$\mathbf A$} is defined by a set~$A$ (the \emph{domain} of $\mathbf A$) together with a relation $R^\mathbf A\subseteq A^{r}$ for each relation symbol \mbox{$R \in \Sigma$} with arity $r$.
The syntax and semantics of first-order formulas over $\Sigma$, or \emph{$\Sigma$-formulas} for brevity, are defined as usual.
We will augment first-order logic with modular counting quantifiers (see~\cref{sec:fomc} for definitions).
A~\emph{binary structure} is a $\Sigma$-structure such that every relation symbol of $\Sigma$ has arity at most~2.
An \emph{ordered binary structure} is a structure $\str A$ over a signature $\Sigma$ consisting of unary and binary relation symbols which includes the symbol $<$, defining in $\str A$ a total order on the domain of $\str A$.
 
A matrix $M$ over a finite alphabet $A$ with rows $R$ and columns $C$ is viewed as an ordered binary structure with domain $R \uplus C$, equipped with the following relations:
 \begin{compactitem}
   \item a~unary relation $R$ interpreted as the set of rows,
   \item an antisymmetric binary relation $<$ which defines a total order on $R \uplus C$, extending the total orders on the rows and columns of $M$ in such a way that the rows precede the columns,
   \item one binary relation $E_a$, for each $a \in A$, where $E_a(r,c)$ holds if and only if $r$ is a row, $c$ is not (hence is a column), and $a$ is the entry of $M$ at row $r$ and column $c$.
 \end{compactitem}
 Matrices over a finite alphabet are actually as general as unconstrained binary structures.
 Indeed, one can encode any binary structure as an adjacency matrix $M$, where the values $M[x,y]$ are in one-to-one correspondence with the \emph{atomic type of $(x,y)$}, that is, the subset of unary relations containing $x$, the subset of unary relations containing $y$, and the subset of binary relations containing the (ordered) pair $(u,v)$.
 The reader might think of matrices over finite alphabets, binary structures, or edge-colored graphs as different representations of the same objects.
 Depending on the context, we will sometimes prefer one representation over another. 

\subsection{Contraction sequences and twin-width}\label{subsec:contraction-seq}

A \emph{trigraph $G$} has vertex set $V(G)$, (black) edge set $E(G)$, and red edge set $R(G)$, with $E(G)$ and $R(G)$ being disjoint.
The \emph{set of neighbors $N_G(v)$} of a vertex $v$ in a trigraph $G$ consists of all the vertices adjacent to $v$ by a black or red edge.
A $d$-trigraph is a trigraph $G$ such that the \emph{red graph} $(V(G),R(G))$ has degree at most~$d$.
In that case, we also say that the trigraph has \emph{red degree} at most~$d$.
A (vertex) \emph{contraction} or \emph{identification} in a trigraph~$G$ consists of merging two (non-necessarily adjacent) vertices $u$ and $v$ into a single vertex $z$, and updating the edges of $G$ in the following way.
Every vertex of the symmetric difference $N_G(u) \triangle N_G(v)$ is linked to $z$ by a red edge.
Every vertex $x$ of the intersection $N_G(u) \cap N_G(v)$ is linked to $z$ by a black edge if both $ux \in E(G)$ and $vx \in E(G)$, and by a red edge otherwise.
The rest of the edges (not incident to $u$ or $v$) remain unchanged.
We insist that the vertices $u$ and $v$ (together with the edges incident to these vertices) are removed from the trigraph. 
See \cref{fig:contraction} for an illustration.
\begin{figure}[h!]
\begin{tikzpicture}
\def\v{2}
\def\t{6}
\def\s{0.6}

\draw[thick, -stealth] (3.25,\v /2) -- (4.25,\v/2) ;

\foreach \i/\j in {-5/u_1,-4/u_2,-3/x_1,-2/x_2,-1/x_3,0/x_4,1/x_5,2/x_6,3/x_7,4/v_1,5/v_2}{
  \node[draw,circle,inner sep=0.03cm] (n\i) at (\s * \i,\v) {$\j$} ; 
}

\node[draw,circle] (u) at (-1.5,0) {u} ;
\node[draw,circle] (v) at (1.5,0) {v} ;

\foreach \i in {-4,-3,-2,0,2,3}{
  \draw (u) -- (n\i) ;
}
\foreach \i in {-5,-1,1}{
  \draw[thick, red] (u) -- (n\i) ;
}
\foreach \i in {-3,...,0,2,3,...,5}{
  \draw (v) -- (n\i) ;
}
\foreach \i in {1}{
  \draw[thick, red] (v) -- (n\i) ;
}

\begin{scope}[xshift=7.5cm]
\node[draw,circle] (uv) at (0,0) {z} ;
\foreach \i/\j in {-5/u_1,-4/u_2,-3/x_1,-2/x_2,-1/x_3,0/x_4,1/x_5,2/x_6,3/x_7,4/v_1,5/v_2}{
  \node[draw,circle,inner sep=0.03cm] (m\i) at (\s * \i,\v) {$\j$} ; 
}

\foreach \i in {-3,-2,0,2,3}{
  \draw (uv) -- (m\i) ;
}
\foreach \i in {-5,-4,-1,1,4,5}{
  \draw[thick, red] (uv) -- (m\i) ;
}
\end{scope}
\end{tikzpicture}
\caption{Contraction of vertices $u$ and $v$, and how the edges of the trigraph are updated.}
\label{fig:contraction}
\end{figure}
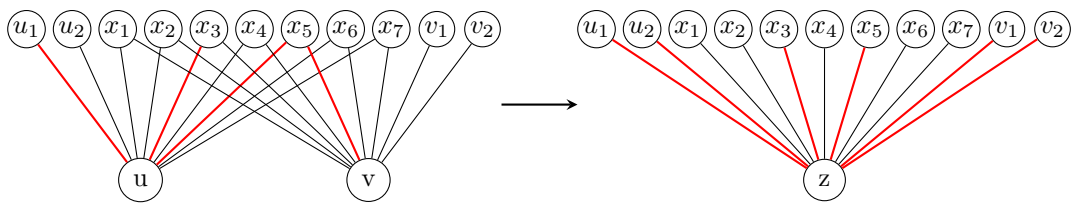

A \emph{$d$-sequence} (or \emph{contraction sequence}) is a sequence of \mbox{$d$-trigraphs} $G_n, G_{n-1}, \ldots, G_1$, where $G_n = G$, $G_1=K_1$ is the graph on a single vertex, and $G_{i-1}$ is obtained from $G_i$ by performing a single contraction of two (non-necessarily adjacent) vertices.
We observe that $G_i$ has precisely $i$ vertices, for every $i \in [n]$.
The twin-width of~$G$, denoted by $\tww(G)$, is the minimum integer~$d$ such that $G$ admits a~$d$-sequence. See \cref{fig:contraction-sequence} for an illustration.
\begin{figure}[h!]
  \centering
  \resizebox{400pt}{!}{
  \begin{tikzpicture}[
      vertex/.style={circle, draw, minimum size=0.68cm}
    ]
    \def\s{1.2}
    \foreach \i/\j/\l in {0/0/a,0/1/b,0/2/c,1/0/d,1/1/e,1/2/f,2/1/g}{
      \node[vertex] (\l) at (\i * \s,\j * \s) {$\l$} ;
    }
    \foreach \i/\j in {a/b,a/d,a/f,b/c,b/d,b/e,b/f,c/e,c/f,d/e,d/g,e/g,f/g}{
      \draw (\i) -- (\j) ;
    }

    \begin{scope}[xshift=3 * \s cm]
    \foreach \i/\j/\l in {0/0/a,0/1/b,0/2/c,1/0/d,2/1/g}{
      \node[vertex] (\l) at (\i * \s,\j * \s) {$\l$} ;
    }
    \foreach \i/\j/\l in {1/1/e,1/2/f}{
      \node[vertex,opacity=0.2] (\l) at (\i * \s,\j * \s) {$\l$} ;
    }
    \node[draw,rounded corners,inner sep=0.01cm,fit=(e) (f)] (ef) {ef} ;
    \foreach \i/\j in {a/b,a/d,b/c,b/d,b/ef,c/ef,c/ef,d/g,ef/g,ef/g}{
      \draw (\i) -- (\j) ;
    }
    \foreach \i/\j in {a/ef,d/ef}{
      \draw[red, very thick] (\i) -- (\j) ;
    }
    \end{scope}

    \begin{scope}[xshift=6 * \s cm]
    \foreach \i/\j/\l in {0/1/b,0/2/c,2/1/g,1/1/ef}{
      \node[vertex] (\l) at (\i * \s,\j * \s) {$\l$} ;
    }
    \foreach \i/\j/\l in {0/0/a,1/0/d}{
      \node[vertex,opacity=0.2] (\l) at (\i * \s,\j * \s) {$\l$} ;
    }
    \draw[opacity=0.2] (a) -- (d) ;
    \node[draw,rounded corners,inner sep=0.01cm,fit=(a) (d)] (ad) {ad} ;
    \foreach \i/\j in {ad/b,b/c,b/ad,b/ef,c/ef,c/ef,ef/g,ef/g}{
      \draw (\i) -- (\j) ;
    }
    \foreach \i/\j in {ad/ef,ad/g}{
      \draw[red, very thick] (\i) -- (\j) ;
    }
    \end{scope}

    \begin{scope}[xshift=9 * \s cm]
    \foreach \i/\j/\l in {0/2/c,2/1/g,0.5/0/ad}{
      \node[vertex] (\l) at (\i * \s,\j * \s) {$\l$} ;
    }
    \foreach \i/\j/\l in {0/1/b,1/1/ef}{
      \node[vertex,opacity=0.2] (\l) at (\i * \s,\j * \s) {$\l$} ;
    }
    \draw[opacity=0.2] (b) -- (ef) ;
    \node[draw,rounded corners,inner sep=0.01cm,fit=(b) (ef)] (bef) {bef} ;
    \foreach \i/\j in {ad/bef,bef/c,bef/ad,c/bef,c/bef,bef/g}{
      \draw (\i) -- (\j) ;
    }
    \foreach \i/\j in {ad/bef,ad/g,bef/g}{
      \draw[red, very thick] (\i) -- (\j) ;
    }
    \end{scope}

    \begin{scope}[xshift=11.7 * \s cm]
    \foreach \i/\j/\l in {0/2/c}{
      \node[vertex] (\l) at (\i * \s,\j * \s) {$\l$} ;
    }
     \foreach \i/\j/\l in {0.5/0/adg,0.5/1.1/bef}{
      \node[vertex] (\l) at (\i * \s,\j * \s) {\footnotesize{\l}} ;
    }
    \foreach \i/\j in {c/bef}{
      \draw (\i) -- (\j) ;
    }
    \foreach \i/\j in {adg/bef}{
      \draw[red, very thick] (\i) -- (\j) ;
    }
    \end{scope}

    \begin{scope}[xshift=13.7 * \s cm]
    \foreach \i/\j/\l in {0.5/0/adg,0.5/1.1/bcef}{
      \node[vertex] (\l) at (\i * \s,\j * \s) {\footnotesize{\l}} ;
    }
    \foreach \i/\j in {adg/bcef}{
      \draw[red, very thick] (\i) -- (\j) ;
    }
    \end{scope}

    \begin{scope}[xshift=15 * \s cm]
    \foreach \i/\j/\l in {1/0.75/abcdefg}{
      \node[vertex] (\l) at (\i * \s,\j * \s) {\tiny{\l}} ;
    }
    \end{scope}
    
  \end{tikzpicture}
  }
  \caption{A 2-sequence witnessing that the initial graph has twin-width at most~2.}
  \label{fig:contraction-sequence}
\end{figure}
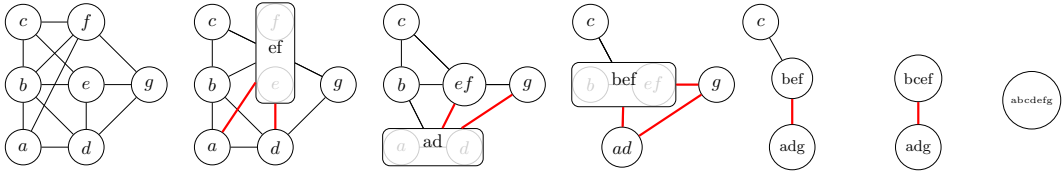

Twin-width can be generalized from graphs to binary structures in some (functionally) equivalent ways~\cite{twin-width1,twin-width2}.
Here we choose the following definition.

On general binary structures, red edges exist between two vertices $x,y \in V(G_i)$ whenever there are up to four vertices $u \neq v,u' \neq v' \in V(G)$ such that $u$ and $u'$ (which might be the same vertex) were contracted (together with possibly other vertices) to form $x$, similarly $v$ and $v'$ were contracted to form $y$, and the atomic types of $(u,v)$ and of $(u',v')$, or of $(v,u)$ and of $(v',u')$, are distinct.
If instead, all such pairs $(u,v)$ have the same atomic type, this shared atomic type labels the edge $xy$ in $G_i$.
Contraction sequences, $d$-sequences, and twin-width are then similarly defined.

In particular, we now have a definition of twin-width for the matrices of~\cref{subsec:logic}.

\subsection{Matrix divisions}

We will often denote by $R$ (resp.~$C$) the sets of row (resp.~column) indices of a matrix $M$.
For $X \subseteq R$ and $Y \subseteq C$, we denote by $M[X,Y]$ the submatrix of $M$ consisting of the entries at rows in $X$ and columns in $Y$.

A~\emph{$(k,\ell)$-division} of a matrix $M$ is a pair of partitions $(\mathcal R=\{R_1, \ldots, R_k\}, \mathcal C=\{C_1, \ldots, C_\ell\})$ of $R$ and $C$, respectively, such that every $R_i$ corresponds to consecutive rows, and every $C_j$, to consecutive columns.
A~\emph{$k$-division} is a $(k,k)$-division.
We may call a set of consecutive row or column indices an~\emph{interval}.
The \emph{grid number}, \emph{mixed number}, \emph{grid rank}, respectively, of a matrix is the largest integer $k$ such that $M$ has a $k$-division $(\{R_1, \ldots, R_k\}, \{C_1, \ldots, C_k\})$ for which, for every $i,j \in [k]$, 
\begin{compactitem}
\item $M[R_i,C_j]$ has a non-zero entry, 
\item $M[R_i,C_j]$ has at least two distinct rows and at least two distinct columns, 
\item $M[R_i,C_j]$ has at least $k$ distinct rows or at least $k$ distinct columns, 
\end{compactitem}
respectively.

The divisions are called \emph{$k$-grid minor}, \emph{$k$-mixed minor}, and \emph{rank-$k$ division}, respectively.
An \emph{interval minor} of a matrix $M$ is a matrix $N$ with $k$ rows and $\ell$ columns such that $M$ has a $(k,\ell)$-division $(\{R_1, \ldots, R_k\}, \{C_1, \ldots, C_\ell\})$ such that for every $i \in [k]$ and every $j \in [\ell]$, $N_{i,j}$ is an entry of $M[R_i,C_j]$.

We call \textsc{$k$-Grid Minor}, \textsc{$k$-Mixed Minor}, \textsc{Rank-$k$ Division}, \textsc{Interval Minor Containment}, respectively, the computational problems of deciding if an input matrix has grid number at least $k$, mixed number at least $k$, grid rank at least $k$, and if a matrix is an interval minor of another matrix.

\medskip

The following celebrated theorem by Marcus and Tardos, solving the Füredi-Hajnal, and hence~\cite{Klazar00} the Stanley-Wilf conjectures, is crucial to the theory of twin-width.  

\begin{theorem}[\cite{MarcusT04}]\label{thm:marcus-tardos}
  There is a function $\mt \from \mathbb N \to \mathbb N$ such that for every natural $k$, every $n \times m$ matrix with at least $\mt(k) \max(n,m)$ non-zero entries admits a $k$-grid minor. 
\end{theorem}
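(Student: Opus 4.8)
The plan is to derive the statement from the permutation-pattern form of the Marcus--Tardos theorem~\cite{MarcusT04}: there is a function $c\colon\mathbb N\to\mathbb N$ such that every $n\times m$ $0,1$-matrix that avoids a fixed $t\times t$ permutation matrix $P$ (meaning: there are no $t$ rows and $t$ columns of the matrix on which it is non-zero at every position where $P$ has a $1$) has fewer than $c(t)\max(n,m)$ ones. Three normalizations are harmless: replacing each non-zero entry by $1$ changes neither the non-zero pattern nor the existence of a $k$-grid minor; transposing (which we may do to assume $n\ge m$) preserves both the number of non-zero entries and the notion of $k$-grid minor; and padding with $n-m$ zero columns to obtain an $n\times n$ matrix is reversible, because a $k$-grid minor of the padded matrix has its $k^2$ witnessing non-zero entries among the first $m$ columns, so intersecting each column part with $[1,m]$ yields a $k$-grid minor of the original. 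Hence it suffices to exhibit one fixed $k^2\times k^2$ permutation matrix whose presence as a submatrix already forces a $k$-grid minor.

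I would use $\tau\in S_{k^2}$ defined by $\tau\big((a-1)k+b\big)=(b-1)k+a$ for $a,b\in[k]$ (the transpose of the $k\times k$ array), with permutation matrix $P_\tau$. Partition $[k^2]$ into the $k$ consecutive intervals $I_1,\dots,I_k$, with $I_a=\{(a-1)k+1,\dots,ak\}$, and use $(I_1,\dots,I_k)$ for both the rows and the columns of $P_\tau$: the unique $1$ in row $(a-1)k+b$ lies in column $(b-1)k+a\in I_b$, hence in cell $(a,b)$, so as $(a,b)$ ranges over $[k]^2$ every one of the $k^2$ cells receives a $1$, and $P_\tau$ itself has a $k$-grid minor. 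Next, suppose an $N\times N$ matrix $M$ contains $P_\tau$ on rows $x_1<\dots<x_{k^2}$ and columns $y_1<\dots<y_{k^2}$. Cutting the rows of $M$ just after $x_k,x_{2k},\dots,x_{(k-1)k}$ into $k$ consecutive parts (and doing the same for columns at the $y$'s) produces a $k$-division whose $a$-th row part contains $x_{(a-1)k+1},\dots,x_{ak}$ and whose $b$-th column part contains $y_{(b-1)k+1},\dots,y_{bk}$; therefore the non-zero entry of $M$ at $\big(x_{(a-1)k+b},\,y_{\tau((a-1)k+b)}\big)$ lies in cell $(a,b)$, and $M$ has a $k$-grid minor. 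Contrapositively, a matrix with no $k$-grid minor avoids $P_\tau$, hence has fewer than $c(k^2)\max(n,m)$ non-zero entries, so $\mt(k):=c(k^2)$ works.

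The only real content is the Marcus--Tardos inequality defining $c$, which I would recall as follows if a self-contained argument is wanted, by induction on $N$ for $N\times N$ matrices. Cut the rows and the columns into consecutive intervals of length $s:=2k^2$; call a block \emph{wide} (resp.\ \emph{tall}) if its non-zero entries occupy at least $k$ columns (resp.\ rows) and \emph{simple} otherwise, so that a simple block has fewer than $k^2$ non-zero entries. By pigeonhole over the $\binom{s}{k}$ possible $k$-subsets of columns, a block-column with more than $(k-1)\binom{s}{k}$ wide blocks contains $k$ of them, stacked vertically, sharing a common set of $k$ non-zero columns; the interval-growing trick above then extracts a $k$-grid minor. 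So, barring a $k$-grid minor, there are at most $\lceil N/s\rceil(k-1)\binom{s}{k}$ wide blocks and as many tall blocks, carrying at most $2\lceil N/s\rceil(k-1)\binom{s}{k}\,s^2$ non-zero entries; and the $0,1$-matrix $M^\star$ that records which blocks are non-empty also has no $k$-grid minor, so by induction it has at most $\mt(k)\lceil N/s\rceil$ ones, bounding the number of non-empty (hence non-empty simple) blocks and so contributing fewer than $k^2\mt(k)\lceil N/s\rceil$ further non-zero entries. The crux — the reason this scheme yields the linear Marcus--Tardos bound rather than the $\Oo_k(N\log N)$ it would give with $s=k^2$ — is taking $s=2k^2$: the self-referential coefficient $k^2/s$ then equals $\tfrac12<1$, so (for $s\mid N$) the recursion $\|M\|\le 4k^2(k-1)\binom{2k^2}{k}N+\tfrac12\mt(k)N$ closes with $\mt(k):=8k^2(k-1)\binom{2k^2}{k}$. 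The leftover points — divisibility of $N$ by $s$, the base case $N\le s$, and the non-square reduction — are routine.
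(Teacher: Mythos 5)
Your argument is correct. Note, however, that the paper does not prove this statement at all: it is imported verbatim from Marcus and Tardos (with the explicit constant $\mt(k)=\frac{8}{3}(k+1)^2 2^{4k}$ quoted from Cibulka--Kyn\v{c}l), so there is no in-paper proof to compare against. What you give is essentially the original Marcus--Tardos block-decomposition-and-recursion argument, transplanted directly to the grid-minor formulation; the details check out (the pigeonhole over $k$-subsets of columns within a block-column, the fact that a grid minor of the contraction $M^\star$ lifts to one of $M$, and the closing of the recursion with $s=2k^2$). Two small remarks. First, your opening two paragraphs (the reduction through the $k^2\times k^2$ permutation $\tau$) are logically superfluous: your third paragraph already proves the grid statement directly, and is the cleaner route, since for grid minors you never need to track a permutation pattern --- only non-emptiness of cells --- which is exactly why the ``$k$ stacked wide blocks with $k$ common non-zero columns'' step immediately yields the conclusion. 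Second, your aside that $s=k^2$ would only give $\Oo_k(N\log N)$ is an artifact of bounding a simple block by $k^2$ ones; with the sharper bound $(k-1)^2$ the coefficient $(k-1)^2/k^2<1$ already closes the recursion at $s=k^2$, as in the original paper. Neither point affects correctness. The remaining housekeeping you defer (non-square matrices, divisibility of $N$ by $s$, the base case, and the degenerate values $k\le 1$ for which your closed-form constant vanishes) is indeed routine.
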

The current best known bound is $\mt(k) = \frac{8}{3} (k+1)^2 2^{4k}=2^{\Oo(k)}$~\cite{Cibulka16}.

\subsection{Computing contraction sequences}\label{sec:comput_seq}

Efficiently (in polynomial time, \FPT~time, or even slice-wise polynomial time) approximating the twin-width of a binary structure remains a challenging open question.
However, such an algorithm is known for totally ordered binary structures, or matrices over a finite alphabet.

\begin{theorem}[\cite{twin-width4}]\label{thm:approx-tww-unspec}
  Given an $n \times m$ matrix $M$ over a finite alphabet $A$, and an integer $d$, there is an $\exp(\exp(\Oo_{|A|}(d^2 \log d)))(n+m)^3$-time algorithm that 
  \begin{compactitem}
  \item either correctly reports that the twin-width of $M$ is larger than $d$,
  \item or outputs a $|A|^{\Oo(d^4)}$-sequence for $M$.
  \end{compactitem}
\end{theorem}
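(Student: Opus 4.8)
The plan is to route the whole algorithm through the notion of \emph{mixed minor} and the Marcus--Tardos theorem (\cref{thm:marcus-tardos}). Two facts are needed. \emph{(A)} A matrix containing a large mixed minor has large twin-width: if $M$ has a $k$-mixed minor, then $\tww(M) \ge \ell(k)$ for a function $\ell$ growing at least like a fixed root of $k$ (up to logarithmic factors). \emph{(B)} Conversely, a matrix \emph{without} a large mixed minor admits a short contraction sequence of bounded width, and such a sequence can be produced in polynomial time. Granting both, the algorithm is: fix $k = k(d)$ with $\ell(k) > d$ (so $k$ is polynomial in $d$, up to logs), run the constructive procedure of \emph{(B)}; if it ever stumbles on a $k$-mixed minor of $M$, report $\tww(M) > d$, which is correct by \emph{(A)}; otherwise it returns the desired $|A|^{\Oo(k^2)} = |A|^{\Oo(d^4)}$-sequence. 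Note this is a gap algorithm: nothing is claimed when $d < \tww(M) \le |A|^{\Oo(d^4)}$.

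\textbf{The lower-bound fact (A).} I would fix a $k$-division of $M$ with every cell having two distinct rows and two distinct columns, and consider an arbitrary contraction sequence $G_n, \dots, G_1$ of $M$. For each row-part $R_i$ (resp.\ column-part $C_j$), track the first trigraph in which all of $R_i$ has been merged into one vertex; since every cell $M[R_i, C_j]$ has two distinct rows and two distinct columns, just before that moment the vertex containing $R_i$ must carry a red edge witnessing a discrepancy towards $C_j$. A charging argument over the $k^2$ cells — ordering parts by when they get ``completed'' and counting simultaneously present red edges — forces some trigraph in the sequence to have red degree $\ge \ell(k)$. The precise shape of $\ell$ is what dictates how large the mixed-minor threshold $k(d)$ must be, hence the $\Oo(d^4)$ and the $d^2 \log d$ exponents.

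\textbf{The constructive fact (B).} Build the sequence top-down, maintaining throughout a division $(\mathcal R, \mathcal C)$ of $M$ into intervals of rows and of columns; call a cell \emph{dirty} if it is not constant, and note that the red degree of the trigraph attached to the division is governed by the maximum number of dirty cells met by a single part. The main step takes a division with too many parts (more than some $|A|^{\Oo(k^2)}$ threshold) and produces two \emph{consecutive} parts on one side whose merge keeps every part within the dirtiness budget $|A|^{\Oo(k^2)}$. To find them, form the auxiliary matrix recording, for each cell, its coarse type (constant / row-constant / column-constant / mixed, plus whatever extra local information is needed to predict a merge). If it has at least $\mt(k)$ times its side length mixed entries, Marcus--Tardos extracts a $k$-grid minor of mixed entries, which refines to a $k$-mixed minor of $M$ and the procedure halts; otherwise the mixed cells are sparse, and a pigeonhole/Ramsey argument over the bounded number of coarse column-profiles selects two consecutive column-parts compatible enough to merge without creating new dirt beyond the budget. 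Iterating, alternating the two sides, drives the number of parts down to a constant, after which all remaining contractions are free; finally each interval-merge is expanded into genuine vertex contractions, which only adds a controlled additive term to the red degree because neighbourhoods are uniform inside tame blocks. This yields an explicit $|A|^{\Oo(k^2)}$-sequence.

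\textbf{Efficiency and the main obstacle.} Each of the $\Oo(n+m)$ phases rebuilds the coarse-type matrix and runs the Marcus--Tardos extraction in $\Oo((n+m)^2)$ time, for $\Oo((n+m)^3)$ overall; the factor $\exp(\exp(\Oo_{|A|}(d^2 \log d)))$ is a precomputation independent of $n+m$, essentially a brute-force twin-width computation on the bounded-size (of order $2^{\Oo(k)}$) instances arising in the Marcus--Tardos / compatibility analysis. The hard part is the core merging step of \emph{(B)}: proving that sparsity of mixed cells in the coarse-type matrix already forces a good \emph{consecutive} merge, with a bound on the newly created dirt that is uniform over all $\Oo(n+m)$ iterations — a single invocation of Marcus--Tardos has to be leveraged to control the \emph{cumulative} effect, and one must handle that merging two column-parts can turn a previously row-constant cell dirty. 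Turning the existence of the merge (or of the minor) into a polynomial-time search, and bounding the blow-up when refining interval-contractions into vertex-contractions, are the remaining technical points.
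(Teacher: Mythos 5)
You have correctly identified the paradigm --- a Marcus--Tardos win-win on divisions, merging consecutive parts until either a bounded-width sequence emerges or an obstruction certifying $\tww(M)>d$ is found --- and this is indeed how the cited source proceeds. Note, though, that the present paper does not reprove this statement: it imports it from~\cite{twin-width4}, and the machinery used there (visible in the proof of \cref{thm:approx-tww-quad} above) is phrased in terms of \emph{$r$-poor} parts and \emph{rich divisions}, with the obstruction being a rank-$k$ division (\cref{lem:hgr-htww}) rather than a mixed minor. Your mixed-minor route is the \emph{Twin-width~I}-style variant of the same argument and is viable in principle.

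The genuine gap is in part \emph{(B)}, and it is not only that the core merging lemma is left as ``the main obstacle.'' The invariant you propose to maintain --- that each part meets a bounded number of \emph{dirty}, i.e.\ non-constant, cells --- cannot be maintained, even for matrices of twin-width~$1$. Take $M[i,j]=j \bmod 2$: every cell of any division whose column parts have size at least $2$ is non-constant, so every row part meets as many dirty cells as there are column parts, yet $M$ has twin-width at most $1$ and contains no large mixed minor for the algorithm to fall back on. The quantity that the Marcus--Tardos argument actually controls is the number of \emph{mixed} cells per part (cells with at least two distinct rows \emph{and} at least two distinct columns); the non-mixed but non-constant cells must be neutralised by a further refinement of each part into classes of equal row or column vectors --- this is exactly what $r$-poorness encodes via the ``fewer than $r$ distinct rows outside $r$ column parts'' condition. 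That refinement is also where the alphabet enters: it multiplies the width by roughly $|A|^{\mathrm{poly}(d)}$, so your claim that expanding interval-merges into vertex contractions ``only adds a controlled additive term'' is off; the blow-up is multiplicative and is precisely the source of the $|A|^{\Oo(d^4)}$ in the statement. With ``dirty'' replaced by ``mixed'' and the refinement step supplied, your outline matches the known proof.
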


The exponent of $(n+m)$ in the algorithm running time was not made explicit in \cite{twin-width4}, but it can be observed that a direct implementation would indeed take cubic time.
We will now see that we can bring this down to an almost quadratic dependence in the number of rows and columns.
For the sake of convenience, we will work with symmetric matrices.
For most applications, this is not really a limitation.
To compute something on a non-symmetric matrix $M$, we will instead consider the symmetric $s(M):=\left( \begin{smallmatrix} 0&M \\ M^T&0 \end{smallmatrix} \right)$.
Our main application concerns matrix multiplication.
To compute the product $MN$ of two square\footnote{If not, one can pad them with rows or columns of 0 entries.} matrices, we can compute $$s(M) s(N^T) = \left( \begin{smallmatrix} 0&M \\ M^T&0 \end{smallmatrix} \right) \left( \begin{smallmatrix} 0&N^T \\ N&0 \end{smallmatrix} \right) = \left( \begin{smallmatrix} MN & 0 \\ 0 & M^TN^T \end{smallmatrix} \right).$$
That the resulting matrix is not necessarily symmetric is irrelevant.
We read off $MN$ from its top-left block.

We need the following definitions.

A~$k$-division of a symmetric matrix is \emph{symmetric} if for every $i \in [k]$, the $i$-th row part and the $i$-th column part contain the same row and column indices.
Let $M$ be a matrix and $\mathcal D=(\mathcal R, \mathcal C)$ be a $(k,\ell)$-division of $M$ (with row set $R$, and column set $C$).
A~row part $R_i \in \mathcal R$ is \emph{$r$-poor} in $\mathcal D$ if there is a union $X$ of at most $r$ column parts of $\mathcal C$ such that $M[R_i,C \setminus X]$ contains less than $r$ distinct rows.
An~$r$-poor column part is defined symmetrically.
In~\cite{twin-width4}, a~division~$\mathcal D$ is said \emph{$r$-rich} if none of its parts are $r$-poor.

A~\emph{symmetric fusion} of a matrix $M$ with a symmetric $k$-division $(\mathcal R, \mathcal C)$ consists of merging two consecutive row parts of $\mathcal R$, and the two corresponding column parts in $\mathcal C$.
It results in a symmetric $(k-1)$-division of $M$.
The notions of \emph{$(k,\ell)$-partition}, \emph{$k$-partition}, and \emph{symmetric merge} are obtained from \emph{$(k,\ell)$-division}, \emph{$k$-division}, and \emph{symmetric fusion}, respectively, by relaxing the condition that parts should consist of consecutive rows or columns.

As a preparatory step, we give an explicit lower bound for the grid rank to imply twin-width larger than~$d$. 
\begin{lemma}[\cite{twin-width4}]\label{lem:hgr-htww}
  Let $d$ be a non-negative integer, and $k := 2d(d+1)+1$.
  If $M$ admits a~rank-$k$ division, then $M$ has twin-width larger than~$d$.
\end{lemma}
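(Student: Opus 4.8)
The plan is to prove the contrapositive in its strong form: from a $d$-sequence of $M$ together with a rank-$k$ division I would exhibit a step of the sequence at which some vertex has red degree at least $d+1$, which is impossible. So fix a $d$-sequence of $M$, let $\mathcal P_t$ be the associated partitions of the domain $R\uplus C$ into parts (the vertices of the $t$-th trigraph, for $t$ running from the domain size down to $1$), and let $(\{R_1,\dots,R_k\},\{C_1,\dots,C_k\})$ with $k=2d(d+1)+1$ be the rank-$k$ division. For a part $P$ write $P_R=P\cap R$, $P_C=P\cap C$. First I would record the dictionary between red degree and combinatorics: in a $d$-trigraph, a part $P$ is joined to $Q$ by a red edge as soon as $P$ and $Q$ \emph{interleave} for the order ($\exists\,u,u'\in P,\ v,v'\in Q$ with $u<v<u'$), or the submatrix $M[P_R,Q_C]$ or $M[Q_R,P_C]$ is non-constant; hence every part has at most $d$ interleaving partners and at most $d$ ``non-constant'' partners.

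Next I would choose the critical time. For each of the $2k$ blocks $X\in\{R_1,\dots,R_k,C_1,\dots,C_k\}$ (which we may assume all have size $\ge 2$, by taking a slightly coarser division), let $\tau_X$ be the largest $t$ such that some part of $\mathcal P_t$ contains all of $X$; this is well defined since $X$ is a singleton in the finest partition and is contained in the unique part of $\mathcal P_1$. Put $\tau:=\max_X\tau_X$ and let $X^\star$ attain the maximum; replacing $M$ by $M^T$ if needed (this preserves $\tww$ and turns the division into a rank-$k$ division of $M^T$) I may assume $X^\star=R_{i^\star}$ is a row block, absorbed into a part $P^\star\in\mathcal P_\tau$. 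By maximality of $\tau$, at time $\tau$ no column block lies inside a single part, so every $C_j$ meets at least two parts of $\mathcal P_\tau$.

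Then comes the core counting. Since $R_{i^\star}\subseteq P^\star$, any column $c$ at which two rows of $R_{i^\star}$ differ forces the part of $\mathcal P_\tau$ containing $c$ to be a non-constant partner of $P^\star$, and there are only $\le d$ of those; over the remaining (constant) column-traces the rows of $R_{i^\star}$ agree. Now use the fatness of the $k$ cells $M[R_{i^\star},C_1],\dots,M[R_{i^\star},C_k]$: split them into the row-fat ones and the column-fat ones, so at least $\lceil k/2\rceil$ of them are of one type. For a column-fat cell, its $\ge k$ distinct columns must be realised either as constant columns (at most one value per constant trace, and at most $|\mathbb F_2|=2$ values over $\mathbb F_2$) or inside the column-traces of the $\le d$ non-constant or interleaving partners of $P^\star$; for a row-fat cell one invokes, symmetrically, the absorption of a \emph{column} block at the same level — forced onto the column side by pigeonhole among the $2k$ blocks — and argues the mirror statement. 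Bounding in each case how many blocks $C_j$ a single active partner of $P^\star$ can be charged with, one obtains that $P^\star$ has at least $\lceil k/2\rceil/(d+1)>d$ active (interleaving or non-constant) partners as soon as $k>2d(d+1)$, contradicting red degree $\le d$. Hence $\tww(M)\ge d+1$.

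The main obstacle is precisely this last counting step, and it is exactly where the explicit constant $2d(d+1)+1$ is manufactured: one must reconcile the division's column blocks $C_j$ with the (non-aligned, possibly non-interval) column-traces of $\mathcal P_\tau$, treat the row-fat and column-fat cells symmetrically (this is the factor $2$), and — the genuinely delicate point — prevent a single active partner of $P^\star$ from being responsible for unboundedly many blocks $C_j$ (the $d+1$ factor being ``$d$ active partners plus one trivial interaction''). I would organise it by freezing, once and for all, the $\le d$ non-constant column-traces together with the $\le d$ interleaving parts of $P^\star$, passing to the coarse partition of $C$ they induce, and reasoning on that. This is essentially the statement established in~\cite{twin-width4}; the present lemma only pins down the optimal constant, so a safe alternative is to carry the relevant argument of~\cite{twin-width4} through with the bookkeeping made quantitative, reading off $k=2d(d+1)+1$.
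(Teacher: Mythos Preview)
Your proposal attempts to reprove from scratch the content of \cite[Lemma~19]{twin-width4-arxiv}, whereas the paper's argument is a two-line reduction: observe that a rank-$k$ division is a $(k-1)$-rich division, and then invoke that cited lemma (a $2d(d+1)$-rich division forces twin-width larger than $d$). So you are not taking the same route; you are trying to re-derive the cited result.

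Unfortunately, the core of your argument has a genuine gap. After you isolate the part $P^\star\supseteq R_{i^\star}$ at time $\tau$, you split the $k$ cells $M[R_{i^\star},C_j]$ into ``row-fat'' and ``column-fat'' and claim to charge each of the $\lceil k/2\rceil$ cells of one type to one of at most $d+1$ ``active partners'' of $P^\star$. But you never establish the key charging bound: why can a single red neighbour of $P^\star$ be responsible for at most one block $C_j$? A part $Q$ of $\mathcal P_\tau$ can intersect many of the $C_j$, and a single non-constant partner can in principle absorb the diversity of \emph{all} column-fat cells simultaneously. The appeal to ``freezing the $\le d$ non-constant traces together with the $\le d$ interleaving parts'' does not give a per-block bound. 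Your treatment of the row-fat cells is even sketchier: you assert that ``absorption of a column block at the same level [is] forced onto the column side by pigeonhole among the $2k$ blocks'', but nothing forces a column block to reach its absorption time at the \emph{same} step $\tau$; the maximum $\tau$ could be attained uniquely by a row block. Finally, you yourself concede at the end that the counting is the ``genuinely delicate point'' and propose, as a fallback, to carry through the argument of \cite{twin-width4} --- which is exactly what the paper does, only via the cleaner intermediate notion of rich divisions rather than by an ad hoc charging argument.
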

\begin{proof}
  One can observe that a~rank-$k$ division is a $(k-1)$-rich division.
  Hence we assume that $M$ has a $2d(d+1)$-rich division.
  The fact that $M$ has then twin-width larger than~$d$ is the content of~\cite[Lemma 19]{twin-width4-arxiv}.
\end{proof}
    
Still following the previous paper of the series~\cite{twin-width4}, we show the following.

\begin{theorem}\label{thm:approx-tww-quad}
  Given an $n \times n$ symmetric matrix $M$ over a finite alphabet $A$, and an integer~$d$, there is an algorithm running in time $\Oo_{d,|A|}(n^2 \log n)$ that 
  \begin{compactitem}
  \item either correctly reports that the twin-width of $M$ is larger than $d$,
  \item or outputs an~$\exp(\exp(\exp(\Oo_{|A|}(d^4))))$-sequence for $M$.
  \end{compactitem}
\end{theorem}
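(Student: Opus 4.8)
The plan is to rerun the algorithm behind \Cref{thm:approx-tww-unspec} from \cite{twin-width4} essentially unchanged at the combinatorial level, but to reorganize its computation into few batched rounds. Recall its shape, phrased on a symmetric matrix (this is why, above, we replaced $M$ by $s(M)$: rows and columns are then identified and there are only $n$ of them). The algorithm maintains a symmetric division $\mathcal D$ of $M$, starting from the finest one, and repeats the following: inspect $\mathcal D$; if every part is $r$-rich and $\mathcal D$ still has more than some threshold number of parts, then $\mathcal D$ (or a subdivision on rich parts) exhibits a rank-$k$ division and we report $\tww(M) > d$ via \Cref{lem:hgr-htww}; otherwise, pick an $r$-poor part $R_i$, contract inside it the rows that agree outside a union $X$ of at most $r$ column parts (so $R_i$ collapses to fewer than $r$ vertices), record the red edges to the parts of $X$, and coarsen $\mathcal D$ by a symmetric fusion. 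When $\mathcal D$ has reached boundedly many parts, the quotient trigraph has bounded size and yields a contraction sequence of $M$ whose width is bounded by a function of $d$ and $|A|$; the size of this quotient — hence the width bound $\exp(\exp(\exp(\Oo_{|A|}(d^4))))$ — is governed by how conservative we are in the steps just described.

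First I would isolate the only steps whose naive cost exceeds $\Oo(n\log n)$: deciding which parts of the current $\mathcal D$ are $r$-poor together with a witnessing set of discarded column parts, and updating the red-degree bookkeeping after the induced contractions. Performing this afresh after each single fusion gives $\Oo(n)$ rounds of $\Oo(n^2)$ work, hence cubic time. Instead I would run the execution in $\Oo(\log n)$ \emph{phases}. At the start of a phase, compute once — in time $\Oo_{d,|A|}(n^2)$ — a snapshot driving all the fusions of that phase: radix-sorting the rows inside each row part (over a fixed alphabet, sorting $n$ rows of length at most $n$ costs $\Oo(n^2)$) lets me determine, for every row part $R_i$ and for the single canonical union $X_i$ of at most $r$ column parts produced by one fixed greedy rule, the number of distinct rows of $M[R_i, C \setminus X_i]$, and symmetrically for columns. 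From this snapshot I read off all currently $r$-poor parts, apply their licensed contractions in bulk, and perform the corresponding symmetric fusions. Since each phase either halts with a certificate or reduces the number of parts of $\mathcal D$ by a constant factor (depending only on $d$ and $|A|$), there are $\Oo(\log n)$ phases, and the total running time is $\Oo_{d,|A|}(n^2 \log n)$.

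Two subtleties will require care. First, $r$-poorness quantifies existentially over unions of at most $r$ column parts, of which there are $\binom{k}{\le r}$ — not polynomial in $n$ when $k$ is large — so we cannot test them all within an $\Oo_{d,|A|}(n^2)$ snapshot; testing only the canonical greedy $X_i$ merely strengthens the halting condition and weakens the width guarantee, and it is this cruder analysis that accounts for the extra exponentials in the width bound relative to \Cref{thm:approx-tww-unspec}. Second, and this is the main obstacle, one must check that batching a whole phase's contractions and fusions — rather than interleaving them one at a time, as in the original analysis — still yields a genuine contraction sequence whose red degree stays within the promised function of $d$ and $|A|$. On the row side this is immediate ($R_i$ collapses to fewer than $r$ vertices, each with at most $r$ red edges into $X_i$); the delicate point is the column side, since a column part could a priori belong to the discarded set $X_i$ of many row parts $R_i$ and thereby accumulate large red degree. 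I expect this to be controlled exactly as in \cite{twin-width4}, by a secondary application of the Marcus--Tardos bound (\Cref{thm:marcus-tardos}): if a single column part were diversifying for too many row parts, that column part alone would carry a large rank pattern, hence a rank-$k$ division, so the algorithm would already have halted. The remaining bookkeeping — maintaining the contraction tree, translating between divisions of $M$ and trigraphs, and padding $M$ to be square and symmetric — is routine and does not affect the asymptotics.
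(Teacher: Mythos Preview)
Your high-level idea—batching the work of~\cite{twin-width4} into $\Oo(\log n)$ phases each costing $\Oo_{d,|A|}(n^2)$—matches the paper's. But two concrete points in your plan diverge from the actual argument and, as written, leave gaps.

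First, your ``subtlety'' about $r$-poorness is a misconception. You say testing $r$-poorness existentially quantifies over $\binom{k}{\le r}$ column unions and is therefore too expensive, forcing a greedy relaxation that you then blame for the extra exponentials in the width bound. In fact, \cite[Lemma~20]{twin-width4-arxiv} (restated in the paper as \Cref{lem:check-poor}) gives an \emph{exact} $r$-poorness test for a row part of size~$h$ in time $f(r,|A|)\cdot hn$; the paper uses it verbatim, with no greedy approximation. The triple exponential in the width comes simply from the choice $r=|A|^{k\cdot\mt(k^2)}$ and the final bound $|A|^{\Oo(r^2)}$, not from any algorithmic shortcut.

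Second, and more seriously, your plan does not secure the constant-factor reduction per phase. You write ``read off all currently $r$-poor parts, apply their licensed contractions in bulk, and perform the corresponding symmetric fusions'', then assert that each phase either halts or shrinks the division by a constant factor. But nothing rules out a phase where only a handful of parts are $r$-poor: you then make only a handful of fusions (no geometric progress) yet have no certificate either, since the original rich-division argument of~\cite{twin-width4} needs \emph{all} parts rich. The paper closes this by testing, in one sweep, the $r$-poorness of every odd-indexed \emph{consecutive pair} $R_i\cup R_{i+1}$, and then branching: if at least half the pairs are poor, fuse those (guaranteed $\ge 1/4$ reduction); if fewer than half are, halt. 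That second branch requires a genuinely new correctness step—extending \cite[Theorem~22]{twin-width4} from ``no pair is poor'' to ``fewer than half are poor''—which the paper carries out by fusing away blocks of poor parts and rerunning the red/blue cell-colouring argument on the surviving $\ge s/2$ non-poor column parts to still extract a rank-$k$ division. Your sketch contains neither the odd-pair dichotomy nor this strengthened rich-division argument, and without them the $\Oo(\log n)$ phase count is unjustified.
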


\begin{proof}
Let $M$ be the input, as described, and set $k := 2d(d+1)+1$, and $r := |A|^{k \cdot \mt(k^2)}$.
Let $\mathcal D=(\mathcal R, \mathcal C)$ be the unique (symmetric) $n$-division of $M$, and $\mathcal P=(\mathcal R^{\mathcal P}, \mathcal C^{\mathcal P})$ be the unique (symmetric) $n$-partition of $M$.
Thus, initially, $\mathcal D = \mathcal P$.
Trivially, every part of $\mathcal D$ is $r$-poor in~$\mathcal D$.

We need the following, shown in the previous paper of the series.
\begin{claim}[{\cite[Theorem 2, Lemma 20]{twin-width4-arxiv}}]\label{lem:check-poor}
  Checking if a row (resp. column) part $R$ of size $h$ is $r$-poor in $\mathcal D$ can be done in time $\exp(\exp_{|A|}(r \log r)) \cdot h n$.

  If $R$ is indeed $r$-poor, within the same time, a set $Y$ of at most $r$ column (resp. row) parts can be found, and we output a partition of $R$ into the at most $r-1$ parts corresponding to the equivalent classes of the relation \emph{being equal rows (resp. columns) outside $Y$}.
  
  For convenience, if $R$ is not $r$-poor, we output $\{R\}$.
\end{claim}
\begin{proofofclaim}
  This is direct from~\cite[Lemma 20]{twin-width4-arxiv}.
  The running time is indeed linear in the number of entries spanned by~$R$, that is $h n$. 
\end{proofofclaim}

\textbf{Algorithm.}
While $\mathcal D$ has more than one row part, we do the following.

For every pair of consecutive row parts $R_i, R_{i+1}$ in $\mathcal R$ with $i$ odd, we check if $R_i \cup R_{i+1}$ is $r$-poor in $(\mathcal R \setminus \{R_i,R_{i+1}\} \cup \{R_i \cup R_{i+1}\}, \mathcal C)$.\\
$(A)$ If at least half of the pairs can be merged into an $r$-poor part, we make the corresponding symmetric fusions in $\mathcal D$.
We obtain a new symmetric $s$-division $\mathcal D'=(\mathcal R', \mathcal C')$ where indices are rearranged from 1 to $s$, and a new symmetric partition $\mathcal D'^{\mathcal P}=(\mathcal R'^{\mathcal P}, \mathcal C'^{\mathcal P})$, which is the union of the partitions output by~\cref{lem:check-poor}.\\
$(B)$ If, instead, less than half of the pairs can be merged, we will prove (slightly adapting \cite[Section 5]{twin-width4-arxiv}) that the grid rank of $M$ is large, which in turns implies that its twin-width is more than~$d$, by~\cref{lem:hgr-htww}.
We thus exit the while loop.

\medskip

\textbf{Running time.}
Case $(A)$ reduces by a factor of at least $1/4$ the current number of row parts in $\mathcal D$.
Hence it can only be done $\Oo(\log n)$ times.
By~\cref{lem:check-poor}, each such iteration takes time $$\exp(\exp(r \log r)) \cdot \left(\sum\limits_{i~\text{odd natural}} |R_i \cup R_{i+1}| \cdot n\right) = \Oo_d(n^2).$$
In this upper bound, $i$ does not exceed the size of the current division, and the equality holds since the submatrices we consider partition $M$.
The total running time of $\exp(\exp(\exp(\exp_{|A|}(d^4 \log d)))) \cdot n^2 \log n=\Oo_d(n^2 \log n)$ follows since case $(B)$ stops the algorithm.

\medskip

\textbf{Correctness.}
It was shown in~\cite{twin-width4} that if $(B)$ never occurs, the sequence of computed partitions $\mathcal P$ are coarser and coarser, and essentially form an~$O_{d,|A|}(1)$-sequence: one can arbitrarily merge parts to bridge one partition to the next.
Following the end of the proof of~\cite[Theorem 2, Case 2.]{twin-width4-arxiv}, we obtain an~$|A|^{\Oo(r^2)}$-sequence, hence an~$\exp(\exp(\exp(\Oo_{|A|}(d^4))))$-sequence, for~$M$.
We thus focus on the case when $(B)$ happens.

What was playing the role of case $(B)$ in~\cite{twin-width4} (Case 1.) was more constrained: It stated that no two consecutive parts can be fused into an $r$-poor part.
Then, one could directly show that, if this holds, the twin-width of $M$ has to be large.
Now, we only know that a positive fraction of the consecutive pairs cannot be fused into an $r$-poor part.
We show that this still implies a large grid rank (following \cite[Section 5]{twin-width4-arxiv}).

Let $\mathcal D^B=(\mathcal R^B=\{R^B_1,R^B_2,\ldots,R^B_{2s-1},R^B_{2s}\}, \mathcal C^B=\{C^B_1,C^B_2,\ldots,C^B_{2s-1},C^B_{2s}\})$ be the current division when $(B)$ occurs.
Note that, for convenience, we assumed that the number of row parts in $\mathcal D^B$ is even.
We consider the symmetric division $\mathcal D=(\mathcal R,\mathcal C)$ with $$\mathcal R=\{R_1=R^B_1 \cup R^B_2, R_2=R^B_3 \cup R^B_4, \ldots, R_s = R^B_{2s-1},R^B_{2s}\},$$
$$\mathcal C=\{C_1=C^B_1 \cup C^B_2, C_2=C^B_3 \cup C^B_4, \ldots, C_s = C^B_{2s-1},C^B_{2s}\}.$$
At least half of the $R_i$s (and symmetrically, of the $C_i$s) are \emph{not} $r$-poor.

Following~\cite[Theorem 22]{twin-width4}, in each column part of $\mathcal C$, we color in red the cells of $\mathcal D$ with at least $k$ distinct rows or $k$ distinct columns.
We then color blue the cells $X$ that feature a row vector which is not in a~non-red cell, in the same column part of $X$ and below it.
In the at least $s/2$ column parts that are not $r$-poor, one can show that at least $\mt(k^2)$ cells are colored~\cite[Theorem 22]{twin-width4}.
We symmetrically fuse consecutive blocks of $r$-poor parts with an adjacent part which is not $r$-poor.
Observe that the cells that are containing a formerly blue cell are now blue or red, and the cells containing a formerly red cell are still red.
We now obtain a new symmetric division with at least $s/2$ column parts (or equivalently, row parts), where every column part contains at least $\mt(k^2)$ colored cells.
Thus, we can finish the proof as in~\cite[Theorem 22]{twin-width4}, and obtain a rank-$k$ division of $M$.
Hence, by~\cref{lem:hgr-htww}, the twin-width of $M$ is larger than~$d$. 
\end{proof}

\subsection{Twin-decompositions}\label{sec:twin_dec}

A twin-decomposition of a graph~$G$ also uses the framework of a rooted carving decomposition, i.e., a rooted binary tree whose leaves are in one-to-one correspondence with the vertices of~$G$.
In the case of twin-decompositions though, the internal nodes of the rooted binary tree are totally ordered and the \emph{width} is quite different from how carving-width is defined.

A rooted binary tree with $n-1$ internal nodes bijectively labeled by $\ell$ on $[n-1]$ is said~\emph{ranked} if whenever $u$ and $v$ are two distinct internal nodes such that $v$ is a descendant of $u$, then $\ell(u) < \ell(v)$ holds.
For every $i \in [n]$, the \emph{$i$-th border} of a ranked tree $\mathcal T$ with $n-1$ internal nodes is the set of maximal subtrees whose roots have label at least $i$, where the leaves all have labels $+\infty$ (or equivalently $n$).
We denote by $B_i(\mathcal T)$ the $i$-th border of $\mathcal T$.
Note that an $i$-th border consists of exactly $i$ subtrees.
We denote by $r(\mathcal T)$ the root of $\mathcal T$.

A~\emph{twin-decomposition} of an $n$-vertex graph $G$ is a pair $(\mathcal T, \mathcal B)$ where
\begin{enumerate}[(a)]
\item $\mathcal T$ is a rooted binary \textbf{t}ree, ranked by $\ell$ on $[n-1]$, whose leaves are in one-to-one correspondence with $V(G)$, and 
\item $\mathcal B$ (for \textbf{b}icliques) is a set of edges over $V(\mathcal T)$, such that:
\end{enumerate}
\begin{enumerate}
\item $\mathcal B$ partitions the edge set of $G$, where an edge between $u, v \in V(T)$ is interpreted as the biclique of $G$ linking every leaf in the subtree of $\mathcal T$ rooted at $u$, to every leaf in the subtree of $\mathcal T$ rooted at $v$, and 
\item No edge of $\mathcal B$ \emph{crosses} an $i$-th border, i.e., links a node in a subtree $\mathcal T'$ of $B_i(\mathcal T)$ but \emph{not} the root of $\mathcal T'$ to a vertex outside every subtree of $B_i(\mathcal T)$.
\end{enumerate}

One can retrieve the sequence of trigraphs $G=G_n, \ldots, G_1$ from the twin-decomposition in the following way.
The vertex set of $G_i$ corresponds to the subtrees of $B_i(\mathcal T)$,
\begin{compactitem}
\item with a black edge between $\mathcal T_1 \in B_i(\mathcal T)$ and $\mathcal T_2 \in B_i(\mathcal T)$ whenever there is an edge in $\mathcal B$ between $r(\mathcal T_1)$ or one of its ancestors (in $\mathcal T$) and $r(\mathcal T_2)$ or one of its ancestors, or whenever for every leaf $u$ of $\mathcal T_1$ and every leaf $v$ of $\mathcal T_2$, there is an edge between the path from $u$ to $r(\mathcal T_1)$ and the path from $v$ to $r(\mathcal T_2)$,
\item and a red edge between $\mathcal T_1 \in B_i(\mathcal T)$ and $\mathcal T_2 \in B_i(\mathcal T)$ whenever this does not hold but yet there is an edge $uv \in \mathcal B$ with $u \in V(\mathcal T_1)$ and $v \in V(\mathcal T_2)$, and a~pair $(u',v')$ such that $u'$ is a leaf of $\mathcal T_1$, $v'$ is a leaf of $\mathcal T_2$, and there is no edge between the path from $u'$ to $r(\mathcal T_1)$ and the path from $v'$ to $r(\mathcal T_2)$.
\end{compactitem}

The width of the twin-decomposition $(\mathcal T, \mathcal B)$ is again defined as the maximum red degree among every vertex of every trigraph $G_i$ (as previously defined).
See~\cref{fig:twin-decomposition} for an illustration of a twin-decomposition corresponding to a particular contraction sequence.

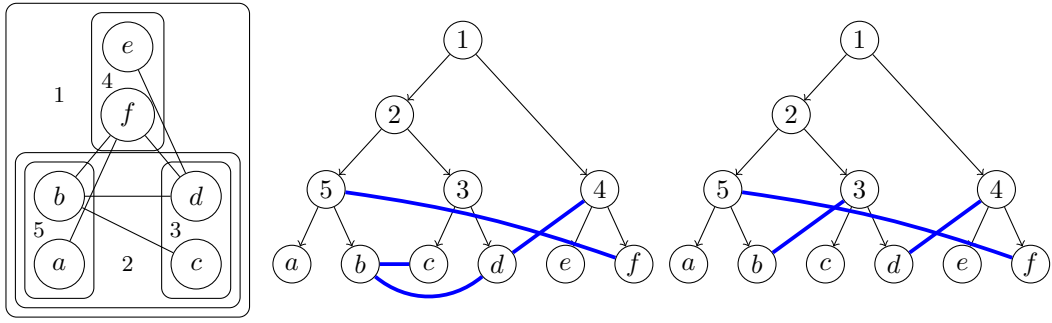
\begin{figure}[h!]
  \centering
  \begin{tikzpicture}[scale=0.9]
    \foreach \i/\j/\l in {0/0/a, 0/1/b, 2/0/c, 2/1/d, 1/3.2/e, 1/2.2/f}{
      \node[draw,circle,minimum size=0.66cm] (\l) at (\i,\j) {$\l$} ;
    }
    \foreach \i/\j in {b/c,b/d,d/e,d/f,f/a,f/b}{
      \draw (\i) -- (\j) ;
    }
    \foreach \i/\j/\l in {a/b/ab, c/d/cd, e/f/ef, ab/cd/abcd, abcd/ef/all}{
      \node[draw,rounded corners,fit=(\i) (\j)] (\l) {} ;
    }
    \foreach \i/\j/\s in {1/0/2,0/2.5/1,-0.3/0.5/5,1.7/0.5/3,0.7/2.7/4}{
      \node at (\i,\j) {\footnotesize{\s}} ;
    }

    \begin{scope}[xshift=2.4cm]
      \foreach \i/\l in {1/a,2/b,3/c,4/d,5/e,6/f}{
        \node[draw,circle,inner sep=0.04cm,minimum size=0.5cm] (v\l) at (\i,0) {$\l$} ;
      }
      \foreach \i/\j/\l in {1.5/1.1/5,3.5/1.1/3,5.5/1.1/4, 2.5/2.2/2, 3.5/3.3/1}{
        \node[draw,circle,inner sep=0.04cm,minimum size=0.5cm] (v\l) at (\i,\j) {$\l$} ;
      }
    \foreach \i/\j in {a/5,b/5, c/3,d/3, e/4,f/4, 5/2,3/2, 2/1, 4/1}{
      \draw[<-] (v\i) -- (v\j) ;
    }
    \foreach \i/\j/\b in {b/c/0,b/d/-40,d/4/0,f/5/-5.5}{
      \draw[line width=0.05cm,blue] (v\i) to [bend left=\b] (v\j) ;
    }
    \end{scope}

    \begin{scope}[xshift=8.2cm]
      \foreach \i/\l in {1/a,2/b,3/c,4/d,5/e,6/f}{
        \node[draw,circle,inner sep=0.04cm,minimum size=0.5cm] (v\l) at (\i,0) {$\l$} ;
      }
      \foreach \i/\j/\l in {1.5/1.1/5,3.5/1.1/3,5.5/1.1/4, 2.5/2.2/2, 3.5/3.3/1}{
        \node[draw,circle,inner sep=0.04cm,minimum size=0.5cm] (v\l) at (\i,\j) {$\l$} ;
      }
    \foreach \i/\j in {a/5,b/5, c/3,d/3, e/4,f/4, 5/2,3/2, 2/1, 4/1}{
      \draw[<-] (v\i) -- (v\j) ;
    }
    \foreach \i/\j/\b in {b/3/0,d/4/0,f/5/-5.5}{
      \draw[line width=0.05cm,blue] (v\i) to [bend left=\b] (v\j) ;
    }
    \end{scope}
\end{tikzpicture}
  \caption{Left: a graph $G$ with a contraction sequence (or partition sequence), where trigraph $G_i$ is obtained after performing the contraction labeled $i$. Center: the twin-decomposition corresponding to this contraction sequence, with the edges of $\mathcal B$ in blue. Right: a ranked tree $\mathcal T$ and a partition $\mathcal B$ of the edges of $G$ that does \emph{not} make for a twin-decomposition, since the edge $b3$ crosses $B_5(\mathcal T)$ (and $B_4(\mathcal T)$).}
\label{fig:twin-decomposition}
\end{figure}

Suppose $(\mathcal T,\mathcal B)$ is a twin-decomposition of a graph with width at most~$d$.
If one orients every edge $uv \in \mathcal B$ as the arc $(u,v)$ whenever the parent of $u$ has a larger label than the parent of $v$ (keeping the edge undirected if $u$ and $v$ shares the same parent), then every node $z$ of $V(\mathcal T)$ has at most $d$ out-neighbors.
Indeed at the $i$-th border, with $i$ being the label of the parent $p(z)$ of $z$, the subtree of $B_i(\mathcal T)$ rooted at $p(z)$ has red degree at most one per out-neighbor of~$z$.
This is because two neighbors of $z$ (in $\mathcal B$) cannot be in a descendant-ancestor relationship, as we imposed $\mathcal B$ to \emph{partition} the edge set of~$G$.   
Since~$z$ has at most one undirected edge incident to it, the graph made by $\mathcal B$ on $V(\mathcal T)$ is $d+1$-degenerate.

Thus the mere definition of twin-decomposition imposes that $|\mathcal B|=\Oo(n)$.

For every twin-decomposition $(\mathcal T, \mathcal B)$, we say that $\mathcal B$ is \emph{lifted up} if for every $uv\in \mathcal B$ such that $\ell(p(u))\geq \ell(p(v))$, either $p(v)=p(u)$ or if $u'\in V(\mathcal T)\backslash \sg{u}$ is such that $p(u')=p(u)$, then $u'v\notin \mathcal B$. For example, the edge set of the twin-decomposition in the center of \cref{fig:twin-decomposition} is lifted up.

\begin{remark}
\label{rem: high}
For every twin-decomposition $(\mathcal T, \mathcal B)$ of width $d$, one can compute in time $\Oo(dn)$ a set $\mathcal{B'}\subseteq E(\mathcal{T})$ of size at most $|\bb|$ such that $(\mathcal T, \mathcal{B'})$ is a twin-decomposition of the same sequence where $\mathcal{B'}$ is lifted up.
Indeed, for each $i$ ranging from $n-1$ to $1$, if $z=\ell^{-1}(i)$ and $u,u'\in V(\mathcal T)$ are such that $p(u)=p(u')=z$, for each $v \in V(\mathcal T)$ such that $uv,u'v\in \mathcal B$, we replace $\mathcal B$ by $(\mathcal B \setminus \sg{uv,u'v}) \cup \sg{zv}$.
\end{remark}

\begin{remark}
  If $(\mathcal T, \mathcal B)$ is a twin-decomposition where $\mathcal B$ is lifted up, then the edges of $\bb$ are exactly the black edges ``disappearing'' at some point of the contraction sequence associated to $\tree$.
  More precisely, $e=uv \in \bb$ if and only if there exists an $i$ such that $u$ and $v$ are distinct vertices of $V(G_i)$, $uv\in E(G_i)$ and either $u$ and $v$ are identified in $G_{i-1}$ or $u'v'\in R(G_{i-1})$, where $u'$ and $v'$ denote the vertices of $G_{i-1}$ that contain respectively $u$ and $v$.
\end{remark}

As previously observed, twin-decompositions permit to retrieve their corresponding contraction sequences, i.e., adjacencies and red adjacencies in the $G_i$s.
However, given a twin-decomposition $(\mathcal T, \mathcal B)$ of $G$ such that $|\mathcal B|=\Oo(n)$, we should not store all the $R(G_i)$ for each $i\in [n-1]$ as it may take $\Omega(n^2)$ space; too much for our algorithmic purposes when we want a linear complexity in $n$ (see \cref{sec:product}).
To do so, we will need to compute $R(G_i)$ on the fly.

Assume that $G_n, \ldots, G_1$ is a $d$-sequence of $G$, with corresponding twin-decomposition $(\mathcal T, \mathcal B)$.
Given $(\mathcal T, \mathcal B)$, we show how to compute all the red neighborhoods $\{W~\in V(G_i):~UW \in R(G_i)\}$ for every $i \in [n-1], U \in V(G_i)$ in $\Oo(dn)$ time.
Note that we may identify the vertices of $V(G_i)$ with their corresponding subsets of vertices of $V(G)$ (and thus denote them with uppercase symbols).
For each $i\in [n-1], U\in V(G_i)$, we let $L_U^i:=\sg{W_1, \ldots, W_k}$ be the $k\leq d$ red neighbors of $U$ in $G_i$.
We say that the $L_U^i$ can be \emph{dynamically computed} if there exists an algorithm which runs in $n$ steps, such that at the end of step $i$ (where $i$ ranges from $n-1$ to $1$), it only stores in memory the values of $L_U^i$ for each $U\in V(G_i)$, and every step takes time $\Oo_d(1)$.

\begin{lemma}
 \label{lem: listeadj}
 Given a twin-decomposition $(\mathcal T, \mathcal B)$ of width $d$ of an $n$-vertex graph $G$, one can dynamically compute in time $\Oo(dn)$ the adjacency lists $L^i_U$ of each node of $V(\mathcal T)$.
\end{lemma}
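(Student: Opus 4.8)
The plan is to process the border structure of $\mathcal T$ from the last contraction ($i = n-1$) down to the first ($i=1$), maintaining the invariant that at the end of step $i$ we store exactly $L^i_U$ for every $U \in V(G_i)$. First I would preprocess: by \cref{rem: high}, in time $\Oo(dn)$ we may assume $\mathcal B$ is lifted up, so each $uv \in \mathcal B$ is the black edge that disappears precisely when the contraction of $p(u)$ or $p(v)$ (whichever has the larger label, say it is $\ell(p(v))=j$) is performed; we record, for each internal node $z = \ell^{-1}(j)$, the list of such pairs. Since $\mathcal B$ is $(d+1)$-degenerate and $|\mathcal B| = \Oo(n)$, this bucketing takes $\Oo(dn)$ total time, and each bucket has $\Oo(d)$ entries: indeed, when we contract the two children $u, u'$ of $z$, the black edges incident to $z$ (or equivalently the non-red neighbours of the new vertex $Z = U \cup U'$) that come from $\mathcal B$ — namely from arcs emanating from ancestors of $u$ or $u'$ — are at most $d+1$ by the out-degree bound discussed just before \cref{rem: high}.

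Next, the core step. When we go from border $B_{i+1}(\mathcal T)$ to $B_i(\mathcal T)$ with $z = \ell^{-1}(i)$ having children $u, u'$ (subtrees $\mathcal T_u, \mathcal T_{u'} \in B_{i+1}(\mathcal T)$, merged into $Z \in B_i(\mathcal T)$ with $U, U'$ their vertex sets), the new red neighbourhood $L^i_Z$ is obtained exactly as in the contraction rule: $Z$ is red-adjacent to a vertex $W$ of $B_i(\mathcal T) = B_{i+1}(\mathcal T) \setminus \{\mathcal T_u,\mathcal T_{u'}\} \cup \{Z\}$ iff $W \in L^{i+1}_U \cup L^{i+1}_{U'}$, or $W$ was black-adjacent (in $G_{i+1}$) to exactly one of $U, U'$. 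The red contributions $L^{i+1}_U \cup L^{i+1}_{U'}$ are already stored and have size $\Oo(d)$. The "black-adjacent to exactly one" contributions must be read off the precomputed bucket of $z$: each relevant $W$ is hit by some pair $uv$ (resp.\ $u'v$) in $\mathcal B$ with the appropriate border behaviour, and the bucket has size $\Oo(d)$. After forming the tentative set, we symmetrise — for each $W$ newly placed in $L^i_Z$ we add $Z$ to $L^i_W$, and for each $W$ that was in $L^{i+1}_U \cap L^{i+1}_{U'}$ but both occurrences now merge into one we just keep $Z$ — and we update the affected $L^i_W$ for the $\Oo(d)$ vertices $W$ involved; all other lists are unchanged (so we never touch, let alone copy, the $\Omega(n)$ untouched lists — they persist in memory). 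One must also handle the case where $W$ was red-adjacent to both $U$ and $U'$ in $G_{i+1}$: then it stays red-adjacent to $Z$, and we must delete the now-stale references to $U$ and $U'$ from $L^i_W$ and insert $Z$, again $\Oo(d)$ work since $|L^{i+1}_W| \le d$. Maintaining each $L_U^i$ as a doubly linked list with back-pointers from the symmetric copies makes every such deletion/insertion $\Oo(1)$, so step $i$ runs in $\Oo(d)$ time, giving $\Oo(dn)$ overall.

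For correctness I would argue by downward induction on $i$ that the stored lists are exactly the red neighbourhoods of $G_i$: the base case $i = n-1$ (just after the first contraction) is a finite check against the contraction rule applied to the two children of $\ell^{-1}(n-1)$, using that all other leaves are isolated-in-red in $G_{n-1}$; and the inductive step is precisely the verification that the update rule above implements the trigraph contraction rule of \cref{subsec:contraction-seq}, which in turn matches the twin-decomposition-to-sequence translation given in \cref{sec:twin_dec} — the key point being that black adjacency of $Z$ to $W$ in $G_i$ is witnessed either by an ancestor-edge in $\mathcal B$ (captured by the bucket) or by the "full biclique of disappeared black edges" condition, and the latter contributes nothing new to the \emph{red} degree.

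The main obstacle I anticipate is bounding the per-step work by $\Oo(d)$ rather than $\Oo(d^2)$ or worse: naively, when two vertices $U, U'$ of red degree $d$ merge, one might fear up to $\Omega(d)$ symmetric updates each costing $\Omega(d)$ (to search the neighbour's list), and one might also worry that the bucket of black edges to consult at node $z$ could be large. The resolution on the first point is the linked-list-with-back-pointers representation (so membership/deletion in a neighbour's list is $\Oo(1)$, not a scan); the resolution on the second is the degeneracy/out-degree bound: because $\mathcal B$ is lifted up and $(d+1)$-degenerate, exactly $\Oo(d)$ edges of $\mathcal B$ can be "resolved" at any single contraction, so both the red lists being merged and the black-edge bucket at $z$ have size $\Oo(d)$, and the merged list $L^i_Z$ itself has size $\le d$ by definition of width. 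Getting these bookkeeping details exactly right — in particular never iterating over a vertex's full neighbour list and never copying an untouched $L_W$ — is where the care lies, but there is no conceptual difficulty beyond what the twin-decomposition framework already provides.
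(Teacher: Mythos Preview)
Your proposal is correct and follows essentially the same approach as the paper: process the contractions from $i=n-1$ down to $1$, maintain the red-neighbour lists, and at each step form $L_Z$ from $L_U \cup L_V$ together with the $\Oo(d)$ edges of $\mathcal B$ incident to the two children of the node being created. The paper's proof is terser---it simply states the update rule (Algorithm~\ref{alg:listeadj}) and appeals to the width bound to get $\Oo(d)$ work per step---whereas you additionally spell out the symmetric updates to the lists $L_W$ of the affected neighbours, the linked-list-with-back-pointers representation that makes those updates $\Oo(1)$ each, and the explicit bucketing of $\mathcal B$ by the node at which each edge is consumed; these are exactly the implementation details the paper leaves implicit, and your justification of the $\Oo(d)$ bucket size via the out-degree bound on the orientation of $\mathcal B$ is the right one.
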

\begin{proof}
By \cref{rem: high} we may assume that $\mathcal B$ is lifted up.
We simply need to process the nodes of $\mathcal T$ by decreasing values of their labels and update in constant time the lists $L^i_U$ according to the information given by $\mathcal B$.
The formal algorithm is given by \cref{alg:listeadj}.
Since the width of the twin-decomposition is at most~$d$, at each step, the number of considered edges of $\mathcal B$ is at most $d$, which gives the desired complexity.
In the following algorithm, the set $\mathcal V$ at step $i$ correspond to $V(G_i)$ and the list $L_U^i$ is given by the content of $L_U$ at the end of step $i$.
   \begin{algorithm}[ht]
  \DontPrintSemicolon
  \SetKwInOut{Input}{Input}
  \SetKwInOut{Output}{Output}
  \Input{~~A twin-decomposition $(\mathcal T, \mathcal B)$ of width $d$.}
  $\mathcal V \leftarrow \bigcup_{u \in V(G)} \{\{u\}\}$\;
  \For{$U\in \mathcal V$}{
    $L_U\leftarrow \sg{}$\;
  }
 \For{$i = n-1 \rightarrow 1$}{
   Let $Z$ be the node of $\mathcal T$ labeled by $i$ and $U,V$ denote its two children in $\mathcal T$. \;
   $\mathcal V \leftarrow \mathcal V\backslash \sg{U,V}\cup \sg{Z}$\;
   $L_Z\leftarrow L_U \cup L_V \backslash \sg{U,V}$\;
   \For{$W\in \mathcal V \backslash \sg{Z}$ such that $VW\in \mathcal B$ xor $UW\in \mathcal B$}{
        $L_Z\leftarrow L_Z\cup \sg{W}$\;
   }
 }
 \caption{On-the-fly computation of the red graphs based on a twin-decomposition}
 \label{alg:listeadj}
 \end{algorithm}
\end{proof}

For the sake of simplicity, we described twin-decompositions for graphs but they readily generalize to binary structures.
Here, the representation of binary structures with edge-colored graphs will be the most convenient. 

We thus see a binary structure as a graph $G$ with an edge-labeling function $\nu:E(G)\to \mathbb F_q$.
A~twin-decomposition of $G$ is still a pair $(\mathcal T, \mathcal B)$, only now, edges of $\mathcal B$ are labeled over $\mathbb F_q$.
For every $\ell \in \mathbb F_q$, let $G_\ell$ denote the subgraph of $G$ induced by the set of edges labeled $\ell$, and by $\mathcal B_\ell$, the subset of $\bb$ of edges labeled $\ell$.
Then $(\mathcal T, \bb_\ell)$ is a twin-decomposition of $G_\ell$.
We say that $\mathcal B$ is lifted up if $\bb_\ell$ is lifted up for every $\ell \in \mathbb F_q$.

\subsection{Computing a twin-decomposition from a contraction sequence}\label{subsec:seq-to-tww-dec}

There is an easy linear (in the input size, i.e., in the number of edges) algorithm that, given a $d$-sequence, computes a corresponding twin-decomposition (of same width~$d$).
The \emph{list of triples} of a contraction sequence $G_n, \ldots, G_i, \ldots, G_1$ is $(u_n,v_n,z_n), \ldots, (u_i,v_i,z_i), \ldots, (u_2,v_2,z_2)$ such that the contraction of $u_i$ and $v_i$ in $G_i$ into a new vertex $z_i$ results in $G_{i-1}$.  

\begin{theorem}\label{thm:to-tww-dec}
  A twin-decomposition of width $d$ of an $n$-vertex graph $G$ given with the list of triples of a $d$-sequence can be computed in time $\Oo(n^2)$. 
\end{theorem}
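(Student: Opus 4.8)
The plan is to replay the $d$-sequence encoded by the list of triples while maintaining each trigraph explicitly, and to let $\mathcal B$ collect exactly the black edges that disappear along the way, as characterised in \cref{sec:twin_dec}. First I would build the ranked tree $\mathcal T$: create one leaf per vertex of $G$, then, going through the triples $(u_i,v_i,z_i)$ in order, create an internal node $z_i$ with label $i-1$ whose two children are the nodes currently representing $u_i$ and $v_i$. Maintaining an array from each of the at most $2n-1$ vertex names to its current tree node makes this run in $\Oo(n)$ time, and the tree is ranked since a child of $z_i$ is either a leaf or a node $z_k$ with $k>i$, hence of label $k-1>i-1$.

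Next I would replay the sequence on an $n\times n$ array recording, for every pair of currently alive vertices, whether that pair is a black edge (and, if the underlying structure is edge-coloured, with which colour) or not; this array is initialised from $G=G_n$. To carry out the contraction at step $i$, I reuse the slot of $u_i$ for $z_i$ and, scanning every other alive vertex $x$, set the entry of $z_ix$ to ``black with colour $c$'' if $u_ix$ and $v_ix$ were both ``black with colour $c$'', and to ``not black'' otherwise; in the same scan, whenever a formerly black edge $u_ix$ (resp.\ $v_ix$) has become not black I add to $\mathcal B$ the edge of $\mathcal T$ joining $u_i$ (resp.\ $v_i$) to $x$, carrying the colour of that black edge, and if $u_iv_i$ itself was a black edge I add the edge $u_iv_i$ to $\mathcal B$. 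Each contraction then costs $\Oo(n)$ time, so the whole replay (which dominates the $\Oo(n)$ construction of $\mathcal T$) runs in $\Oo(n^2)$; the output has size $\Oo(n)+|\mathcal B|=\Oo(dn)$, a width-$d$ twin-decomposition being $(d{+}1)$-degenerate.

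Finally I would check that $(\mathcal T,\mathcal B)$ is a twin-decomposition of width at most~$d$. For the partition property one follows, for a fixed edge $ab$ of $G$, the two parts $A_i\ni a$ and $B_i\ni b$ through the sequence: while the $A_iB_i$ edge is black it is a monochromatic complete biclique, so $ab$ is not yet recorded, and it is recorded exactly once, at the first step where $A_i$ and $B_i$ merge or where that edge turns red; the latter event is precisely the ``became not black'' case detected above (one also checks that a black edge incident to a contracted vertex never becomes a non-edge, only a red edge). For the non-crossing condition, an edge of $\mathcal B$ recorded at step $i$ joins two roots of subtrees of the $i$-th border $B_i(\mathcal T)$, and the ranked structure then forces that at every border $B_j(\mathcal T)$ each of its two endpoints is a subtree root, or lies strictly inside a subtree, or lies outside all subtrees, never realising the forbidden ``non-root inside a subtree versus outside all subtrees'' pattern. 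A byproduct worth noting is that $\mathcal B$ comes out lifted up, since $\{u_i,x\}$ and $\{v_i,x\}$ cannot both be recorded: recording $\{u_i,x\}$ forces $v_ix$ not black at step~$i$, while $\{v_i,x\}$ could only be recorded at a step contracting $v_i$ or $x$, hence at step~$i$ or later, but $v_i$ is alive only in $G_i,G_{i+1},\dots$, a contradiction.

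Given the partition and non-crossing properties, running the trigraph-recovery procedure of \cref{sec:twin_dec} on $(\mathcal T,\mathcal B)$ returns exactly the input trigraphs $G_n,\dots,G_1$ — a routine check: a recovered $G_i$ has a black edge $UW$ iff all pairs between the two corresponding parts are edges of the same colour, because by non-crossing every $\mathcal B$-edge covering such a pair either lies inside the subtree-pair or runs between an ancestor of one root and the other root, and it has a red edge otherwise iff some such pair is an edge. In particular the width of $(\mathcal T,\mathcal B)$ equals the maximum red degree of the given $d$-sequence, which is at most~$d$. I expect the only genuinely delicate point to be engineering each contraction to take $\Oo(n)$ time, which the array representation handles; the structural correctness is essentially the known equivalence between $d$-sequences and width-$d$ twin-decompositions \cite{twin-width3,twin-width&permutations} made effective, with the bookkeeping above.
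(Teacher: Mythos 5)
Your proposal is correct and follows essentially the same approach as the paper's proof: build $\mathcal T$ by adding a parent $z_i$ for each triple, replay the contractions explicitly in $\Oo(n)$ time per step, and record in $\mathcal B$ exactly the black edges that disappear (turn red or are absorbed by the contraction). Your write-up is in fact somewhat more careful than the paper's — it explicitly handles the edge $u_iv_i$ itself and verifies the partition, non-crossing, and lifted-up properties — but the underlying algorithm and the $\Oo(n^2)$ accounting are identical.
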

\begin{proof}
  Let $(u_n,v_n,z_n), \ldots, (u_2,v_2,z_2)$ be the list of triples of the given $d$-sequence of~$G$.
  We want to build a twin-decomposition $(\mathcal T,\mathcal B)$ of width $d$ for $G$.
  We initialize $\mathcal T$ by creating one leaf for each vertex of $G$, each leaf being labeled by the corresponding vertex.
  We initialize $\mathcal B$ as the empty set.
  
  We process the list of triples from $(u_n,v_n,z_n)$ to $(u_2,v_2,z_2)$.
  At the $i$-th iteration (for some $i \in [2,n]$) we explicitly perform the contraction of $u_i$ and $v_i$ in $G_i$ (into $z_i$), and obtain the new trigraph $G_{i-1}$.
  In $\mathcal T$, we add a common parent $z_i$ to $u_i$ and $v_i$, and set $\ell(z_i)=i-1$.
  For every $w \in V(G_{i-1}) \setminus \{z_i\}$ such that $\{u_iw,v_iw\} \cap E(G_i) \neq \emptyset$ and $z_iw \in R(G_{i-1})$, we add the edge $u_iw$ to $\mathcal B$ if $u_iw \in E(G_i)$, or we add the edge $v_iw$ to $\mathcal B$ if $v_iw \in E(G_i)$.
  One can observe that the $i$-th step takes $\Oo(n)$ time, hence the overall running time is quadratic.
\end{proof}

Actually, on an $n$-vertex $m$-edge graph the twin-decomposition of a $d$-sequence of~$G$ can be computed in time $\Oo(dn+m)$. 

\begin{observation}\label{obs:twin-dec-to-seq}
  There is an $\Oo(n)$-time algorithm that inputs a twin-decomposition $(\mathcal T, \mathcal B)$ of width $d$ of an $n$-vertex graph $G$, and outputs the list of triples of a~$d$-sequence of $G$.
\end{observation}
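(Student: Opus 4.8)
The plan is to notice that the list of triples of the contraction sequence is already encoded in the ranked tree $\mathcal T$ alone, and can be read off in a single linear-time pass; the bicliques $\mathcal B$ and the width bound $d$ will only be needed at the very end, to certify that the output really is the list of triples of a $d$-sequence.

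Recall from \cref{sec:twin_dec} that, for each $i \in [n]$, the subtrees of the $i$-th border $B_i(\mathcal T)$ are in one-to-one correspondence with $V(G_i)$, and we identify each vertex of $G_i$ with the root of the corresponding subtree (so the leaves of $\mathcal T$ name the vertices of $G = G_n$). The structural observation I would first prove is: for every $j \in [n-1]$, writing $z := \ell^{-1}(j)$ for the internal node of label $j$ and $u, v$ for its two children in $\mathcal T$, the ranking property forces $\ell(u), \ell(v) > j$ (or $u,v$ are leaves, of label $+\infty$), so $u$ and $v$ are roots of maximal subtrees of $B_{j+1}(\mathcal T)$, whereas in $B_j(\mathcal T)$ these two subtrees are merged into the single subtree rooted at $z$. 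In other words, passing from $B_{j+1}(\mathcal T)$ to $B_j(\mathcal T)$ is exactly the contraction of $u$ and $v$ into $z$, so the triple for the step $G_{j+1} \to G_j$ is $(u, v, z)$.

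The algorithm is then immediate. Since $\ell$ is a bijection from the $n-1$ internal nodes onto $[n-1]$, in $\Oo(n)$ time I build an array $N[1\,..\,n-1]$ with $N[j] = \ell^{-1}(j)$. Then, for $j$ from $n-1$ down to $1$, I output the triple $(u,v,N[j])$ where $u,v$ are the two children of the node $N[j]$ in $\mathcal T$. This emits the $n-1$ triples in the order $(u_n,v_n,z_n), \ldots, (u_2,v_2,z_2)$, each iteration costing $\Oo(1)$, for a total of $\Oo(n)$.

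Finally, I would check correctness as a $d$-sequence: starting from $G$ and performing the emitted contractions in order produces trigraphs whose vertex sets are precisely the $B_i(\mathcal T)$ and whose black and red edges are determined by that partition together with $\mathcal B$ — that is, exactly the trigraphs one retrieves from $(\mathcal T, \mathcal B)$ in \cref{sec:twin_dec}. Since the width of $(\mathcal T, \mathcal B)$ is, by definition, the maximum red degree over these trigraphs, and this equals $d$, the emitted list is the list of triples of a $d$-sequence of $G$. I do not expect a genuine obstacle here: the only points needing care are keeping the indices $j$ and $j+1$ aligned with the convention fixing the list of triples, and fixing a consistent naming of trigraph vertices by $\mathcal T$-nodes so that the width hypothesis transfers without friction.
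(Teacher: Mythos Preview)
Your proposal is correct and follows essentially the same approach as the paper: iterate over internal nodes by decreasing label and output the triple formed by the two children and the node itself. The only cosmetic difference is that the paper explicitly assigns fresh ``tags'' to name the contracted vertices, whereas you identify them directly with nodes of $\mathcal T$; your added justification that the width hypothesis guarantees a $d$-sequence is a welcome detail the paper leaves implicit.
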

\begin{proof}
  We add identifying labels, say \emph{tags}, to the nodes of $\mathcal T$ (in addition to the labeling function $\ell$).
  The tags of the leaves match the one-to-one correspondence between $V(G)$ and the leaves of $V(T)$.
  For every $i$ from $n$ down to 2, append the triple $(u_i,v_i,z_i)$ where $u_i$ and $v_i$ are the tags of the two children of the node $v = \ell^{-1}(i-1)$, and tag $v$ with the fresh identifier $z_i$. 
\end{proof}

A~consequence of the second paper of the series is that, given twin-decompositions of bounded width, one can find twin-decompositions of (larger) bounded width, where in addition the tree $\mathcal T$ has logarithmic depth.  

\begin{theorem}\emph{\cite[see Lemma 23 and Proposition 22]{twin-width2}}
  For every integer $d$, there is a larger integer $D$ such that every $n$-vertex graph of twin-width $d$ admits a twin-decomposition of width at most~$D$ and depth $\Oo_d(\log n)$. 
\end{theorem}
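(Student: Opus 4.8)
The plan is to reduce the statement to a balancing result about contraction sequences and then prove it by divide-and-conquer, following~\cite{twin-width2}. Since a twin-decomposition of width~$d$ and a $d$-sequence are interchangeable up to polynomial-time transformations (\cref{thm:to-tww-dec} and \cref{obs:twin-dec-to-seq}), it suffices to show that every $n$-vertex graph~$G$ with a $d$-sequence admits a $D(d)$-sequence whose associated contraction tree (the tree of the corresponding twin-decomposition) has depth $\Oo_d(\log n)$. I would work with the partition-sequence viewpoint: a $d$-sequence is a chain of partitions $\mathcal P_n \succ \cdots \succ \mathcal P_1$ of $V(G)$ in which consecutive partitions differ by merging two parts and every quotient trigraph has red degree at most~$d$; the goal is to replace this chain by one whose contraction tree is balanced, at the price of raising the red degree from~$d$ to some $D$ depending only on~$d$.

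The core step is a balanced-separation lemma. Starting from the given sequence, I would produce a vertex partition $\{A,B\}$ with $\tfrac n3 \le |A|,|B| \le \tfrac{2n}{3}$ by scanning the partitions $\mathcal P_n, \mathcal P_{n-1}, \dots$ and stopping the first time the largest part would exceed $\tfrac{2n}{3}$: just before this, the largest part has size in $[\tfrac n3,\tfrac{2n}{3}]$, which yields the balanced split. The point is that at the corresponding border the quotient trigraph has red degree at most~$d$, so the bipartite trigraph between $A$ and~$B$ is \emph{simple}: every vertex of one side sees the other side entirely or not at all, except through a set of at most~$d$ exceptional parts on each side. Then $G[A]$ and $G[B]$ are induced subgraphs, hence still have twin-width at most~$d$, so by induction on the number of vertices each admits a balanced twin-decomposition of width $D'(d)$ and depth $\Oo_d(\log n)$; one glues these under a common root. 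As both sides have at most $\tfrac{2n}{3}$ vertices, the depth recursion is $h(n) \le \Oo_d(1)+h(\tfrac{2n}{3})$, giving $h(n)=\Oo_d(\log n)$.

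The step I expect to be the main obstacle is bounding the width of the glued decomposition by a function of~$d$ \emph{alone}, independently of the $\Theta(\log n)$ recursion depth. A naive gluing lets the super-node ``$A$'' become red-adjacent to every part of~$B$'s current partition that interacts non-uniformly with~$A$, and such increments would accumulate across the $\Theta(\log n)$ levels. The resolution, which is the technical heart of~\cite[Lemma~23 and Proposition~22]{twin-width2}, is to exploit the bounded interface produced by the balanced-separation step: the at most~$d$ exceptional parts on each side are carried into the recursive calls as a bounded number of coloured apex vertices, and the decompositions are required to refine the partition of each side that these parts induce; an apex-addition argument (adding $\Oo_d(1)$ vertices preserves bounded twin-width) then keeps every recursive instance within twin-width $\Oo_d(1)$, so the width blow-up is incurred only once and is absorbed into the final constant~$D=D(d)$ rather than compounding. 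Assembling the recursion yields the claimed twin-decomposition of width $D(d)$ and depth $\Oo_d(\log n)$.
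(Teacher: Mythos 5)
You are trying to reprove a statement that this paper does not actually prove: it is imported verbatim from \emph{Twin-width II} (the cited Lemma~23 and Proposition~22). The argument there is bottom-up rather than top-down: one shows that every $n$-vertex trigraph admitting a $d$-sequence contains $\Omega_d(n)$ \emph{pairwise disjoint} pairs of vertices that can all be contracted simultaneously while keeping the red degree bounded by a function of $d$ (``versatile'' contraction sequences); iterating such parallel rounds shrinks the vertex count geometrically, so the contraction tree has depth $\Oo_d(\log n)$ by construction, with the width controlled once and for all by the single-round lemma. Your balanced-separator recursion is a genuinely different plan, and it founders exactly where you say it might.

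The gap is in the gluing/width step, and your proposed fix does not close it. The at most $d$ ``exceptional'' parts of $A$ at the chosen border are large vertex sets, not $\Oo_d(1)$ vertices, so they cannot be absorbed as coloured apexes. When the recursion on $B$ runs, those parts are refined into arbitrarily many sub-parts, and a node produced by contractions inside $A$ can be red-adjacent to unboundedly many of them; requiring $B$'s decomposition to \emph{refine} the border partition is of no help, since red degree is measured against the current (finer) partition, not the coarse one. Symmetrically, a single vertex or node of $B$ lying inside a red-neighbour part of $A$ can distinguish unboundedly many current parts of $A$'s decomposition, and your interface analysis never bounds the red degree in that direction at all. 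Finally, ``gluing under a common root'' still forces you to choose one total order interleaving the two contraction sequences and to bound the red degree of the \emph{combined} trigraphs; the induction hypothesis on $G[A]$ and $G[B]$ in isolation gives you nothing about these mixed quotients. As written, the width bound $D(d)$ is therefore unsubstantiated, and I see no way to repair it without essentially switching to the parallel-contraction argument of the cited reference.
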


Given a twin-decomposition $(\mathcal T, \mathcal B)$ of $G$ with width $d$ and depth $h$, the presence of an edge between two vertices of $G$ can be decided in time $\Oo(d h)$.
This yields a~linear-space representation of $G$ with edge queries in logarithmic time.
The following stronger result was shown by Pilipczuk et al.

\begin{theorem}[\cite{PilipczukSZ22}]\label{thm:compact-rep}
 Given a twin-decomposition of width~$d$ of an $n$-vertex graph $G$, and any $\varepsilon > 0$, there is a data structure of size $\Oo(d n^{1+\varepsilon})$ \emph{(}resp.~$\Oo_d(n)$\emph{)}, computable in time $\Oo(d n^{1+\varepsilon})$ \emph{(}resp.~$\Oo_d(n \log n \log \log n)$\emph{)}, that supports edge queries of $G$ in time $\Oo(1/\varepsilon)$ \emph{(}resp.~$\Oo_d(\log \log n)$\emph{)}.
\end{theorem}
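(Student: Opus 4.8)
The plan is to recall the construction of \cite{PilipczukSZ22}, phrased in the vocabulary of twin-decompositions.

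\textbf{Step 1: reduce edge queries to a path query.} At the cost of an $\Oo_d(1)$ blow-up of the width and a preprocessing absorbed in the time bounds below, we first rebalance the twin-decomposition so that, by \cite{twin-width2} (the balancing statement quoted above), the pair $(\mathcal{T},\mathcal{B})$ has width $\Oo_d(1)$ and depth $h=\Oo_d(\log n)$; by \cref{rem: high} we also make $\mathcal{B}$ lifted up. Since $\mathcal{B}$ \emph{partitions} $E(G)$, for two leaves $u,v$ we have $uv\in E(G)$ if and only if there is a (necessarily unique) edge of $\mathcal{B}$ with one endpoint on the root-path of $u$ and the other on the root-path of $v$; in the edge-labelled (matrix) setting the label of that edge is the entry $M[u,v]$, its absence meaning~$0$. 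Fix a DFS numbering of the leaves and identify each node $x$ with the interval $I(x)\subseteq[n]$ of DFS-indices of the leaves below it, so that ``$y$ lies on the root-path of $v$'' becomes the $\Oo(1)$-time test $\mathrm{dfs}(v)\in I(y)$. Orienting each $xy\in\mathcal{B}$ towards the endpoint whose parent has the smaller $\ell$-label gives $(V(\mathcal{T}),\mathcal{B})$ out-degree at most $d$ (the $(d+1)$-degeneracy observation recalled before \cref{obs:twin-dec-to-seq}); hence $|\mathcal{B}|=\Oo_d(n)$, and only $\Oo_d(h)=\Oo_d(\log n)$ edges of $\mathcal{B}$ leave the root-path of a fixed leaf. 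As every relevant $\mathcal{B}$-edge leaves the root-path of $u$ or of $v$, walking up both root-paths and applying the $\Oo(1)$-time ancestor test to these $\Oo_d(\log n)$ edges already produces a structure of size $\Oo_d(n)$, built in $\Oo_d(n)$ time, answering a query in $\Oo_d(\log n)$ time. It remains to speed up queries.

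\textbf{Step 2: accelerate queries.} The query for a pair $(u,v)$ asks whether, among the out-endpoints of the $\Oo_d(\log n)$ $\mathcal{B}$-edges incident to the root-path of $u$ (symmetrically for $v$), one has an interval $I(\cdot)$ containing $\mathrm{dfs}(v)$. One cannot simply precompute, per leaf $u$, the sorted list of these endpoints, since these lists have total size $\sum_u\Oo_d(\text{depth of }u)$, potentially $\Omega(n\,|\mathcal{B}|)$. The plan is instead to attach predecessor structures --- van~Emde~Boas trees / $y$-fast tries over the universe $[n]$, wired together along $\mathcal{T}$ by fractional cascading --- to the \emph{nodes} of $\mathcal{T}$, in a way shared between root-paths, so that the total size stays $\Oo_d(|\mathcal{B}|)=\Oo_d(n)$ and a query reduces to $\Oo_d(1)$ predecessor operations, i.e.\ $\Oo_d(\log\log n)$ time; the preprocessing (sorting, then building the tries) is $\Oo_d(n\log n\log\log n)$. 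For the $\Oo(1/\varepsilon)$ regime the plan is to replace the binary van~Emde~Boas structures by $n^{\varepsilon}$-ary ones --- depth $\Oo(1/\varepsilon)$, each query $\Oo(1/\varepsilon)$ table lookups --- equivalently, to place $\Oo(1/\varepsilon)$ evenly spaced checkpoint levels in $\mathcal{T}$ and tabulate, for each checkpoint, the answer as a function of which of the at most $n^{\varepsilon}$ checkpoint-subtrees contains $u$, which contains $v$, and $\Oo_d(1)$ bits of accumulated information; this uses $\Oo_d(n^{1+\varepsilon})$ space and preprocessing time (after rescaling $\varepsilon$).

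\textbf{Expected obstacle.} Step~1 and the $\Oo_d(n)$-size bound are routine given the material above; the technical heart --- the part I would import from \cite{PilipczukSZ22} --- is Step~2, namely arranging the auxiliary predecessor/tabulation data so that (i) they are \emph{shared} along $\mathcal{T}$, keeping total space $\Oo_d(n)$ (resp.\ $\Oo_d(n^{1+\varepsilon})$), (ii) each local access costs $\Oo(\log\log n)$ (a single vEB query), resp.\ $\Oo(1)$ (a table lookup), and (iii) a query consults only $\Oo_d(1)$, resp.\ $\Oo(1/\varepsilon)$, of them. Boundedness of the twin-width is what makes this possible: the $(d+1)$-degeneracy of $\mathcal{B}$ bounds the size of every local list, and the $\Oo_d(\log n)$ depth of the balanced twin-decomposition bounds the number of levels, hence of checkpoints.
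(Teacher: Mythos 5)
This theorem is not proved in the paper at all: it is imported verbatim from \cite{PilipczukSZ22}, and the only in-text justification the paper offers is the easy $\Oo(dh)$-time query baseline stated in the paragraph immediately preceding \cref{thm:compact-rep}. Your Step~1 correctly reproduces exactly that baseline (uniqueness of the $\mathcal B$-edge between the two root-paths because $\mathcal B$ partitions $E(G)$, the out-degree-$d$ orientation, and the $\Oo(d\log n)$ candidate edges after rebalancing via \cite{twin-width2}), so up to that point you are on solid and consistent ground with the paper.

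The gap is Step~2, which is the entire content of the theorem. You describe it as a ``plan'' (shared predecessor structures / fractional cascading along $\mathcal T$ for the $\Oo_d(\log\log n)$ regime, $n^{\varepsilon}$-ary checkpointing for the $\Oo(1/\varepsilon)$ regime) and then explicitly say this is ``the part I would import from \cite{PilipczukSZ22}''. Since the statement being proved \emph{is} the theorem of \cite{PilipczukSZ22}, importing its technical heart is circular: nothing in your text establishes that the auxiliary structures can be shared along $\mathcal T$ so that total space stays $\Oo_d(n)$ while each query touches only $\Oo_d(1)$ of them, nor that the checkpoint tables fit in $\Oo(n^{1+\varepsilon})$ space; these are precisely the claims one would have to prove. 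A secondary issue: the first regime of the theorem claims size and preprocessing $\Oo(d\,n^{1+\varepsilon})$ with \emph{linear} dependence on $d$, but your construction begins by rebalancing to width $D=\Oo_d(1)$ (which may be far larger than $d$), so at best you would obtain $\Oo_d(n^{1+\varepsilon})$; the actual construction of \cite{PilipczukSZ22} works on the given (unbalanced) contraction sequence and does not pay for rebalancing.
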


All the results mentioned in this section extend from graphs to binary structures.

%

\subsection{Parameterized complexity of model checking}\label{subsec:comp-compl}

First-order (\FO) matrix model checking asks, given a matrix $M$ (or a totally ordered binary structure~$\mathcal S$) and a first-order sentence $\varphi$ (i.e., a formula without any free variable), if $M \models \varphi$ holds, that is, if $\varphi$ is true in $M$.
\FO~model checking is fixed-parameter tractable (\FPT) on a matrix class $\mathcal M$, with respect to the sentence size and the input matrix, if there exists a constant~$c$ and a computable function $f$, such that $M \models \varphi$ can be decided in time $f(|\varphi|)\,(m+n)^c$, for every $n \times m$ matrix $M \in \mathcal M$ and \FO~sentence $\varphi$.

\FO~model checking of general (unordered) graphs is $\AW[*]$-complete~\cite{Downey96}, and thus very unlikely to be \FPT.
Indeed $\FPT \neq \AW[*]$ is a much weaker assumption than the already widely-believed Exponential Time Hypothesis~\cite{Impagliazzo01}, and if false, would in particular imply the existence of a subexponential algorithm solving \textsc{3-SAT}.
\FO~model checking of general binary structures of bounded twin-width given with an $\Oo(1)$-sequence can even be solved in linear \FPT~time $f(|\varphi|)\,|U|$, where $U$ is the domain of the structure~\cite{twin-width1}.

Gajarský et al.~\cite{Gajarsky22} reproved that result using a different, and more standard formalism.
Building on~\cref{thm:compact-rep}, they also presented an algorithm that inputs a binary structure given with an $\Oo(1)$-sequence and a formula with some free variables, and after some linear-time processing, can answer queries of the form \emph{does the given tuple satisfy the formula in the structure} in doubly-logarithmic time.  

\begin{theorem}[\cite{Gajarsky22}]\label{thm:gaj-linear}
  Given a binary $\Sigma$-structure $\mathbf A$ on a domain of size $n$, a~$d$-sequence of $\mathbf A$, and a first-order $\Sigma$-formula $\varphi(x_1,\ldots,x_k)$, there is a data structure computable in time $\Oo_{d,\varphi}(n)$ that given any query of the form $v_1, \ldots, v_k \in A$ reports in time $\Oo_{d,\varphi}(\log \log n)$ whether $\mathbf A \models \varphi(v_1, \ldots, v_k)$ holds. 
\end{theorem}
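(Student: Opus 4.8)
The plan is to follow the ``types along a contraction sequence'' method of Gajarsk\'y et al.\ and, at query time, reduce to a bounded number of atomic facts that are resolved by the data structure of \cref{thm:compact-rep}. First I would regard the given $d$-sequence as a ranked binary \emph{contraction tree} $\mathcal{T}$: its leaves are the elements of $\mathbf A$, its internal nodes are the contractions, and the $i$-th border of $\mathcal{T}$ lists the vertices of the trigraph $G_i$. Each node $t$ carries the part $A_t \subseteq A$ contracted below it, and since the red degree never exceeds $d$, the part $A_t$ has at most $d$ ``red'' neighbouring parts and interacts with all the remaining elements only through $\Oo_d(1)$ homogeneous atomic labels. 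Using the logarithmic-depth twin-decomposition result of \emph{Twin-width~II}~\cite{twin-width2}, we may assume, after a near-linear rebalancing, that $\mathcal{T}$ has depth $\Oo_d(\log n)$ and width $\Oo_d(1)$; in parallel we run \cref{thm:compact-rep} on (the twin-decomposition underlying) $\mathbf A$ to obtain, in $\Oo_d(n)$ time, a structure answering atomic queries ``$E_a(u,v)$?'' in time $\Oo_d(\log\log n)$, and we equip $\mathcal{T}$ with a constant-time LCA / level-ancestor structure.

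The heart of the preprocessing is a bounded, compositional \emph{local type} invariant. Let $q$ be the quantifier rank of $\varphi$. For a node $t$, define its $q$-type to be a canonical form of the set of all formulas of quantifier rank $\le q$ — over the signature enriched by the $\Oo_d(1)$ interface parameters recording how the elements below $t$ see the $\le d$ red neighbours of $A_t$ and the homogeneous remainder — that hold of tuples whose members are distributed between the leaves below $t$ and external placeholders. Two statements have to be proved: (i) only $F(d,\varphi)$ distinct $q$-types occur; and (ii) \emph{compositionality}: the $q$-type of $t$ is a function of the $q$-types of its two children together with the $\Oo_d(1)$ interface bits of $t$. Point (ii) is an Ehrenfeucht--Fra\"iss\'e / Feferman--Vaught argument adapted to contractions, while (i) says exactly that it is the red-graph interface that keeps this information finite. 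Granting (i) and (ii), one computes the $q$-type of every node bottom-up in total time $\Oo_{d,\varphi}(n)$, and a single pass up the root paths also yields, for each leaf, the $\varphi$-type of $\mathbf A$ ``with a hole at that leaf''. Since the composition law for types along a path of $\mathcal{T}$ is a fold over the \emph{finite} monoid of maps between $q$-types, it can be preprocessed with standard ladder / jump-pointer bookkeeping on the depth-$\Oo_d(\log n)$ tree so that the composed transition along any root-to-ancestor path is retrievable in time $\Oo_{d,\varphi}(\log\log n)$.

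To answer a query $v_1,\dots,v_k$, locate the $k$ leaves, recover through LCA queries the $\Oo_k(1)$-size description of the subtree of $\mathcal{T}$ that they span together with the precomputed types hanging off its nodes, and collapse the long degree-$2$ paths of this spanned subtree using the fold structure above. Feeding the resulting $\Oo_{d,\varphi}(1)$ pieces of type information through the precomputed composition tables reduces ``$\mathbf A \models \varphi(v_1,\dots,v_k)$?'' to a bounded set of atomic facts among $v_1,\dots,v_k$ — at most one per pair and per relation symbol, i.e., $\Oo_\varphi(1)$ of them, and precisely the entries not already fixed by the homogeneous interface labels — each of which costs one call to the edge-query structure of \cref{thm:compact-rep}, that is, $\Oo_d(\log\log n)$ time. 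Hence the query runs in $\Oo_{d,\varphi}(\log\log n)$; replacing the first trade-off of \cref{thm:compact-rep} by its $\Oo(d\,n^{1+\varepsilon})$-space variant yields the $\Oo(1/\varepsilon)$-query version.

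The main obstacle is the second step. Everything else — rebalancing the tree, LCA / level-ancestor bookkeeping, the fold over a bounded monoid, and the final appeal to \cref{thm:compact-rep} — is routine packaging. What is delicate is establishing that first-order types of bounded quantifier rank are \emph{simultaneously} bounded in number and compositional along a contraction sequence: there is no ambient Gaifman locality to fall back on, a single contraction may identify vertices with wildly different neighbourhoods, and the type together with its composition law must be arranged so that it is \emph{only} the bounded red degree that prevents the type space from exploding. This is exactly the ``twin-width and types'' theory developed in \cite{Gajarsky22}; given it, the construction above delivers the claimed data structure, and the extension from first-order to \fomc{} is obtained by closing the type invariant under modular counting of the relevant witness multiplicities.
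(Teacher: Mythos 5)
This statement is imported verbatim from Gajarsk\'y et al.\ and the paper offers no proof of it, so there is nothing internal to compare your write-up against; what you have produced is a reconstruction of the argument of the cited work. As such a reconstruction it is architecturally faithful: contraction tree from the $d$-sequence, logarithmic-depth rebalancing via \emph{Twin-width~II}, a bounded compositional local-type invariant computed bottom-up, a fold over a finite monoid of type transitions along root-to-node paths, and a final reduction to $\Oo_{d,\varphi}(1)$ atomic queries answered by the structure of \cref{thm:compact-rep}. The query-time accounting (binary jump pointers on a depth-$\Oo_d(\log n)$ tree giving $\Oo(\log\log n)$ compositions) is also sound.

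However, as a \emph{proof} the proposal has a genuine gap, and you name it yourself: the entire mathematical content of the theorem is the claim that first-order $q$-types, relativized to the $\Oo_d(1)$-size red interface of a part, are simultaneously bounded in number and compositional under contraction, and for this you appeal to ``the theory developed in \cite{Gajarsky22}'' --- i.e., to the very paper the theorem is being cited from. Everything you do prove is packaging around that lemma, so the argument is circular if read as a self-contained derivation. A second, quantitative issue: the linear-size variant of \cref{thm:compact-rep} that answers atomic queries in $\Oo_d(\log\log n)$ is computable in time $\Oo_d(n\log n\log\log n)$, not $\Oo_d(n)$, so routing all atomic lookups through it breaks the claimed $\Oo_{d,\varphi}(n)$ preprocessing bound; you would need to either justify a linear-time construction of that structure or argue that the atomic facts among the $k$ query vertices can be recovered from the type/interface data already stored on the contraction tree, which is in fact how the cited construction avoids this dependence.
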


Up to increasing the preprocessing time to near-linear, the queries can even be met in constant time. 

\begin{theorem}[\cite{Gajarsky22}]\label{thm:gaj-quasilinear}
  For every $\varepsilon > 0$, given a binary $\Sigma$-structure $\mathbf A$ on a domain of size~$n$, a~$d$-sequence of $\mathbf A$, and a first-order $\Sigma$-formula $\varphi(x_1,\ldots,x_k)$, there is a data structure computable in time $\Oo_{d,\varphi}(n^{1+\varepsilon})$ that given any query of the form $v_1, \ldots, v_k \in A$ reports in time $\Oo_{d,\varphi}(1/\varepsilon)$ whether $\mathbf A \models \varphi(v_1, \ldots, v_k)$ holds. 
\end{theorem}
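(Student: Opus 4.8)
The plan is to obtain this statement by essentially re-running the proof of \Cref{thm:gaj-linear} from \cite{Gajarsky22}, changing only the base data structure. Recall the shape of that argument: by induction on the quantifier rank of $\varphi$, the evaluation of $\varphi(v_1,\dots,v_k)$ is reduced to a number of ``local'' queries, bounded in terms of $d$ and $\varphi$ only, into a finite tower of auxiliary binary structures obtained from $\mathbf A$ by iteratively applying quantifier-eliminating \fo-interpretations. Each such interpretation preserves bounded twin-width and comes with a contraction sequence, both computable in linear time; this is the effective transduction-closure established in \cite{twin-width1}. At the bottom of the tower the remaining queries are plain adjacency (atomic) queries, which are answered by the compact representation of \Cref{thm:compact-rep}. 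The point is that in \Cref{thm:gaj-linear} the \emph{only} source of a super-constant dependence on $n$ in the query time is exactly the $\Oo_d(\log \log n)$ cost of those adjacency queries.

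First I would fix $\varepsilon > 0$ and, during preprocessing, build the same tower of auxiliary structures together with their contraction sequences as in \Cref{thm:gaj-linear}: there are $\Oo_{d,\varphi}(1)$ of them, each of size $\Oo_{d,\varphi}(n)$ and each computable in linear time. Then, instead of equipping the bottom structure with the $\Oo_d(n \log n \log \log n)$-preprocessing, $\Oo_d(\log \log n)$-query variant of \Cref{thm:compact-rep}, I would equip it with the $\Oo(d n^{1+\varepsilon})$-preprocessing, $\Oo(1/\varepsilon)$-query variant. Total preprocessing then becomes $\Oo_{d,\varphi}(n) + \Oo_{d,\varphi}(n^{1+\varepsilon}) = \Oo_{d,\varphi}(n^{1+\varepsilon})$, and since everything above the base layer is a fixed, $n$-independent number of table lookups and tree-navigation steps, a query now costs $\Oo_{d,\varphi}(1) \cdot \Oo(1/\varepsilon) = \Oo_{d,\varphi}(1/\varepsilon)$, as required.

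What has to be checked carefully is that this substitution is sound. Concretely: the ``local type'' of the query tuple $(v_1,\dots,v_k)$ in each auxiliary structure --- all that a precomputed table needs in order to decide $\varphi$ --- can be assembled from a bounded number of adjacency queries handed down to \Cref{thm:compact-rep}, together with the positions of the $v_i$ in the ranked contraction tree; and locating $(v_1,\dots,v_k)$ in that tree (the ``walk up to common ancestors'' responsible for the $\log \log n$ term in \Cref{thm:gaj-linear}) must itself be served in $\Oo(1/\varepsilon)$ query time after $\Oo(n^{1+\varepsilon})$ preprocessing. The latter is not new work: it is precisely the ancestor/predecessor-type tradeoff already packaged inside \Cref{thm:compact-rep} after \cite{PilipczukSZ22}. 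The former is an audit of the construction of \cite{Gajarsky22}.

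Accordingly, I expect the main obstacle to be bookkeeping rather than a new difficulty: one must trace the proof of \Cref{thm:gaj-linear} to confirm that every appeal to the compact representation is the sole origin of non-constant query time, so that swapping in the $\varepsilon$-tradeoff version is legitimate, and that the contraction sequences needed to instantiate \Cref{thm:compact-rep} on all the auxiliary structures are indeed produced within the claimed $\Oo_{d,\varphi}(n^{1+\varepsilon})$ budget. Should one instead want a self-contained argument, the genuine technical core is re-deriving the ``boundedly many, locally computable local types'' lemma of \cite{Gajarsky22}; the $\varepsilon$-speedup then comes along for free.
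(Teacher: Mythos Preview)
The paper does not prove this theorem; it is quoted as a black-box result from \cite{Gajarsky22} (and \cite{PilipczukSZ22}) with no proof supplied. So there is no ``paper's own proof'' to compare against. Your sketch is a reasonable reconstruction of how the two variants of \Cref{thm:compact-rep} propagate through the argument of \cite{Gajarsky22} to yield \Cref{thm:gaj-linear} and \Cref{thm:gaj-quasilinear} respectively, but verifying it requires going into \cite{Gajarsky22}, not the present paper.
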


In classes of ordered binary structures or matrices, one need not require that the contraction sequence is given in input.
Indeed there is an \FPT~approximation algorithm, that takes a matrix $M$ of twin-width $d$, and outputs a $g(d)$-sequence of $M$ in time $h(d)\,|M|^{\Oo(1)}$~\cite{twin-width4}.
Hence, \FO~matrix model checking can be solved in \FPT~time $f(|\varphi|)\,|M|^{\Oo(1)}$~\cite{twin-width4} in classes of bounded twin-width.

\subsection{Interpretations and transductions}\label{subsec:fmt}

Let $\Sigma,\Gamma$ be relational signatures. 
A~\emph{(simple) interpretation} $\mathsf I\from \Sigma\to\Gamma$ 
consists of the following $\Sigma$-formulas: a \emph{domain} formula~$\nu(x)$,
and for each relation symbol $R\in\Gamma$ of arity $r$, a formula $\rho_R(x_1,\ldots, x_r)$.
If~$\mathbf A$ is a $\Sigma$-structure, the $\Gamma$-structure $\mathsf I(\mathbf A)$ has domain $\nu(\mathbf A)=\{v\in A:\mathbf A\models\nu(v)\}$ and the interpretation of a relation symbol $R\in\Sigma$ of arity $r$ is $\rho_R(\mathbf A)\cap \nu(\mathbf A)^{r}$, that is:
\[
R^{\mathsf I(\mathbf A)}=\{(v_1,\dots,v_{r})\in \nu(\mathbf A)^{r}:\mathbf A\models \rho_R(v_1,\dots,v_r)\}.
\]
Note that \cref{thm:gaj-linear,thm:gaj-quasilinear} can be seen as efficiently computing the structure $\mathsf I(\mathbf A)$ given the formulas $\rho_R(x_1,\ldots, x_r)$, when $\mathbf A$ has bounded twin-width and is given with a contraction sequence.
If $\CC$ is a~set of $\Sigma$-structures then denote $\mathsf I(\CC)=\setof{\mathsf I(\str A)}{\str A\in\CC}$.
A~class $\CC$ \emph{interprets} a class $\DD$ if there is an interpretation $\mathsf I$ such that $\mathsf I(\CC)\supseteq\DD$.

Let $\Sigma\subseteq \Sigma^+$ be relational signatures.
The \emph{$\Sigma$-reduct} of a $\Sigma^+$-structure $\mathbf A$ is the structure obtained from $\mathbf A$ by ``forgetting'' all the relations not in $\Sigma$; we denote this interpretation as $\reduct_{\Sigma}\from \Sigma^+\to\Sigma$.
A~class $\CC$ of {$\Sigma$-structures} \emph{transduces} a class $\DD$ if there is a class $\CC^+$ of {$\Sigma^+$-structures},
where $\Sigma^+$ is the union of $\Sigma$ and some unary relation symbols such that $\reduct_{\Sigma}(\CC^+)=\CC$ and $\CC^+$ interprets~$\DD$.
Here, the unary relation symbols and the underlying interpretation are called~\emph{transduction} (or \FO-transduction).

As it will be enough for our purposes, we will here take the following characterization as the definition of monadic dependence.
 \begin{theorem}[Baldwin and Shelah \cite{BS1985monadic}]\label{thm:baldwin-shelah}
 	A class $\mathcal C$ of $\Sigma$-structures is monadically dependent if and only if $\mathcal C$ does not transduce the set of all finite graphs.
 \end{theorem}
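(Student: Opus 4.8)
\medskip

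The plan is to prove the two directions of the equivalence separately, each by contraposition, by passing back and forth between monadic dependence and the syntactic independence property (IP) of a formula. For the \emph{if} direction, suppose $\mathcal C$ transduces all finite graphs, via a transduction $\mathsf T$ with every finite graph in $\mathsf T(\mathcal C)$. I would first record the standard closure lemma that monadic dependence is preserved downwards under transductions: if $\mathcal D\subseteq\mathsf T(\mathcal C)$ and some unary expansion $\mathcal D^\sharp$ of $\mathcal D$ carries a formula $\psi$ with IP, one pulls $\psi$ back through the fixed interpretation underlying $\mathsf T$ (relativising quantifiers to the domain formula $\nu$ and substituting the formulas $\rho_R$) and pulls each colour of $\mathcal D^\sharp$ back to a unary predicate on the larger domain; together with the colours used by $\mathsf T$ this exhibits a unary expansion of $\mathcal C$ on which the pulled-back formula has IP, so $\mathcal C$ is not monadically dependent. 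It then suffices to note that the class of all finite graphs is not even dependent: the edge atom $E(x,y)$ has IP, since for every $n$ the finite bipartite graph on $\{b_1,\dots,b_n\}\uplus\{a_I:I\subseteq[n]\}$ with $a_ib_j$ an edge iff $j\in I$ realises all $2^n$ neighbourhood patterns over $\{b_1,\dots,b_n\}$. Hence any $\mathcal C$ transducing all finite graphs fails to be monadically dependent.

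For the \emph{only if} direction, suppose $\mathcal C$ is not monadically dependent. By definition some unary expansion $\mathcal C^+$ of $\mathcal C$ carries a formula with IP; by the standard reduction (via Ramsey's theorem applied to an extracted indiscernible sequence, using compactness in the infinite-model reformulation if one prefers) we may assume this formula is $\varphi(x;\bar y)$ with $x$ a single object variable. Unwinding the definition of IP, for every $n$ there is then a structure $\str A_n\in\mathcal C^+$ (up to a further unary expansion recording the witnesses) carrying single elements $a_1,\dots,a_n$ and parameter tuples $(\bar b_I)_{I\subseteq[n]}$ with $\str A_n\models\varphi(a_i;\bar b_I)$ iff $i\in I$. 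Given an arbitrary finite graph $G$ on vertex set $[n]$, I would colour in $\str A_n$ the elements $a_1,\dots,a_n$ by a fresh predicate $P$, and colour the coordinates of the tuples $\bar b_{N_G(1)},\dots,\bar b_{N_G(n)}$ so that each of these $n$ tuples is marked and can be associated with its defining vertex. The transduction then has domain formula $P(x)$ and declares $x\sim y$ to hold exactly when the tuple associated with $y$ satisfies $\varphi(x;\cdot)$; by construction $a_u\sim a_v$ iff $u\in N_G(v)$ iff $uv\in E(G)$, so a colouring of $\str A_n\in\mathcal C^+$ transduces a copy of $G$. As $G$ ranges over all finite graphs, $\mathcal C$ transduces the class of all finite graphs.

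The delicate point — and where the substantive part of \cite{BS1985monadic} enters — is the clause ``can be associated with its defining vertex'': when the parameter tuples $\bar b_I$ have length $k>1$, a bounded supply of unary colours cannot, for unbounded $n$, recover which $k$ elements constitute a single tuple $\bar b_{N_G(v)}$, let alone tie it to the element $a_v$. I would treat overcoming this as the crux of the proof; it is exactly where the monadic (as opposed to merely dependent) hypothesis is used, and it is handled in \cite{BS1985monadic} by a combinatorial normal form: exploiting indiscernibility and the freedom to pass to further unary expansions, one arranges either that the parameters may be taken to be single elements, or that the configuration is organised sequentially so that a single marked ``leader'' element per vertex suffices to reconstruct the whole adjacency pattern. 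Granting that normal form, what remains is the routine bookkeeping sketched above; I would import the normal form from \cite{BS1985monadic} rather than reprove it.
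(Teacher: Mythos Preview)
The paper does not prove this theorem. It explicitly adopts the transduction characterization as its \emph{definition} of monadic dependence (see the sentence immediately preceding the theorem: ``we will take here this characteristic property as a definition of monadic dependence''), and cites Baldwin--Shelah \cite{BS1985monadic} for the fact that this agrees with the standard model-theoretic notion (every monadic expansion is NIP). There is therefore no proof in the paper to compare your proposal against.

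That said, your sketch is a reasonable outline of how the Baldwin--Shelah result is actually established, and you have correctly located the genuine difficulty: the reduction from a formula $\varphi(x;\bar y)$ with a tuple of parameters to a configuration where single unary marks suffice to encode the adjacency data. You are right that this is precisely where the \emph{monadic} hypothesis does work beyond ordinary dependence, and right that the combinatorial normal form handling it is the substantive content of \cite{BS1985monadic}. Your ``if'' direction (transductions preserve monadic NIP downward, and the class of all finite graphs has IP via the edge relation) is standard and correct. So your proposal is not wrong; it simply goes well beyond what the paper attempts, since the paper treats the statement as a black box.
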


 Importantly transductions preserve the boundedness of twin-width.
 \begin{theorem}[\cite{twin-width1}]\label{thm:transductions-tww}
   Let $\mathcal C$ be a set of $\Sigma$-structures of bounded twin-width, and $\mathsf T \from \Sigma \to \Sigma$ be a~transduction.
  Then $\mathsf T(\mathcal C)$ has bounded twin-width.
 \end{theorem}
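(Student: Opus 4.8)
The plan is to reduce \Cref{thm:transductions-tww} to the graph case, which was established in \emph{Twin-width~I}, by going through the adjacency-matrix (equivalently, edge-colored graph) encoding of binary structures described in \cref{subsec:logic}. First I would recall that the theorem as stated for graphs in~\cite{twin-width1} says: if $\mathcal C$ is a class of graphs of bounded twin-width and $\mathsf T$ is a transduction, then $\mathsf T(\mathcal C)$ has bounded twin-width. The content of the present statement is that this is insensitive to replacing ``graphs'' by ``arbitrary binary $\Sigma$-structures'', for a fixed finite signature $\Sigma$ consisting of unary and binary relation symbols.

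The key steps, in order, would be the following. \textbf{(1)} Fix a finite encoding $e\colon \Sigma\text{-structures} \to \text{edge-colored graphs over alphabet }A$, where $|A|$ depends only on $\Sigma$, sending $\str A$ to the matrix/edge-colored graph whose entry at $(x,y)$ records the atomic type of $(x,y)$ (as in \cref{subsec:logic}); this is a bijection onto its image, and $e$ and $e^{-1}$ are both given by quantifier-free interpretations. \textbf{(2)} Observe that twin-width of a binary structure, with the definition adopted in \cref{subsec:contraction-seq}, is preserved up to a function of $|\Sigma|$ under $e$ and $e^{-1}$: a contraction sequence of $\str A$ is literally a contraction sequence of $e(\str A)$ with red degree changed by at most a constant factor, since the red edges are defined in terms of distinctness of atomic types in both settings. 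Hence $\mathcal C$ bounded twin-width implies $e(\mathcal C)$ bounded twin-width. \textbf{(3)} Given the transduction $\mathsf T\colon \Sigma\to\Sigma$, build a transduction $\mathsf T'$ on edge-colored graphs (equivalently, after a further fixed encoding, on plain graphs) such that $\mathsf T' \circ e = e \circ \mathsf T$ as operations on classes: concretely, $\mathsf T'$ uses the same unary marks as $\mathsf T$ and composes $\mathsf T$'s interpretation formulas with the quantifier-free translations between the colored-graph encoding and the original signature. Since transductions are closed under composition with interpretations, $\mathsf T'$ is again a transduction. \textbf{(4)} Apply the graph/edge-colored-graph version of the theorem from~\cite{twin-width1} to $e(\mathcal C)$ and $\mathsf T'$ to conclude $\mathsf T'(e(\mathcal C)) = e(\mathsf T(\mathcal C))$ has bounded twin-width, and then pull back through $e^{-1}$ using step~(2) to get that $\mathsf T(\mathcal C)$ has bounded twin-width.

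Alternatively, and perhaps more honestly for a paper in this series, the ``proof'' is really just a pointer: \Cref{thm:transductions-tww} as stated is exactly one of the main results of \emph{Twin-width~I}~\cite{twin-width1}, proved there directly for binary structures (via the equivalent notion of twin-width for binary structures recalled in \cref{subsec:contraction-seq}), so the statement is cited rather than reproved; the reduction above is only needed if one wants to derive it formally from a graphs-only formulation. I would present it as: \emph{This is~\cite[\ldots]{twin-width1}; see also the encoding discussion of \cref{subsec:logic} to pass between binary structures and edge-colored graphs.}

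The main obstacle, if one insists on the reduction rather than a citation, is step~(3): making the commutation $e\circ\mathsf T = \mathsf T'\circ e$ precise while handling the coloring/marking part of the transduction, and in particular checking that the copying operation and the unary predicates of $\mathsf T$ translate correctly through the atomic-type encoding without blowing up the alphabet in an uncontrolled way. This is entirely routine but notationally heavy, which is exactly why the statement is stated here as a black box attributed to~\cite{twin-width1} and used without a new proof.
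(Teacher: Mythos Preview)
Your assessment is correct: the paper does not prove this statement at all but simply cites it as a black box from~\cite{twin-width1}, exactly as you anticipate in your ``Alternatively'' paragraph. The reduction-to-graphs sketch you give in steps (1)--(4) is extra material not present here; it is a reasonable way to derive the binary-structure version from a graphs-only formulation, but in~\cite{twin-width1} the result is already proved directly for binary structures, so no such reduction is needed.
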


 We will deal with \fomc-transductions, which are defined as \FO-transductions but with first-order logic augmented with modular counting quantifiers.
 We postpone the relevant definitions to~\cref{sec:fomc}.

 \subsection{Organization}

 In~\cref{sec:parity-minors}, we introduce the parity and \lms.
 In~\cref{sec:fomc}, we define the logic \fomc and show that parity and \lms can be expressed with \fomc-transductions.
 In~\cref{sec:strict}, we use the two previous sections to show~\cref{thm:intro-pm}, that is, the equivalence between bounded twin-width and linear-minor freeness.
 In~\cref{sec:divisions}, we introduce the rank bidimensionality, and classify several matrix problems involving a division of the rows and columns as being fixed-parameter tractable.
 In~\cref{sec:product}, we use the extensions of~\cref{sec:fomc} to show that bounded twin-width is preserved by matrix product, and give an almost quadratic algorithm to multiply two matrices of bounded twin-width (\cref{thm:intro-square}).
 In~\cref{sec:efficient}, we present a linear-time (i.e., possibly sublinear in the number of non-zero entries) algorithm when further given a twin-decomposition of the two matrices to multiply.

\section{Parity and linear minors}\label{sec:parity-minors}

We now introduce the notion of \emph{parity minor} for matrices over a~finite field.
Let $M$ be a~matrix with entries in a field $\mathbb F$.
A \emph{deletion operation} (or simply \emph{deletion}) in $M$ consists of deleting a row or a column.
A \emph{sum operation} (or simply \emph{sum}) in $M$ consists of replacing any pair of \emph{consecutive} rows $r_i,r_{i+1}$ or columns $c_j,c_{j+1}$ by their pointwise sum in $\mathbb F$.
That is, $r_{i+1}$ is deleted and $r_i$ is replaced by $r'_i$ with $r'_i[k] = r_i[k]+r_{i+1}[k]$, for every column index $k$, where $+$ is the addition in $\mathbb F$.
We say that a matrix $N$ is a \emph{parity minor} of $M$, denoted by $N \infpar M$, if $N$ can be obtained from $M$ by a sequence of deletions and sums.

\begin{observation}\label{obs:pm-del-sum}
If $N$ is a parity minor of $M$, then $N$ can be obtained from $M$ by performing all the deletions before performing all the sums.
\end{observation}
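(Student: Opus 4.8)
The plan is to show that any single deletion that occurs \emph{after} a sum in a parity-minor derivation can be pushed \emph{before} that sum, so that by repeatedly applying this local exchange we can sort the whole sequence of operations into (all deletions) followed by (all sums). Formally, I would argue by induction on the number of inversions, i.e., pairs (sum, deletion) where the sum is performed before the deletion. If there are no inversions, the sequence already has the desired form. Otherwise, pick an adjacent inversion: a sum operation $s$ immediately followed by a deletion $d$, and show we can replace the pair $s,d$ by $d',s'$ where $d'$ is a deletion and $s'$ is a sum (or a no-op), on the matrix obtained just before $s$, yielding the same final matrix. Swapping an adjacent pair decreases the inversion count by one, and a straightforward commutation of two operations that act on disjoint rows/columns handles the other orderings, so the induction goes through.

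The core of the argument is the case analysis for the adjacent swap. Say $s$ replaces consecutive rows $r_i, r_{i+1}$ by their sum (the column case is symmetric), and $d$ is the subsequent deletion. The easy sub-case is when $d$ deletes a column, or deletes a row not involved in $s$: then $s$ and $d$ act on disjoint coordinates and literally commute. The interesting sub-case is when $d$ deletes the row produced by $s$, namely the combined row now sitting at position $i$. Deleting that combined row after summing $r_i$ and $r_{i+1}$ has exactly the same effect as first deleting \emph{both} $r_i$ and $r_{i+1}$ (two deletions), and then not performing $s$ at all; so here the pair $(s,d)$ is replaced by two deletions, which strictly reduces the number of sums and certainly does not increase the number of inversions. (One should note that $d$ cannot delete only one of $r_i$ or $r_{i+1}$ after $s$ has been applied, since after $s$ those two rows have been merged into one; this is why only the "delete the combined row" case arises.) In every sub-case we end with a derivation computing the same matrix $N$ from $M$, with the sum $s$ either eliminated or moved past $d$.

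I expect the main (minor) obstacle to be bookkeeping: keeping track of which row/column indices correspond to which original rows/columns as operations are applied, so that "the row produced by $s$" and "a row not involved in $s$" are unambiguous, and verifying that the consecutivity requirement for sums is preserved when we reorder. This is entirely mechanical — the set of rows deleted is the same before and after, and deletions do not affect which surviving rows are consecutive among the survivors — but it is the part that needs care to write out. Once the adjacent-swap lemma is established, the sorting argument (bubble-sort on the operation list) immediately yields a derivation in which all deletions precede all sums, which is the claim.
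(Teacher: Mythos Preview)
Your proposal is correct and follows exactly the same idea as the paper's own proof: show that any single deletion performed immediately after a sum can be moved before it (replacing the pair by two deletions when the deleted row/column is the one produced by the sum), and then bubble all deletions to the front. The paper's proof is just the two-sentence version of what you wrote out in detail.
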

\begin{proof}
  Any (single) deletion performed just after a (single) sum can be equivalently performed just before the sum.
  If the deletion is precisely on the row or column of the sum, one can equivalently remove the two consecutive rows or columns and not perform the sum.
\end{proof}
On the contrary, as we impose the sums to be on consecutive rows or columns, one \emph{cannot} necessarily perform all the sums before all the deletions. 

\Cref{obs:pm-del-sum} implies the following reformulation.
A \emph{parity minor} of a matrix $M$ is any $n \times m$ matrix obtained by summing up every cell of an $(n,m)$-division of a submatrix of $M$.
Indeed, after deleting the rows and columns not part of the submatrix, one can obtain the minor by summing every row and column part of the $(n,m)$-division into a single row and column.
This equivalent definition justifies the term of ``parity minor'';
Over the binary field~$\mathbb F_2$, the minor operation boils down to dividing a submatrix and keeping from each cell its parity of 1-entries.
See~\cref{fig:parity-minor} for an illustration.

\begin{figure}[ht]
  \centering
  \begin{tikzpicture}
    \def\s{0.5}
    \def\hb{\s/2}
    \def\vb{\s/2}
    \def\he{8.5 * \s}
    \def\ve{7.5 * \s}
    \def\op{80}
    \foreach \i/\j/\v in {1/1/0,1/2/0,1/3/1,1/4/1,1/5/0,1/6/0,1/7/1, 2/1/0,2/2/0,2/3/0,2/4/1,2/5/0,2/6/1,2/7/0, 3/1/1,3/2/1,3/3/0,3/4/0,3/5/0,3/6/1,3/7/1, 4/1/1,4/2/1,4/3/1,4/4/0,4/5/0,4/6/0,4/7/1, 5/1/0/\op,5/2/1/\op,5/3/1,5/4/1,5/5/0,5/6/0,5/7/0, 6/1/0,6/2/1,6/3/0,6/4/0,6/5/0,6/6/1,6/7/0, 7/1/0,7/2/0,7/3/1,7/4/1,7/5/0,7/6/0,7/7/1, 8/1/1,8/2/0,8/3/0,8/4/1,8/5/1,8/6/1,8/7/1}{
      \node (e\i\j) at (\s * \i,\s * \j) {$\v$} ;
    }
    \draw (\hb-0.05,\vb) -- (\hb-0.05,\ve) --++(0.1,0) ;
    \draw (\hb-0.05,\vb) --++(0.1,0) ;
    \draw (\he+0.05,\vb) -- (\he+0.05,\ve) --++(-0.1,0) ;
    \draw (\he+0.05,\vb) --++(-0.1,0) ;
    \node at (-0.3,4 * \s) {$M =$};
    
    \begin{scope}[xshift=10 * \s cm]
    \foreach \i/\j/\v/\o in {1/1/0,1/2/0,1/3/1/\op,1/4/1,1/5/0,1/6/0/\op,1/7/1, 2/1/0,2/2/0,2/3/0/\op,2/4/1,2/5/0,2/6/1/\op,2/7/0, 3/1/1/\op,3/2/1/\op,3/3/0/\op,3/4/0/\op,3/5/0/\op,3/6/1/\op,3/7/1/\op, 4/1/1,4/2/1,4/3/1/\op,4/4/0,4/5/0,4/6/0/\op,4/7/1, 5/1/0/\op,5/2/1/\op,5/3/1/\op,5/4/1/\op,5/5/0/\op,5/6/0/\op,5/7/0/\op, 6/1/0,6/2/1,6/3/0/\op,6/4/0,6/5/0,6/6/1/\op,6/7/0, 7/1/0,7/2/0,7/3/1/\op,7/4/1,7/5/0,7/6/0/\op,7/7/1, 8/1/1,8/2/0,8/3/0/\op,8/4/1,8/5/1,8/6/1/\op,8/7/1}{
      \node (e\i\j) at (\s * \i,\s * \j) {\textcolor{white!\o!black}{$\v$}} ;
    }
    \draw (\hb-0.05,\vb) -- (\hb-0.05,\ve) --++(0.1,0) ;
    \draw (\hb-0.05,\vb) --++(0.1,0) ;
    \draw (\he+0.05,\vb) -- (\he+0.05,\ve) --++(-0.1,0) ;
    \draw (\he+0.05,\vb) --++(-0.1,0) ;
    \foreach \i in {2,6}{
      \draw[very thick] (\i * \s + \s/2,\vb) -- (\i * \s + \s/2,\ve) ;
    }
    \foreach \j in {2,4,6}{
      \draw[very thick] (\hb,\j * \s + \s/2) -- (\he,\j * \s + \s/2) ;
    }
    \end{scope}

    \begin{scope}[xshift=21.5 * \s cm]
    \foreach \i/\j/\v in {1/1/0,1/2/0,1/3/0,1/4/1, 2/1/1,2/2/0,2/3/0,2/4/1, 3/1/1,3/2/0,3/3/1,3/4/0}{
      \node (e\i\j) at (\s * \i,\s * \j + 1.5 *\s) {$\v$} ;
    }
    \draw (\hb-0.05,2 * \s) -- (\hb-0.05,6 * \s) --++(0.1,0) ;
    \draw (\hb-0.05,2 * \s) --++(0.1,0) ;
    \draw (3.5 * \s +0.05,2 * \s) -- (3.5 * \s+0.05,6*\s) --++(-0.1,0) ;
    \draw (3.5 * \s +0.05,2 * \s) --++(-0.1,0) ;

    \node at (-0.3,4 * \s) {$N =$};
    \end{scope}
  \end{tikzpicture}
  \caption{A parity minor, equivalently \lm, $N$ of a matrix $M$ over $\mathbb F_2$.
  In the middle, the deleted rows and columns of $M$ are in light gray, while the $(4,3)$-division of the remaining submatrix giving rise to $N$ is represented with solid black lines.}
  \label{fig:parity-minor}
\end{figure}

We now define a more permissive notion of \emph{\lm} by replacing the sum operation by a weighted sum, or linear combination.
A \emph{weighted sum operation} (or simply \emph{weighted sum}) in $M$ consists of replacing any pair of consecutive rows $r_i,r_{i+1}$ (or columns $c_j,c_{j+1}$) by $\alpha r_i + \beta r_{i+1}$ (or $\alpha c_j + \beta c_{j+1}$) for some chosen $\alpha, \beta \in \mathbb F$.
Again it means that $r_{i+1}$ is deleted and $r_i$ is replaced by $r'_i$ with $r'_i[k] = \alpha r_i[k] + \beta r_{i+1}[k]$, for every column index $k$, where operations are performed in~$\mathbb F$.
Choosing $\alpha = 1_\mathbb F$ and $\beta = 0_{\mathbb F}$ emulates the deletion of $r_{i+1}$, while $\alpha = 0_\mathbb F$ and $\beta = 1_{\mathbb F}$ corresponds to the deletion of $r_i$.
Thus we no longer need to add the deletion operations.
We say that a matrix $N$ is a \lm of $M$, denoted by $N \inflin M$, if $N$ can be obtained from $M$ by a sequence of weighted sums.
Let us call \emph{$\mathbb F$-weighting of $M$} (or simply \emph{weighting of $M$} if $\mathbb F$ is clear from the context) any mapping $w: \row(M) \cup \col(M) \to \mathbb F$.
Equivalently a \lm of a matrix $M$ over a finite field $\mathbb F$ is any $n \times m$ matrix $N$ obtained from an $\mathbb F$-weighting $w$ and $(n,m)$-division $\mathcal D=(\{R_1, \ldots, R_n\}, \{C_1, \ldots, C_m\})$ of $M$ by replacing every cell $M[R_i,C_j]$ of $\mathcal D$ by the single entry $\sum_{r \in R_i, c \in C_j} w(r)w(c)M_{r,c}$, that is, setting $N_{i,j} = \sum_{r \in R_i, c \in C_j} w(r)w(c)M_{r,c}$.
We can indeed rewrite the entries of $N$ that way, since every finite field is commutative~\cite{Maclagan1905}.

The second definition actually works for any commutative field, while the first one accommodates any field.
Yet another way to see a \lm over a commutative field is that one first divides (i.e., partitions into intervals) the column set of $M$, picks one column vector in the span of each column part of $M$ to form the matrix $M'$, divides now the row set of $M'$, and finally picks one row vector in the span of each row part of $M'$ to obtain $N$ a \lm of $M$.
Of course the row division may precede the column division instead. 
We observe that over the binary field the only possible weighted sums are deletions and (simple) sums.

\begin{observation}\label{obs:f2}
  Over $\mathbb F_2$, parity minors and \lms coincide.
\end{observation}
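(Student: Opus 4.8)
The plan is to check the two directions of the equivalence $N \infpar M \iff N \inflin M$ separately, specialised to $\mathbb F = \mathbb F_2$. For the forward direction (which in fact holds over every field), I would simply observe that each basic step allowed in forming a parity minor is a special weighted sum: a simple sum of the consecutive rows $r_i,r_{i+1}$ is the weighted sum with $\alpha=\beta=1_{\mathbb F}$, and a deletion is realised by a weighted sum with coefficients $(1_{\mathbb F},0_{\mathbb F})$ or $(0_{\mathbb F},1_{\mathbb F})$, exactly as recorded in the paragraph introducing weighted sums. Hence any sequence of deletions and sums witnessing $N \infpar M$ is, verbatim, a sequence of weighted sums witnessing $N \inflin M$.

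For the backward direction over $\mathbb F_2$, the key point is that the coefficients $\alpha,\beta$ now range over $\mathbb F_2=\{0,1\}$, so the only weighted sums on a pair of consecutive rows or columns are those with $(\alpha,\beta)\in\{(1,1),(1,0),(0,1)\}$ --- i.e.\ a simple sum, the deletion of $r_{i+1}$, or the deletion of $r_i$ --- which is precisely the observation made just above the statement. Thus, translating a witnessing sequence of weighted sums for $N \inflin M$ operation by operation into the corresponding deletions and simple sums yields a witnessing sequence for $N \infpar M$, which completes the argument.

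There is no real obstacle here; the statement is essentially immediate once the operations are unwound. The single point deserving a moment of care is the convention on weighted sums: one must not count the degenerate choice $\alpha=\beta=0_{\mathbb F}$, which over $\mathbb F_2$ would replace two consecutive rows by an all-zero row --- an operation that is \emph{not} in general obtainable from deletions and simple sums (for instance, the $1\times 2$ zero matrix is a linear minor but not a parity minor of the $2\times 2$ identity over $\mathbb F_2$). With the convention fixed as in the text, so that the available weighted sums over $\mathbb F_2$ are exactly deletions and simple sums, the observation follows.
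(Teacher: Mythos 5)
Your proof is correct and matches the paper's own (essentially unstated) justification: the observation is taken to follow immediately from the remark preceding it that over $\mathbb F_2$ the only weighted sums are deletions and simple sums. Your extra caveat about excluding the degenerate choice $\alpha=\beta=0_{\mathbb F}$ is a legitimate and worthwhile point of care — read literally, the paper's definition of a weighted sum does not forbid it, and your $1\times 2$ zero-matrix example correctly shows the observation would fail under that reading — but with the intended convention the argument is exactly the paper's.
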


We recall that a matrix class is a set of matrices which is closed under taking submatrices.
The \emph{parity-minor closure} of $\mathcal M$, denoted by $\minclos(\mathcal M)$, is the matrix class of all matrices $N$ which are parity minors of some $M$ in $\mathcal M$.
Similarly, the \emph{linear-minor closure} of $\mathcal M$, denoted by $\linmin(\mathcal M)$, is the matrix class of all matrices $N$ which are \lms of some $M$ in $\mathcal M$.
We say that $\mathcal M$ is \emph{\strict} (resp.~\emph{\lmf}) if its parity-minor closure (resp.~linear-minor closure) is not the set of all $\mathbb F$-matrices.
We may denote the set of all $\mathbb F$-matrices by $\mathcal M(\mathbb F)$, or by $\Mall$ if the field $\mathbb F$ is clear from the context.
Thus $\mathcal M$ is \lmf over~$\mathbb F$ if $\linmin(\mathcal M) := \{N~:~N \inflin M,~M \in \mathcal M\} \neq \mathcal M(\mathbb F)$.
We also denote by $\mathcal M_n(\mathbb F)$ the set of $n \times n$ matrices over $\mathbb F$, and by $\mathcal M_{\square}(\mathbb F)$ the set of square matrices over $\mathbb F$.

\section{First-order logic with modular counting}\label{sec:fomc}


First-order logic with modular counting, denoted by \fomc, augments \fo with a new kind of quantifiers, $\exists^{i[p]}$ for some fixed integer $p$, and any $i \in [0,p-1]$.
The semantics is extended as follows. 
For any structure $\mathcal M$ with domain $M$, $\mathcal M \models \exists^{i[p]}x~\varphi(x)$ holds whenever $|\{a \in M $ $|$ $\mathcal M \models \varphi(a)\}| \equiv i \mod p$.
Informally, one can now express that the number of witnesses for a formula $\varphi(x)$ is equal to $i$ modulo $p$.
We may use $\exists^e$ for $\exists^{0[2]}$, and $\exists^o$ for $\exists^{1[2]}$.
As for \fo-sentences, an \fomc-sentence is said \emph{prenex} if it is formed by a succession of quantifiers followed by a quantifier-free formula.
Every sentence of quantifier rank $\ell$ is logically equivalent to a prenex sentence of quantifier rank $h(\ell)$ for some tower function $h$.

The following theorem (\cref{thm:fomc-mc}) was proven in~\cite{twin-width1} with \fo-sentences instead of \fomc-sentences, and follows from that paper with a small adaptation.
It can also be observed from the work of Gajarský et al.~\cite{Gajarsky22}.
Here we choose to adapt the proof from the first paper of the series~\cite{twin-width1}.
In both cases, making the proof self-contained would require quite a lot of background and overlap with either one of those papers, and in the end, would not bring anything new.
A~reader interested in the proof will first have to read Section 7 of~\cite{twin-width1}, and in particular, get familiar with morphism-trees (similar to Ehrenfeucht-Fraïssé game trees) and the dynamic-programming algorithm handling them.

\begin{theorem}[follows with some small adjustments from \cite{twin-width1}]\label{thm:fomc-mc}
Given a $d$-sequence of a binary structure $\mathcal S$, and a prenex \fomc-sentence~$\varphi$ of quantifier rank $\ell$, one can decide $\mathcal S \models \varphi$ in time $f(\ell,d) n$ for some computable non-elementary function $f$.
\end{theorem}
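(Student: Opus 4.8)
The plan is to adapt the dynamic-programming machinery over a contraction sequence developed in \emph{Twin-width~I} for plain \fo. Recall that the key object there is the notion of a \emph{morphism-tree} (an Ehrenfeucht--Fra\"iss\'e-style game tree) attached to each vertex of each trigraph $G_i$, and the observation that, because the red degree stays bounded by $d$, these trees can be kept of bounded size as one moves down the sequence $G_n,\dots,G_1$, being updated in $\Oo_{d}(1)$ time per contraction. The first step is therefore to fix the prenex sentence $\varphi = Q_1 x_1 \cdots Q_k x_k\ \psi(x_1,\dots,x_k)$ of quantifier rank $\ell$, where each $Q_j$ is now allowed to be an ordinary $\exists$ (or $\forall$) \emph{or} a modular quantifier $\exists^{i_j[p_j]}$, and to recall the semantics: $\exists^{i[p]} x\ \chi(x)$ holds iff the number of witnesses $a$ with $\mathcal S \models \chi(a)$ is $\equiv i \pmod p$. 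Let $P$ be the lcm of the moduli $p_j$ appearing in $\varphi$; $P$ depends only on $\varphi$.

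The core modification is to enrich the quantitative annotation carried by morphism-trees. In the \fo~case it suffices to record, at each node, the multiset of continuation-types of children \emph{up to a threshold} $T = T(\ell)$ (two children with the same type are redundant past the threshold, since only $\exists/\forall$ distinguishes ``zero'' from ``at least one''). For \fomc~we instead record, for each continuation-type, the number of children realizing it \emph{modulo $P$} \emph{and} up to the threshold $T$ simultaneously; equivalently we keep a value in $\{0,1,\dots,T-1\} \cup \{\ge T, \text{with residue } r \bmod P\}$ for each type. Since there are boundedly many types (bounded by a function of $\ell$ and the signature) and each carries a bounded annotation (bounded by a function of $\ell$ and $P$, hence of $\varphi$), each enriched morphism-tree still has size bounded by a non-elementary function $f(\ell,d)$ only. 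The evaluation of a quantifier $\exists^{i[p]}$ at the root then just reads off the relevant residue from this annotation.

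The second step is to verify that the update rule under a single contraction still runs in time $\Oo_{d,\varphi}(1)$ and preserves correctness. When $u,v$ are contracted into $z$, the enriched tree of $z$ is assembled from those of $u$ and $v$ together with the (at most $2d+2$) red and coloured neighbours involved; merging the per-type counters is done by addition, then re-truncating modulo $P$ and at threshold $T$ — an $\Oo_{d,\varphi}(1)$ operation. Correctness is the statement that the enriched morphism-tree of $z$ in $G_{i-1}$ correctly represents, for every tuple of ``positions'' of the remaining players, the residue modulo $P$ of the number of extensions — this is proved exactly as in \emph{Twin-width~I} by induction along the sequence, the only new ingredient being that counting is now done in $\mathbb Z/P\mathbb Z$ rather than in the two-element ``$0$ vs.\ $\ge 1$'' monoid; the bounded-red-degree condition is what guarantees that only $\Oo_d(1)$ counters ever need to be combined at each step, so no information is lost. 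Running the whole pass from $G_n$ down to $G_1$ takes $f(\ell,d)\cdot n$ time, and at $G_1$ the single enriched tree evaluates $\varphi$.

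The main obstacle is precisely the interaction between the \emph{thresholded} and the \emph{modular} bookkeeping: a plain threshold destroys residue information, while plain residues destroy the ability to evaluate ordinary $\exists/\forall$ nested inside modular quantifiers (and vice versa). The fix — carrying, per continuation-type, the pair (exact count up to $T$, residue mod $P$ once past $T$) — is conceptually straightforward but must be threaded carefully through the recursive definition of morphism-trees and their composition under contraction, and one must check that the size bound remains a function of $\ell$ and $P$ only (not of $n$), which is where the bounded red degree is used. Since the paper explicitly states this only requires ``a small adaptation'' of \cite{twin-width1}, I would present the above enrichment and update rule in detail and then invoke the inductive correctness argument of \cite{twin-width1} verbatim, pointing out the single place (the counter-merging step) where $\mathbb Z/P\mathbb Z$ replaces the Boolean semiring.
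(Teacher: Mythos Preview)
Your approach is correct in spirit and would yield the theorem, but the paper's route is considerably simpler. Rather than enriching morphism-tree nodes with explicit per-type counters (pairs of a thresholded count and a residue mod $P$), the paper makes a one-line change to the \emph{reduction rule} of \cite{twin-width1}: instead of collapsing any two equivalent children to one, collapse any $p+1$ pairwise equivalent children to one. Since each such step removes exactly $p$ copies of a given type, it preserves counts modulo $p$ while still preserving existence and universality; hence the reduced morphism-tree determines, at every level, the truth of $\exists/\forall$ as well as of $\exists^{i[p]}$. The size bound survives because each equivalence class now has at most $p$ representatives instead of one, still bounded by a function of $\ell$ (and the fixed modulus $p$) only.

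What your version buys is explicitness: the counter-based annotation makes the modular semantics visible in the data structure, and your use of the lcm $P$ of all occurring moduli is a mild generalization (the paper's definition of \fomc fixes a single modulus $p$). What the paper's version buys is economy: no new fields, no merging rules for hybrid threshold/residue values, and the inductive correctness argument from \cite{twin-width1} carries over verbatim once one checks that deleting $p$ equivalent siblings preserves the mod-$p$ witness count. In particular, the ``main obstacle'' you flag --- reconciling thresholds with residues --- does not arise in the paper's formulation, because keeping up to $p$ copies of each type already encodes both pieces of information at once.
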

\begin{proof}
  As mentioned above, this essentially follows from the FO model-checking algorithm presented in \cite[Section~7]{twin-width1}.
  However we need to slightly modify the way we reduce the morphism-trees to account for the modular counting quantifiers.
  We do not reduce two equivalent nodes, but rather, if we find $p+1$ pairwise equivalent nodes, we only keep one such node.
  One can observe that the size of a reduct of any depth-$\ell$ morphism-tree is still bounded by a function of $\ell$ (since $p$ is an absolute constant).

  This is the only modification to the dynamic-programming updates of the theories local to the red graph.
  At every node of $MT'_\ell(G,\mathcal P_i,X)$, the existence, universality, and number of witnesses modulo $p$ among the parts with local root $X$ is preserved (regardless of the prenex sentence of quantifier depth $\ell$).
\end{proof}  

A related result by Kuske and Schweikardt~\cite{Kuske18} asserts that \fomc model checking is \FPT~in classes of locally bounded treewidth.  

Another useful fact is that FO-transductions of bounded twin-width classes have themselves bounded twin-width (see \cite[Section 8]{twin-width1}).
This can be then generalized to \fomc-transductions. 

\begin{theorem}\label{thm:fomc-closure}
  Let $\mathcal C$ be a set of binary structures with bounded twin-width, and $\mathsf{T}$ be an \fomc-transduction.
  Then $\mathsf{T}(\mathcal C)$ has bounded twin-width.
\end{theorem}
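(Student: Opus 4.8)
The plan is to follow the proof of the first-order analogue \cref{thm:transductions-tww} from \cite[Section~8]{twin-width1}, isolating the single place where the logic (as opposed to the purely combinatorial manipulation of transductions) is used, and checking that it survives the addition of modular counting. Unwinding the definition, an \fomc-transduction $\mathsf T$ satisfies $\mathsf T(\mathcal C)\subseteq \mathsf I(\mathcal C^+)$, where $\mathcal C^+$ is the class of all monadic expansions of the members of $\mathcal C$ by the fixed finite tuple of unary predicates used by $\mathsf T$, and $\mathsf I$ is a simple interpretation whose domain and relation formulas are \fomc-formulas. The class $\mathcal C^+$ already has bounded twin-width: it equals $\mathsf T_0(\mathcal C)$ for the \emph{first-order} transduction $\mathsf T_0$ guessing those unary predicates and applying the identity interpretation, so \cref{thm:transductions-tww} applies. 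Since subclasses of bounded-twin-width classes have bounded twin-width, it suffices to show that simple \fomc-interpretations preserve bounded twin-width.

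I would then reduce this to a statement about a single binary relation. Relativizing the quantifiers of the $\rho_R$'s to the domain formula $\nu$, one may first pass to the induced substructure on $\nu(\mathbf B)$ (which does not increase twin-width), then add the new $\Gamma$-relations one at a time while keeping the original relations available, and finally take the $\Gamma$-reduct (reducts do not increase twin-width, since coarsening atomic types only deletes red edges). Adding a new \emph{unary} relation is free: adding any unary predicate to the structures of a bounded-twin-width class keeps twin-width bounded, again by \cref{thm:transductions-tww} (transduction guessing one unary predicate). So everything reduces to the claim: \emph{if $\mathbf A$ is a binary structure with a $d$-sequence and $\varphi(x,y)$ is an \fomc-formula, then $\mathbf A$ together with the binary relation $\{(u,v):\mathbf A\models\varphi(u,v)\}$ has twin-width bounded by a function of $d$, of the quantifier rank of $\varphi$, and of the modulus $p$.} This is exactly the \fomc-analogue of the technical heart of \cite[Section~8]{twin-width1}, and since the bound is independent of $\mathbf A$ it yields the theorem.

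For this claim I would run, along the $d$-sequence $G_n,\dots,G_1$ of $\mathbf A$, the morphism-tree dynamic programming of \cite[Section~7]{twin-width1}, maintaining for every trigraph $G_i$ and every part a reduced morphism-tree of depth controlled by the quantifier rank of $\varphi$ (made prenex, at the price of a tower increase). The only modification---precisely the one used in the proof of \cref{thm:fomc-mc}---is that a class of pairwise-equivalent subtrees is not collapsed to a single subtree but reduced so as to retain its cardinality modulo $p$; the reduced trees remain of bounded size since $p$ is fixed. The property that makes the logic work is a \emph{splitting lemma}: when two parts $X$ and $Y$ of $G_i$ are separated enough in the red graph (no red edge, and a homogeneous $\mathbf A$-connection between their bounded red-neighborhoods), the joint morphism-tree with $X$ and $Y$ marked is determined by the reduced morphism-trees of $X$ and of $Y$ and by the atomic type linking them, so $\mathbf A\models\varphi(u,v)$ is constant for $u\in X$, $v\in Y$. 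Granting this, the same sequence of partitions is an $f(d,|\varphi|,p)$-sequence of the augmented structure, because at each step only the $\Oo_{d,|\varphi|,p}(1)$ pairs of parts at bounded red-graph distance can carry a new red edge for the added relation.

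The main obstacle, and the only genuinely new verification, is re-establishing this splitting lemma with modular counting. With only $\exists$ and $\forall$ it suffices that the data structure record \emph{which} local configurations occur on each side; with $\exists^{i[p]}$ one must record \emph{how many}, modulo $p$. It still goes through because gluing two red-separated parts of $\mathbf A$ makes the number of witnesses of any subformula split as a sum over the two sides (up to a bounded interface contribution that is itself determined by the marked data), and a sum modulo $p$ is determined by its two summands modulo $p$---exactly the information the modified reduction retains. Checking that this bookkeeping is compatible with the dynamic-programming updates performed along each contraction, for formulas with two free variables, is the bulk of the work; everything else is inherited verbatim from \cite{twin-width1} and the proof of \cref{thm:fomc-mc}.
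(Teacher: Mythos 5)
Your proposal is correct and follows essentially the same route as the paper, which proves this theorem by citing \cite[Section 8]{twin-width1} together with the modular-counting modification of the morphism-tree reduction already introduced for \cref{thm:fomc-mc} (retaining multiplicities of equivalent nodes modulo $p$ rather than collapsing them). Your write-up simply makes explicit the reduction to simple interpretations and the splitting lemma that the paper leaves implicit in its reference to \emph{Twin-width~I}.
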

\begin{proof}
  This follows from \cite[Section 8]{twin-width1} with the change of the previous theorem.
\end{proof}

As an illustrative example, we give a simple consequence of \cref{thm:fomc-closure}: shallow vertex minors of bounded twin-width graphs have bounded twin-width.
The \emph{local complementation} at a vertex $v$ of a graph $G$ consists of replacing, in $G$, the induced subgraph $G[N(v)]$ by its complement. 
A~graph $H$ is a~\emph{depth-1 vertex minor} of a~graph $G$ if there exists an independent set $I$ of $G$ and a subset $R \subseteq G$, such that $H$ can be obtained from $G$ by local complementation at the vertices of $I$, then deletion of the vertices in $R$.
Note that the result does not depend on the order in which the local complementations are performed as $I$ is an independent set.
A~\emph{depth-$d$ vertex minor of~$G$} is inductively defined as a depth-1 vertex minor of a depth-$(d-1)$ vertex minor of $G$. 

\begin{proposition}
	For every pair of integers $d, t$, there is an integer $t'$ such that every depth-$d$ vertex minor of a graph with twin-width at most $t$ has twin-width at most $t'$.
\end{proposition}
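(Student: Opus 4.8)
The plan is to show that the depth-$d$ vertex minor operation is an \fomc-transduction (in fact a composition of $d$ \fomc-transductions), and then invoke \cref{thm:fomc-closure} together with the fact that a composition of transductions is a transduction. First I would handle the case $d=1$. The key observation is that local complementation at a vertex $v$ toggles the adjacency of $x$ and $y$ precisely when $x \sim v$, $y \sim v$, and $x \neq y$; so if $I$ is an independent set, performing local complementation at every vertex of $I$ simultaneously changes the adjacency of $x,y$ to ``adjacent iff ($x\sim y$) XOR (the number of common neighbours of $x$ and $y$ lying in $I$ is odd)''. This parity condition is exactly what the modular counting quantifier $\exists^{o}$ can express: given a unary predicate $U$ marking the set $I$ and a unary predicate $U'$ marking the set $R$ to be deleted, take the transduction whose domain formula is $\nu(x) := \neg U'(x)$ and whose edge formula is
\[
\rho_E(x,y) := x \neq y \;\wedge\; \bigl( E(x,y) \;\liff\; \neg\, \exists^{o} z\,\, (U(z) \wedge E(x,z) \wedge E(y,z)) \bigr).
\]
The colouring step of a transduction is exactly what lets us guess the sets $I$ and $R$; the only subtlety is that we must also certify that $I$ is independent, which we can enforce by adding to every relation formula the conjunct ``$\neg(\exists z_1 \exists z_2\ U(z_1)\wedge U(z_2)\wedge z_1\neq z_2\wedge E(z_1,z_2))$'', a sentence, so that colourings failing this condition produce no useful output structure (or simply restrict attention to those colourings, as transductions allow). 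Thus the depth-$1$ vertex minor relation $H \in \mathsf{T}_1(G)$ is witnessed by an \fomc-transduction $\mathsf{T}_1$.

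Next I would note that ``depth-$d$'' unwinds to $\mathsf{T}_d := \mathsf{T}_1 \circ \cdots \circ \mathsf{T}_1$ ($d$ times): by the inductive definition in the excerpt, $H$ is a depth-$d$ vertex minor of $G$ iff $H \in \mathsf{T}_1(\mathsf{T}_1(\cdots \mathsf{T}_1(G)))$. Since the class of \fomc-transductions is closed under composition (the standard substitution argument for interpretations, together with the fact that modular quantifiers relativize correctly through a domain formula), $\mathsf{T}_d$ is a single \fomc-transduction. Now given the class $\mathcal C$ of all graphs of twin-width at most $t$, the set $\mathsf{T}_d(\mathcal C)$ contains every depth-$d$ vertex minor of every graph of twin-width $\le t$, and by \cref{thm:fomc-closure} it has bounded twin-width; calling this bound $t'$ (which depends only on $d$ and $t$) gives the statement.

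The part requiring the most care is the composition step for modular counting quantifiers: one must check that when an inner transduction $\mathsf{T}_1$ with domain formula $\nu(x)$ is substituted into an outer $\exists^{o} z\, \psi(z)$, the resulting formula $\exists^{o} z\, (\nu(z) \wedge \psi^{\ast}(z))$ still has the intended meaning — i.e. that counting modulo $p$ over the restricted domain $\nu(\mathbf A)$ is itself expressible, which it is, since $\exists^{o}$ already carries the witness-set inside its scope. This is routine but is the one spot where the modular-counting feature (rather than plain \fo) is genuinely used, both here and in the definition of $\rho_E$ above. Everything else — verifying that the edge formula correctly describes simultaneous local complementations at an independent set, and that deletion is handled by the domain formula — is a direct unfolding of definitions.
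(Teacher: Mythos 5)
Your proposal is correct and follows essentially the same route as the paper: express a depth-$1$ vertex minor as an \fomc-transduction with unary predicates for $I$ and $R$ and the edge formula ``old adjacency XOR odd number of common neighbours in $I$'', then invoke \cref{thm:fomc-closure} and iterate $d$ times. The only cosmetic differences are that the paper simply applies the depth-$1$ boundedness result $d$ times rather than composing transductions, and it does not bother enforcing independence of $I$ (extra outputs of the transduction are harmless since only the inclusion $\mathcal D_t\subseteq\mathsf T_1(\mathcal C)$ is needed).
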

\begin{proof}
  We shall just show that the \emph{depth-1} vertex minors of graphs with bounded twin-width have bounded twin-width.
  The class $\mathcal D_t$ of all depth-1 vertex minors of graphs with twin-width at most $t$ is an \fomc-transduction of the set of all graphs with twin-width at most $t$.
  
  The transduction works as follows: considering $I$ and $R$ as unary predicates, the new domain is defined as the set of vertices not in $R$, and the new adjacency relation is defined by the formula
  $\neg(x=y)\wedge \bigl(E(x,y)\leftrightarrow (\exists^e z\ I(z)\wedge E(x,z)\wedge E(y,z))\bigr)$, that is: two vertices $u$ and $v$ have their adjacency complemented if they have an odd number of common neighbors in $I$.
\end{proof}

We now show that the parity-minor closure of a matrix class can be expressed as an \fomc-transduction.

\begin{lemma}\label{lem:min-clos-transduction}
  Let $\mathbb F$ be a finite field. 
  There is an \fomc-transduction $\mathsf{T}_{\text{pm}}$ such that, for every matrix class~$\mathcal M$ over $\mathbb F$, $\mathsf{T}_{\text{pm}}(\mathcal M) = \minclos(\mathcal M)$.
\end{lemma}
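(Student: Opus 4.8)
The plan is to describe a single \fomc-transduction $\mathsf{T}_{\text{pm}}$ that, given a matrix $M$ over $\mathbb F$, produces (among its outputs) precisely the parity minors of $M$. By \Cref{obs:pm-del-sum}, a parity minor of $M$ is obtained by choosing a submatrix, dividing its rows into consecutive intervals $R_1,\dots,R_n$ and its columns into consecutive intervals $C_1,\dots,C_m$, and setting the new entry at $(i,j)$ to $\sum_{r\in R_i,\,c\in C_j}M_{r,c}$ over $\mathbb F$. So the transduction must (a) select which rows/columns survive, (b) pick one representative row index in each surviving row interval and one representative column index in each surviving column interval — these representatives become the domain of the output structure — and (c) for a representative pair $(r,c)$ determine, for each field element $a\in\mathbb F$, whether $\sum_{r'\in R_i,\,c'\in C_j}M_{r',c'}=a$.

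\textbf{Encoding the data by colors.} The transduction colors the elements of $M$ with finitely many unary predicates: a predicate $\mathsf{Del}$ marking deleted rows/columns, and a predicate $\mathsf{Rep}$ marking the chosen representatives. The interval $R_i$ containing a representative $r$ is recovered in \fo: it consists of all non-deleted rows $r'$ such that no representative lies strictly between $r$ and $r'$ in the order $<$ (using that the rows precede the columns and the order on rows is total). The domain formula $\nu(x)$ is ``$\mathsf{Rep}(x)$''. The order relations $\le_R,\le_C$ (and the row predicate) on the output are inherited verbatim from $M$ restricted to representatives, which is immediate in \fo.

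\textbf{Summing a cell with modular counting.} The one genuinely nontrivial point is defining, for each $a\in\mathbb F$, the binary relation $E_a$ on representatives $(r,c)$ expressing $\sum_{r'\in R_i,\,c'\in C_j}M_{r',c'}=a$. Write $\mathbb F=\mathbb F_q$ with characteristic $p$; as an additive group $(\mathbb F,+)\cong(\mathbb Z/p\mathbb Z)^t$ where $q=p^t$, and fixing a basis we identify each field element $b$ with its coordinate vector $(b^{(1)},\dots,b^{(t)})\in(\mathbb Z/p)^t$. For a fixed coordinate $\kappa\in[t]$ and fixed value $s\in\mathbb Z/p$, the statement ``the $\kappa$-th coordinate of $\sum_{r'\in R_i,c'\in C_j}M_{r',c'}$ equals $s$'' is, for each choice of which nonzero field values $b\ne 0$ occur and with what multiplicities $\mu_b\in[0,p-1]$ so that $\sum_b \mu_b b^{(\kappa)}\equiv s\pmod p$, a finite disjunction over such admissible tuples $(\mu_b)_b$ of a conjunction of statements ``the number of cells $(r',c')\in R_i\times C_j$ with $M_{r',c'}=b$ is $\equiv\mu_b\pmod p$''. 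Each of the latter is a \emph{two-variable} modular-counting statement — $\exists^{\mu_b[p]}(r',c')$ ranging over the pair, which in prenex form is a bounded-depth \fomc-formula using one nested modular quantifier over rows and handling the count over columns; equivalently, since $p$ is a fixed constant, one can spell out ``$\#\{(r',c'): \dots\}\equiv \mu_b\bmod p$'' by quantifying $r'$ with a modular quantifier whose inner formula records the residue of the column count, which is a standard \fomc construction. Then $E_a(r,c)$ is the conjunction over $\kappa\in[t]$ of these coordinate-wise conditions with $s=a^{(\kappa)}$. All of this has bounded quantifier rank independent of $M$, so it is a legitimate fixed \fomc-transduction.

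\textbf{Correctness and the obstacle.} It remains to check the two inclusions $\mathsf{T}_{\text{pm}}(\mathcal M)\subseteq\minclos(\mathcal M)$ and $\supseteq$. For $\supseteq$: given $N\infpar M$, \Cref{obs:pm-del-sum} yields a submatrix and an $(n,m)$-division realizing $N$; color the deleted rows/columns by $\mathsf{Del}$, and in each surviving interval color its first (smallest) index by $\mathsf{Rep}$; then by construction the interpreted structure on the $\mathsf{Rep}$-elements is exactly $N$. For $\subseteq$: any legal coloring selects a submatrix and a consecutive division (via the $\mathsf{Rep}$-induced interval partition), and the formulas compute exactly the pointwise sums, so the output is a parity minor. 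The main point requiring care — and the place I would spend the most effort — is writing the sum-of-a-cell condition as a bounded-rank \fomc-formula: one must be careful that the modular-counting quantifier only counts witnesses of a unary formula, so the two-dimensional count over $R_i\times C_j$ has to be encoded by nesting (count rows by a modular quantifier whose body, via another modular quantifier, records the residue of the per-row column count, then combine), and one must verify the resulting formula indeed has quantifier rank bounded by a function of $q$ and $p$ only. Everything else (defining intervals from representatives, inheriting the order, the row predicate) is routine \fo. Finally, by \Cref{obs:f2} the same transduction handles linear minors over $\mathbb F_2$, and — combined with the weighting as an extra coloring by field elements, treated exactly as above with the weights folded into the coefficients of each summand — one obtains \Cref{lem:min-clos-transduction} and its linear-minor analogue \Cref{lem:min-clos-transduction} for general $\mathbb F$ (this is \Cref{lem:min-clos-transduction}'s companion statement referenced in the introduction).
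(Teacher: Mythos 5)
Your construction is essentially the paper's own proof: your predicates $\mathsf{Del}$ and $\mathsf{Rep}$ play the roles of the paper's unary relations $U$ and $D$, and the cell-sum is expressed exactly as in the paper, by classifying rows according to the residue mod $p$ of their per-row count of $\tilde j$-entries and then counting the rows in each class with an outer modular quantifier, combined through a finite disjunction over the admissible residue distributions (the paper's disjunction over functions $a$ with $\sum a(j,k)k\cdot\tilde j=\tilde i$). One small repair: the interval of a representative $r$ must be recovered one-sidedly --- all kept $r'$ with $r\preceq r'$ and no representative in $(r,r']$, as in the paper's $\text{Rectangle}$ formula --- rather than ``no representative strictly between $r$ and $r'$,'' which would also absorb the tail of the preceding interval.
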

\begin{proof}
  We first show the lemma when $\mathbb F$ is the binary field.
  The proof of that fact already contains all the ideas of the general case without presenting ``unnecessary'' technicalities.
  We recall that our 0,1-matrices $M$ are $\tau$-structures with $\tau = (R,\prec,E)$.
  The universe is the union of the sets of row and column indices, $\prec$ is interpreted as a linear order on the indices (with the row preceding the column indices), and $M \models E(x,y)$ whenever there is a 1-entry at row $x$, column $y$ in $M$.
  We write $x \preceq y$ as a short-hand for $x = y \vee x \prec y$.

  The transduction $\mathsf T_{\text{pm}}$ uses two (non-deterministic) unary relations $U$ and $D$.
  We interpret~$U$ as the rows and columns kept to form the parity minor, and $D$ as the first row (resp.~column) of a row part (resp.~column part).
  Let $N$ be any parity minor of $M$.
  Crucially there is an even number of 1-entries in a cell of a division of $M$ if and only if the number of rows with an odd number of 1-entries is even.
  This can be expressed by two nested modular counting quantifiers, and $E^N(x,y)$ can be \fomc-defined as
  $$\text{Valid}(x,y)~\land~\exists^e r~\exists^o c~\text{Rectangle}(x,y,r,c)~\wedge~E^M(r,c),~\text{with}$$
  $$\text{Valid}(x,y) = R(x)~\land~\neg R(y)~\land~D(x)~\land~D(y),~\text{and}$$
  $$\text{Rectangle}(x,y,r,c) = R(r)~\wedge~\neg R(c)~\wedge~U(r)~\wedge~U(c)~\wedge~x \preceq r~\wedge~y \preceq c$$
  $$\wedge~\forall r'~(D(r')~\wedge~x \prec r') \rightarrow r \prec r'~\wedge~\forall c'~(D(c')~\wedge~y \prec c') \rightarrow c \prec c'.$$

  For the sake of legibility, we omitted the $M$ superscript for all the relations but $E^M$.
  The universe of $N$ is defined as the indices $x$ such that $\models D(x)$, and $\prec^N$ is naturally inherited from $\prec^M$.

  One can check that there is a perfect correspondence between the parity-minor operations and the formula $E^N(x,y)$.
  Rows or columns $x$ such that $\models \neg U(x)$ are removed.
  The lines of the division start just before every row or column $x$ such that $\models D(x)$.
  All the rows or columns $x$ such that $\models \neg \exists x'~D(x')~\land~x' \preceq x$ could be equivalently removed.  
  Therefore $T_{\text{pm}}(\mathcal C) = \minclos(\mathcal C)$ holds.

  \medskip

  We now deal with the general case.
  Let $p$ be the number of elements in the finite field $\mathbb F$.
  We thus consider $\tau_p$-structures with $\tau_p=(R,\prec,E_1,E_2,\ldots,E_{p-1})$, where $R$, $\prec$ are as before, while the relation symbols $E_i$ all have arity~2.
  Our theory contains the sentence
  $$\forall x,y~\left(R(x)~\wedge~\neg R(y)\right)~\rightarrow \bigwedge_{i \in [p-1]} \left( E_i(x,y) \rightarrow \bigwedge_{j \in [p-1] \setminus \{i\}} \neg E_j(x,y)\right).$$
  So there is a unique entry $i \in [0,p-1]$ at row $x$, column $y$; namely the unique $i \in [p]$ such that $\models E_i(x,y)$ if it exists, or 0 otherwise.

  We need to generalize the binary counting (``$\exists^e r~\exists^o c$'') of the sum within a rectangular zone.
  Let us denote by $\tilde i$ the element of $\mathbb F$ corresponding to $E_i$, and $\tilde 0 = 0_{\mathbb F}$.
  If $p$ is prime, then simply $\tilde i=i$.
  We can now define $E_i^N(x,y)$ as
  $$\text{Valid}(x,y)~\wedge~\bigvee_{\substack{a: [p-1] \times [0,p-1] \to [0,p-1] \\ \sum\limits_{\substack{j \in [p-1] \\ k \in [0,p-1]}} a(j,k)k \cdot \tilde{j} = \tilde i}}~\bigwedge_{\substack{j \in [p-1] \\ k \in [0,p-1]}}~\exists^{a(j,k)[p]}r~\exists^{k[p]}c~\text{Rectangle}(x,y,r,c)~\wedge~E^M_j(r,c).$$
  Let us observe that in the previous formula, ``$a(j,k)k \cdot \tilde{j}$'' does \emph{not} involve any multiplication in $\mathbb F$.
  The term $a(j,k)k$ is a scalar integer, thus $a(j,k)k \cdot \tilde{j}$ is a sum of up to $(p-1)^2$ occurrences of $\tilde{j}$.

  The integer $a(j,k)$ aims to match the number modulo $p$ of rows with $k$ modulo $p$ occurrences of $\tilde{j}$.
  Since the order of every element $\tilde{j}$ in $(\mathbb F,+)$ divides $p$,
  $$\bigwedge_{k \in [0,p-1]} \exists^{a(j,k)[p]}r~\exists^{k[p]}c~\text{Rectangle}(x,y,r,c)~\wedge~E^M_j(r,c)$$
  indeed holds when the number of $\tilde j$-entries in the eligible region is equal to $\sum_{k \in [0,p-1]}a(j,k)k$ modulo~$p$.
  Thus, if this holds for every $j \in [p-1]$, the sum of the entries in the eligible region is $\sum\limits_{j \in [p-1], k \in [0,p-1]} a(j,k)k \cdot \tilde{j}$, which happens to be $\tilde i$.
\end{proof}

Similarly we will now show that the linear-minor closure can be expressed by an \fomc-transduction.
This is based on the proof of the previous lemma and is only slightly more technical.

\begin{lemma}\label{lem:linmin-clos-transduction}
  Let $\mathbb F$ be a finite field. 
  There is an \fomc-transduction $\mathsf{T}_{\text{lm}}$ such that, for every matrix class~$\mathcal M$ over $\mathbb F$, $\mathsf{T}_{\text{lm}}(\mathcal M) = \linmin(\mathcal M)$.
\end{lemma}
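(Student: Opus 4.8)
The plan is to obtain $\mathsf T_{\text{lm}}$ by enriching the transduction $\mathsf T_{\text{pm}}$ from the proof of \cref{lem:min-clos-transduction} with unary predicates recording an $\mathbb F$-weighting of the rows and columns. Write $p=|\mathbb F|$ and work over $\tau_p=(R,\prec,E_1,\dots,E_{p-1})$, with $\tilde j$ the field element attached to $E_j$ and $\tilde 0=0_{\mathbb F}$. I would have $\mathsf T_{\text{lm}}$ use the predicate $D$ marking the first index of each interval-part of the division, exactly as in the parity-minor case, together with predicates $W_1,\dots,W_{p-1}$, the intended meaning being that an index $z$ carries weight $\tilde a$ when $W_a(z)$ holds and weight $0_{\mathbb F}$ otherwise. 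Since weight $0_{\mathbb F}$ already emulates a deletion, no analogue of the predicate $U$ is needed; the domain formula is $\nu(z)=D(z)$ and $R,\prec$ are inherited, so the rows (resp.\ columns) of the output are the $D$-marked row (resp.\ column) indices, one per part.

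The real work is to \fomc-define, for a $D$-marked row index $x$ and $D$-marked column index $y$, the relation $E^N_i(x,y)$ holding exactly when the corresponding cell value, i.e.\ the sum $\sum_{r,c}w(r)w(c)M_{r,c}$ over the rows $r$ of $x$'s part and the columns $c$ of $y$'s part, equals $\tilde i$. First I would write auxiliary formulas $\mathrm{InRow}(x,r)=R(r)\wedge x\preceq r\wedge\forall r'\,((D(r')\wedge x\prec r')\to r\prec r')$ and its column analogue $\mathrm{InCol}(y,c)$. Next, for each $v\in\mathbb F$, I would express ``the $w$-weighted sum of row $r$ over the columns of $y$'s part equals $v$'' as the disjunction, over all $\ell\colon[p-1]\times[p-1]\to[0,p-1]$ with $\sum_{b,j}\ell(b,j)\,\tilde b\,\tilde j=v$ (products in $\mathbb F$), of $\bigwedge_{b,j}\exists^{\ell(b,j)[p]}c\,\bigl(\mathrm{InCol}(y,c)\wedge W_b(c)\wedge E_j(r,c)\bigr)$; call this $\Phi_v(x,y,r)$. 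Just as in \cref{lem:min-clos-transduction}, for fixed $x,y,r$ the column-counts $|\{c:\mathrm{InCol}(y,c)\wedge W_b(c)\wedge E_j(r,c)\}|$ are determined, so exactly one $\ell$ satisfies the inner conjunction, and since $|\mathbb F|$ is a power of its characteristic one has $p\cdot z=0$ for all $z\in\mathbb F$, so counting those columns modulo $p$ already pins down the field element $\sum_c w(c)M_{r,c}$; hence $\Phi_v(x,y,r)$ holds iff that element is $v$. One level up, $E^N_i(x,y)$ becomes $R(x)\wedge\neg R(y)$ together with the disjunction, over all $g\colon[p-1]\times\mathbb F\to[0,p-1]$ with $\sum_{a,v}g(a,v)\,\tilde a\,v=\tilde i$, of $\bigwedge_{a\in[p-1],v\in\mathbb F}\exists^{g(a,v)[p]}r\,\bigl(\mathrm{InRow}(x,r)\wedge W_a(r)\wedge\Phi_v(x,y,r)\bigr)$: a weight-$0$ row contributes $0$ and is ignored, every other row is counted against the unique pair consisting of its weight index and its weighted row-sum, and the same determinacy-plus-characteristic argument shows that the unique $g$ satisfying the conjunction records the cell value as $\sum_{a,v}g(a,v)\tilde a v$, which the outer disjunction tests to equal $\tilde i$.

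It then remains to check that $\mathsf T_{\text{lm}}(\mathcal M)=\linmin(\mathcal M)$ for every matrix class $\mathcal M$ over $\mathbb F$. For soundness, any choice of $D,W_1,\dots,W_{p-1}$ on an $M\in\mathcal M$ produces, through $\nu$ and the $E^N_i$, exactly the linear minor of $M$ given by the $(n,m)$-division whose part-boundaries are the $D$-marked indices (any prefix of rows or columns preceding the first $D$-mark being folded into the first part with weight $0_{\mathbb F}$) and by the weighting read off the $W_a$; this is a linear minor by the division-and-weighting description recalled in \cref{sec:parity-minors}. For completeness, a linear minor $N$ of $M$ arising from a division and weighting $w$ is recovered by marking $D$ on the first index of each part and $W_a$ on the indices of weight $\tilde a$. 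I expect the main obstacle to be exactly the two-level bookkeeping in $E^N_i$: one must range over all ``profile'' functions $\ell$ and $g$ and verify that for each fixed choice of the outer variables exactly one profile is realized, so the nested disjunctions behave as intended, while keeping the side conditions $\sum\ell(b,j)\tilde b\tilde j=v$ and $\sum g(a,v)\tilde a v=\tilde i$ as genuine $\mathbb F$-arithmetic (products of field elements, and integer multiples thereof, which collapse modulo $p$) — this being precisely the two-step ``combine column vectors in spans, then combine row vectors in spans'' picture of a linear minor. Everything else is a routine embellishment of the proof of \cref{lem:min-clos-transduction}.
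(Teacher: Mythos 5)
Your proposal is correct and follows essentially the same route as the paper's proof: unary predicates for the division boundaries and for the $\mathbb F$-weighting (your $W_1,\dots,W_{p-1}$ with ``no mark $=$ weight $0$'' is equivalent to the paper's partition $U_0,\dots,U_{p-1}$), plus a disjunction over arithmetic profiles of nested modular counting quantifiers, justified by the fact that every element of $(\mathbb F,+)$ has order dividing $p$ and that exactly one profile is realized. The only difference is organizational — you compute each row's weighted sum $v$ first via an inner profile $\ell$ and then aggregate rows by $(a,v)$, whereas the paper uses a single flat profile over quadruples $(j,k,g,h)$ — but both rest on the same bilinearity and mod-$p$ collapse, so this is a cosmetic variant rather than a different argument.
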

\begin{proof}
  Let $p$ be the cardinality of $\mathbb F$.
  We still use the unary relation $D$ to encode the division.
  The main difference with the proof of~\cref{lem:min-clos-transduction} is that $\mathsf{T}_{\text{lm}}$ uses $p$ extra relations $U_0, \ldots, U_{p-1}$ instead of just $U$.
  Their interpretation is that $M \models U_i(x)$ holds whenever the weighting of row or column $x$ is $\tilde{i} \in \mathbb F$.
  We impose that the unary relations $U_i$ partition the set of row and column indices (i.e., the $\mathbb F$-weighting is indeed a mapping) with the sentence
  $$\forall x~\bigvee_{i \in [0,p-1]}U_i(x)~\wedge~\bigwedge_{i \in [0,p-1]} \left( U_i(x) \rightarrow \bigwedge_{j \in [0,p-1] \setminus \{i\}} \neg U_j(x)\right).$$

  Again $R^N$ and $\prec^N$ are naturally inherited from $R^M$ and $\prec^M$.
  We shall now define $E_i^N(x,y)$ as
  $$\text{Valid}(x,y)~\wedge~\bigvee_{\substack{a: [p-1] \times [0,p-1]^3 \to [0,p-1] \\ \sum\limits_{\substack{j \in [p-1] \\ k,g,h \in [0,p-1]}} a(j,k,g,h)k \cdot \tilde g \tilde h \tilde j = \tilde i}}~\bigwedge_{\substack{j \in [p-1] \\ k,g,h \in [0,p-1]}}~\exists^{a(j,k,g,h)[p]}r~\exists^{k[p]}c~\text{Rectangle}(x,y,r,c)$$
  $$\wedge~U_g(r)~\wedge~U_h(c)~\wedge~E^M_j(r,c),$$
  where $\text{Valid}(x,y)$ is defined as before while $\text{Rectangle}(x,y,r,c)$ is now simply
  $$R(r)~\wedge~\neg R(c)~\wedge~x \preceq r~\wedge~y \preceq c~\wedge~\forall r'~(D(r')~\wedge~x \prec r') \rightarrow r \prec r'~\wedge~\forall c'~(D(c')~\wedge~y \prec c') \rightarrow c \prec c'.$$
  Now $a(j,k,g,h)$ aims to match the number modulo $p$ of rows mapped by the $\mathbb F$-weighting to~$\tilde g$ with $k$ modulo $p$ occurrences of $\tilde j$ on columns mapped by the $\mathbb F$-weighting to $\tilde h$.
\end{proof}

\section{Equivalence of bounded twin-width and linear-minor freeness}\label{sec:strict}

As a corollary of the previous section, we obtain that bounded twin-width matrix classes over finite fields are \lmf, and in particular \strict.

\begin{lemma}\label{lem:bdtww-implies-strict}
  Let $\mathbb F$ be a finite field.
  For every matrix class $\mathcal M$ over $\mathbb F$ of bounded twin-width, $\mathcal M$ is \lmf.
\end{lemma}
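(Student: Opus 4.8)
The plan is to derive this as a short corollary of the two ingredients assembled in the previous sections: that the linear-minor closure of a matrix class is realized by an \fomc-transduction (\Cref{lem:linmin-clos-transduction}), and that \fomc-transductions preserve boundedness of twin-width (\Cref{thm:fomc-closure}). So the whole proof is a chain of implications once those are in hand.

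Concretely, I would proceed as follows. First, observe that $\linmin(\mathcal M)$ is again a matrix class: if $N \inflin M$ with $M \in \mathcal M$ and $N'$ is a submatrix of $N$, then $N'$ is obtained from $N$ by deletions, which are particular weighted sums, so $N' \inflin M$ and $N' \in \linmin(\mathcal M)$; hence $\linmin(\mathcal M)$ is closed under submatrices. This is what makes \Cref{lem:linmin-clos-transduction} applicable, giving an \fomc-transduction $\mathsf T_{\text{lm}}$ with $\mathsf T_{\text{lm}}(\mathcal M) = \linmin(\mathcal M)$. Since $\mathcal M$ has bounded twin-width by assumption, \Cref{thm:fomc-closure} yields that $\mathsf T_{\text{lm}}(\mathcal M) = \linmin(\mathcal M)$ has bounded twin-width. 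It then remains to recall that $\Mall$, the set of all matrices over $\mathbb F$, has unbounded twin-width: this is well known; for instance, classes of bounded twin-width are small, whereas $\Mall$ is not, as it contains $|\mathbb F|^{n^2} = 2^{\Theta(n^2)}$ matrices of order~$n$ while $|\mathbb F| \ge 2$. Therefore $\linmin(\mathcal M) \ne \Mall$, which is exactly the statement that $\mathcal M$ is \lmf. As a by-product, since every parity minor of a matrix is in particular a linear minor, one gets $\minclos(\mathcal M) \subseteq \linmin(\mathcal M) \ne \Mall$, so $\mathcal M$ is also \strict.

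I do not expect a genuine obstacle here: the technical weight of the argument sits entirely in \Cref{lem:linmin-clos-transduction} and in the twin-width invariance of \fomc-transductions, both already established. The only points that deserve a line of care are verifying that $\linmin$ indeed outputs a matrix class (so that \Cref{lem:linmin-clos-transduction} can be invoked as stated) and pinning down the elementary fact that $\Mall$ has unbounded twin-width; everything else is bookkeeping.
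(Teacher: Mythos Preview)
Your proposal is correct and follows essentially the same route as the paper: invoke \Cref{lem:linmin-clos-transduction} to realize $\linmin(\mathcal M)$ as an \fomc-transduction of $\mathcal M$, then apply \Cref{thm:fomc-closure} to conclude that $\linmin(\mathcal M)$ has bounded twin-width and hence cannot be $\Mall$. One minor remark: the hypothesis of \Cref{lem:linmin-clos-transduction} is that $\mathcal M$ is a matrix class, which is already given, so your verification that $\linmin(\mathcal M)$ is submatrix-closed, while true, is not needed to invoke the lemma.
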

\begin{proof}
  By~\cref{lem:linmin-clos-transduction}, there is an \fomc-transduction $T_{\text{lm}}$ with $T_{\text{lm}}(\mathcal M) = \linmin(\mathcal M)$.
  Thus by~\cref{thm:fomc-closure}, $\linmin(\mathcal M)$ has bounded twin-width, and cannot be the set of all $\mathbb F$-matrices.
\end{proof}

For the converse, we will need the notion of \emph{rank Latin divisions} previously introduced~\cite{twin-width4}.
For any integers $k \geqslant 2$ and $d \geqslant 1$, a~\emph{rank-$k$ Latin $d$-division} of a $kd^2 \times kd^2$ matrix $M$ is a regular $d$-division $\mathcal D$ of $M$ that can be refined into a regular $d^2$-division $((R_1, \ldots, R_{d^2}), (C_1, \ldots, C_{d^2}))$ such that
\begin{itemize}
\item $\forall i \in [d^2]$, $M[R_i,C_j]$ is constant for every $j \in [d^2]$ but one $j_i$ for which it has rank~$k$,
\item $\forall j \in [d^2]$, $M[R_i,C_j]$ is constant for every $i \in [d^2]$ but one $i_j$ for which it has rank~$k$,
\item and every cell of $\mathcal D$ contains exactly one $M[R_i,C_j]$ with rank~$k$.
\end{itemize}

See the left-hand side of~\cref{fig:latin} for an illustration.
We notice that every $M[R_i,C_j]$ is a $k \times k$ matrix, since the division $\mathcal D$ and its refinement are assumed regular.
In particular the non-constant submatrices $M[R_i,C_j]$ are full rank.
The definition we give in the previous paper of the series~\cite{twin-width4} is more restrictive: The full-rank submatrices $M[R_i,C_j]$ are canonical (Ramsey-minimal) and the pattern their positions draw within the matrix $M$ is fixed (following~\cref{fig:latin}).
As we do not need these additional properties, we relaxed the definition here.
It was previously shown that matrix classes with unbounded twin-width contain matrices with rank-$k$ Latin $d$-divisions for arbitrarily large values of $k$ and $d$.
For our purpose we will only need $k=2$ and $d$ diverging.

\begin{lemma}[\cite{twin-width4}]\label{lem:latin}
  Let $\mathcal M$ be a matrix class of unbounded twin-width over a finite field.
  Then for every $d$, there is a matrix $M \in \mathcal M$ with a rank-2 Latin $d$-division.
\end{lemma}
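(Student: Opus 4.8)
The plan is to deduce this from \emph{Twin-width~IV}~\cite{twin-width4}, where essentially this statement --- in fact a stronger one --- is proved; the only thing to do is to reconcile the two notions of rank Latin division. The cited paper shows that a matrix class $\mathcal M$ over a finite field of unbounded twin-width must contain, for every $d$, a matrix with a rank-$k$ Latin $d$-division for some $k \geq 2$ (indeed with $k$ as large as one wishes). There, however, ``rank-$k$ Latin $d$-division'' refers to the more restrictive notion recalled above, in which the full-rank cells of the refined $d^2$-division are Ramsey-minimal and sit at the prescribed positions of~\cref{fig:latin}. Since our definition is obtained from that one precisely by deleting these two extra requirements, any division witnessing the stronger notion witnesses ours as well. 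Hence for every $d$ we already get a matrix $M \in \mathcal M$ with a rank-$k$ Latin $d$-division, for some $k \geq 2$, in the sense used in the present paper.

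It then remains to bring the rank down from an arbitrary $k \geq 2$ to exactly $2$, keeping $d$ unchanged. I would fix the refined $d^2$-division $((R_1,\dots,R_{d^2}),(C_1,\dots,C_{d^2}))$ of such an $M$ and select two indices inside each $R_i$ and each $C_j$, yielding a $2d^2 \times 2d^2$ submatrix $M'$ together with the coarse $d$-division of $M$ and its $d^2$-refinement restricted to $M'$. The selection follows the Latin pattern. For each $i$ there is a unique $j_i$ with $M[R_i,C_{j_i}]$ of rank $k$; being full rank, this cell has two rows spanning a $2$-dimensional space, and those become the two selected rows of $R_i$. Symmetrically, for each $j$ there is a unique $i_j$ with $M[R_{i_j},C_j]$ of rank $k$; because $M[R_{i_j},C_j]$ has rank $k$ and uniqueness forces $j_{i_j}=j$, the two rows already chosen in $R_{i_j}$ still span a rank-$2$ submatrix of $M[R_{i_j},C_j]$, so I can pick two columns of $C_j$ making the resulting $2 \times 2$ cell have rank $2$.

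After these choices, each cell of the restricted refinement is a $2 \times 2$ matrix: constant if the original cell was constant, and of rank exactly $2$ if it was the unique full-rank cell of its row-block and column-block. Consequently $M'$ carries a rank-$2$ Latin $d$-division --- the coarse $d$-division restricts to a $d$-division of $M'$, its refinement to a $d^2$-division with the cells just described, every coarse cell still contains exactly one full-rank cell, and every row-block (resp.\ column-block) of the refinement still meets exactly one full-rank cell among the column-blocks (resp.\ row-blocks). Since $M'$ is a submatrix of $M \in \mathcal M$ and matrix classes are closed under taking submatrices, $M' \in \mathcal M$, which proves the lemma. I expect that the only point requiring care is the compatibility, in the previous paragraph, between the row selection imposed by a row-block and the column selection imposed by a column-block; the identity $j_{i_j}=j$ settles it and nothing has to be computed. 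All the real difficulty --- producing large rank Latin divisions in classes of unbounded twin-width --- is already done in~\cite{twin-width4}.
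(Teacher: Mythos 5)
Your proposal is correct and matches the paper, which gives no proof of its own here: the lemma is imported directly from \emph{Twin-width~IV}, with the only commentary being exactly your first observation, that the present definition of a rank Latin division is a relaxation (dropping Ramsey-minimality and the fixed positional pattern) of the one proved there. Your additional extraction step --- choosing two rows per $R_i$ from its unique full-rank cell and two independent columns per $C_j$, using $j_{i_j}=j$ to make the choices compatible --- is a correct and welcome piece of care that the paper elides by simply asserting the cited result can be instantiated at $k=2$.
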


Equipped with that technical lemma, the following fact roughly boils down to~\cref{fig:latin}.
\begin{figure}[h!]
    \centering
    \begin{tikzpicture}[scale=.35]
      \def\e{0.05}
      \pgfmathsetseed{17} 

    \foreach \i in {0,...,8}{
       \foreach \j in {0,...,8}{
        \pgfmathsetmacro{\ii}{2*\i}
        \pgfmathsetmacro{\ip}{\ii+2}
        \pgfmathsetmacro{\jj}{2*\j}
        \pgfmathsetmacro{\jp}{\jj+2}
        \pgfmathsetmacro{\col}{ifthenelse(\j==mod(\i,3)*3+floor(\i/3),,ifthenelse(rnd >.5, , 1))}
        \node at (\ii+0.5,\jj+0.5) {\col} ;
        \node at (\ii+1.5,\jj+0.5) {\col} ;
        \node at (\ii+0.5,\jj+1.5) {\col} ;
        \node at (\ii+1.5,\jj+1.5) {\col} ;
       }
    }
    
    \foreach \a/\b in {0/1,1/0, 2/6,3/6,3/7, 4/12,5/13, 6/2,7/3, 8/8,9/8,9/9,10/15,11/14,11/15, 12/4,12/5,13/5, 14/10,15/11, 16/16,16/17,17/16}{
      \node at (\a+0.5,\b+0.5) {1} ;
    }

    \draw[line width=1.4pt, scale=6, color=black!10!yellow, opacity=100] (0, 0) grid (3, 3);

     \foreach \k in {0,...,8}{
        \pgfmathsetmacro{\i}{2*\k}
        \pgfmathsetmacro{\j}{2*((mod(\k,3)*3+floor(\k/3))}
        \draw[line width=1.6pt, red] (\i+\e,\j+\e) -- (\i+\e,\j+2-\e) -- (\i+2-\e,\j+2-\e) -- (\i+2-\e,\j+\e) -- cycle;      
     }
     \foreach \i in {2,4,8,10,14,16}{
       \draw[blue] (\i,0) -- (\i,18) ;
       \draw[blue] (0,\i) -- (18,\i) ;
     }

     \begin{scope}[xshift=23cm]
     \foreach \i in {0,3,6,9}{
       \draw[line width=1.4pt, color=black!10!yellow] (\i,0) -- (\i,18) ;
     }
     \foreach \i in {0,6,12,18}{
       \draw[line width=1.4pt, color=black!10!yellow] (0,\i) -- (9,\i) ;
     }
    \foreach \k in {0,...,8}{
        \pgfmathsetmacro{\i}{\k}
        \pgfmathsetmacro{\j}{2*((mod(\k,3)*3+floor(\k/3))}
        \draw[line width=1.6pt, red] (\i+\e,\j+\e) -- (\i+\e,\j+2-\e) -- (\i+1-\e,\j+2-\e) -- (\i+1-\e,\j+\e) -- cycle;      
     }
     
    \foreach \a/\b in {0/0,0/1, 1/7, 2/12,2/13, 3/2, 3/3, 4/9, 5/14, 6/4, 7/10,7/11, 8/17}{
      \node at (\a+0.5,\b+0.5) {1} ;
    }
    \foreach \i in {0.5,4.5,7.5,10.5,17.5}{
      \draw[thick] (0,\i) -- (9,\i) ;
    }
     \end{scope}

     \begin{scope}[xshift=36cm, yshift=7.5cm]
       \foreach \i in {0,...,3}{
       \draw[line width=1.4pt, color=black!10!yellow] (\i,0) -- (\i,3) ;
     }
     \foreach \i in {0,...,3}{
       \draw[line width=1.4pt, color=black!10!yellow] (0,\i) -- (3,\i) ;
     }
    \foreach \a/\b in {0/0,1/1,1/2,2/1}{
      \node at (\a+0.5,\b+0.5) {1} ;
    }
     \end{scope}

     \foreach \i in {20.6,34.2}{
       \node at (\i,9) {$\suppar$} ;
     }
    \end{tikzpicture}
    \caption{To the left, an example of a rank-2 Latin 3-division in $\mathbb F_2$.
      The division $\mathcal D$ is represented in yellow, its refinement in blue, and the rank-2 submatrices are highlighted by red boxes.
      Any $3 \times 3$ binary matrix, say, $\left( \begin{smallmatrix} 0&1&0 \\ 0&1&1 \\ 1&0&0 \end{smallmatrix} \right)$ can be obtained as parity minor.
      First sum the two columns in each vertical part of the blue division.
      Then correct the parity by removing at most one row per red box (middle matrix).
      Finally sum the cells of the yellow division (right matrix).
    }
    \label{fig:latin}
\end{figure}
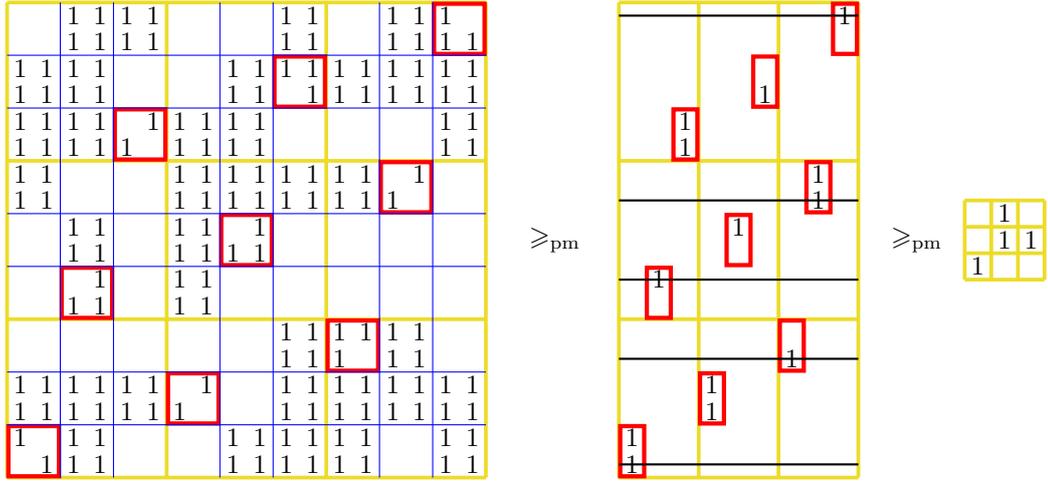

\begin{lemma}\label{lem:strict-01}
 Let $\mathbb F$ be a finite field and $\mathcal M$ be a matrix class of $\mathbb F$-matrices.
 If $\mathcal M$ is \lmf then it has bounded twin-width.
\end{lemma}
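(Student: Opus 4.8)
The plan is to prove the contrapositive: if $\mathcal M$ has unbounded twin-width, then $\linmin(\mathcal M)=\Mall$, i.e.\ $\mathcal M$ is not \lmf. Since $\linmin(\mathcal M)$ is a matrix class, hence closed under submatrices, and every matrix is a submatrix of a square one, it suffices to realize every square matrix $N$ over $\mathbb F$, say of size $d\times d$, as a \lm of some $M\in\mathcal M$. By \cref{lem:latin}, we may fix such an $M$ (of size $2d^2\times 2d^2$) carrying a rank-$2$ Latin $d$-division, with underlying regular $d$-division $\mathcal D$ and its regular $d^2$-division refinement. Recall the defining features I will use: each refined row, and each refined column, contains exactly one full-rank $2\times 2$ block while all its other refined blocks are constant; and each of the $d^2$ cells of $\mathcal D$ contains exactly one full-rank refined block. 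Consequently the correspondence sending a cell of $\mathcal D$ to the refined column holding its full-rank block is a bijection onto the set of $d^2$ refined columns.

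I would then carry out, over an arbitrary finite field, the three-step recipe of \cref{fig:latin}. \emph{(i) Difference-and-scale the columns.} For each refined column part $\{c,c'\}$ (two consecutive columns) replace it by the single column $\alpha(c-c')$, for a scalar $\alpha$ to be fixed. A constant refined block has two equal columns, so it is annihilated regardless of $\alpha$; a full-rank refined block has independent columns, so $c-c'$ restricted to its two rows is a nonzero vector $(x,y)^{\mathsf T}$ and the block contributes $\alpha\,(x,y)^{\mathsf T}$. In the resulting matrix, each cell of $\mathcal D$ thus has exactly one nonzero $2\times 1$ sub-block and is zero elsewhere. \emph{(ii) Correct each cell's sum.} If that sub-block has $x+y\neq 0$, do nothing; if $x+y=0$, then $x\neq 0$, and delete one of the two rows of the sub-block, so that after summation the cell contributes a nonzero scalar $s\in\{x+y,x\}$ times the scaling $\alpha$ of the relevant column. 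Distinct full-rank blocks lie in distinct refined rows and in distinct refined columns, all of whose other blocks have already become zero, so these row deletions and column scalings do not interfere across cells; hence, for the (unique) refined column governing cell $(i,j)$, I may choose $\alpha:=N_{i,j}/s$ (or $\alpha:=0$ when $N_{i,j}=0$), making that cell contribute exactly $N_{i,j}$. \emph{(iii) Contract $\mathcal D$.} Sum the remaining rows within each row part of $\mathcal D$ and the columns within each column part; cell $(i,j)$ collapses to the entry $N_{i,j}$, yielding $N$. Every operation used — picking a vector in the span of a (size-two) column interval, deleting a row, summing consecutive rows or columns — is a sequence of weighted sums, so $N\inflin M$, as required.

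The main obstacle I anticipate is entirely the bookkeeping in step (ii): reading off from the definition of a rank-$2$ Latin $d$-division that (a) after differencing, every non-full-rank refined block is identically zero; (b) the at-most-one row deleted per cell lies in a refined row whose only surviving nonzero entries belong to that cell (so deletions are mutually independent and nondestructive for other cells and other column bands); and (c) each cell of $\mathcal D$ is "controlled" by exactly one refined column, so the scalars $\alpha$ can be chosen cell by cell without conflict. Once these are in place, the remaining computation is routine; over $\mathbb F_2$ it degenerates to the picture of \cref{fig:latin} — no scaling is needed and a "difference" is just a sum.
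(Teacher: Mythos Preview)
Your proposal is correct and follows essentially the same approach as the paper: prove the contrapositive by invoking \cref{lem:latin} to obtain a rank-$2$ Latin $d$-division, take differences of the two columns in each refined column part to annihilate all constant blocks, then exploit the Latin structure (each refined row and each refined column hosts exactly one surviving nonzero block) to adjust each cell of $\mathcal D$ independently to the target entry $N_{i,j}$, and finally sum each part of $\mathcal D$ into a single row or column. The only cosmetic difference is where the scalar lives: you fold the factor $N_{i,j}/s$ into the column operation $\alpha(c-c')$ and use a row \emph{deletion} when $x+y=0$, whereas the paper takes the unscaled difference $c-c'$ and then performs the weighted row sum $(N_{i,j}x^{-1})r+0_{\mathbb F}r'$; the non-interference argument you spell out in your anticipated ``obstacle'' is exactly what makes both variants work.
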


\begin{proof}
  We show the contrapositive.
  We first prove when $\mathbb F$ is the binary field.
  Let $N$ be any $n \times n$ $0,1$-matrix.
  By~\cref{obs:f2}, we want to show that $N$ is a \emph{parity} minor of some matrix in $\mathcal M$.
  Indeed observe that if all the square $0,1$-matrices are in the parity-minor closure $\minclos(\mathcal M)$, then all the $0,1$-matrices are. 

  By~\cref{lem:latin}, there is a matrix $M \in \mathcal M$ with a rank-2 Latin $n$-division $\mathcal D$.
  Let $((R_1, \ldots, R_{n^2}), (C_1, \ldots, C_{n^2}))$ be the refinement of $\mathcal D$ satisfying the properties of rank Latin divisions.
  We first sum (modulo 2) the two columns of $C_j$, renaming the resulting column $c_j$, for every $j \in [n^2]$.
  We call $M'$ the obtained $2n^2 \times n^2$ matrix, and still refer to $\mathcal D$ for the division induced by the initial $\mathcal D$ on $M'$. 
  For every $j \in [n^2]$, $M'_{R_i,c_j}$ is a $2 \times 1$ matrix equal to $\left( \begin{smallmatrix} 0 \\ 0 \end{smallmatrix} \right)$ for every $i \in [n^2]$ but one value $i_j$ for which it is necessarily different from $\left( \begin{smallmatrix} 0 \\ 0 \end{smallmatrix} \right)$.
  Indeed summing the two columns of a $2 \times 2$ of rank 2 cannot yield a zero vector, while for both $\left( \begin{smallmatrix} 0&0 \\ 0&0 \end{smallmatrix} \right)$ and $\left( \begin{smallmatrix} 1&1 \\ 1&1 \end{smallmatrix} \right)$ it does.

By definition of a rank Latin division, the non-zero $2 \times 1$ submatrices corresponding to the initial rank-2 submatrices are on pairwise disjoint sets of rows.
For each $i,j \in [n]$, in the $(i,j)$-cell of $\mathcal D$ (in $M'$), we can thus remove one row containing a 1 whenever the parity of the number of 1-entries within that cell does not match $N_{i,j}$.
After these deletions, we sum all the remaining rows and columns in each row, then column, part of $\mathcal D$, and obtain the matrix $N$.
See~\cref{fig:latin} for a visual depiction.

\medskip

The proof for the general case is not more complicated.
Instead of summing the two columns $c'_j,c''_j$ of $C_j$, we perform the weighted sum $c'_j - c''_j$ (by picking $\alpha = 1_{\mathbb F}$ and $\beta = -1_{\mathbb F}$) and again call the resulting column $c_j$, and the resulting matrix, $M'$.
This has the same effect as canceling every constant submatrices, while \emph{not} canceling every rank-2 submatrix.
Now for every $2 \times 1$ submatrix on column $c$ boxed in red in~\cref{fig:latin} within the $(i,j)$-cell of $\mathcal D$, traversed by, say, rows $r$ and $r'$, we perform the weighted sum $(N_{i,j} x^{-1})r+0_{\mathbb F}r'$ if $M'_{r,c} = x \neq 0_{\mathbb F}$ (it is not possible that $M'_{r,c} = M'_{r',c} = 0_{\mathbb F}$).
We finally get $N$ by (simply) summing every row (or column) part of $\mathcal D$ into a single row (or column).
\end{proof}

We can then add to the list of characterizations of bounded twin-width for matrix classes on finite fields (see~\cite[Theorem 1]{twin-width4}) a handful of new equivalent conditions.

\begin{theorem}\label{thm:equiv}
Given a matrix class $\mathcal M$ over a finite field, the following are equivalent.
\begin{enumerate}[$(i)$]
\item \label{it:bd-tww} $\mathcal M$ has bounded twin-width.
\item \label{it:strict} \textbf{$\mathcal M$ is \lmf.}
\item \label{it:m-nip} $\mathcal M$ is monadically dependent, i.e., for every \fo-transduction $\mathsf T$, $\mathsf T(\mathcal M) \neq \Mall$.
\item \label{it:m-nip-mc} \textbf{For every \fomc-transduction $\mathsf T$, $\mathsf T(\mathcal M) \neq \Mall$.}
\item \label{it:exp-speed} $\mathcal M$ is small.
\item \label{it:exp-speed-mc} \textbf{For every \fomc-transduction $\mathsf T$, $\mathsf T(\mathcal M)$ is small.}
\item \label{it:fact-speed} $\exists n_0 \in \mathbb N$, $|\mathcal M_n| \leqslant n!$, $\forall n \geqslant n_0$.
\item \label{it:fpt-mc} Given $M \in \mathcal M$ and $\varphi \in {\rm FO}[\tau]$, $M \models \varphi$ can be decided in $f(|\varphi|) \cdot |M|^{\Oo(1)}$.
\item \label{it:fpt-mcmc} \textbf{Given $M \in \mathcal M$ and $\varphi \in \fomc[\tau]$, $M \models \varphi$ can be decided in $f(|\varphi|) \cdot |M|^{\Oo(1)}$.}
\end{enumerate}
\end{theorem}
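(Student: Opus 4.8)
The plan is a bootstrapping argument. The equivalences among the ``plain'' conditions $(\ref{it:bd-tww})$, $(\ref{it:m-nip})$, $(\ref{it:exp-speed})$, $(\ref{it:fact-speed})$, $(\ref{it:fpt-mc})$ are exactly \cite[Theorem~1]{twin-width4} (the equivalence with $(\ref{it:fpt-mc})$ being under the hypothesis $\FPT \neq \AW[*]$ in its backward direction), so it suffices to splice each boldface condition into that cycle. I would do this by sandwiching each new condition between $(\ref{it:bd-tww})$ on one side and its first-order analogue on the other, using repeatedly that an \fo-transduction is a special case of an \fomc-transduction and that $\fo \subseteq \fomc$.

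The equivalence $(\ref{it:bd-tww}) \Leftrightarrow (\ref{it:strict})$ is immediate from \Cref{sec:strict}: the forward implication is \Cref{lem:bdtww-implies-strict} and the backward one is \Cref{lem:strict-01}. For the three \fomc-conditions I would isolate two reference facts: that $\Mall$ over a field with at least two elements has unbounded twin-width (it contains $q^{n^2}$ matrices of order $n$, hence is not small, while a set of bounded twin-width is small by \cite{twin-width1}), and \Cref{thm:fomc-closure} (an \fomc-transduction of a set of bounded twin-width has bounded twin-width). Then $(\ref{it:m-nip-mc}) \Rightarrow (\ref{it:m-nip})$ by restricting to \fo-transductions, and $(\ref{it:bd-tww}) \Rightarrow (\ref{it:m-nip-mc})$ because for any \fomc-transduction $\mathsf T$ the set $\mathsf T(\mathcal M)$ has bounded twin-width, hence differs from $\Mall$. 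Likewise $(\ref{it:exp-speed-mc}) \Rightarrow (\ref{it:exp-speed})$ by applying the condition to the identity (reduct) transduction, and $(\ref{it:bd-tww}) \Rightarrow (\ref{it:exp-speed-mc})$ because $\mathsf T(\mathcal M)$ has bounded twin-width, hence is small by \cite{twin-width1}. Finally $(\ref{it:fpt-mcmc}) \Rightarrow (\ref{it:fpt-mc})$ is trivial as $\fo \subseteq \fomc$, while $(\ref{it:bd-tww}) \Rightarrow (\ref{it:fpt-mcmc})$ is obtained by first computing a $g(d)$-sequence of the input matrix in \FPT~time via the approximation algorithm \Cref{thm:approx-tww-unspec}, then putting $\varphi$ in prenex form (a tower-function blow-up of the quantifier rank, absorbed into the function $f(|\varphi|)$), and finally invoking \Cref{thm:fomc-mc} on that sequence.

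I expect no genuine obstacle: all the ingredients have been assembled in \Cref{sec:fomc,sec:strict} or imported from the earlier papers of the series, and what remains is essentially a chain of implications closing the cycle. The one point needing care is purely bookkeeping: the equivalences involving model checking, $(\ref{it:fpt-mc})$ and $(\ref{it:fpt-mcmc})$, rely on $\FPT \neq \AW[*]$ only through their backward direction, which is already contained in \cite[Theorem~1]{twin-width4}; the two genuinely new implications $(\ref{it:bd-tww}) \Rightarrow (\ref{it:fpt-mcmc})$ and $(\ref{it:fpt-mcmc}) \Rightarrow (\ref{it:fpt-mc})$ are unconditional.
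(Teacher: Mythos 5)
Your proposal is correct and follows essentially the same route as the paper: it reuses the equivalence of the non-bold items from \emph{Twin-width IV}, sandwiches each bold condition between~$(i)$ and its first-order analogue via \Cref{thm:fomc-closure} and the triviality of the reverse inclusions, handles $(i)\Leftrightarrow(ii)$ by \Cref{lem:bdtww-implies-strict,lem:strict-01}, and gets $(i)\Rightarrow(ix)$ from the contraction-sequence approximation algorithm plus \Cref{thm:fomc-mc}. The bookkeeping about which implications are conditional on $\FPT\neq\AW[*]$ also matches the paper.
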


\begin{proof}
  All the items that are \emph{not} in bold were previously proven equivalent~\cite{twin-width4}.
  We shall now prove that~\pref{it:strict}, \pref{it:m-nip-mc}, \pref{it:exp-speed-mc}, and \pref{it:fpt-mcmc} are also equivalent to bounded twin-width.
  Let us note that the implication from~\pref{it:fpt-mc} (and from ~\pref{it:fpt-mcmc}) to the other items is conditional on the complexity-theoretic assumption \FPT~$\neq$ \AW$[*]$.
  The equivalence \pref{it:bd-tww} $\Leftrightarrow$ \pref{it:strict} is what \cref{lem:bdtww-implies-strict,lem:strict-01} show.

  \pref{it:m-nip-mc} directly implies \pref{it:m-nip} since an \fo-transduction is a particular \fomc-transduction, and \pref{it:exp-speed-mc} directly implies \pref{it:exp-speed} since the identity is an \fomc-transduction.
  For every matrix class $\mathcal M$ of bounded twin-width and \fomc-transduction~$\mathsf T$, $\mathsf T(\mathcal M)$ has bounded twin-width, by~\cref{thm:fomc-closure}.
  Thus~$\mathsf T(\mathcal M)$ is small and not equal to $\Mall$.
  This means that~\pref{it:bd-tww} implies \pref{it:m-nip-mc} and \pref{it:exp-speed-mc}.

  Finally \pref{it:fpt-mcmc} immediately implies \pref{it:fpt-mc}, while the converse is given by~\cref{thm:fomc-mc}.
  As for ordered binary structures there is a polytime algorithm to find $\Oo(1)$-sequences (when one exists) \cite[Theorem 2]{twin-width4}, we do \emph{not} need to require that the contraction sequence is provided in the input.
\end{proof}

\cref{thm:equiv} could be equivalently stated in terms of hereditary classes of ordered binary structures.
Let us discuss what the new equivalent conditions of~\cref{thm:equiv} (in bold) bring.

The equivalences \pref{it:fpt-mc} $\Leftrightarrow$ \pref{it:fpt-mcmc} and \pref{it:m-nip} $\Leftrightarrow$ \pref{it:m-nip-mc} answer a couple of questions in the special case of ordered binary structures.
It is indeed an intriguing open question whether \fomc model checking is strictly harder than \fo model checking; more precisely, whether there is a hereditary class of graphs (or binary structures) on which \fo model checking is fixed-parameter tractable while \fomc model checking is \W$[1]$-hard.
\cref{thm:equiv} shows that if any such class exists, it has to be found among the \emph{unordered} binary structures.
Similarly, we do not know if the monadic dependence of a hereditary set of binary structures is equivalent to the absence of an \fomc-transduction onto the set of all graphs.
Replacing \emph{\fomc-transduction} by \emph{\fo-transduction}, this is a theorem by Baldwin and Shelah~\cite{BS1985monadic}, which we actually took for the definition of monadic dependence.

Thus, it is again a question about the extra power (or lack thereof) of modular counting quantifiers.
The equivalence \pref{it:m-nip} $\Leftrightarrow$ \pref{it:m-nip-mc} settles this question when restricted to ordered binary structures.  
Note that there are many properties that are expressible in \fomc but not in \fo, the simplest being that \emph{the universe has an even number of elements} (over the empty signature).
However, it is unclear if these properties can be used for a separation of the kinds \pref{it:fpt-mc} $\nRightarrow$ \pref{it:fpt-mcmc} or \pref{it:m-nip} $\nRightarrow$ \pref{it:m-nip-mc} on unordered binary structures.

The equivalence \pref{it:bd-tww} $\Leftrightarrow$ \pref{it:strict} gives an elegant characterization of bounded twin-width for ordered binary structures.
As we open our introduction with, it is reminiscent of the ties between treewidth and graph minors.
As bounded treewidth can be characterized as avoiding one planar graph as a minor, bounded twin-width can be characterized for matrix classes over finite alphabets as avoiding one matrix as a~\lm.
Finally \pref{it:exp-speed} $\Rightarrow$ \pref{it:exp-speed-mc} is a bit of a curiosity.
It says that one cannot leave the realm of small ordered classes (starting from a hereditary class) by means of a~\fomc-transduction.

One might dislike in the characterization \pref{it:bd-tww} $\Leftrightarrow$ \pref{it:strict} the seemingly arbitrary multipliers inherent to the definition of a weighted sum. 
By~\cref{obs:f2} for $0,1$-matrices, bounded twin-width can be characterized more neatly with the mere parity minors.

\begin{theorem}\label{thm:characterization}
  Let $\mathcal M$ be a matrix class over $\mathbb F_2$.
  $\mathcal M$ has bounded twin-width if and only if it is \strict.
\end{theorem}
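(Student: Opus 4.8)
The plan is to obtain \cref{thm:characterization} as an immediate corollary of \cref{thm:equiv} together with \cref{obs:f2}. First I would recall that \cref{obs:f2} says that over $\mathbb F_2$ the parity-minor and linear-minor operations coincide: a $0,1$-matrix $N$ is a parity minor of $M$ if and only if it is a linear minor of $M$. Hence, for every matrix class $\mathcal M$ over $\mathbb F_2$, the parity-minor closure and the linear-minor closure agree, $\minclos(\mathcal M) = \linmin(\mathcal M)$. In particular $\minclos(\mathcal M) \neq \Mall$ precisely when $\linmin(\mathcal M) \neq \Mall$, that is, $\mathcal M$ is \strict if and only if it is \lmf. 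Then it only remains to invoke the equivalence \pref{it:bd-tww} $\Leftrightarrow$ \pref{it:strict} of \cref{thm:equiv}, which states that $\mathcal M$ has bounded twin-width if and only if $\mathcal M$ is \lmf, to conclude that $\mathcal M$ has bounded twin-width if and only if $\mathcal M$ is \strict.

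I do not expect any genuine obstacle here: all the work has already been done in the proofs feeding \cref{thm:equiv}. If one prefers a self-contained argument for the $\mathbb F_2$ case rather than routing through the finite-field statements, I would unwind these two directions. For the ``bounded twin-width $\Rightarrow$ \strict'' direction, take the \fomc-transduction $\mathsf T_{\text{pm}}$ of \cref{lem:min-clos-transduction}, so that $\mathsf T_{\text{pm}}(\mathcal M) = \minclos(\mathcal M)$; since FO+MOD-transductions preserve bounded twin-width (\cref{thm:fomc-closure}), $\minclos(\mathcal M)$ has bounded twin-width and thus cannot be $\Mall$. For the ``\strict $\Rightarrow$ bounded twin-width'' direction, argue by contraposition: if $\mathcal M$ has unbounded twin-width, \cref{lem:latin} produces, for every $n$, a matrix $M \in \mathcal M$ carrying a rank-$2$ Latin $n$-division, and the parity-minor construction used in the proof of \cref{lem:strict-01} (sum the two columns of each block of the refinement, delete at most one $1$-entry per rank-$2$ block to fix parities, then sum the cells of the coarse division) realises an arbitrary $n \times n$ $0,1$-matrix $N$ as a parity minor of $M$; so $\minclos(\mathcal M) = \Mall$ and $\mathcal M$ is not \strict.

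Thus the only real content is the pointer to \cref{obs:f2} and to the already-proved equivalence; the statement is essentially a restatement of \pref{it:bd-tww} $\Leftrightarrow$ \pref{it:strict} of \cref{thm:equiv} in the language of parity minors, made possible by the coincidence of the two minor notions over $\mathbb F_2$. I would keep the write-up to two or three lines citing \cref{obs:f2}, \cref{lem:bdtww-implies-strict}, and \cref{lem:strict-01} (the latter two being exactly the implications of \cref{thm:equiv} one needs, and whose proofs in the binary case are phrased in terms of parity minors already).
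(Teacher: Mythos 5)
Your proposal is correct and matches the paper exactly: the paper derives \cref{thm:characterization} from \cref{obs:f2} together with the equivalence \pref{it:bd-tww} $\Leftrightarrow$ \pref{it:strict} of \cref{thm:equiv} (i.e., \cref{lem:bdtww-implies-strict,lem:strict-01}). No gaps.
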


The previous statement might hold more generally in any prime field.
For non-prime fields, an easy counterexample is the set of all the $0,x$-matrices, where $x \neq 0_{\mathbb F}$ has order in $(\mathbb F,+)$ strictly smaller than $|\mathbb F|$.

\section{Fixed-parameter algorithms for matrix division problems}\label{sec:divisions}

We show how to use the results of~\cref{sec:fomc} and the approximation algorithm of matrix twin-width~\cite[Theorem 2]{twin-width4}, to decide matrix problems involving a division (of a submatrix) with some \fomc-definable properties in fixed-parameter time.
This is based on a win-win argument generalizing the algorithmic scheme of Guillemot and Marx~\cite{Guillemot14} to solve \textsc{Permutation Pattern}, and somewhat resembling the bidimensionality technique~\cite{bidim}.
It allows for instance to detect a $k$-grid minor or a $k$-mixed minor in an $n \times n$ matrix, or to decide if a $k \times k$ matrix is a parity or \lm of an $n \times n$ matrix in time $f(k)n^{\Oo(1)}$. 

We recall three results from the previous paper of the series.
The first one is an approximation algorithm for the twin-width of matrices on finite alphabets.
It outputs a large rich division if the twin-width is too high.

\begin{theorem}[\cite{twin-width4}]\label{thm:approx-tww}
Given as input an $n \times m$ matrix $M$ over a finite field $\mathbb F$, and an integer~$k$, there is a $2^{2^{\Oo(k^2 \log k)}}(n+m)^{\Oo(1)}$-time algorithm which returns
\begin{compactitem}
\item either a $2k(k+1)$-rich division of $M$, certifying that $\tww(M) > k$, 
\item or a contraction sequence certifying that $\tww(M) = 2^{\Oo(k^4)}$.
\end{compactitem}
\end{theorem}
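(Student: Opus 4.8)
The plan is to derive this as a mild strengthening of the matrix twin-width approximation algorithm of the previous paper, \cite[Theorem 2]{twin-width4} (see also \cref{thm:approx-tww-unspec}): that algorithm, run on $M$ with accuracy parameter $d := k$, already either outputs a contraction sequence of width $|\mathbb F|^{\Oo(k^4)} = 2^{\Oo(k^4)}$ or correctly reports $\tww(M) > k$, within time $\exp(\exp(\Oo_{|\mathbb F|}(k^2 \log k)))(n+m)^{\Oo(1)} = 2^{2^{\Oo(k^2 \log k)}}(n+m)^{\Oo(1)}$ (the alphabet $|\mathbb F|$ being a fixed constant; recall that the twin-width of $M$ depends only on the \emph{set} $\mathbb F$, not on its field structure). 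The only thing left to add is that, in the reject branch, the algorithm can additionally exhibit the promised explicit certificate.

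To do so I would inspect the state of the algorithm at the moment it rejects. As in Case~$(B)$ of the proof of \cref{thm:approx-tww-quad} (which itself follows \cite[Section~5]{twin-width4-arxiv} and the coloring argument of \cite[Theorem~22]{twin-width4}), the algorithm then holds a division $\mathcal D$ of $M$ in which a constant fraction of the pairs of consecutive parts cannot be fused into an $r$-poor part; the coloring/fusion argument turns $\mathcal D$ into a rank-$(2k(k+1)+1)$ division of $M$, that is, a $(2k(k+1)+1)$-division each cell of which has at least $2k(k+1)+1$ distinct rows or at least that many distinct columns. As noted at the start of the proof of \cref{lem:hgr-htww}, such a division is in particular $2k(k+1)$-rich, so outputting it yields the first alternative, and by \cref{lem:hgr-htww} (equivalently \cite[Lemma~19]{twin-width4-arxiv}) it certifies $\tww(M) > k$. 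Computing and writing out this division from the algorithm's data structures adds only $\Oo((n+m)^2)$ to the running time, which is absorbed by the bound.

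The main obstacle is pure bookkeeping: one has to check that the object the algorithm of \cite{twin-width4} implicitly produces on rejection is a division of $M$ with exactly the richness parameter $2k(k+1)$ claimed (rather than some larger, unnamed function of $k$), and that it can be read off without a running-time blow-up. Making the internal poorness threshold $r$ and the derived division parameters explicit in terms of $k$ — as is done in the proof of \cref{thm:approx-tww-quad} — reduces both points to the Marcus--Tardos-type counting of \cite[Theorem~22]{twin-width4}; no genuinely new ingredient is needed.
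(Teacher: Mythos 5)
The paper offers no proof of this statement: it is imported verbatim as \cite[Theorem~2]{twin-width4}, which already states both output branches (the rich division and the contraction sequence) with exactly these parameters. Your reconstruction — run the algorithm recalled in \cref{thm:approx-tww-unspec} and, on rejection, read off the division it holds, using the fact from the proof of \cref{lem:hgr-htww} that a rank-$(2k(k+1)+1)$ division is $2k(k+1)$-rich and certifies $\tww(M)>k$ — is consistent with how the cited result is established and with everything the present paper recalls about that algorithm, so it is correct and takes the intended route.
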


The second result is the fact that huge rich divisions contain large rank divisions. 

\begin{theorem}[\cite{twin-width4}]\label{thm:rd-to-gr}
  Let $\mathbb F$ be a finite field and $K$ be equal to $|\mathbb F|^{|\mathbb F|^k \mt(k|\mathbb F|^k)}$.
  Every $\mathbb F$-matrix $M$ with a $K$-rich division has a~rank-$k$ division.
\end{theorem}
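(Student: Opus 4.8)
First I would reduce the task to producing an $\ell$-division of $M$, with $\ell:=k|\mathbb F|^k$, all of whose $\ell^2$ cells have $\geq k$ distinct rows or $\geq k$ distinct columns: keeping the first $k-1$ row-intervals together with the union of the remaining ones, and doing likewise for columns, then gives a rank-$k$ division, since enlarging a cell cannot decrease its number of distinct rows or columns. I would also record that a $K$-rich $(a,b)$-division $\mathcal D=(\mathcal R,\mathcal C)$ necessarily has $a,b>K$ (hence $a,b\geq\ell$): if there were at most $K$ row parts, taking $Y$ to be all of them would witness that every column part is $K$-poor, and symmetrically for column parts.

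Next I would colour the cells of $\mathcal D$. Call a cell $M[R_i,C_j]$ \emph{red} if it already has $\geq k$ distinct rows or $\geq k$ distinct columns. For each column part $C_j$, scanning its cells from bottom to top, declare a non-red cell \emph{blue} if it contains a row-vector over $C_j$ that does not occur as a row of any non-red cell below it in $C_j$; symmetrically, for each row part $R_i$ scanned from right to left, declare a non-red cell \emph{blue$'$} if it contains a column-vector over $R_i$ not occurring in a non-red cell to its right in $R_i$. Two elementary facts drive the argument: (i) a non-red cell has $<k$ distinct columns, hence (picking one representative row per distinct-row-class) at most $|\mathbb F|^{k-1}$ distinct rows, and symmetrically at most $|\mathbb F|^{k-1}$ distinct columns; (ii) every row-vector occurring in a non-red cell of $C_j$ already occurs in the lowest non-red cell of $C_j$ containing it, which is therefore blue, so the number of distinct row-vectors appearing in non-red cells of $C_j$ is at most $|\mathbb F|^{k-1}\cdot(\#\text{blue cells in }C_j)$, and symmetrically for blue$'$.

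Now I would exploit non-poorness. Fix a column part $C_j$ and let $Y$ be the union of the row parts that are red or blue in column $C_j$. If $Y$ uses at most $K$ row parts, then $M[R\setminus Y,C_j]$ has $\geq K$ distinct columns; but every row of $M[R\setminus Y,C_j]$ is one of at most $|\mathbb F|^{k-1}\cdot(\#\text{blue cells in }C_j)$ vectors by (ii), and a matrix with $m$ distinct rows has $\leq|\mathbb F|^m$ distinct columns, so $|\mathbb F|^{\,|\mathbb F|^{k-1}\cdot(\#\text{blue cells in }C_j)}\geq K=|\mathbb F|^{\,|\mathbb F|^k\mt(\ell)}$, giving $\#\text{blue cells in }C_j\geq|\mathbb F|\cdot\mt(\ell)\geq 2\mt(\ell)$. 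Otherwise $Y$ uses more than $K$ row parts, so $C_j$ has more than $K/2\gg 2\mt(\ell)$ red or blue cells. Either way every column part carries $\geq 2\mt(\ell)$ red-or-blue cells, and symmetrically every row part carries $\geq 2\mt(\ell)$ red-or-blue$'$ cells. Assuming $a\leq b$ (the case $a>b$ being symmetric, with blue$'$ in place of blue), summing over the $b$ column parts and using that blue cells are non-red yields $\geq\mt(\ell)\max(a,b)$ red cells or $\geq\mt(\ell)\max(a,b)$ blue cells.

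If red cells are that numerous, \cref{thm:marcus-tardos} applied to the $a\times b$ $0/1$-matrix of red cells produces an $\ell$-grid minor: $\ell$ row-intervals and $\ell$ column-intervals each of whose $\ell^2$ blocks contains a red cell, hence is itself good, and by the first paragraph we are done. Otherwise blue cells are numerous and \cref{thm:marcus-tardos} gives an $\ell$-grid minor of blue cells, which must then be converted. The conversion rests on a \emph{stacking} observation: if a block contains $k$ cells that are all blue relative to one common column part $C_j$ lying inside the block (so they sit in $k$ distinct row parts), then the block has $\geq k$ distinct rows, because the ``new row-vector'' witnesses of these $k$ blue cells are pairwise distinct --- a witness, being new below, cannot reappear in the non-red blue cells lying lower. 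The step I expect to be the main obstacle is exactly the combinatorial bookkeeping needed here: one must coarsen the $\ell=k|\mathbb F|^k$ grid-minor intervals into a division in which every block gets $k$ of its blue cells hosted by a common column part (this is where the factor $|\mathbb F|^k$ is spent, pigeonholing the host column parts of the blue cells guaranteed by the grid minor), exactly as in the corresponding argument of~\cite{twin-width4}. Once every block is certified good, restricting to $k$ intervals on each side completes the proof.
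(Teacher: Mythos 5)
The paper does not actually prove this statement — it is imported verbatim from~\cite{twin-width4} — so your reconstruction can only be measured against the argument it sketches elsewhere (in the proof of \cref{thm:approx-tww-quad}) and against correctness. Most of your proof is sound: the reduction to an $\ell$-division, the observation that a $K$-rich division has more than $K$ parts on each side, the red/blue colouring, the count of blue cells per non-poor column part via ``few distinct rows implies few distinct columns,'' and the Marcus-Tardos extraction, as well as the entire red case, all go through (the ``$K/2$'' should just be ``$K$,'' which is harmless).

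The genuine gap is the conversion in the blue case, which you defer to ``combinatorial bookkeeping'' and a pigeonhole on ``the host column parts.'' That pigeonhole cannot work: a column group of the $\ell$-grid minor is a union of an \emph{unbounded} number of column parts of $\mathcal D$, so the blue cells guaranteed in one super-block may all lie in pairwise distinct column parts, and no coarsening of the grid-minor intervals forces $k$ of them into a common column part. Your stacking observation is correspondingly too weak. The observation that closes the argument is stronger and needs no common column part: if every cell of a block $B$ is non-red and $B$ contains $k$ blue cells lying in $k$ \emph{distinct row parts} (column parts arbitrary), then $B$ has at least $k$ distinct rows. Indeed, let $r_1,\dots,r_k$ be rows witnessing the new vectors $v_1,\dots,v_k$ of these blue cells; if $B$ had fewer than $k$ distinct rows, two of the $r_s$ would agree on all columns of $B$; taking $r_{s'}$ to be the one in the lower row part, the restriction of $r_{s'}$ to the column part $C_{j_s}$ of the other blue cell equals $v_s$ and sits in a non-red cell of $C_{j_s}$ below that blue cell, contradicting the newness of $v_s$. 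With this, one groups the $\ell\geq k^2$ grid-minor row and column intervals into $k$ consecutive super-intervals of at least $k$ intervals each; every super-block then contains either a red cell or $k$ blue cells in distinct row parts, hence is good, and no pigeonhole on column parts is needed.
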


The third result turns rank divisions into the more structured rank Latin divisions.

\begin{lemma}[\cite{twin-width4}]\label{lem:rd-rld}
 Let $\mathbb F$ be a finite field.
 There is a computable function $f: \mathbb N \to \mathbb N$ such that every $\mathbb F$-matrix with a rank-$f(k)$ division has a submatrix with a rank-$k$ Latin division. 
\end{lemma}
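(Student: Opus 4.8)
The plan is to start from a rank-$N$ division of $M$ with $N := f(k)$ (we aim for a rank-$k$ Latin $d$-division for a prescribed granularity $d$, and the value we need will be a tower-type function of $k$ and $d$; the one-variable $f$ of the statement corresponds to fixing the relation between $d$ and $k$, e.g.\ $d=k$), first amplify the rank available inside each cell, then pin down a universal-permutation pattern of full-rank $k\times k$ blocks, and finally clean up the remaining blocks so that they become constant. Write $q := |\mathbb F|$ and let $\mathcal D_0 = ((R^0_1,\dots,R^0_N),(C^0_1,\dots,C^0_N))$ be a rank-$N$ division. By definition every cell $M[R^0_i,C^0_j]$ has at least $N$ distinct rows or at least $N$ distinct columns; since $t$ distinct vectors over $\mathbb F$ span a subspace of dimension at least $\log_q t$, each cell has rank at least $K := \lceil \log_q N \rceil$ and hence contains a full-rank $K \times K$ submatrix. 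Taking $N$ large makes $K$ exceed every Ramsey and pigeonhole threshold used below; by an initial pigeonhole on whether a cell is \emph{row-rich} or \emph{column-rich} we may also assume, at the cost of a constant factor on the number of parts, that every retained cell is of the same type, say row-rich.

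Next I would fix the shape of the object to be built. The positions of the rank-$k$ blocks of a rank-$k$ Latin $d$-division form a $d$-universal permutation of $[d^2]$ (exactly one block in each of the $d \times d$ coarse cells), and a sufficiently large universal permutation contains every smaller universal permutation as a pattern; so it suffices to work along a $D$-universal permutation $\sigma$ on $[D^2]$ with $D$ large and $D^2 \le N$, and to pass to a $d$-universal sub-pattern only at the end. For each $i \in [D^2]$, inside the ``aligned'' cell $(i,\sigma(i))$ choose a full-rank $k \times k$ submatrix, given by $k$ rows $\rho_i \subseteq R^0_i$ and $k$ columns $\gamma_i \subseteq C^0_{\sigma(i)}$, and delete every row and column outside $\bigcup_i \rho_i$, respectively $\bigcup_i \gamma_i$. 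The surviving submatrix carries a $D^2$-division into parts of size $k$ in which the full-rank blocks lie exactly at the positions $(i,\sigma(i))$, i.e.\ along $\sigma$. Note that the (relaxed) definition of rank Latin division used here only requires those blocks to have rank $k$, not to be canonical, so no Ramsey reduction of their values is needed. What remains is to arrange that every \emph{off-pattern} block $M[\rho_i,\gamma_j]$ with $i \ne j$ be constant.

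This last point is the crux and the main obstacle. The idea is to use the rank surplus $K \gg k$ as freedom: inside the aligned cell $(i,\sigma(i))$ one may choose $\rho_i$ and $\gamma_i$ within large families, and, crucially, since $\gamma_j \subseteq C^0_{\sigma(j)}$ sits in a column block disjoint from $C^0_{\sigma(i)}$ whenever $j \ne i$, forcing the $k$ rows of $\rho_i$ to agree on the few columns of $\bigcup_{j \ne i}\gamma_j$ costs no rank inside the aligned cell. So the plan is a global selection producing rows $\rho_i$ constant on $\bigcup_{j \ne i}\gamma_j$ and, symmetrically, columns $\gamma_j$ constant on $\bigcup_{i \ne j}\rho_i$: an off-pattern block is then constant along its rows and along its columns, hence constant; a final pass restricting the index set to a $d$-universal sub-pattern of size $d^2$ yields a $kd^2 \times kd^2$ submatrix whose size-$k$ division, together with its $d$-coarsening, is a rank-$k$ Latin $d$-division. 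The delicate point — and what forces $f(k)$ to be so large — is that the row choices for one aligned cell and the column choices for another are coupled, so one cannot simply fix one side and then the other; I would resolve this by an induction on the index set (or a single Ramsey argument on pairs $\{i,j\}$ classified by the pair of blocks $(M[\rho_i,\gamma_j],M[\rho_j,\gamma_i])$), at each step invoking the surplus $K$ to guarantee that enough choices survive, and using a Marcus--Tardos--type bound to control how densely the non-constant off-pattern blocks can remain before the desired homogeneous situation is forced. I expect this bookkeeping, rather than any single new idea, to be the bulk of the proof.
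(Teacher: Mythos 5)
The paper does not actually prove \cref{lem:rd-rld}: it imports it from \emph{Twin-width~IV}, and the only description it gives of the argument is the remark (inside the proof of \cref{thm:approx-tww2}) that it ``consists of successive Ramsey extractions.'' Your plan is that plan: every cell of a rank-$N$ division has rank at least $\log_q N$ (your counting of distinct vectors is correct), one aligns full-rank blocks along a prescribed permutation with one point per coarse cell, and one homogenizes the off-pattern blocks by iterated extraction, paying for each extraction out of a huge rank surplus. So the approach matches the cited proof.

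That said, the step you defer as ``bookkeeping'' is the entire proof, and the one concrete mechanism you do write down does not close as stated. Pigeonholing the rows of an aligned cell on their restriction to $\bigcup_{j\ne i}\gamma_j$ ``for free'' presupposes the $\gamma_j$ are already small; at the time you need it they are still candidate sets of size $K$, and pigeonholing $K$ rows over $q^{K\cdot\Theta(D^2)}$ restriction-patterns is never survivable. The working version processes the ordered pairs of cells one at a time, applying bipartite Ramsey/pigeonhole to the current $s\times s$ off-pattern block $M[A_i,B_j]$ to extract subsets of size about $\log s$ on which it is constant; the accumulated conditions (``pairwise equal on a set'') are hereditary under shrinking, so a tower-type $K$ survives all $\Theta(D^4)$ refinements. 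A second wrinkle you do not address: shrinking the row side $A_i$ of an aligned cell destroys full-rankness of any frozen square block (a square submatrix of a full-rank square matrix need not be full rank), so the column side $B_i$ must be re-chosen after every refinement — this works because the surviving rows stay linearly independent on $B_i$, but it means neither side of an aligned block can be finalized before the very end, which is exactly the coupling you flag. Finally, two of your ingredients are misplaced: Marcus--Tardos belongs to the preceding step (\cref{thm:rd-to-gr}), not to this cleanup, and with the relaxed definition of rank Latin division used here no universality of the permutation is required — since \emph{every} cell of the rank-$N$ division has large rank, the permutation with one point per coarse cell can simply be prescribed in advance.
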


We derive the following convenient corollary. 

\begin{theorem}\label{thm:approx-tww2}
Given as input an $n \times m$ matrix $M$ over a fixed finite field $\mathbb F$, and an integer~$k$, there is an $f(k)(n+m)^{\Oo(1)}$-time algorithm which returns
\begin{compactitem}
\item either a rank-$k$ Latin division of a submatrix of $M$,
\item or a contraction sequence certifying that $\tww(M) \leqslant g(k)$.
\end{compactitem}
where $f$ and $g$ are computable functions.
\end{theorem}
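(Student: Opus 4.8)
The plan is to chain \cref{thm:approx-tww,thm:rd-to-gr,lem:rd-rld}, fixing the parameters from the inside out. Write $q := |\mathbb F|$, a fixed constant. First I would fix the target: let $f_0 \from \mathbb N \to \mathbb N$ be the computable function of \cref{lem:rd-rld}, so that any $\mathbb F$-matrix with a rank-$f_0(k)$ division contains a submatrix with a rank-$k$ Latin division. Applying \cref{thm:rd-to-gr} with parameter $f_0(k)$ and setting $K := q^{q^{f_0(k)}\,\mt(f_0(k)\,q^{f_0(k)})}$, any $\mathbb F$-matrix with a $K$-rich division has a rank-$f_0(k)$ division. Finally let $k'$ be the least integer with $2k'(k'+1) \ge K$; then $k'$ is a computable function of $k$ (and $q$), and since richness is monotone (an $r$-rich division is $r'$-rich for every $r' \le r$, which one checks directly from the definition of being $r$-poor), any $2k'(k'+1)$-rich division of $M$ is in particular $K$-rich.

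The algorithm then runs the procedure of \cref{thm:approx-tww} on $M$ with parameter $k'$. If it outputs a contraction sequence certifying $\tww(M) = 2^{\Oo(k'^4)}$, we return that sequence and set $g(k) := 2^{\Oo(k'^4)}$, a computable function of $k$. Otherwise it outputs a $2k'(k'+1)$-rich division of $M$; this division is $K$-rich, so we feed it to the (constructive) argument underlying \cref{thm:rd-to-gr} to obtain a rank-$f_0(k)$ division of $M$, and then to the (constructive) argument underlying \cref{lem:rd-rld} to obtain a rank-$k$ Latin division of a submatrix of $M$, which we return. Correctness is immediate from the three cited statements.

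For the running time, the single call to \cref{thm:approx-tww} costs $2^{2^{\Oo(k'^2\log k')}}(n+m)^{\Oo(1)}$, which is $f(k)(n+m)^{\Oo(1)}$ since $k'$ depends only on $k$ and the fixed field. The only point that genuinely needs attention is the extraction phase: one must verify that the proofs of \cref{thm:rd-to-gr} and \cref{lem:rd-rld} given in~\cite{twin-width4} are effective and fit within this budget. I expect this to be routine rather than an obstacle — \cref{thm:rd-to-gr} rests on a Marcus--Tardos-style counting argument, which is constructive and runs in time polynomial in $n+m$ with constants depending only on $k$ and $q$, while \cref{lem:rd-rld} rests on a Ramsey-type argument that, with $k$ and $q$ fixed, isolates the Latin structure inside a submatrix of bounded size in polynomial time. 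Thus the bulk of the write-up will consist of unwinding those two proofs to confirm their effectiveness and to bound the exponents, with no new idea required.
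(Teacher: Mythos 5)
Your proposal is correct and follows essentially the same route as the paper: chain \cref{thm:approx-tww,thm:rd-to-gr,lem:rd-rld} with parameters composed from the inside out, and observe that the proofs of the latter two results (Marcus--Tardos counting and Ramsey extractions) are effective and run within the \FPT~budget. Your additional bookkeeping — in particular the check that an $r$-rich division is $r'$-rich for $r' \le r$, so that the $2k'(k'+1)$-rich division output by \cref{thm:approx-tww} is indeed $K$-rich — is correct and only makes explicit what the paper leaves implicit.
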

\begin{proof}
  This is a direct consequence of \cref{thm:approx-tww,thm:rd-to-gr,lem:rd-rld}.
  The proof of \cref{thm:rd-to-gr} is effective and can readily be turned into an \FPT~algorithm.
  Indeed besides some rank computations and vector comparisons, it mainly uses the Marcus-Tardos theorem~\cref{thm:marcus-tardos}, which is effective (see~\cite[Appendix A]{GuillemotM13}).
  The proof of~\cref{lem:rd-rld} consists of successive Ramsey extractions, which can be done in polynomial time.
\end{proof}

A parameterized problem $\Pi$, taking as input a matrix over a fixed finite field and a non-negative integer, is \emph{\fomc-definable} if, there is a computable function $f$, and for every non-negative integer~$k$, there is a $\fomc[\tau]$ sentence $\varphi_{\Pi,k}$ of size $f(k)$ such that $(M,k)$ is a YES-instance of $\Pi$ if and only if $M \models \varphi_{\Pi,k}$.
A parameterized problem $\Pi$, taking as input a matrix $M$ and a non-negative integer $k$, is said \emph{rank-bidimensional} if, for some computable functions $f$ and $g$, the existence of a rank-$f(k)$ Latin division in $M$ (i.e., a~submatrix of $M$ has a rank-$f(k)$ Latin division) permits to decide $\Pi$ in time $g(k)|M|^{\Oo(1)}$.
We can now state the main observation of this section.

\begin{theorem}\label{thm:bidim}
Every \fomc-definable rank-bidimensional problem is in \FPT.
\end{theorem}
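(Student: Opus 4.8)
The plan is a clean win-win argument: run the approximation algorithm of \cref{thm:approx-tww2}; if it finds a large rank Latin division we are done by rank-bidimensionality, and otherwise it hands us a bounded-width contraction sequence on which we can invoke the linear-time \fomc model-checking algorithm of \cref{thm:fomc-mc}. Let me first unpack the two hypotheses on $\Pi$. Since $\Pi$ is \fomc-definable, there is a computable function $s$ and, for every $k$, a sentence $\varphi_{\Pi,k} \in \fomc[\tau]$ of size $s(k)$ with $M \models \varphi_{\Pi,k}$ if and only if $(M,k)$ is a YES-instance. Since $\Pi$ is rank-bidimensional, there are computable functions $d$ and $t$ such that, whenever some submatrix of $M$ admits a rank-$d(k)$ Latin division, $(M,k)$ can be decided in time $t(k)\,|M|^{\Oo(1)}$.

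Given an instance $(M,k)$ with $M$ an $n\times m$ matrix over the fixed finite field, first I would run the algorithm of \cref{thm:approx-tww2} on $M$ with integer parameter $d(k)$; this costs $f(d(k))\,(n+m)^{\Oo(1)}$ time, that is, \FPT~time. If it returns a rank-$d(k)$ Latin division of a submatrix of $M$, then rank-bidimensionality lets us decide $(M,k)$ in additional time $t(k)\,|M|^{\Oo(1)}$, and we are done. Otherwise it returns a contraction sequence certifying $\tww(M)\le g(d(k))=:D$, where $D$ depends only on $k$. I would then convert $\varphi_{\Pi,k}$ into a logically equivalent prenex \fomc-sentence $\psi_k$ of quantifier rank $h(s(k))$, using the tower function $h$ from \cref{sec:fomc}; this conversion is effective and $\psi_k$ depends only on $k$. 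Feeding the $D$-sequence and $\psi_k$ into \cref{thm:fomc-mc} decides $M \models \psi_k$ --- equivalently, whether $(M,k)$ is a YES-instance --- in time $F(h(s(k)),D)\cdot(n+m)$ for a computable function $F$. Since both $D$ and $s(k)$ depend only on $k$, this is again \FPT~time.

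In both branches the overall running time is bounded by $G(k)\cdot(n+m)^{\Oo(1)}$ for some computable $G$, so $\Pi\in\FPT$. I do not expect a genuine obstacle here: the argument is a straightforward assembly of black boxes, namely the twin-width approximation of the previous paper of the series (repackaged as \cref{thm:approx-tww2}) and the \fomc model-checking result of \cref{sec:fomc} (\cref{thm:fomc-mc}). The one point that warrants a little care --- rather than being entirely mechanical --- is that \cref{thm:fomc-mc} is stated only for \emph{prenex} \fomc-sentences, so it cannot be applied to $\varphi_{\Pi,k}$ as given; the passage to prenex normal form incurs a tower-function blow-up in the quantifier rank, but since \FPT~only demands a computable dependence on the parameter, this is harmless.
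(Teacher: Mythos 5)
Your proposal is correct and follows exactly the paper's proof: run the approximation algorithm of \cref{thm:approx-tww2} with the parameter prescribed by rank-bidimensionality, decide directly if a large rank Latin division is found, and otherwise invoke \cref{thm:fomc-mc} on the returned contraction sequence. The extra remark about converting $\varphi_{\Pi,k}$ to prenex form before applying \cref{thm:fomc-mc} is a valid (and harmless) point of care that the paper leaves implicit.
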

\begin{proof}
  Let $\Pi$ be a rank-bidimensional problem, with computable functions $f$ and $g$.
  Let $(M,k)$ be an input of $\Pi$.
  We run the algorithm of~\cref{thm:approx-tww2} with parameter $f(k)$.
  In time $f'(k)|M|^{\Oo(1)}$, this either yields a rank-$f(k)$ division of a submatrix of $M$, and we can decide $(M,k)$ in further time $g(k)|M|^{\Oo(1)}$ (since $\Pi$ is rank-bidimensional), or a $g'(k)$-sequence of $M$, and we can conclude by~\cref{thm:fomc-mc} (since $\Pi$ is \fomc-definable).
\end{proof}

As a corollary of \cref{thm:bidim}, we obtain for instance that deciding if $N$ is a parity minor of $M$ is fixed-parameter tractable in the size of $N$. 

\begin{theorem}\label{thm:parity-minor-test}
  Let $N$ be a $k \times k$ matrix, and $M$ be an $n \times m$ matrix, both over~$\mathbb F_2$.
  One can decide $N \infpar M$ in time $f(k)(n+m)^{\Oo(1)}$ for some computable function~$f$.
\end{theorem}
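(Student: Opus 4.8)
The plan is to derive \Cref{thm:parity-minor-test} from the rank-bidimensionality meta-theorem \Cref{thm:bidim}: it suffices to show that \pmc over $\mathbb F_2$, parameterized by $|N|$, is both \fomc-definable and rank-bidimensional. (Strictly speaking the input consists of two matrices; but there are only finitely many $0,1$-matrices of each size and the sentence witnessing definability will be computable from $N$, so this fits the setting of \Cref{thm:bidim} with the target $N$ absorbed into the parameter $k$.)

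Rank-bidimensionality is essentially already in hand. By \Cref{lem:latin} together with the explicit construction carried out in the proof of \Cref{lem:strict-01}, if a submatrix of $M$ has a rank-$2$ Latin $k$-division then \emph{every} $k\times k$ $0,1$-matrix is a parity minor of that submatrix, hence of $M$. So as soon as a large enough rank Latin division of a submatrix is handed to us we may answer \textsc{yes} in constant time; since a sufficiently large rank Latin division yields such a rank-$2$ Latin $k$-division by the Ramsey-type extractions behind \Cref{lem:rd-rld}, the hypothesis of rank-bidimensionality is met with $g(k)=\Oo(1)$ and a suitable computable $f$.

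The crux is \fomc-definability: for each $k\times k$ matrix $N$ over $\mathbb F_2$ we must produce, computably from $N$, a sentence $\varphi_N$ of size $\Oo_k(1)$ with $M\models\varphi_N$ iff $N\infpar M$. By \Cref{obs:pm-del-sum}, $N\infpar M$ holds exactly when there are nonempty sets of rows $K^R_1,\dots,K^R_k$ and of columns $K^C_1,\dots,K^C_k$, each $K^R_i$ lying entirely before $K^R_{i+1}$ in the row order (and likewise for the columns), with $\bigoplus_{r\in K^R_i,\,c\in K^C_j}M_{r,c}=N_{i,j}$ for all $i,j$. The obstacle is that this quantifies over $2k$ vertex subsets, which \fomc cannot express directly. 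To eliminate this, first guess by first-order quantification the $2k$ interval endpoints delimiting the row slots $B^R_1\prec\dots\prec B^R_k$ (and similarly the column slots), so that $K^R_i\subseteq B^R_i$ and $K^C_j\subseteq B^C_j$. Once the column sets are fixed, each row $r$ acquires a profile $\pi(r)=\big(\bigoplus_{c\in K^C_j}M_{r,c}\big)_{j\in[k]}\in\mathbb F_2^{\,k}$, and realizing the $i$-th row of $N$ by some admissible $K^R_i$ is equivalent to $N$'s $i$-th row lying in the $\mathbb F_2$-span of the at most $2^k$ distinct profiles of the rows inside $B^R_i$, plus (when that row is zero) a linear-dependence condition on those profiles — all of which become first-order once a bounded list of witness rows is guessed. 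Finally, a rank argument over $\mathbb F_2$ shows that the existence of a column set $K^C_j\subseteq B^C_j$ with a prescribed effect on any bounded list of rows is equivalent to the existence of $\Oo_k(1)$ columns inside $B^C_j$ whose entries on that list of rows $\mathbb F_2$-sum to the prescribed vector, which, after guessing those columns, is a quantifier-free condition on $M$. Chaining these bounded-size guesses over both directions yields $\varphi_N$; in fact no modular quantifiers are needed ($\mathbb F_2$-sums are sensitive to multiplicities only through their parities, which we have tracked via the linear-algebraic certificates), so $\varphi_N$ may even be taken first-order, of size depending only on $k$.

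With both properties established, \Cref{thm:bidim} applies verbatim: running \Cref{thm:approx-tww2} with parameter $f(k)$ either produces the rank Latin division (answer \textsc{yes}) or outputs an $\Oo_k(1)$-sequence of $M$, on which $M\models\varphi_N$ is then decided by \Cref{thm:fomc-mc}; the total running time is $f'(k)(n+m)^{\Oo(1)}$. I expect the main obstacle to be precisely the \fomc-definability step, and inside it the removal of the monadic quantifiers over the deleted and kept rows and columns; the point that makes this work is that over $\mathbb F_2$ the cell parities are controlled by a constant-dimensional linear system, so testing its solvability reduces to guessing boundedly many rows and columns of $M$ and checking a quantifier-free XOR condition on their entries.
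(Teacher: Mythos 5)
Your proposal is correct and follows essentially the same route as the paper: reduce to \Cref{thm:bidim} by establishing rank-bidimensionality (via the explicit construction in \Cref{lem:strict-01}) together with definability in plain \fo, the latter resting on the same $\mathbb F_2$-dimension argument showing that the row and column parts of a witnessing division can be assumed to have size $\Oo(k)$. The paper packages this slightly more directly --- any part with at least $k+2$ rows contains a nonempty subset of at most $k+1$ rows whose cell-parity vectors sum to zero and can be deleted --- whereas you chain span/linear-dependence certificates for rows and columns separately, but the underlying content is identical.
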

\begin{proof}
  Let us call this problem parameterized by $k$, \pmc.
  By the proof of~\cref{lem:strict-01}, \pmc is rank-bidimensional.
  We shall then prove that \pmc is \fomc-definable, and conclude by~\cref{thm:bidim}. 
  This is not a mere consequence of the proof of~\cref{lem:min-clos-transduction} since we can no longer rely on the non-deterministic augmentation by unary relations to only keep the desired rows and columns.

  Instead, we claim that any $k \times k$ parity minor can be realized (after deletions) by a~division where every row part and every column part have size at most $k+1$.
  Indeed, let us associate to a row or column $x$ the parity vector $p(x) \in \mathbb F_2^k$ corresponding to the parity of its number of 1-entries in each of the $k$ cells it intersects.
  Since $\text{dim}(\mathbb F_2^k)=k$, if a part contains at least $k+2$ rows (or columns), then one can find a set of at least one and at most $k+1$ rows summing (in $\mathbb F_2$) to the zero vector.
  This non-empty set of rows can be deleted without changing the parity minor.
  Importantly, at most $k+1$ rows are deleted, so at least one row is remaining (so the parity minor is well formed).
  It is noteworthy that we actually do not need modular counting (contrary to \cref{lem:min-clos-transduction}).
  We can express $N \infpar M$ by an {\rm FO}-sentence and conclude by invoking the algorithm of~\cite[Section 7]{twin-width1}.

  The presence of $N$ as a parity minor can be defined by a disjunction $\psi_N$ for every tuple $(s(1),\ldots,s(k),t(1),\ldots,t(k)) \in [1,k+1]^{2k}$ of the FO-sentence 
  
  $$\exists x^1_1, \ldots, \exists x^1_{s(1)}, \ldots, \exists x^k_1, \ldots, \exists x^k_{s(k)}, \exists y^1_1, \ldots, \exists y^1_{t(1)}, \ldots, \exists y^k_1, \ldots, \exists y^k_{t(k)}$$
  
  $$\bigwedge_{(i,a) \neq (i',a')} x^i_a \neq x^{i'}_{a'}~\wedge~\bigwedge_{(j,b) \neq (j',b')} y^j_b \neq y^{j'}_{b'}$$

  $$\wedge~\bigwedge_{1 \leqslant i,j \leqslant k}  \text{ODD}(x^i_1,\ldots,x^i_{s(i)},y^j_1,\ldots,y^j_{t(j)}) \leftrightarrow E^N(i,j),~\text{where}$$

  $\text{ODD}(x^i_1,\ldots,x^i_{s(i)},y^j_1,\ldots,y^j_{t(j)})$ is the formula
  
  $$\bigvee_{\substack{S \subseteq \{x^i_1,\ldots,x^i_{s(i)}\} \times \{y^j_1,\ldots,y^j_{t(j)}\} \\ |S|~\text{is odd}}}~\bigwedge_{(x^i_a,y^j_b) \in S} E^M(x^i_a,y^j_b)~\wedge~\bigwedge_{(x^i_a,y^j_b) \notin S} \neg E^M(x^i_a,y^j_b).$$

  And $M \models \psi_N$ holds if and only if $N$ is a parity minor of $M$.
\end{proof}

Similarly we can invoke \cref{thm:bidim} for the following problems.
In the following theorem, \textsc{Linear Minor Containment} is the problem of deciding $N \inflin M$, given two matrices $N, M$ over a fixed finite field, parameterized by $|N|$.

\begin{theorem}\label{thm:other-pb}
  \textsc{Linear Minor Containment}, \imc, \gmc, \mmc, \textsc{Rank-$k$ Division}, \textsc{Rank-$k$ Latin Division}, \textsc{Permutation Pattern} are fixed-parameter tractable.
\end{theorem}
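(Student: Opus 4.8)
The plan is to derive \cref{thm:other-pb} from the master result \cref{thm:bidim}: for each of the seven parameterized problems it suffices to check that it is \fomc-definable and rank-bidimensional. In fact each of them turns out to be \fo-definable (no modular counting is needed), so one could alternatively invoke the \fo model-checking algorithm of \cite{twin-width1} directly in place of \cref{thm:fomc-mc}, but \cref{thm:bidim} packages everything conveniently.

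\emph{\fo-definability.} For \textsc{Permutation Pattern} with pattern $\sigma$ of length $k$, writing permutations as $0,1$-matrices, the sentence $\exists r_1\prec\dots\prec r_k\ \exists c_1\prec\dots\prec c_k\ \bigwedge_{a\in[k]}E(r_a,c_{\sigma(a)})$ does the job (this already reproves the Guillemot--Marx result). For \gmc, \mmc and \textsc{Rank-$k$ Division}, a $k$-division is determined by $k-1$ row cut-points and $k-1$ column cut-points, and the per-cell requirement is plainly \fo: ``$M[R_i,C_j]$ has a non-zero entry'' is $\exists r\in R_i\ \exists c\in C_j\ \bigvee_\ell E_\ell(r,c)$, ``at least two distinct rows'' is the existence of two rows of $R_i$ differing at some column of $C_j$, and ``at least $k$ distinct rows or at least $k$ distinct columns'' is the existence of $k$ pairwise such-separated rows of $R_i$ or, symmetrically, of columns of $C_j$. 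For \textsc{Rank-$k$ Latin Division} the sought object is bounded in size (a $\Oo(k^3)\times\Oo(k^3)$ submatrix with a prescribed block pattern), hence \fo-expressible with $\Oo(k^3)$ existential quantifiers and a quantifier-free body. For \imc with pattern $N$ of order $k\times\ell$, one quantifies the $k-1$ row and $\ell-1$ column cut-points and then, cell by cell (reusing two variables), asserts the existence of an entry of $M[R_i,C_j]$ equal to $N_{i,j}$. Finally \textsc{Linear Minor Containment} over $\mathbb F=\mathbb F_q$ is treated like \pmc in the proof of \cref{thm:parity-minor-test}: since every vector of $\mathbb F^k$ lies in the span of at most $k$ of any given family, a $k\times k$ \lm of $M$ is already produced by a weighting together with a division whose row parts use at most $k$ non-trivially weighted rows and whose column parts use at most $k^2$ non-trivially weighted columns; one then guesses these $\Oo(k^3)$ rows and columns (respecting the interval order), guesses their $\mathbb F$-weights (finitely many choices), and asserts, for each of the $k^2$ cells, that the resulting bounded weighted sum of entries of $M$ equals $N_{i,j}$ --- a finite disjunction of conjunctions of $E_\ell$-atoms, so again purely \fo.

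\emph{Rank-bidimensionality.} In every case the mechanism is that a rank Latin division of sufficiently large rank and order is ``universal'' for the pattern at hand, so its presence forces the answer. For \textsc{Linear Minor Containment} (and \pmc) this is exactly the proof of \cref{lem:strict-01}: a rank-$2$ Latin $k$-division realizes every $k\times k$ matrix over $\mathbb F$ as a \lm. For \gmc, \mmc, \textsc{Rank-$k$ Division} and \textsc{Permutation Pattern} it suffices to recall that the full-rank blocks of a rank Latin division draw a universal permutation: taking the order $d$ large with respect to $k$, this universal permutation contains a length-$k^2$ pattern whose $k$-coarsening has a full-rank block in every cell, which immediately yields a $k$-grid minor, a $k$-mixed minor, a rank-$k$ division, and a copy of any length-$k$ permutation pattern. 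For \textsc{Rank-$k$ Latin Division} one sub-selects $k$ row parts and $k$ column parts of a larger rank Latin division and restricts its refinement, which preserves the Latin structure once $d$ is large enough. For \imc one first observes that if some value of $N$ does not occur in $M$ the instance is trivially negative (checked in polynomial time), and otherwise uses, in addition to the universal permutation of full-rank blocks, the large constant regions surrounding them, so that --- with the rank and order of the Latin division chosen as functions of $k$ and $|\mathbb F|$ --- every prescribed cell of $N$ can be matched.

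Combining the two verifications, \cref{thm:bidim} applies with the evident choices of the functions $f,g$ and gives fixed-parameter tractability of all seven problems. The only genuinely non-routine point is the \fo-definability of \textsc{Linear Minor Containment}: because \cref{thm:bidim} cannot rely on the non-deterministic unary predicates used in the transduction of \cref{lem:linmin-clos-transduction}, one must first establish the linear-algebraic bound on the sizes of the division parts; the alphabet bookkeeping needed for the rank-bidimensionality of \imc is a lesser such point, and everything else is either a direct encoding or an appeal to \cref{lem:strict-01,lem:latin}.
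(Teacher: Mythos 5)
Your proposal takes exactly the route of the paper, which disposes of all seven problems in a single sentence by asserting that each is \fo-definable and rank-bidimensional and then invoking \cref{thm:bidim}; your problem-by-problem verification of these two properties is correct and essentially the intended one, in particular the linear-algebraic reduction for \textsc{Linear Minor Containment} to boundedly many nontrivially weighted rows and columns per part, mirroring the proof of \cref{thm:parity-minor-test} (in fact a second application of the span argument to the column parts, after the rows have been reduced to $k^2$ kept rows, gives at most $k$ rather than $k^2$ columns per part, but any bound depending only on $k$ suffices). The one soft spot --- which the paper's one-line proof shares --- is rank-bidimensionality of \imc over alphabets with more than two symbols: a symbol of $N$ may occur in $M$ yet avoid every cell of the rank Latin division, so ``every value of $N$ occurs in $M$'' does not by itself let the Latin division realize $N$, and the argument is only immediate for $0,1$-matrices, where every rank-$\geq 2$ block contains both symbols.
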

\begin{proof}
It is not difficult to adapt the previous proof to show that all these problems are \fo-definable and rank-bidimensional.
\end{proof}

It is noteworthy that while we only know how to \emph{approximate} efficiently the twin-width of matrices, we can efficiently compute their maximum grid or mixed minor \emph{exactly}.  
We only included~\textsc{Permutation Pattern} in the previous theorem to highlight the fact that the framework of~\cref{thm:bidim} generalizes the fixed-parameter algorithm of Guillemot and Marx~\cite{Guillemot14}.
Their algorithm specifically targets~\textsc{Permutation Pattern} and has of course a much better running time.

Let us insist that the problems of~\cref{thm:other-pb} are solved for \emph{general} matrices (of possibly unbounded twin-width) thanks to the twin-width theory.
As witnessed by the fact that the fixed-parameter algorithm of Guillemot and Marx~\cite{Guillemot14} is a surprising result and a remarkable breakthrough, it is unlikely that there is an easy alternative proof of that theorem.

\cref{thm:approx-tww2} may also be useful for $\W[1]$-hard problems like \textsc{$k$-Biclique}~\cite{Lin18}, that are \fo-definable but \emph{not} rank-bidimensional.
Atminas et~al.~\cite{Atminas12} showed that \textsc{$k$-Biclique} has a fixed-parameter algorithm in the combined parameter \emph{$k$ plus length of the longest induced path}.
In particular, this means that \textsc{$k$-Biclique} is in \FPT~on $P_t$-free graphs. 
We also give a fixed-parameter algorithm for \textsc{$k$-Biclique} in a subclass of structures, but we phrase our result in terms of matrices rather than graphs. 
The color coding technique~\cite{Alon16} provides an \FPT~reduction from \textsc{$k$-Biclique} on bipartite graphs to \textsc{$k$-Biclique} (on general graphs).
Thus the former problem is~$\W[1]$-hard, by the breakthrough of Lin~\cite{Lin18}.
In the language of $0,1$-matrices, this can be equivalently phrased as finding a $k \times k$ submatrix full of 1-entries.

\begin{theorem}\label{thm:biclique}
  Let $\mathcal M$ be a set of $0,1$-matrices not containing every permutation matrix.
  Then \textsc{$k$-Biclique} on $\mathcal M$ is in \FPT. 
\end{theorem}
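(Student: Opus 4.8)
The plan is to show that the hypothesis forces every matrix in $\mathcal{M}$, together with \emph{all of its submatrices}, to have only a linear number of one-entries; a $k\times k$ all-ones submatrix then cannot physically fit once $k$ exceeds a hidden constant depending only on $\mathcal{M}$, and for the finitely many remaining values of $k$ a brute-force search settles the instance.

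First I would use the hypothesis to fix a permutation matrix $P$ of some size $t$ (depending only on $\mathcal{M}$) that does not occur in any member of $\mathcal{M}$, i.e.\ no $M\in\mathcal{M}$ has a $t\times t$ submatrix whose one-entries include all one-entries of $P$. The core is then a short consequence of \cref{thm:marcus-tardos}: there is a constant $c=c(t)$ such that every $0,1$-matrix avoiding $P$ has at most $c\,(n+m)$ one-entries. To get this from \cref{thm:marcus-tardos}, cut the rows and columns into groups of $t$ consecutive indices, form the block-support matrix $A$ marking the non-zero $t\times t$ blocks of $M$, and note that a $t$-grid minor of $A$ yields, by selecting one one-entry inside each block indexed by the permutation underlying $P$, an occurrence of $P$ in $M$ (the selected rows and columns come in the right relative order, and the chosen entries land exactly on the ones of $P$). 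Hence if $M$ avoids $P$, then $A$ has fewer than $\mt(t)\max(\lceil n/t\rceil,\lceil m/t\rceil)$ one-entries, and since each marked block contributes at most $t^2$ ones, $M$ has $\Oo_t(n+m)$ ones. Crucially, a submatrix of a $P$-free matrix is again $P$-free, so every submatrix $M[X,Y]$ of every $M\in\mathcal{M}$ carries at most $c\,(|X|+|Y|)$ one-entries: the sparsity is hereditary.

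With that in hand the algorithm is immediate. A $k\times k$ all-ones submatrix of an $M\in\mathcal{M}$ has $k^2$ one-entries but involves only $2k$ rows and columns in total, so by the hereditary bound its existence forces $k^2\le 2ck$, that is $k\le 2c$. Therefore, if $k>2c$ the answer is ``no''. Otherwise $k$ is bounded by the absolute constant $2c$, and one decides the instance by enumerating all $k$-subsets of rows and of columns and testing each candidate, in time $(n+m)^{\Oo(c)}$, which is polynomial and in particular of the form $f(k)\,(n+m)^{\Oo(1)}$; if an exponent independent of $\mathcal{M}$ is desired, one can instead detect a colourful $K_{k,k}$ for $k\le 2c$ via colour coding together with a width-$k$ tree-decomposition dynamic program, since $K_{k,k}$ has treewidth $k$. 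The only point requiring care — and the ``hard'' part, such as it is — is the appeal to \cref{thm:marcus-tardos} for the linear bound on the number of one-entries together with the observation that this bound transfers verbatim to every submatrix; it is precisely this hereditary linear sparsity, rather than sparsity of $M$ alone, that guarantees no large all-ones block can hide anywhere inside $M$. Everything past that is a routine sparsity-versus-brute-force dichotomy.
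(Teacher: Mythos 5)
There is a genuine gap, and it sits in your very first step: you have misread the hypothesis. In this paper a \emph{submatrix} of $M$ is $M[X,Y]$ with all entries preserved, and a matrix class is closed under taking such submatrices; so ``$\mathcal M$ does not contain every permutation matrix'' means that some permutation matrix $P$ (of some order $s$) is not an \emph{exact} submatrix of any member of $\mathcal M$ --- its $0$-entries must be matched by $0$-entries as well. Your ``i.e.'' silently replaces this by the far stronger Füredi--Hajnal-type avoidance (no $t\times t$ submatrix whose $1$-entries \emph{dominate} those of $P$), which is the only setting in which \cref{thm:marcus-tardos} yields the linear bound on the number of $1$-entries. Under the actual hypothesis your hereditary sparsity claim is false and your algorithm returns wrong answers: the class of all all-ones matrices is submatrix-closed and contains no $2\times 2$ permutation matrix (each has $0$-entries), so it satisfies the hypothesis, yet its $n\times n$ member has $n^2$ ones and contains $K_{k,k}$ for every $k\le n$, while your algorithm would answer ``no'' for every $k>2c$. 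The theorem simply cannot be a sparsity statement, since classes meeting its hypothesis may be maximally dense.

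The intended argument is a win--win via twin-width rather than a sparse-versus-brute-force dichotomy. Run the approximation algorithm of \cref{thm:approx-tww2} with parameter $k'=\max(k,s)$. Either it outputs an $\Oo_{k'}(1)$-contraction sequence, and then $k$-\textsc{Biclique}, being \fo-definable, is decided by the model-checking algorithm on bounded twin-width structures; or it outputs a rank-$k'$ Latin division of a submatrix. In the latter case, if some constant cell is all ones we have found a $k'\times k'$ all-ones submatrix and the answer is yes; otherwise all constant cells are all zeros, and from the full-rank cells one can realize \emph{every} $k'\times k'$ permutation matrix as an exact submatrix of $M$, hence every $s\times s$ one as well, contradicting the choice of $P$. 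The role of the excluded permutation matrix is thus to rule out the dense ``large rank division'' outcome, not to force global sparsity.
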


\begin{proof}
  Let $s$ be the dimension of a permutation matrix \emph{not} in $\mathcal M$, and let $k' := \max(k,s)$.
  We call the algorithm of~\cref{thm:approx-tww2} with the input matrix $M$ and parameter $k'$.
  If this yields a $g(k')$-sequence, we conclude with the algorithm of~\cite[Section 7]{twin-width1} since \textsc{$k$-Biclique} is \fo-definable:
  $$\text{BICLIQUE}_k = \exists r_1, \ldots, \exists r_k, \exists c_1, \ldots, \exists c_k~\bigwedge_{i \neq j \in [k]}\neg (r_i = r_j)~\wedge~\bigwedge_{i \neq j \in [k]}\neg (c_i = c_j)$$
  $$\wedge~\bigwedge_{i \in [k]} R(r_i)~\wedge~\bigwedge_{j \in [k]} \neg R(c_j)~\wedge~\bigwedge_{i,j \in [k]}E(r_i,c_j).$$

  If instead we get a rank-$k'$ Latin division of a submatrix of $M$, we observe that at least one of the constant $k' \times k'$ submatrices has to be full with 1-entries.
  If not, we claim that all the $k' \times k'$ permutation matrices belong to $\mathcal M$.
  Indeed, given any $k'$-permutation $\sigma$, one can, for each rank-$k'$ submatrix of the rank Latin division $\mathcal D$, keep a pair of row and column intersecting at a 1-entry in the $(i,\sigma(i))$-cell of $\mathcal D$ for every $i \in [k']$, and at a 0-entry in all the other cells.
  Thus $M$ contains every $k' \times k'$ permutation matrix as a submatrix, and we conclude since a matrix class is, by definition, submatrix-closed.
  Since $k' \geqslant s$, this contradicts the assumption on $\mathcal M$.

  We therefore get a $k' \times k'$ submatrix full of 1-entries, and can conclude that the instance is positive since $k' \geqslant k$.
\end{proof}

\cref{thm:approx-tww2} permits to show that finding a half-graph of height $k$ is in \FPT~on any matrix class missing one permutation matrix and one complement of permutation matrix.
It may for instance also be utilized to generalize the known result that \textsc{$k$-Induced Matching} is fixed-parameter tractable on bipartite $K_{t,t}$-free graphs~\cite{Dabrowski13}.
Let \textsc{$k$-Induced Permutation} be the problem which takes a $0,1$-matrix $M$ and a $k \times k$ permutation matrix $P$ in input, and asks whether $P$ is a submatrix of $M$.
This problem also generalizes \textsc{Permutation Pattern} for which $M$ would be a permutation matrix, too.
It extends \textsc{$k$-Induced Matching} in bipartite graphs to the ordered setting: Not only one wants to find $k$ mutually induced edges, but they should realize a given permutation. 

\begin{theorem}\label{thm:ind-perm}
  \textsc{$k$-Induced Permutation} is in \FPT~on any $K_{t,t}$-free matrix class.
\end{theorem}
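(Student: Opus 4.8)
The plan is to run the win-win of \cref{thm:bidim}/\cref{thm:biclique} essentially verbatim, using $K_{t,t}$-freeness to kill the one case where \cref{thm:biclique} had to do extra work. First I would record that, for a \emph{fixed} $k\times k$ permutation matrix $P$, the property ``$P$ is a submatrix of $M$'' is \fo-definable by a sentence $\varphi_P$ of size $\Oo(k^2)$: existentially quantify rows $x_1\prec\cdots\prec x_k$ and columns $y_1\prec\cdots\prec y_k$, assert $R(x_i)$, $\neg R(y_j)$, the order constraints, and $E^M(x_i,y_j)\leftrightarrow (P_{i,j}=1)$ for all $i,j$. So \textsc{$k$-Induced Permutation} is \fo-definable with parameter $k$; it is not rank-bidimensional as stated, so \cref{thm:bidim} does not apply directly, but its proof scheme does on $K_{t,t}$-free classes.

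Next, set $k':=\max(k,t)$ and call the algorithm of \cref{thm:approx-tww2} with parameter $k'$ on the input matrix $M$ (over $\mathbb F_2$). If it returns a $g(k')$-sequence of $M$, I decide $M\models\varphi_P$ in time $f(k)\,(n+m)^{\Oo(1)}$ via the \fo~model-checking algorithm of \cite[Section~7]{twin-width1} (equivalently \cref{thm:fomc-mc} after prenexification). If instead it returns a rank-$k'$ Latin division of a submatrix $M'$ of $M$, I claim the instance is automatically a \textsc{yes}-instance. Each constant cell of that division is a $k'\times k'$ submatrix, and since $k'\geq t$ a constant all-$1$ cell would contain a $t\times t$ all-$1$ submatrix, i.e.\ $K_{t,t}$, which is forbidden in a $K_{t,t}$-free class; hence \emph{every} constant cell of the division is all-$0$. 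This is exactly the hypothesis under which the construction inside the proof of \cref{thm:biclique} produces \emph{every} $k'\times k'$ permutation matrix as a submatrix of $M'$ (for each coarse row part, pick the unique refinement row part whose rank cell lies in the prescribed coarse column part, then inside that full-rank cell pick a $1$-entry; all other chosen row/column crossings land in all-$0$ constant cells). Extending the permutation of $P$ to a permutation of $[k']$ by fixed points shows $P$ itself is a submatrix of $M'$, hence of $M$.

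Collecting running times, the approximation step costs $f(k')\,(n+m)^{\Oo(1)}$ and the model-checking step $f(k)\,(n+m)^{\Oo(1)}$; since $t$ is a constant of the class, $k'=\Oo(k)$, so the whole procedure runs in $h(k)\,(n+m)^{\Oo(1)}$ for a computable $h$, which is \FPT. The only genuinely new ingredient over \cref{thm:biclique} is the observation that $K_{t,t}$-freeness forces all constant cells of a sufficiently large rank Latin division to be all-$0$ — precisely what converts the ``rich division'' branch from a separate sub-case (as for \textsc{$k$-Biclique}) into a trivial \textsc{yes}. I do not expect a real obstacle here; the only point requiring care is the cell-counting bookkeeping of the rank Latin division (each coarse cell carries exactly one rank cell, and the induced map from the refinement row parts of a coarse row part to the coarse column parts is a bijection), which is identical to the argument already used in the proof of \cref{thm:biclique}.
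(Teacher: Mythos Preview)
Your proposal is correct and is precisely the argument the paper intends: its own proof is the single sentence ``the proof follows \cref{thm:biclique} by switching the roles of bicliques and permutation patterns,'' and you have faithfully unpacked that swap (excluding $K_{t,t}$ forces all constant cells of a rank-$k'$ Latin division to be all-$0$, whence every $k'\times k'$ permutation matrix---and hence $P$---occurs as a submatrix). The paper additionally remarks that the algorithm is \FPT\ in the combined parameter $k+t$, which your choice $k'=\max(k,t)$ already delivers.
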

\begin{proof}
  The proof follows~\cref{thm:biclique} by switching the roles of bicliques and permutation patterns.
  More generally the problem is fixed-parameter tractable with respect to $k+t$.
\end{proof}

The following, together with the \fo-definability of the problems at hand (and the \FPT~algorithm for FO model checking on ordered binary structures of bounded twin-width~\cite{twin-width4}), summarizes the previously drawn algorithmic consequences. 
\begin{proposition}\label{prop:bdtww-case}
  Every $0,1$-matrix class missing either one of the following combinations has bounded twin-width:
  \begin{compactitem}
  \item a constant-1 matrix and a permutation matrix, or
  \item a constant-0 matrix and a complement of permutation matrix, or
  \item a constant-1 triangular matrix, a permutation matrix, and a complement of a permutation matrix.
  \end{compactitem}
\end{proposition}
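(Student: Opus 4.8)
The plan is to prove the contrapositive of each of the three cases: if a $0,1$-matrix class $\mathcal M$ has unbounded twin-width, then it cannot miss the stated combination. By \cref{thm:approx-tww2} (applied for larger and larger parameters, using that $\mathcal M$ is submatrix-closed and that, as recalled just before \cref{lem:latin}, unbounded twin-width forces rank-$k$ Latin $d$-divisions for arbitrarily large $k$ and $d$), $\mathcal M$ contains, for every $k$, a matrix $M$ carrying a rank-$k$ Latin $k$-division $\mathcal D$ whose refinement is $((R_1, \ldots, R_{k^2}), (C_1, \ldots, C_{k^2}))$. Each refined block $M[R_a,C_b]$ is either one of the $k^2$ full-rank blocks (one per refined row, one per refined column, one per cell of $\mathcal D$) or a \emph{constant} block, which is all-$0$ or all-$1$ since these are the only constant $0,1$-matrices. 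I fix the finitely many forbidden matrices, and take $k$ larger than all their dimensions (and than the Ramsey bound used in the last case).

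For \cref{prop:bdtww-case}$(i)$ I would reprove, with a stronger conclusion, the dichotomy already exploited in the proof of \cref{thm:biclique}. If some constant block of $\mathcal D$ is all-$1$, then $M$, hence $\mathcal M$, contains a $k\times k$ all-$1$ submatrix, hence every smaller constant-$1$ matrix. If instead every constant block is all-$0$, then $\mathcal M$ contains every $k\times k$ permutation matrix: given a $k$-permutation $\sigma$, for each $i$ choose a row $\rho_i$ and a column $\kappa_{\sigma(i)}$ inside the unique full-rank block of cell $(i,\sigma(i))$ meeting at a $1$-entry (a rank-$k$ block over $\mathbb F_2$ is invertible, hence nonzero); since the full-rank block of $\rho_i$'s refined row sits in $\kappa_{\sigma(i)}$'s refined column-part, for every $j\neq\sigma(i)$ the block spanned by $\rho_i$'s refined row and $\kappa_j$'s refined column is constant, so all-$0$, whence $M[\rho_i,\kappa_j]=0$; the $\rho_i$'s (resp.\ the $\kappa_j$'s) are distinct and correctly ordered because they lie in distinct coarse row-parts (resp.\ column-parts). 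Thus $(\rho_i,\kappa_j)$ realises the permutation matrix of $\sigma$. Consequently, if $\mathcal M$ misses both a constant-$1$ matrix and a permutation matrix, taking $k$ above their sizes yields a contradiction in both branches, so $\mathcal M$ has bounded twin-width. Case $(ii)$ is the Boolean dual: run the same argument with the roles of $0$ and $1$ interchanged --- a rank-$k$ block with $k\ge 2$ is not all-$1$, hence contains a $0$-entry --- so either some constant block is all-$0$, giving a large constant-$0$ submatrix, or all constant blocks are all-$1$ and $\mathcal M$ contains every $k\times k$ complement of a permutation matrix.

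Case $(iii)$ starts identically: if every constant block is all-$0$ (resp.\ all-$1$) we already contradict the forbidden permutation (resp.\ complement of permutation) matrix, so we may assume that \emph{both} values occur among the constant blocks of $\mathcal D$. I then want to produce an $n\times n$ constant-$1$ triangular (half-graph) submatrix, $n$ being the forbidden triangular size, which contradicts the third forbidden matrix; \textbf{this extraction is the step I expect to be the main obstacle}. A mere two-colouring of the blocks need not contain a large staircase --- e.g.\ one all-$0$ block-column among all-$1$ blocks has none --- so one must use the rigidity of the Latin pattern: there are quadratically many full-rank blocks, exactly one per refined row, per refined column, and per cell of $\mathcal D$. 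I would first pass, by a Ramsey / Erd\H{o}s--Szekeres extraction on the $k\times k$ grid of cells of $\mathcal D$, to a sub-division that is again a rank-$k'$ Latin $k'$-division (this is legitimate: restricting $\mathcal D$ to a sub-grid of its cells and to the refined parts carrying the surviving full-rank blocks again yields a rank-Latin division), chosen so that the constant values and the full-rank-block positions behave monotonically across it; reading off one row and one column from each selected refined part, with the $0$-blocks placed below and the $1$-blocks above a diagonal, then exhibits the ordered submatrix $[\,i\le j\,]_{i,j\in[n]}$. The remaining verifications --- that the constructed submatrix has the right order type, that all Ramsey bounds depend only on the forbidden dimensions, and that $k$ can be chosen large enough --- are routine and parallel to the proofs of \cref{thm:biclique,thm:ind-perm}. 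Once $\mathcal M$ is shown to contain this half-graph submatrix, the combination in $(iii)$ cannot be missed, and the proof of \cref{prop:bdtww-case} is complete.
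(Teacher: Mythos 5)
Your cases $(i)$ and $(ii)$ are correct and are exactly the dichotomy already used to prove \cref{thm:biclique} (the paper in fact states \cref{prop:bdtww-case} without any proof, so there is nothing to match against). The genuine problem is the step you flag yourself in case $(iii)$, and it is worse than a missing Ramsey extraction: the branch ``both constant values occur among the blocks of $\mathcal D$, hence a large constant-1 triangular submatrix can be extracted'' cannot be carried out for the statement as written. First, as your own one-block-column example already indicates, a mixed colouring of the constant blocks gives no staircase; the correct trichotomy only appears after homogenizing the constant blocks \emph{relative to the permutation traced by the full-rank blocks} (this is the ``more restrictive'' canonical rank Latin division of~\cite{twin-width4} alluded to right after the definition in \cref{sec:strict}, not the relaxed one used in this paper), and its outcomes are: all-0 background (all permutation matrices), all-1 background (all complements), or a background split by the staircase into a 0-side and a 1-side, which yields constant-1 triangular matrices of \emph{one orientation only}. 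Second, since matrix classes are not closed under transposition or order reversal, that single orientation is a real obstruction. Concretely, let $\mathcal C$ be the submatrix closure of $\sg{M_\sigma : \sigma \text{ a permutation}}$ where $M_\sigma[i,j]=1$ iff $j<\sigma(i)$. Every row of every matrix in $\mathcal C$ is of the form $1^a0^b$, so $\mathcal C$ contains neither $\left(\begin{smallmatrix}1&0\\0&1\end{smallmatrix}\right)$, nor its complement $\left(\begin{smallmatrix}0&1\\1&0\end{smallmatrix}\right)$, nor the upper-triangular $\left(\begin{smallmatrix}1&1\\0&1\end{smallmatrix}\right)$ (each has a row equal to $(0,1)$); yet $\mathcal C$ has unbounded grid rank (choose $\sigma$ whose permutation matrix has $t$ points in every cell of its regular $t$-division: every cell of $M_\sigma$ then contains $t$ distinct staircase rows, hence has rank at least $t-1$), so unbounded twin-width by \cref{lem:hgr-htww}.

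Consequently, either ``a constant-1 triangular matrix'' in the third item must be read as forbidding one triangular matrix of \emph{each} orientation --- in which case your plan does go through once the ``mixed'' branch is replaced by the canonical Ramsey extraction of~\cite{twin-width4}, whose three outcomes are each contradicted by one of the four forbidden matrices --- or the third item is not provable as stated. Your argument as written only ever produces the orientation $[\,i\le j\,]$, so it cannot close case $(iii)$ without this repair; I would make the orientation issue and the reliance on the canonical (fixed-pattern) form of the rank Latin division explicit.
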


\section{Products of bounded twin-width matrices over a finite field}\label{sec:product}

Perhaps somewhat disappointingly, none of the algorithmic applications of~\cref{sec:divisions} actually requires the expressive power of \fomc and~\cref{thm:fomc-mc}.
We will now see some interesting consequences of \cref{sec:fomc} for matrix multiplication, that do require modular counting. 
First we show that if a matrix class $\mathcal M$ over a finite field has bounded twin-width, then so does its set of squares $\mathcal M^2 = \{AB :  \text{A and B are two conformal} $ $\text{matrices of}~\mathcal M\}$.

\begin{theorem}\label{thm:matrix-mult}
  There is a function $f: \mathbb N^3 \to \mathbb N$ such that for every conformal matrices $A$ and $B$ over~$\mathbb F_q$,
  The product $AB$ (over $\mathbb F_q$) has twin-width at most $f(\tww(A),\tww(B),q)$.
\end{theorem}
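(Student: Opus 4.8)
The plan is to realise $AB$ as a submatrix of the square of a single matrix, and to show separately that forming that matrix from $A,B$ and that squaring it both preserve bounded twin-width. If $A$ is $n\times m$ and $B$ is $m\times p$, I first pad with zero rows or columns so that the product becomes square; appending any number of constant rows or columns changes twin-width by at most an additive constant (two consecutive constant lines contract without creating a red edge), and the true product is a submatrix of the padded one, so we may assume $A$ is $n\times m$, $B$ is $m\times n$ and set
$$M:=\begin{pmatrix}0&A\\B&0\end{pmatrix},\qquad\text{so}\qquad M^2=\begin{pmatrix}AB&0\\0&BA\end{pmatrix}.$$
Since $AB$ is a principal submatrix of $M^2$ and taking submatrices does not increase twin-width, it suffices to bound $\tww(M)$ by a computable function of $\tww(A),\tww(B),q$, and then $\tww(M^2)$ by a computable function of $\tww(M),q$.

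For the second step I would express squaring as an \fomc-interpretation, using the edge-coloured-graph encoding of a square $\mathbb F_q$-matrix $S$ on its index set (the relation $E_a$ records the entries; this encoding ``remembers the diagonal'', and its twin-width agrees with the matrix twin-width up to a function of $q$). Writing $p$ for the characteristic of $\mathbb F_q$, we have $(S^2)[x,y]=\sum_z S[x,z]S[z,y]=\sum_{b,c\in\mathbb F_q} N_{b,c}\,bc$ with $N_{b,c}=|\{z:S[x,z]=b,\ S[z,y]=c\}|$, and since $p\cdot v=0$ for every $v\in\mathbb F_q$ this value depends only on the residues $N_{b,c}\bmod p$. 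Hence $E_a^{S^2}(x,y)$ is defined by the \fomc-formula
$$\bigvee_{\substack{\alpha:\mathbb F_q^2\to[0,p-1]\\ \sum_{b,c}\alpha(b,c)\,bc=a}}\ \bigwedge_{b,c\in\mathbb F_q}\exists^{\alpha(b,c)[p]}z\ \bigl(E_b(x,z)\wedge E_c(z,y)\bigr),$$
with the order and the remaining relations left unchanged. This mirrors the transductions of \cref{lem:linmin-clos-transduction}, and \cref{thm:fomc-closure} then yields the desired bound on $\tww(M^2)$.

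For the first step I would invoke the rank-division machinery of the previous paper. Suppose $\tww(M)$ is large; then by \cref{thm:approx-tww} together with \cref{thm:rd-to-gr}, $M$ has a rank-$k$ division with $k$ as large as we wish (as a function of $\tww(M)$ and $q$). The rows of $M$ form an initial block $R_1$ of size $n$ followed by $R_2$ of size $m$, and the columns form $C_1<C_2$ likewise; since the parts of a division are intervals, at most one row part straddles the $R_1/R_2$ boundary and at most one column part straddles $C_1/C_2$. As $M[R_1,C_1]$ and $M[R_2,C_2]$ are zero, no cell contained in $R_1\times C_1$ or in $R_2\times C_2$ can have $\ge 2$ distinct rows or $\ge 2$ distinct columns, so a short case analysis on the non-straddling parts forces all non-straddling cells to lie inside $M[R_1,C_2]=A$ or inside $M[R_2,C_1]=B$, and in fact one of $A,B$ to contain all of them. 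Discarding the at most two straddling parts then leaves a rank-$(k-1)$ division of a submatrix of $A$, or of $B$; by \cref{lem:hgr-htww} and submatrix-monotonicity of twin-width this gives $\tww(A)>d$ or $\tww(B)>d$ as soon as $k-1\ge 2d(d+1)+1$. Contrapositively, $\max(\tww(A),\tww(B))\le d$ forces $\tww(M)$ to be bounded by a computable function of $d$ and $q$ (the $q$-dependence entering through the bound $K$ of \cref{thm:rd-to-gr}), which combined with the previous paragraph proves the theorem.

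The part I expect to require the most care is the interface between the two viewpoints: the rank-division results are naturally stated for matrices, while the squaring formula needs the diagonal pairing of the rows and columns of $M$, which is visible only in the edge-coloured-graph encoding and is provably not transduction-definable from the two-sorted row/column encoding. So one must commit to a single encoding throughout and verify that passing from the given matrices $A,B$ to it costs only a bounded amount of twin-width (routine in the ordered setting, but worth stating carefully). Everything else---the modular-counting formula for squaring and the straddling-part analysis for $M$---is a routine variant of \cref{lem:linmin-clos-transduction} and of the argument behind \cref{lem:hgr-htww} and the proof of \cref{lem:strict-01}.
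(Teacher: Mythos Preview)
Your proof is correct and follows the paper's overall strategy: reduce to squaring a single block matrix and realise squaring as an \fomc-interpretation, then invoke \cref{thm:fomc-closure}. Your counting modulo the characteristic $p$ rather than modulo $q$ is in fact the cleaner choice, and your explicit handling of the padding and of the row/column identification needed for the squaring formula addresses points the paper leaves implicit.

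The one substantive difference is how you bound $\tww(M)$ for $M=\left(\begin{smallmatrix}0&A\\B&0\end{smallmatrix}\right)$. The paper simply asserts $\tww(M)=\max(\tww(A),\tww(B),2)$: given $d$-sequences for $A$ and for $B$, one runs them in parallel on the disjoint row and column blocks of $M$ (a contraction inside one block sees only zeros in the other block, so no extra red edges arise), then finishes with a bounded number of contractions. Your route via rank divisions and \cref{lem:hgr-htww} is correct---the straddling-part case analysis works exactly as you sketch---but it is heavier machinery than needed and introduces a dependence on $q$ at this step that the direct argument avoids. On the other hand, your argument is self-contained and does not rely on verifying a contraction-sequence construction by hand.
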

\begin{proof}
  Since $\left( \begin{matrix} 0&A \\ B&0 \end{matrix} \right) \cdot \left( \begin{matrix} 0&A \\ B&0 \end{matrix} \right) = \left( \begin{matrix} AB&0 \\ 0&BA \end{matrix} \right)$ and
  $$\tww\left( \left( \begin{matrix} 0&X \\ Y&0 \end{matrix} \right) \right) = \tww\left( \left( \begin{matrix} X&0 \\ 0&Y \end{matrix} \right) \right)=\max(\tww(X),\tww(Y),2),$$
  we shall just prove that there is a function $g$ such that $M^2$ has twin-width at most $g(\tww(M))$, for every $n \in \mathbb N$ and $M \in \mathcal M_n(\mathbb F_q)$.
  Identifying a matrix of $\mathcal M_{\square}(\mathbb F_q)$ and the corresponding $\tau_q$-structure, there is a simple \fomc-interpretation $\mathsf S$ such that $\mathsf S(M)=M^2$ for every $M \in \mathcal M_{\square}(\mathbb F_q)$.

  Indeed one can keep the relations $R$ and $\prec$ as in $M$, and express $E^{M^2}_i(x,y)$ as
  $$\bigvee_{\substack{a: [q-1]^2 \to [0,q-1] \\ \sum\limits_{j,k \in [q-1]^2} a(j,k) \cdot (\tilde{j} \tilde{k}) = \tilde i}}~\bigwedge_{j,k \in [q-1]}~\exists^{a(j,k)[q]}z~E_j^M(x,z)~\wedge~E^M_k(z,y).$$
  As previously, we wrote $\tilde i$ for the element of $\mathbb F_q$ corresponding to relation $E_i$.
  The expression $\tilde{j} \tilde{k}$ is a product in $\mathbb F_q$, while $a(j,k) \cdot (\tilde{j} \tilde{k})$ is the sum of $a(j,k)$ occurrences of $\tilde{j} \tilde{k}$.
  As every element of $(\mathbb F_q,+)$ has an order dividing $q$, it is enough to count the number of pairs $(\tilde j, \tilde k)=(M_{x,z},M_{z,y})$ modulo $q$, which the formula does.
  We do not assume that $\tilde j \tilde k = \tilde k \tilde j$ (although it does hold), so our formula would also work in non-commutative rings.

  We finally invoke~\cref{thm:fomc-closure} to conclude that $\tww(M^2)$ is bounded by a function of $\tww(M)$ and~$q$.
\end{proof}

\begin{theorem}\label{thm:matrix-product}
  Let $q$ be a prime power, and $d$ be a natural.
  Let $A, B$ be two $n \times n$ matrices over $\mathbb F_q$, both of twin-width at most~$d$.
  One can compute the product $AB$ in time $\Oo_{d,q}(n^2 \log n)$.
\end{theorem}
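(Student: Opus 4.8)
The plan is to reduce multiplication of $A$ and $B$ to \emph{squaring} a single matrix of bounded twin-width, and then to handle that square via the \fomc-interpretation of \cref{thm:matrix-mult} together with the query data structure underlying \cref{thm:gaj-linear}. Put
\[
  P := \begin{pmatrix} 0 & A \\ B & 0 \end{pmatrix},
\]
a $2n\times 2n$ matrix over $\mathbb F_q$. Then $P^2=\left(\begin{smallmatrix} AB & 0 \\ 0 & BA \end{smallmatrix}\right)$, so $AB$ is exactly the submatrix of $P^2$ on the first $n$ rows and the first $n$ columns; and by the identity recalled in the proof of \cref{thm:matrix-mult}, $\tww(P)=\max(\tww(A),\tww(B),2)\le\max(d,2)$. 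Hence it suffices to (i) compute, in time $\Oo_{d,q}(n^2\log n)$, a contraction sequence of $P$ of width $\Oo_{d,q}(1)$, and (ii) from that sequence recover all relevant entries of $P^2$ within the same time budget.

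For step (i), $P$ is not symmetric in general, so I cannot feed it to the almost-quadratic approximation algorithm \cref{thm:approx-tww-quad} directly. Instead I would apply \cref{thm:approx-tww-quad} to the symmetric $4n\times 4n$ matrix $s(P)=\left(\begin{smallmatrix} 0 & P \\ P^T & 0 \end{smallmatrix}\right)$ over $\mathbb F_q$, with parameter $\max(d,2)$; since a matrix and its transpose have the same twin-width, the identity from \cref{thm:matrix-mult} gives $\tww(s(P))=\max(\tww(P),\tww(P^T),2)=\max(d,2)$, so the algorithm does not report failure and outputs, in time $\Oo_{d,q}\bigl((4n)^2\log(4n)\bigr)=\Oo_{d,q}(n^2\log n)$, a $D$-sequence of $s(P)$ with $D=\Oo_{d,q}(1)$ (a triply-exponential function of $d$ and $q$). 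Since $P$ is precisely the submatrix $s(P)[\text{rows of }P,\ \text{columns of }P]$, it is an induced substructure of $s(P)$; restricting the $D$-sequence to the domain of $P$ — performing the same contractions but only among parts meeting that domain, and dropping the resulting trivial steps — yields a $D$-sequence of $P$ without increasing the width, in $\Oo(n)$ further time. (This monotonicity of contraction sequences under induced substructures is standard; see~\cite{twin-width1}.)

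For step (ii), identify $P$ with its $\tau_q$-structure. By the proof of \cref{thm:matrix-mult} there is a fixed \fomc-interpretation $\mathsf S$ of size $\Oo_q(1)$ with $\mathsf S(P)=P^2$, given explicitly by the formulas $E_i^{P^2}(x,y)$, $i\in[q-1]$, appearing there. Feeding the $D$-sequence of $P$ and each of these $q-1$ two-variable formulas into the \fomc\ analogue of \cref{thm:gaj-linear} — which holds by the same adaptation of \cite{Gajarsky22} that yields \cref{thm:fomc-mc} — produces, in time $\Oo_{d,q}(n)$, a data structure answering any query ``$P^2\models E_i(x,y)$?'' in time $\Oo_{d,q}(\log\log n)$. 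For each of the $n^2$ pairs $(x,y)$ with $x$ among the first $n$ rows and $y$ among the first $n$ columns of $P$, we issue the $q-1$ queries and read off $(AB)_{x,y}=(P^2)_{x,y}$ (setting it to $0$ if no $E_i$ holds), for a total of $\Oo_{d,q}(n^2\log\log n)$ time. Adding the two phases gives $\Oo_{d,q}(n^2\log n)+\Oo_{d,q}(n^2\log\log n)=\Oo_{d,q}(n^2\log n)$, as desired.

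I expect the write-up to be mostly bookkeeping: the load-bearing ingredients are all from earlier in the series or from \cite{Gajarsky22}. The two points needing care — neither deep — are that \cref{thm:approx-tww-quad} is stated only for symmetric matrices (handled by the constant-size symmetrization $s(\cdot)$ and by restricting contraction sequences to induced substructures) and that the squaring interpretation genuinely uses modular counting quantifiers, so one must invoke the \fomc\ version of the Gajarsk\'y et al.\ query data structure rather than its plain \fo\ version; I would regard citing or re-deriving that \fomc\ data structure as the single most delicate step.
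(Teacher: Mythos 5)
Your proposal is correct and follows essentially the same route as the paper: compute an $\Oo_{d,q}(1)$-sequence of $\left(\begin{smallmatrix}0&A\\B&0\end{smallmatrix}\right)$ via \cref{thm:approx-tww-quad} (with the symmetrization $s(\cdot)$ from \cref{sec:comput_seq} handling the non-symmetric input, which you spell out more explicitly than the paper does), then apply the \fomc-squaring interpretation of \cref{thm:matrix-mult} together with the \fomc-extension of the Gajarsk\'y et al.\ query structure and read off the top-left block. The paper's proof additionally offers an alternative second phase via the \emph{ad hoc} twin-decomposition squaring algorithm of \cref{thm: matmult}, but your branch is precisely one of the two it presents.
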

\begin{proof}
By~\cref{thm:approx-tww-quad}, we compute an $\Oo_{d,q}(1)$-sequence for $$\left( \begin{matrix} 0&A \\ B&0 \end{matrix} \right),~\text{in time}~\Oo_{d,q}(n^2 \log n).$$
We conclude either by turning this contraction sequence into a twin-decomposition in time $\Oo_{d,q}(n^2)$, by~\cref{thm:to-tww-dec}, and invoking the upcoming practical matrix squaring of~\cref{thm: matmult}, or by combining the \fomc-interpretation of~\cref{thm:matrix-mult} (and~\cref{thm:fomc-closure}) with the efficient algorithm of Gajarský et al.~\cite{Gajarsky22} (see~\cref{thm:gaj-quasilinear}) to compute the interpretations of bounded twin-width structures.
We can finally read off the top-left block $AB$ in $$\left( \begin{matrix} 0&A \\ B&0 \end{matrix} \right)^2,$$
in time $\Oo_{d,q}(n^2)$.

If we chose the former approach, we now have a~twin-decomposition $(\mathcal T, \mathcal B)$ of $AB$.
We can initialize an $n \times n$ matrix to all 0 entries, and for each edge of $\mathcal B$ labeled $\ell$, fill the corresponding entries with $\ell$.
This takes quadratic time since we access each matrix entry at most once.
If we instead went with the latter approach, we shall simply make $(q-1)n^2$ constant-time queries to build $AB$, $q-1$ for each entry of $AB$.
\end{proof}

The bottleneck of~\cref{thm:matrix-product} is to compute the contraction sequence.
Should this step be improved to run in $\Oo_{d,q}(n^2)$ time, one would get an overall quadratic algorithm to multiply two matrices of bounded twin-width.



\section{Efficient square computation given a twin-decomposition}\label{sec:efficient}

For the sake of convenience, we will now use the language of graphs rather than of matrices.
Recall that the \emph{square} of a graph $G$ is the graph $G^2$ with the same vertex set and whose edge set is given by: 
$$E(G^2):= \sg{uv~:~\text{ there is a path of length at most $2$ from $u$ to $v$ in $G$}}.$$
We similarly define the \emph{modular square} of $G$ denoted $G^{[2]}$ whose vertex set is $V(G)$ and edge set is given by: 
$$E(G^{[2]}):=\sg{uv~:~|N(u)\cap N(v)| = 1 \pmod 2}.$$
Note that if $M$ is an adjacency matrix of $G$ in $\mathbb F_2$, then the adjacency matrix of $G^{[2]}$ is exactly $M^2$.
More generally, for every $q\geq 2$, if $G$ is a complete graph with edges labeled by a function $\nu: E(G)\to \mathbb F_q$, 
we let $G^{[q]}$ be the complete graph with vertex set $V(G)$ and edges labeled by $\lambda: E(G^{[q]})\to \mathbb F_q$ defined for every $uv\in E(G^{[q]})$ by:
$$\lambda(uv):=\left(\sum_{w\in V\backslash\sg{u,v}}\nu(uw)\nu(wv) \right) \in \mathbb F_q.$$
If $M$ is an adjacency matrix of the labeled graph $G$ the corresponding adjacency matrix of $G^{[q]}$. For every prime power $q=p ^{\alpha}$, we let $m(q)$ denote the cost of basic arithmetic computations in $\mathbb F_q$; here only addition, subtraction, and multiplication are needed. Our main result in this section is the following:

\begin{theorem}
 \label{thm: matmult}
 For every prime power $q\geq 2$, there is an $\mathcal{O}(m(q)d^2q^{2d}n)$-time algorithm that, given a twin-decomposition $(\tree, \bb)$  of width $d$ of a graph $G$ with $n$ vertices whose edges are labeled in $\mathbb F_q$, outputs a twin-decomposition of width $\Oo(d^2q^d)$ of $G^{[q]}$. 
\end{theorem}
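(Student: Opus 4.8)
The plan is to replay the contraction sequence encoded by $(\tree,\bb)$ and, on top of it, maintain a $q^{\Oo(d)}$-bounded refinement of each part which turns that same sequence into a valid contraction sequence for $G^{[q]}$. First I would use \cref{rem: high} to assume $\bb$ is lifted up and \cref{lem: listeadj} to recover dynamically, in time $\Oo(dn)$, the sequence $G=G_n,\dots,G_1$ together with the red-neighbourhood lists $L_U^i$. In parallel, for each current part $U$ at level $i$, I would maintain for every vertex $u\in U$ and every $X\in L_U^i\cup\{U\}$ the partial label sum $\sigma_X(u):=\sum_{w\in X\setminus\{u\}}\nu(uw)\in\mathbb F_q$, together with, for every part $Y$ black-adjacent to $U$, the label $a_{UY}$ and the cardinality $|Y|$. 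These are the only quantities about the environment of $U$ that are not already constant over $U$, and when $G$ contracts $U,V$ into $Z$ only $\Oo(d)$ of these lists change, so maintaining them costs $\Oo(m(q)d)$ amortised per contraction.

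The structural heart is a decomposition of the label $\lambda(uv)$ of $G^{[q]}$ relative to the level-$i$ partition $\mathcal P_i$. For $u$ in $G$-part $P$ and $v$ in $G$-part $Q$ with $P\ne Q$, split $\sum_{w\ne u,v}\nu(uw)\nu(wv)$ according to the part $X$ containing $w$: a part $X$ black-adjacent to both $P$ and $Q$ contributes the constant $|X|\,a_{PX}a_{XQ}$; a part $X$ red-adjacent to exactly one of $P,Q$ (as well as $X=P$ and $X=Q$ when $P$-$Q$ is black) contributes a term which is an $\mathbb F_q$-linear functional of the \emph{profile} $\pi_i(u):=(\sigma_X(u))_{X\in L_U^i\cup\{U\}}\in\mathbb F_q^{\le d+1}$ or of $\pi_i(v)$; and a genuinely bilinear residue $\sum_{w\in X}\nu(uw)\nu(wv)$ survives only for the $\le d$ parts $X$ red-adjacent to both $P$ and $Q$, and for the case $Q\in L_U^i$. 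Refining each $G$-part into the classes of ``equal profile'' yields a partition $\mathcal Q_i$ of $V(G^{[q]})$ with at most $q^{d+1}$ classes inside each $G$-part, and two classes $P',Q'$ of $\mathcal Q_i$ can be red-adjacent in $G^{[q]}[\mathcal Q_i]$ only if $P$ is $G$-red-adjacent to $Q$ or $P$ and $Q$ share a $G$-red-neighbour. Since a given $P$ is $G$-red-adjacent to $\le d$ parts and shares a $G$-red-neighbour with $\le d^2$ parts, the red degree of $\mathcal Q_i$ in $G^{[q]}$ is $\Oo(d^2q^d)$.

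Next I would check that $\mathcal Q_n,\dots,\mathcal Q_1$ form a coarsening chain, so that they are the borders of a single ranked binary tree $\tree'$ on leaf set $V(G)$ (inserting the $|\mathcal Q_i|-|\mathcal Q_{i-1}|$ merges between consecutive levels in an arbitrary order; the total is $n-1$, as required). The only point is that when $G$ contracts $U,V$ into $Z$, the level-$(i-1)$ profile of each vertex is determined by its level-$i$ profile: any part $Y$ that becomes red-adjacent to $Z$ but was not red-adjacent to the old part of $u$ was black-adjacent to it, so $\sigma_Y(u)$ was already constant on that part, while $\sigma_Z(u)=\sigma_U(u)+\sigma_V(u)$ is a sum of quantities each either recorded in the level-$i$ profile or constant on the old part. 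Hence no class is ever split, every intermediate partition refines $\mathcal Q_{i-1}$ and is refined by $\mathcal Q_i$, and the red-degree bound $\Oo(d^2q^d)$ carries over to it (now with $U\cup V$ playing the role of $P$). Finally I would assemble $\bb'$: at the step where two classes of the running partition become homogeneous or become red in $G^{[q]}$, record the corresponding $\mathbb F_q$-labelled edge between them in $\tree'$, so that no edge of $\bb'$ crosses a border.

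For the running time, each $G$-contraction triggers $\Oo(q^d)$ micro-merges; each micro-merge recomputes $\Oo(d)$ profile coordinates and scans its $\Oo(d^2q^d)$ candidate $G^{[q]}$-red-neighbours, performing $\Oo(m(q))$ arithmetic operations per candidate, for a total of $\Oo(m(q)d^2q^{2d}n)$, and the output width is $\Oo(d^2q^d)$ as computed above. \textbf{The main obstacle} I anticipate is precisely the no-splitting invariant together with keeping the profile dimension from silently growing: one must verify that merging a $G$-red-neighbour of $P$ into a larger part never introduces new profile coordinates on $P$, which is exactly what the ``level-$(i-1)$ profile is determined by the level-$i$ profile'' computation above is designed to guarantee; the rest is bookkeeping that the candidate red-neighbour scans and the $\bb'$ updates stay within the claimed $q^{\Oo(d)}$ per-step budget.
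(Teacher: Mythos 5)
Your refinement of the contraction sequence is exactly the paper's: the profile $\pi_i(u)=(\sigma_X(u))_{X\in L_U^i\cup\{U\}}$ is the tuple $(p,b_1,\dots,b_k)$ of \cref{lem: refinement}, your no-splitting argument is the coarsening argument there, and your $\Oo(d^2q^d)$ red-degree count matches the bound $(d^2+d+1)q^{d+1}-1$. The construction of $\tree'$ via the "level-$(i-1)$ profile is determined by the level-$i$ profile" computation is likewise \cref{lem: pointers}. So the first half of your plan is sound and is the intended route.

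The gap is in the last step, which you dispose of in one sentence ("record the corresponding $\mathbb F_q$-labelled edge") but which is the bulk of the paper's proof (\cref{lem: partial-edges,lemma: b1tob2,lem: b2tob}). Knowing that two classes $P',Q'$ have become homogeneous in $G^{[q]}$ tells you the label $\lambda(uv)$ is constant on $P'\times Q'$, but you still have to \emph{compute} that constant, and your own decomposition shows it contains the term $\sum_{X\ \text{black to both}}|X|\,a_{PX}a_{XQ}$, a sum over all parts black-adjacent to both $P$ and $Q$ at level $i$. A part can have $\Omega(i)$ black neighbours (these are not stored explicitly in a twin-decomposition; they are scattered over $\bb$-edges incident to the part and its ancestors), so evaluating this sum pair-by-pair, or even maintaining your proposed list of pairs $(a_{UY},|Y|)$ over all black neighbours $Y$ of $U$, already costs $\Omega(n)$ per part and $\Omega(n^2)$ overall, destroying the claimed $\Oo(m(q)d^2q^{2d}n)$ bound. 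The paper's way around this is to charge each $2$-path $u\!-\!w\!-\!v$ to the unique step at which a black edge of $\bb$ covering $uw$ disappears, storing its contribution as a partial sum distributed over node bits $\alpha_W$ and transversal edges $\bb_1$ of $\tree'$, and then to consolidate these distributed sums into actual edge labels by one bottom-up pass (\cref{lemma: b1tob2}) and one top-down pass (\cref{lem: b2tob}). Without some such amortisation scheme your per-step budget cannot be met, so this part of the argument needs to be supplied rather than asserted.
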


Note that in practice, if $q=p^{\alpha}$ with $p$ prime, one can choose $m(q)=\mathcal O(\log_p(q)^2\log(p)^2)$ (see for example \cite[Table 2.8]{HAC}).

\begin{figure}[h!]
  \centering
  \begin{tikzpicture}[scale=0.9]
    \foreach \i in {1,...,7}{
      \pgfmathsetmacro{\x}{( 360/7 )*(2-\i)}        
      \node[draw,circle,minimum size=0.2cm] (\i) at (\x:1.5) {$\i$} ;
    }

    \foreach \i/\j in {1/2,1/3,1/5,1/6,1/7,2/4,2/5,3/4,3/5,6/7}{
       \draw (\i) -- (\j) ;
    }

    \begin{scope}[xshift=5cm]
    \foreach \i in {1,...,7}{
      \pgfmathsetmacro{\x}{( 360/7 )*(2-\i)}        
      \node[draw,circle,minimum size=0.2cm] (\i) at (\x:1.5) {$\i$} ;
    }

    \foreach \i/\j in {1/2,1/3,1/6,1/7,2/3,2/5,2/6,2/7,3/5,3/6,3/7,5/6,5/7,6/7}{
       \draw (\i) -- (\j) ;
    }
    \end{scope}
\end{tikzpicture}
  \caption{Left: a graph $G$. Right: the graph $G^{[2]}$.}
\label{fig:G2}
\end{figure}
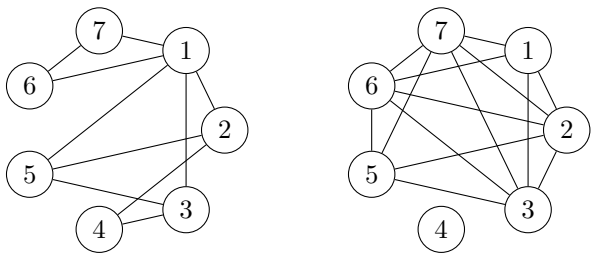

\paragraph*{Overview}

We now describe a way of computing a twin-decomposition $(\mathcal{T'}, \mathcal{B'})$ of $G^{[q]}$ when we we are given as input a twin-decomposition $(\mathcal T, \mathcal B)$ of $G$.
For the sake of clarity, we will only give the proof of \cref{thm: matmult} for the case $q=2$, and later explain (at the end of this section) how to generalize it for greater values of $q$.
From now on we assume that $G$ is a graph of twin-width at most $d\geq 0$.
We will first explain in~\cref{lem: refinement} how to refine a $d$-sequence of $G$ in order to get an~$\Oo(d^22^d)$-sequence for $G^{[2]}$.
Then we will show how to compute the associated contraction tree $\mathcal{T'}$ from $\mathcal{T}$ using dynamic programming.
The next step is to compute $\mathcal{B'}$ a~set of transversal edges such that $(\mathcal{T'}, \mathcal{B'})$ is a twin-decomposition of $G^{[2]}$.
This is the most technical part.
For this, we will first show how to construct a~set of labeled edges $\mathcal{B}_1$, with labels in $\sg{0,1}$, again using dynamic programming.
It will follow from our construction that $\mathcal{B}_1$ have size $\Oo(d^24^dn)$, and that for every two vertices $u,v\in V(G)$, we can determine whether or not $u$ and $v$ are adjacent in $G^{[2]}$ by computing the parity of the sum of the labels over the labeled edges of $\mathcal{B}_1$ we meet between the branch corresponding to $u$ and the one corresponding to $v$ in $\mathcal{T'}$, plus an additional term that we will introduce later.
Eventually we will prove how to compute the desired set $\mathcal{B'}$ from $\mathcal{B}_1$, using dynamic programming once more.
We claim that the first two steps of our computation (i.e., the computation of $\mathcal{T'}$ and $\mathcal{B}_1$) can be done at the same time, but for the sake of clarity we will present them separately and assume that $\mathcal{T'}$ is already known when we compute $\mathcal{B}_1$.

We will adopt the following convention: We denote with lowercase characters the vertices of the original graph $G$ and with uppercase characters the vertices of the graphs $G_i$ and of the trees involved in the twin-decompositions, which we always identify to subsets of $V(G)$.
Hence a vertex $U\in V(\mathcal T)$ is identified to the set of leaves of the subtree of $\mathcal T$ rooted at $U$.

\paragraph*{How to refine a $d$-sequence for $G$ to a $\Oo(d^22^d)$-sequence for $G^{[2]}$}

Assume that $G$ admits a $d$-sequence $G_n, \ldots,G_1$ and denote $\mathcal{P}_{n}, \ldots, \mathcal{P}_1$ the associated partitions of $V(G)$.
We define the following refinement of this sequence, denoted by $\mathcal P'_n, \ldots, \mathcal P'_1$, where for each $i\in [n]$, if the red neighbors of $U\in \mathcal{P}_i$ in $G_i$ are the sets $W_1, \ldots, W_k \in \mathcal{P}_i$ (with $k\leq d$), then we partition $U$ into the at most $2^{k+1}$ sets $(U_{(p, b_1, \ldots, b_k)})_{p, b_1, \ldots, b_k \in \sg{0,1}}$ defined for every $p, b_1, \ldots, b_k \in \sg{0,1}$ by: 
$$U_{(p,b_1, \ldots, b_k)} := \sg{x\in U~:~\forall i \in [k], |N(x)\cap W_i| = b_i \pmod 2 \text{ and } \mathrm{deg}_U(x) = p \pmod 2}.$$

In other words, we obtain $\mathcal P'_i$ from $\mathcal{P}_i$ by grouping together vertices from each $U\in \mathcal{P}_i$ according to the parity of their degree in $U$ and in each of the red neighbors of $U$ in $G_i$.
Note that some of the sets $U_{(p,b_1, \ldots, b_k)}$ may be empty. Nevertheless, at the end of the contraction sequence, $\mathcal{P}_1 = \sg{V(G)}$, hence $|\mathcal P'_1|\leq 2$, as it simply corresponds to the $2$-partition of $V(G)$ between vertices of even and odd degree. Thus to get a complete contraction sequence, we can eventually complete arbitrarily the sequence $\mathcal P'_n, \ldots, \mathcal P'_1$ into a contraction sequence of $G$.

\begin{lemma}
\label{lem: refinement}
 If $\mathcal{P}_n, \ldots, \mathcal{P}_1$ is the sequence of partitions of $V(G)$ associated to a $d$-sequence $G_n, \ldots, G_1$ of $G$, then the refinement $\mathcal P'_n, \ldots, \mathcal P'_1$ described above can be arbitrarily completed into a $D$-sequence for $G^{[2]}$, where $D:= (d^2 + d +1)2^{d+1} - 1$. 

\end{lemma}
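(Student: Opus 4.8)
The plan is to show that the refined sequence $\mathcal P'_n,\ldots,\mathcal P'_1$ is a valid contraction sequence of $G$ whose red degrees, measured with respect to $G^{[2]}$, are bounded by $D$. First I would verify that $\mathcal P'_i$ is indeed a refinement of $\mathcal P_i$ and that the sequence can be interpolated to an honest contraction sequence: between $\mathcal P'_{i+1}$ and $\mathcal P'_i$ only parts that were already grouped inside a single part of $\mathcal P_i$ get merged, so one can perform these merges one contraction at a time. Since $|\mathcal P'_1|\le 2$ and the last merge is legitimate, arbitrary completion gives a full contraction sequence. The real content is the red-degree bound.

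The key observation is the following locality principle for $G^{[2]}$: whether two parts $U,U'\in\mathcal P'_i$ are homogeneously or heterogeneously connected in $G^{[2]}$ is governed by parities of $|N(x)\cap W|$ for $W$ a part of $\mathcal P_i$. Concretely, for $x\in U$ and $y\in U'$, the adjacency of $x,y$ in $G^{[2]}$ is the parity of $|N(x)\cap N(y)|$, which equals $\sum_{W\in\mathcal P_i}|N(x)\cap N(y)\cap W|$. For a part $W$ that is a \emph{black} neighbor of $U$ in $G_i$, every $x\in U$ has either $W\subseteq N(x)$ or $W\cap N(x)=\emptyset$; same for $U'$; so the contribution $|N(x)\cap N(y)\cap W|$ is either $0$ or $|W|$, depending only on $U,U',W$ (not on $x,y$), except we must be careful when $W\in\{U,U'\}$ or $W$ is black to exactly one of them — here the degree-parity coordinate $p$ in the refinement handles the $W=U$ and $W=U'$ cases, since $\deg_U(x)\bmod 2 = |N(x)\cap U|\bmod 2$. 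For a part $W$ that is a \emph{red} neighbor of $U$ (there are at most $d$ such $W$), the value $|N(x)\cap W|\bmod 2$ is constant on $U$ by the very definition of $U_{(p,b_1,\ldots,b_k)}$, and likewise for $U'$ and its red neighbors; a short parity computation shows $|N(x)\cap N(y)\cap W|\bmod 2$ is then determined by the $b$-coordinates of the parts of $U$ and $U'$ together with $|W|\bmod 2$. Summing, $|N(x)\cap N(y)|\bmod 2$ depends only on the pair $(U,U')$ unless $W$ is a red neighbor of $U$ \emph{or} of $U'$, or $W\in\{U,U'\}$. Hence $U$ has a red edge in $G^{[2]}_i$ only to parts $U'$ which either share such a ``witnessing'' $W$, or are themselves among the $\le d$ red neighbors of $U$, or whose red neighbors include $U$.

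Counting these: $U$ has at most $d$ red neighbors $W_1,\ldots,W_k$ in $G_i$, and each $W_j\in\mathcal P_i$ is split into at most $2^{d+1}$ parts of $\mathcal P'_i$; each such $W_j$ is, in turn, a red neighbor of at most $d$ parts of $\mathcal P_i$, each again split into $\le 2^{d+1}$ parts of $\mathcal P'_i$. This gives at most $d\cdot d\cdot 2^{d+1}$ ``second-order'' parts, plus at most $d\cdot 2^{d+1}$ for $W_j$ itself, plus the $\le 2^{d+1}$ parts coming from $U=W$ and symmetrically a bounded extra term, totalling at most $(d^2+d+1)2^{d+1}$ parts, so the red degree is at most $(d^2+d+1)2^{d+1}-1=D$, as claimed. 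I would also note that any part $U'$ that is a red neighbor of $U$ \emph{in $G_i$} needs to be counted too, but it is subsumed: $U$ is a red neighbor of $U'$, so $U'$ falls in the ``$W\in\{U\}$'' clause.

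I expect the main obstacle to be the bookkeeping in the parity computation of $|N(x)\cap N(y)\cap W|\bmod 2$ when $W$ ranges over the black neighbors and the two ``diagonal'' parts $U,U'$: one must check that grouping by $\deg_U(x)\bmod 2$ (resp.\ $\deg_{U'}(y)\bmod 2$) genuinely absorbs all the $x,y$-dependence coming from $W\in\{U,U'\}$, and that the black parts contribute a quantity independent of $x,y$ — this uses that $G_i$ has a \emph{black} edge $UW$ exactly when the connection is homogeneous. The other delicate point is making sure the ``second-order neighborhood'' count is tight enough to land inside the stated $D$, i.e.\ that no part outside the described collection can be a red neighbor of $U$ in $G^{[2]}_i$; this is exactly the locality principle above and should be stated as a self-contained claim before doing the count.
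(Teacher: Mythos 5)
Your plan follows the paper's proof almost step for step: first check that the refined partition sequence is a genuine (coarsening) sequence of partitions, then prove a locality claim --- a refined part of $U$ can only be red-adjacent in $G^{[2]}_i$ to refined parts of $U$ itself or of parts at red distance at most $2$ from $U$ in $G_i$ --- and finally count $(2^{d+1}-1)+d\,2^{d+1}+d^2\,2^{d+1}=D$. Two points in your sketch need repair, though neither changes the architecture. First, the sentence claiming that for a red neighbor $W$ of $U$ ``a short parity computation shows $|N(x)\cap N(y)\cap W|\bmod 2$ is determined by the $b$-coordinates together with $|W|\bmod 2$'' is false precisely when $W$ is a red neighbor of \emph{both} $U$ and $U'$: knowing $|N(x)\cap W|$ and $|N(y)\cap W|$ modulo $2$ does not determine $|N(x)\cap N(y)\cap W|$ modulo $2$ (take $|W|=3$ with the two traces being the same singleton versus two distinct singletons). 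This failure is exactly what forces the distance-$2$ parts into the red neighborhood and produces the $d^2 2^{d+1}$ term; your subsequent ``unless'' clause and your count do carve this case out, but as written that clause contradicts the preceding sentence, and the determinacy statement must be restricted to $W$ red to exactly one of $U,U'$ and homogeneous to the other, where the contribution is $b_j$, $b'_j$, or $0$. Second, the interpolation step is not automatic: one must show that each part $U_{(p,b_1,\ldots,b_k)}$ of $\mathcal P'_i$ lies inside a single part $Z_{(p',b'_1,\ldots,b'_l)}$ of $\mathcal P'_{i-1}$, which requires computing the new coordinates from the old ones (e.g.\ when $U$ merges with $V$ and $UV$ is black, $p'=p+|V|\bmod 2$; when $UV$ is red with $V=W_j$, $p'=p+b_j$; when a part homogeneous to $U$ becomes a red neighbor of the merged part, the new $b'$ is $|W|\bmod 2$ or $0$). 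The paper devotes about half of its proof to this verification, and your one-line justification (``only parts already grouped inside a single part of $\mathcal P_i$ get merged'') does not yet establish it, since the coordinates are defined relative to changing red neighborhoods.
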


\begin{proof}
 We first check that the sequence $\mathcal P'_n, \ldots, \mathcal P'_1$ can be completed into a contraction sequence.
 In other words, we show that for every $i\in [2,n]$ and for every $W_{(p,b_1,\ldots,b_k)}\in \mathcal P'_i$, there exists some $W'_{(p',b'_1,\ldots, b'_l)}\in \mathcal P'_{i-1}$ such that $W_{(p,b_1,\ldots,b_k)}\subseteq W'_{(p',b'_1,\ldots,b'_l)}$. 
 
 Assume that between $G_i$ and $G_{i-1}$, the contraction described by $(U,V,Z)$ is done, with $U,V\in \mathcal{P}_i$ and $Z:=U\cup V\in \mathcal{P}'_{i-1}$. 
 We let $W_1, \ldots, W_k$ and $W'_1, \ldots, W'_{k'}$ be respectively the red neighbors of $U$ and $V$ in $G_i$, with $k\leq d$ and $k'\leq d$.  
 Observe that the red neighbors of $Z$ in $G_{i-1}$ are the $W_j$s and the $W'_j$s distinct of $U$ and $V$, and possibly some other subsets of vertices from $\mathcal{P}_{i-1}$. We denote all the red neighbors of $Z$ in $G_i$ with $X_1, \ldots, X_l$. Let $p, b_1, \ldots, b_k \in \sg{0,1}$. We show that there exist $p',b'_1, \ldots, b'_l \in \sg{0,1}$ such that $U_{(p,b_1, \ldots, b_k)}\subseteq Z_{(p',b'_1, \ldots, b'_l)}$. For this, for every $j \in [l]$, we let $b'_j := b_r$ if $X_j = W_r$ for some $r\in [k]$. Otherwise if $X_j \neq W_r$ for every $r\in [k]$, then it means that $X_j$ is not a red neighbor of $U$ in $G_i$, and thus we let $b'_j:= 1$ if there is a complete biclique between $U$ and $X_j$ and $|X_j|$ is odd, and
 $b'_j:= 0$ otherwise. It remains to properly define $p'$. If $UV$ forms a black edge in $G_i$, then we simply let $p':= p + |V| \pmod 2$.
 If $UV$ forms a non edge in $G_i$, then we let $p':= p$.
 Finally if $UV$ forms a red edge in $G_i$, then we have $V= W_j$ for some $j \in [k]$.
 In particular, every vertex in $U_{(p,b_1, \ldots, b_k)}$ is adjacent to $b_j$ vertices of $V$ modulo $2$.
 Hence every vertex of $U_{(p,b_1, \ldots, b_k)}$ is adjacent modulo $2$ to
 $p' := p+b_j \pmod 2$ 
 other vertices of $Z=U\cup V$.
 By setting this value for $p'$ we get the desired inclusion $U_{(p,b_1, \ldots, b_k)}\subseteq Z_{(p',b'_1, \ldots, b'_l)}$.
 
 To prove that that the $\mathcal P'_i$s can be completed into a contraction sequence, it remains to show that if $W\in \mathcal{P}_i\backslash\sg{U,V}$ has red neighbors $X_1, \ldots, X_k$ in $G_i$ with $k \leq d$ and red neighbors $Y_1, \ldots, Y_l$ in $G_{i-1}$ with $l \leq d$, then for every $p, b_1, \ldots, b_k\in \sg{0,1}$ there is $p', b'_1, \ldots, b'_l\in \sg{0,1}$ such that $W_{(p,b_1, \ldots, b_k)}\subseteq W_{(p',b'_1, \ldots, b'_l)}$. For this observe that the red neighbors of $W$ in $G_{i-1}$ are exactly the red neighbors of $W$ in $G_{i}$ that are distinct of $U$ and $V$ plus possibly the set~$Z$. Thus for every $j \in [l]$, if $Y_j=X_i$ for some $i \in [k]$ we let $b'_j :=b_i$.
 Moreover, we let $p':=p$. If there exists some $j \in [k]$ such that $Y_j$ is not one of the $X_i$s, then we must have $Y_j=Z$. 
 Observe that by definition of $W_{(p, b_1, \ldots, b_k)}$, every vertex of $W_{(p, b_1, \ldots, b_k)}$ has the same number $a$ of neighbors modulo $2$ in $U$. Similarly, every vertex of $W_{(p, b_1, \ldots, b_k)}$ has the same number $b$ of neighbors modulo $2$ in $V$. Hence if we let $b'_j :=  (a+b) \pmod 2$, we have the desired inclusion: $W_{(p, b_1, \ldots, b_k)}\subseteq W_{(p', b'_1, \ldots, b'_l)}$. Thus we proved that the sequence $\mathcal P'_n, \ldots, \mathcal P'_1$ can be completed into a contraction sequence.

 Now we need to show that this sequence is a $D$-contraction sequence for $G^{[2]}$. We let $G^{[2]}_i$ denote the trigraph obtained from $G^{[2]}$ when the vertex set is contracted according to the partition $\mathcal P'_i$.  
 Let $i \in [n]$ and $U \in \mathcal{P}_i$. We show how to bound the red degree of every $U_{(p, b_1, \ldots, b_k)}$ in $G^{[2]}_i$. Let $W_1, \ldots, W_k$ be the red neighbors of $U$ in $G_i$, with $k\leq d$. Let $p, b_1, \ldots, b_k\in \sg{0,1}$. 
 \begin{claim}
  \label{clm: redeg}
    The only red neighbors of $U_{(p, b_1, \ldots, b_k)}$ in $G^{[2]}_i$ are either other subsets of $U$, or subsets of some sets $V\in \mathcal{P}_i$ such that the distance between $U$ and $V$ in the red graph of $G_i$ is at most $2$.
 \end{claim}
\begin{proofofclaim}
 Let $V\in \mathcal{P}_i\backslash \sg{U}$ be such that every red path from $U$ to $V$ in $G_i$ has length at least $3$. Let $W'_1, \ldots, W'_{k'}$ be the red neighbors of $V$ in $G_i$ with $k'\leq d$, and let $p', b'_1, \ldots, b'_{k'}\in \sg{0,1}$. Our goal is to show that $U_{(p, b_1, \ldots, b_k)} V_{(p', b'_1, \ldots, b'_{k'})}$ is not a~red edge in $G^{[2]}_i$.
 In other words, we will show that for every $u \in  U_{(p, b_1, \ldots, b_k)}$ and $v \in V_{(p', b'_1, \ldots, b'_{k'})}$, the value $|N(u) \cap N(v)| \pmod 2$ does not depend on the choice of $u$ and $v$. 
 Observe first that by hypothesis, $UV$ is either a black edge or a non-edge in $G_i$. We may assume that we are in the first case, as the second one is easier to deal with. Let $u \in  U_{(p, b_1, \ldots, b_k)}$ and $v \in V_{(p', b'_1, \ldots, b'_{k'})}$. As $\mathrm{deg}_V(v) = p' \pmod 2$, we have: 
 $$|N(u)\cap N(v) \cap V| = p' -1 \pmod 2.$$ 
 By symmetry we also have: 
 $$|N(u) \cap N(v) \cap U| = p-1 \pmod 2.$$
 It remains to count the number of common vertices of $u$ and $v$ which are neither in $U$ nor in~$V$.
 There are three cases. 
 
 First if $W \in \mathcal{P}_i \backslash \sg{U,V}$ is such that both $UW$ and $WV$ form black edges of $G_i$, then we have: 
 $$|N(u) \cap N(v) \cap W| = |W|.$$
 
 Assume now that $W\in \mathcal{P}_i \backslash \sg{U,V}$ is such that $UW$ is a red edge in $G_i$ and $WV$ is a black edge in $G_i$.
 Then $W = W_j$ for some $j \in [k]$.
 In particular, by definition of $U_{(p, b_1, \ldots, b_k)}$, $u$ has $b_j$ neighbors in $W$ modulo $2$, and each of them is a neighbor of $v$. Thus we have: 
 $$|N(u) \cap N(v) \cap W_j| = b_j \pmod 2.$$
 
 The last kind of common neighbors of $u$ and $v$ we can find are vertices from some $W\in \mathcal{P}_i \backslash \sg{U,V}$ where $UW$ is a black edge in $G_i$ and $WV$ is a red edge in $G_i$.
 By the same symmetric reasoning, we get that:
 $$|N(u) \cap N(v) \cap W'_j| = b'_j \pmod 2.$$
 
 Thus we have: 
 $$|N(u)\cap N(v)| = \left(p + p' + \sum_{j=1}^k b_j + \sum_{j=1}^{k'} b'_j\right) \pmod 2,$$
 
 a value that does not depend on the choice of $u$ and $v$. This is equivalent to say that $U_{(p, b_1, \ldots, b_k)} V_{(p', b'_1, \ldots, b'_{k'})}$ does not form a red edge in $G^{[2]}_i$.  
\end{proofofclaim}
 
 By~\cref{clm: redeg} we get that the only red neighbors of $U_{(p, b_1, \ldots, b_k)}$ are either other subsets of~$U$, or subsets of some $V \in \mathcal{P}_i \backslash \sg{U}$ such that $UV$ is a red edge in $G_i$, or subsets of some $V \in \mathcal{P}_i \backslash \sg{U}$ such that there exists some $W \in \mathcal{P}_i \backslash \sg{U,V}$ so that both $UW$ and $VW$ form red edges in $G_i$. In total this gives us at most:

 $$2^{d+1}-1 + d2^{d+1} + d^2 2^{d+1} 
 =(d^2 + d +1)2^{d+1} - 1$$
 red neighbors of $U$ in $G^{[2]}_i$, which is the desired result.   
 \end{proof}

\begin{remark}
\label{rem: G2}
 Observe that if we replace the condition '$|N(x)\cap W_i| \text{ is odd}$' by '$|N(x)\cap W_i| \geq 1$' in the definition of the refinement $\mathcal P'_n, \ldots, \mathcal P'_1$, then using similar arguments than the one from the previous proof, we get a $\Oo(d^22^d)$-sequence for $G^2$. 
 \end{remark}
 
\paragraph*{Computation of $\mathcal{T'}$}
 
Assume that we are given a twin-decomposition $(\mathcal T, \mathcal B)$ of $G$, associated to a $d$-sequence $G_n, \ldots, G_1$ of $G$ with associated partition sequence $\mathcal{P}_n, \ldots, \mathcal{P}_1$. We describe a way to compute a tree $\mathcal{T'}$ corresponding to a $D$-contraction sequence of $G^{[2]}$ of the form described in~\cref{lem: refinement}, with $D:=(d^2+d+1)2^{d+1}-1$. The tree $\mathcal{T'}$ will be constructed in $n-1$ steps, and step $i$ (for $i$ going from $n-1$ down to $1$) is based on $G_i$ (that is computed on the fly).
We give in \cref{fig:tree} an example of the construction of $\mathcal{T'}$ for the graph of~\cref{fig:G2}. 
 

We assume that $\mathcal B$ is lifted up.
Recall this is always possible up to a preprocessing of time $\Oo(dn)$ by~\cref{rem: high}. 
In time $\mathcal{O}(n)$, we compute for each node $U\in V(\mathcal{T})$ a parity bit $s_U \in \sg{0,1}$ such that $s_U = |U|\pmod 2$ (recall that $U$ is naturally identified to a subset of $V(G)$).


Recall that by \cref{lem: listeadj}, we can dynamically compute in time $\Oo(dn)$ the lists $L_U^i$ of the red neighborhoods of vertices $U\in V(G_i)$. 
Though we present it separately, the algorithm computing $\mathcal T'$ and the the set of partial edges $\mathcal{B}_1$ will be dynamic, and all the computations in step $i$ are done at the same time, so we assume that at step $i$ that we always know the lists $L^i_U$ and the parity bits $s_U$.

Our goal is now to find a way to compute the refinement we described previously.
For this we define for every $i\in [2,n]$, every $U\in V(G_i)$ such that $L^i_U=\sg{W_1, \ldots, W_k}$ and every $p, b_1, \ldots, b_k\in \sg{0,1}$ the at most $2^{d+1}$ pointers $q^i[U,p,b_1,\ldots,b_k]$ by:
$$q^i[U,p,b_1,\ldots,b_k]:=(V,p', b_1', \ldots, b_l'),$$  
where $(v,p', b_1', \ldots, b_l')$ is the (unique, if it exists) tuple such that 
$U_{(p, b_1, \ldots, b_k)}\subseteq V_{(p',b_1', \ldots, b_l')}$, with $U\in V(G_i)$ and $V\in V(G_{i-1})$. Observe that $U_{(p, b_1, \ldots, b_k)}$ may be possibly empty, in which case we can set $q^i[U,p,b_1,\ldots,b_k]:=\emptyset$.

We show that these pointers can be dynamically computed: 
\begin{lemma}
 \label{lem: pointers}
 The values $q^i[U,p,b_1,\ldots,b_k]$ can be dynamically computed for decreasing values of $i$ from $n$ down to 2. Moreover, each step takes time $\mathcal O(d 2^d)$.
\end{lemma}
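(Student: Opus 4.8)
The plan is to produce $q^i[\cdot]$ at step $i$ by transcribing, into explicit data‑structure operations, exactly the case analysis carried out in the proof of \cref{lem: refinement}. We run this computation in lockstep with the dynamic maintenance of red‑neighbour lists from \cref{lem: listeadj}: at the moment we treat the contraction $(U,V,Z)$ that takes $G_i$ to $G_{i-1}$ — whose triple we read off $\mathcal{T}$ (cf.\ \cref{obs:twin-dec-to-seq}) — we have at hand $L^i_W$ for $W\in V(G_i)$ and $L^{i-1}_W$ for $W\in V(G_{i-1})$, each of length at most $d$. We precompute once, in time $\Oo(n)$ by a bottom‑up pass on $\mathcal{T}$, the parity bits $s_W=\card{W}\bmod 2$ ($s_W=1$ at a leaf, $s_Z=s_U+s_V\bmod 2$ at an internal node), and we also keep available the edge type (black, red, or non‑edge) in $G_i$ of the $\Oo(d)$ pairs in $\{U,V,Z\}\times L^{i-1}_Z$, which comes for free from the on‑the‑fly reconstruction of $G_i$ (or may be queried from $(\mathcal{T},\mathcal{B})$).

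The key structural observation is that, by the containment established in the proof of \cref{lem: refinement}, $\mathcal P'_{i-1}$ is coarser than $\mathcal P'_i$, so each part $W_{(p,b_1,\ldots,b_k)}\in\mathcal P'_i$ sits inside a unique part of $\mathcal P'_{i-1}$; moreover a base part $W\in V(G_i)$ with $W\notin\{U,V\}$ and $U,V,Z\notin L^i_W\cup L^{i-1}_W$ survives in $G_{i-1}$ with the same red‑neighbour list, so there $q^i[W,p,b_1,\ldots,b_k]=(W,p,b_1,\ldots,b_k)$ and nothing is stored. The only base parts with a non‑trivial image are therefore $U$, $V$, and the red neighbours of $Z$ in $G_{i-1}$ — this set already contains every part that had $U$ or $V$ as a red neighbour in $G_i$ (contracting $U,V$ keeps such a part red‑adjacent to $Z$), and a part that is merely pure‑adjacent to $U$ and to $V$ keeps the then‑constant parity $\card{N(x)\cap U}+\card{N(x)\cap V}$, so it is not subdivided by the new neighbour — and since $G_n,\ldots,G_1$ is a $d$‑sequence there are at most $3d+2=\Oo(d)$ of them. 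For each such base part $W$, with $L^i_W=\{W_1,\ldots,W_k\}$ and image $W'\in\{W,Z\}$, $L^{i-1}_{W'}=\{X_1,\ldots,X_l\}$, we compute in time $\Oo(d)$ a succinct relabelling rule: the target base part $W'$; for each $j\in[l]$ either an index $r$ with $b'_j:=b_r$ (when $X_j$ equals $W_r$, or $W'=W$ and $X_j=Z$ ``carries'' a unique old red neighbour of $W$), an index pair $\{r,r'\}$ with $b'_j:=b_r+b_{r'}\bmod 2$ (when $X_j=Z$ merges the two old red neighbours $U,V$ of $W$), or a constant bit read off from $s_{X_j}$ and the stored edge types (when $X_j$ is pure‑adjacent to $W$ in $G_i$); and a recipe for $p'$, namely $p'=p+c$ for an explicit $c\in\{0,\card{V}\bmod 2\}$ if $UV$ is black or a non‑edge, or $p'=p+b_t\bmod 2$ with $W_t=V$ if $UV$ is red. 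These are precisely the assignments of the proof of \cref{lem: refinement}, so applying the rule of $W$ to any tuple $(p,b_1,\ldots,b_k)$ returns exactly $q^i[W,p,b_1,\ldots,b_k]$.

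Accounting: at step $i$ we build $\Oo(d)$ rules, each of size and cost $\Oo(d)$, hence $\Oo(d^2)=\Oo(d2^d)$ time per step; a single value $q^i[W,p,b_1,\ldots,b_k]$ is then recovered in $\Oo(d)$ time by applying $W$'s rule, and equals $(W,p,b_1,\ldots,b_k)$ whenever $W$ is not among the $\Oo(d)$ affected base parts. Over all $i$ this is $\Oo(d^2n)$ total time, and since each step introduces at most $2^{d+1}$ new $\mathcal P'_{i-1}$‑parts it also contributes $\Oo(2^dn)$ nodes towards $\mathcal{T'}$. The only genuinely delicate point is the correctness of the relabelling rule: matching the old red neighbours $W_1,\ldots,W_k$ of $W$ to the new ones $X_1,\ldots,X_l$ while tracking whether $Z$ plays the role of one, two, or no old red neighbour of $W$, and the term $p'$, which when $UV$ is red is governed by the coordinate $b_t$ with $W_t=V$ rather than by a positional rule, so the rule truly depends on where $V$ sits in $L^i_U$. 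Both are faithful transcriptions of the proof of \cref{lem: refinement} using only the maintained lists, the parity bits $s_W$, and $\Oo(d)$ local edge‑type lookups per step; no new combinatorial difficulty arises.
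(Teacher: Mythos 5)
Your overall strategy coincides with the paper's: run the pointer updates in lockstep with the dynamic maintenance of the red-neighbour lists of \cref{lem: listeadj}, observe that the only base parts whose pointers change are $U$, $V$ and the (at most $d$) red neighbours of $Z$ in $G_{i-1}$, and derive each new tuple $(p',b'_1,\ldots,b'_l)$ from the old one using only the edge types around the contraction (readable from the lifted-up $\bb$ and the lists $L^i$) and the precomputed parities $s_W$. The only representational difference is that the paper materializes all $\Oo(d2^d)$ affected pointers at each step, whereas you store an $\Oo(d)$-size rule per affected part and apply it lazily; both fit the claimed bound and the downstream construction of $\tree'$.

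There is, however, one case missing from your taxonomy of relabelling rules, and it does occur. Take $W\notin\{U,V\}$ with $VW\in R(G_i)$ and $UW\in E(G_i)$ (red to one of the two contracted parts, black to the other). Then $Z$ ``carries'' the unique old red neighbour $V=W_r$ of $W$, and your rule assigns $b'_j:=b_r$. But for $x\in W$ one has $|N(x)\cap Z|=|N(x)\cap V|+|U|$, so the correct bit is $b_r+s_U\bmod 2$; none of your three rule types (copy one coordinate, sum two coordinates, constant) expresses ``coordinate plus a nonzero constant.'' The paper's proof treats exactly this subcase and sets the bit to $b_k+s_U$. The fix is immediate — allow each $b'_j$ to be an affine expression $b_r+c$ or $b_r+b_{r'}+c$ with $c$ determined by $s_U$, $s_V$ and the edge types, which is what the assignments in the proof of \cref{lem: refinement} actually produce — and nothing else in your argument or complexity accounting changes; but as literally stated your rule would compute wrong pointers for these parts.
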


Here, \cref{lem: pointers} has to be understood as \cref{lem: listeadj}: we are able to visit the nodes of $\mathcal T$ for decreasing values of their labels $i\in [n-1]$, and update accordingly $\mathcal{O}(n)$ pointers $q[U,p,b_1, \ldots, b_k]$ such that at step $i$, the value of $q[U,p,b_1, \ldots, b_k]$ is exactly $q^i[U,p,b_1, \ldots, b_k]$ for every $U \in V(G_i)$ and $p,b_1, \ldots, b_k\in \sg{0,1}$. Hence we always keep only a linear number of values $q[U,p,b_1,\ldots,b_k]$, and show that, at each step, we only change a bounded number of them (namely $\Oo(d2^d)$).

\begin{proof}
 As we will define $q^{i-1}$ from $q^{i}$, we will also need to define for the initialization  $q^{n+1}$. For this, as
 $G_n=G$, $\mathcal{P}_n$ is the partition of $V(G)$ into singletons and there are no red edges, we set for every $v\in V(G)$ and $p\in \sg{0,1}$:
 $$q^{n+1}[\sg{v},p]:=\begin{cases}
	(\sg{v},0)&\text{if $p=0$}\\
	\emptyset&\text{if $p=1$}	
\end{cases} $$
This will help us in what follows to only deal with nonempty sets $U_{(p,b_1,\ldots,b_k)}$, as the empty ones will be the ones having no antecedent by $q^i$.

Now let $i\in [2, n]$ and assume that the contraction between $G_i$ and $G_{i-1}$ is $(U,V,Z)$, with $U,V\in V(G_i)$ and $Z=U\cup V\in V(G_{i-1})$.
Let $W\in V(G_i)$. We assume that $L^i_W=\sg{W_1, \ldots, W_k}$ is known. 
If $W\notin\sg{U,V}$, then we distinguish four different cases. 
If $WU\in E(G_i)$ and $WV\in E(G_i)$ or if $WU\notin E(G_i)\cup R(G_i)$ and $WV\notin E(G_i)\cup R(G_i)$, then the adjacencies (or non-adjacencies) of $W$ in $G_{i+1}$ are the same as in $G_i$, thus for every $p, b_1,\ldots,b_k \in \sg{0,1}$, we set $q^{i-1}[W,p, b_1,\ldots,b_k]:=q^i[W,p, b_1,\ldots,b_k]$. 
In fact this implies that the only pointers we will update are those related to vertices of $V(G_{i-1})$ which are either linked by a red edge to $U$ or $V$, or $U$ and $V$ themselves.

If $WU\in E(G_i)$ and $WV\notin E(G_i)\cup R(G_i)$, then we have $L^{i-1}_W=L^i_W\cup \sg{Z}=\sg{W_1, \ldots, W_k, W_{k+1}:=Z}$, so we set for every $p, b_1,\ldots,b_k \in \sg{0,1}$: 
$$q^i[W,p, b_1,\ldots,b_k]:=
	(W, p, b_1,\ldots,b_k,s_U).$$
By hypothesis on the placement of the transversal edges of $\mathcal B$, we can detect this case, which happens when there is an edge of $\mathcal B$ between the node of $\mathcal T$ corresponding to $U$ and $W$, and where $V$ is neither in $L^i_W$, nor linked to $W$ by an edge of $\mathcal B$.
Of course the case where $UW\notin E(G_i)$ and $UV\in E(G_i)$ is symmetric.  

If $UW\notin E(G_i)\cup R(G_i)$ and $UV\in R(G_i)$, then assume without loss of generality that $W_k=V$ and let $W'_k:=Z$.
$L^{i-1}_W=(L^i_W\backslash\sg{V})\cup \sg{Z}$ and we set for every $p, b_1,\ldots,b_k \in \sg{0,1}$:
$$q^i[W,p, b_1,\ldots,b_k]:=(W,p, b_1,\ldots, b_k).$$
 
The case when $UW\notin E(G_i)\cup R(G_i)$ and $UV\in R(G_i)$ is similar, as we also have $L^{i-1}_W=(L^i_W\backslash\sg{W_k})\cup \sg{W'_k}$, so if again we assume that $W_k=V$, we just need to set:
$$q^i[W,p, b_1,\ldots,b_k]:=(W,p, b_1,\ldots, b_k+p_U).$$

The last case satisfying $W\notin \sg{U,V}$ is when we have both $UW\in R(G_i)$ and $VW\in R(G_i)$. Assume for simplicity that $W_{k-1}=U$ and $W_k=V$, and set $W'_{k-1}:=Z$. Then we have $L_W^{i-1}=(L_W^i\backslash \sg{W_{k-1}, W_k})\cup \sg{W'_{k-1}}$ so we let: 
$$q^i[W,p, b_1,\ldots, b_{k-2},b_{k-1} ,b_k]:=(W,p, b_1,\ldots, b_{k-2}, b_{k-1}+b_k).$$

Again, all these configurations are easy to detect thanks to our assumptions on $\mathcal B$ and the fact we know $L^i_W$.

Assume now that $W=U$, and let $L_U^i=\sg{W_1, \ldots, W_k}$ and $p, b_1, \ldots, b_k\in \sg{0,1}$. We explain now how to compute the value $q[U, p, b_1, \ldots, b_k]$.
Observe first that if a vertex of $U$ has degree $p$ in $U$ modulo $2$, then we can compute $p'$, its degree modulo $2$ in $Z=U\cup V$ by setting: 
$$p'=\begin{cases}
	p&\text{if $UV \notin E(G_i)\cup R(G_i)$}\\
	p+s_V&\text{if $UV\in E(G_i)$}\\
	p+b_l &\text{if $UV\in R(G_i)$ and $V=W_l$}\\
\end{cases}.$$
We now describe the different types of red adjacencies that can exist between $Z$ and other vertices in $G_{i-1}$, and show for each of them how the associated parity bit $b'$ can be computed. More precisely, if $u\in U_{(p, b_1, \ldots, b_k)}$, we show how to compute $b':=deg_Z(u)$. 
Let $Y\in V(G_{i-1})$ be a red neighbor of $Z$ in $G_i$. Then $Y$ can be of five different types, and we show how to compute the red parity $b'$ modulo $2$ of the different vertices of $U\subseteq Z$ in $G_{i-1}$ according to the one they had in $G_i$.
First, observe that if $Y$ was already a red neighbor of $U$ in $G_i$, i.e., if $Y=W_j\in L_U^i \cap L_Z^{i-1}$ for some $j$, then we simply let $b':=b_j$.
Assume now that $Y\in L_V^i\backslash L_U^i$. Then either there is no edge between $Y$ and $U$ in $G_i$, in which case we let $b':=0$, or $YU\in E(G_i)$ and we let $b':=s_Y$. The last two cases are when a red edge is created between $Z$ and $Y$ in $G_{i-1}$, but $Y$ is neither a red neighbor of $U$ nor of $V$. Then again either $UY\notin E(G_i)$ and $VY\in E(G_i)$, so we let $b':=0$, or $UY\in E(G_i)$ and $VY\notin E(G_i)$ so we let $b':=s_Y$. 

We can identify all these cases as we know $L_U^i$, $L_V^i$, $L_Z^{i-1}$, and by the hypotheses we put on $\mathcal B$.
Now for the complexity at step $i$, as the only $W\in V(G_i)\backslash\sg{U,V}$ such that some $q[W,p,b_1,\ldots,b_k]$ are updated are the ones such that $W$ is a red neighbor of $U$ or $V$, it means that at most $2d2^{d+1}$ pointers $q[W,p,b_1,\ldots,b_k]$ are modified at step $i$. One can check that each of these modifications can be done in time $\Oo_d(1)$, as well as the listing of all such $W$. Hence we deduce the desired overall complexity for the running time at each step.
\end{proof}

Now observe by \cref{lem: listeadj,lem: pointers} that we have everything in hand to compute $\mathcal{T'}$. Indeed, we can compute dynamically the pointers $q[U,p, b_1, \ldots, b_k]$, and keep only the one associated to sets $U_{(p,b_1,\ldots,b_k)}$ that are nonempty. The nodes of $\mathcal{T'}$ will be the set of every nonempty set $U_{(p,b_1,\ldots,b_k)}$ we encounter. One has to take care as the set $U_{(p,b_1,\ldots,b_k)}$ can be equal to some other set $U'_{(p',b'_1 ,\ldots, b'_{k'})}$. However they will correspond to the same node of $\mathcal{T'}$. If at some step $i$ we have $q^i[U,p,b_1,\ldots,b_k]=q^i[p',b'_1 ,\ldots, b'_{k'}]=(Z,p_Z,c_1,\ldots,c_t)$, then we choose the node $Z_{(p_Z,c_1,\ldots,c_t)}$ as the common parent of $U_{(p,b_1,\ldots,b_k)}$ and $U'_{(p',b'_1 ,\ldots, b'_{k'})}$.
To keep a binary tree, one can replace the edges from $Z_{(p_Z,c_1,\ldots,c_t)}$ to its (more than 2) children by a binary tree whose leaves are these children and root is $Z_{(p_Z,c_1,\ldots,c_t)}$.
This creates a bounded number of additional nodes in $\mathcal{T'}$. 

In the end of the last step when $i=1$, we may end with two disjoint rooted trees whose leaves correspond to the nodes of even degree and to the nodes of odd degree.
In this case, we add an additional node that corresponds to $V(G)$ to get a rooted tree.
It remains to label the nodes of $\mathcal{T'}$.
By~\cref{lem: refinement}, one can take any total order which is consistent with the one of the creation of the nodes in the proof (i.e., if $i<j$, we label every node associated to a set that was created in step $i$ with a label smaller than those associated to a set created in step $j$).

\begin{figure}[h!]
  \centering
  \begin{tikzpicture}[scale=0.9]

    \begin{scope}[xshift=0cm, yshift=-1cm, scale=0.7]
      \foreach \i/\l in {1/6,2/7,3/1,4/5,5/2,6/3,7/4}{
        \node[draw,circle,inner sep=0.04cm,minimum size=0.5cm] (l\l) at (\i,0) {$\l$} ;
      }
      \foreach \i/\j/\l in {1.5/1.1/6,3.5/1.1/5,5.5/1.1/4, 4.5/2.2/3, 6.5/3.3/2, 4.5/4.4/1}{
        \node[draw,rectangle,inner sep=0.04cm,minimum size=0.5cm] (v\l) at (\i,\j) {$\l$} ;
      }
     \foreach \i/\j in {6/6,7/6, 1/5,5/5, 2/4,3/4, 4/2}{
       \draw[<-] (l\i) -- (v\j) ;
     }
     \foreach \i/\j in {6/1,5/3,4/3,3/2,2/1}{
       \draw[<-] (v\i) -- (v\j) ;
     }
     \foreach \i/\j in {6/7,1/5}{
       \draw[line width=0.05cm,blue] (l\i) to [bend right] (l\j) ;
     }
     \foreach \i/\j in {6/1,4/4}{
       \draw[line width=0.05cm,blue] (v\i) to [bend left] (l\j) ;
     }
     \draw[line width=0.05cm,blue] (v5) to [bend right] (v4) ;     
     \end{scope}

     \begin{scope}[xshift=6cm, yshift=-1cm]
      \foreach \i/\l in {1/6,2/7,3/1,4/5,5/2,6/3,7/4}{
        \node[draw,circle,inner sep=0.04cm,minimum size=0.5cm] (l\l) at (\i,0) {$\l$} ;
      }
      \foreach \i/\j/\l/\n in {1.5/1.1/$67_{\textcolor{green!60!black}{1}}^{\textcolor{red}{6}}$/1,3.5/1.1/$15_{\textcolor{green!60!black}{1},0}^{\textcolor{red}{5}}$/2,5.5/1.1/$23_{\textcolor{green!60!black}{0}}^{\textcolor{red}{4}}$/3, 3.5/2.2/$1235_{\textcolor{green!60!black}{1},0,0}$/4, 5.5/2.2/$1235_{\textcolor{green!60!black}{0},0,1}$/5, 4.5/3.3/$12345_{\textcolor{green!60!black}{1},0}^{\textcolor{red}{3}}$/6, 7/3.3/$12345_{\textcolor{green!60!black}{0},0}$/7, 3.5/4.4/$1234567_{\textcolor{green!60!black}{0}}^{\textcolor{red}{2}}$/8, 5.5/4.4/$1234567_{\textcolor{green!60!black}{1}}$/9, 4.5/5.5/$r^{\textcolor{red}{1}}$/10}{
        \node[draw,rectangle,inner sep=0.04cm,minimum size=0.5cm] (v\n) at (\i,\j) {\l} ;
      }
     \foreach \i/\j in {6/1,7/1, 1/2,5/2, 2/3,3/3, 4/7}{
       \draw[<-] (l\i) -- (v\j) ;
     }
     \foreach \i/\j in {2/4,3/5,4/6,5/6,1/8,7/8,8/10,6/9,9/10}{
       \draw[<-] (v\i) -- (v\j) ;
     }     
     \end{scope}
     
  \end{tikzpicture}
  \caption{Left: A twin-decomposition of the graph $G$ of \cref{fig:G2}. Right: The tree $\mathcal{T'}$ associated to $(\mathcal T, \mathcal B)$. We subdivided some edges of $\mathcal{T'}$ to make clear how the refinement works. The labels of the nodes of $\mathcal{T'}$ are given by the red exponents. The first bit (in green) of the tuples in subscript correspond to the bits $p$, and the remaining ones to the $b_i$s.}
\label{fig:tree}
\end{figure}
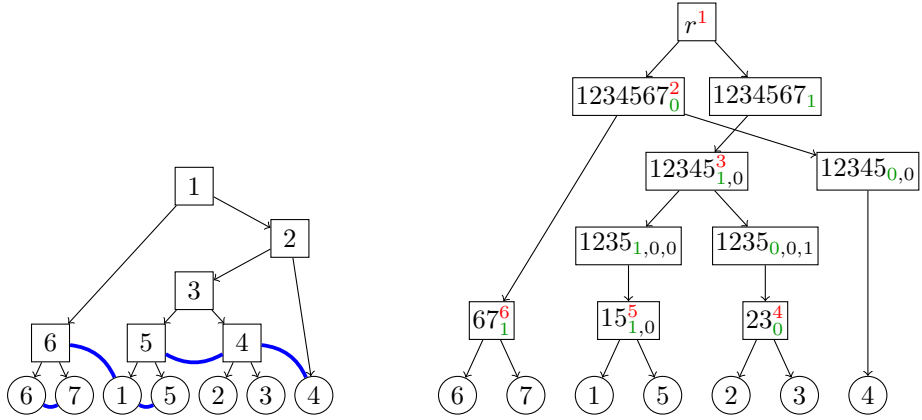

\paragraph*{Computation of the set of labeled transversal edges $\mathcal{B}_1$}

We now compute exactly the adjacencies in $G^{[2]}$.  
As previously announced, we first need to compute a set $\mathcal{B}_1$ of transversal edges of $\mathcal{T'}$ with linear size in $n$, together with for every edge $e=UV\in \mathcal{B}_1$ a parity label $\beta_e\in \sg{0,1}$. 
We will ensure that $\bb_1$ has linear size in $n$.
Moreover we will also compute for each node $U\in V(\tree')$ a parity bit $\alpha_U\in \sg{0,1}$.
In the end, we wish to recover, with $\bb_1$ and the bits $\alpha_U$, all the adjacencies of $G^{[2]}$.
More precisely, we show that for every $u,v\in V(G)$, $uv\in E(G^{[2]})$ if and only if the sum of the labels $\beta_e$ over all edges $e\in \mathcal{B}_1$ between the branches of $\mathcal{T'}$ having $\sg{u}$ and $\sg{v}$ as children plus the sum of the parity bits $\alpha_U$ over all nodes $U$ that are both predecessor of $\sg{u}$ and $\sg{v}$ is odd. 
%

We again visit the nodes of $\mathcal{T'}$ by decreasing values of their labels.
In fact this step can be done at the same time as the previous one (construction of $\mathcal{T'}$), so we assume that whenever the node we are visiting was created at step $i$ of the contraction sequence of $\mathcal{T}$, we have access to the lists of red neighbors $L^i_U$ for every $U\in V(G_i)$ and to the correspondence between the nodes of $\mathcal{T'}$ and the partition of each $U\in V(G_i)$ into $U_{(p,b_1,\ldots,b_k)}$.
We progressively form $\mathcal{B}_1$, starting from the empty set.
In the end, if we denote with $\prec_{\tree'}$ the partial order over the nodes of $\tree'$ defined by $U'\prec_{\tree'}U$ if and only if $U'$ is an ancestor of $U$ in $\tree'$, then the adjacency between every two vertices $u$ and $v$ in $G^{[2]}$ is determined by the parity of the sum:
\begin{equation}\label{prop:b1} 
S_{uv} = \sum_{\substack{W\in V(\tree')\\
W\prec_{\tree'} \sg{u} \\
           W\prec_{\tree'} \sg{v}}}\alpha_W 
+\sum_{\substack{UV\in \mathcal{B}_1 \\
           U\prec_{\tree'} \sg{u} \\
           V\prec_{\tree'} \sg{v}}}\beta_{UV}.
\end{equation}



See \cref{fig:B1} for an illustration of the algorithm explained below.

\begin{lemma}
 \label{lem: partial-edges}
 One can dynamically compute in time $\Oo(d^2 4^dn)$ a set of labeled edges $\mathcal{B}_1$ of size $\Oo(d^2 4^dn)$ and parity bits $\alpha_U$  
 such that \cref{prop:b1} holds.
\end{lemma}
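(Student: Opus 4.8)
The plan is to compute $\mathcal{B}_1$ and the bits $\alpha_U$ while visiting the nodes of $\mathcal T'$ by decreasing labels, in lockstep with the construction of $\mathcal T'$ itself (so that at the moment a node $U_{(p,b_1,\ldots,b_k)}$ of $\mathcal T'$ is created -- say from the contraction $(U,V,Z)$ of the original sequence at step $i$ -- we have at hand the lists $L^i_W$ of \cref{lem: listeadj}, the pointers $q^i[\cdot]$ of \cref{lem: pointers}, and the parity bits $s_W$). The target invariant is exactly \cref{prop:b1}: for every pair $u,v\in V(G)$, the parity of $S_{uv}$ should equal $1$ iff $uv\in E(G^{[2]})$, i.e. iff $|N(u)\cap N(v)|$ is odd. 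The key structural fact we exploit is the case analysis already carried out in the proof of \cref{lem: refinement} (and \cref{clm: redeg}): if $\sg{u}$ ends up in $U_{(p,b_1,\ldots,b_k)}$ and $\sg{v}$ in $V_{(p',b'_1,\ldots,b'_{k'})}$ with $U\neq V$ both alive in $G_i$, then $|N(u)\cap N(v)|$ decomposes, modulo $2$, into contributions coming only from $U$ itself, from $V$ itself, from red neighbors $W_j$ of $U$, and from red neighbors $W'_j$ of $V$; and the value of this sum is $p+p'+\sum_j b_j+\sum_j b'_j$ whenever $U$ and $V$ are at red-distance $\ge 3$, while it can differ by a controlled term (governed by the black/red/non-edge status of $UV$, $UW$, $VW$ for the $\le d+d^2$ relevant $W$) when they are closer. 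The whole point is that all this data is ``local'' at the step where the least common ancestor of $\sg{u}$ and $\sg{v}$ in $\mathcal T'$ is formed.

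Concretely, I would proceed as follows. First, whenever at step $i$ we form the parent $Z_{(p_Z,c_1,\ldots,c_t)}$ of two children $A=U_{(p,\ldots)}$ and $A'=U'_{(p',\ldots)}$ of $\mathcal T'$ that were distinct before (so their LCA is being created now), I record at that node all the "cross" information needed to reconstruct $|N(u)\cap N(v)|\bmod 2$ for $u$ in the $U$-branch and $v$ in the $U'$-branch: this is captured by a bounded set ($\Oo(d^2)$ many) of labeled transversal edges added to $\mathcal B_1$, one per relevant "mediating" original vertex-set $W$ witnessing adjacency, together with the "diagonal" contributions $p$ and $p'$ pushed into the bits $\alpha$ along the two branches. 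The self-contributions $|N(u)\cap N(v)\cap U|=p-1$ and $|N(u)\cap N(v)\cap V|=p'-1$ (from the proof of \cref{clm: redeg}) are handled by setting $\alpha_{U_{(p,\ldots)}}$ to absorb the parity of $p-1$; the cross terms $\sum b_j$, $\sum b'_j$ over red neighbors are encoded by edges of $\mathcal B_1$ from $U_{(p,\ldots)}$ (resp.\ $U'_{(p',\ldots)}$) to the appropriate descendant of $W_j$ (resp.\ $W'_j$) carrying label $b_j$; and the short-distance correction terms (black biclique between $U$ and $W$ with $W$, $|W|$ odd, etc., exactly as enumerated in \cref{lem: refinement}) are extra edges of $\mathcal B_1$ with the corresponding $0/1$ labels, or adjustments to the $\alpha$'s, again only $\Oo(d+d^2)$ of them per created node. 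One must also treat the degenerate but easy sub-case $U=V$ (both leaves $u,v$ in the same $U$ of $\mathcal P_i$), and the final ``artificial root'' node, separately; there the adjacency is forced by the parity structure and needs no cross edges.

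For correctness one checks, by reverse induction on the step $i$ at which the LCA of $\sg u$ and $\sg v$ in $\mathcal T'$ is formed, that the partial sum $S_{uv}$ of \cref{prop:b1}, restricted to nodes/edges at level $\ge i$, already equals $|N(u)\cap N(v)|\bmod 2$ -- the inductive hypothesis supplies the contributions of the ancestors strictly above level $i$ (which are common to $u$ and $v$), and the bounded bookkeeping done at level $i$ supplies precisely the $U$-, $V$-, $W_j$-, $W'_j$-contributions of the case analysis, with no double counting because the transversal-edge constraint on twin-decompositions forbids an edge of $\mathcal B_1$ from being "seen" at two different incomparable borders. For the size and time bounds: $\mathcal T'$ has $\Oo_d(n)$ nodes (it is the tree of a $D$-sequence, $D=\Oo(d^2 2^d)$, refined by a bounded binarization), and at each node we add $\Oo(d^2)$ edges to $\mathcal B_1$; multiplying the $\Oo(d^2)$ per-node work by the $\Oo(2^{d+1})$ pointer-indexed sub-cases handled at each original contraction (exactly as in \cref{lem: pointers}, where $\Oo(d2^d)$ pointers change per step, each in $\Oo_d(1)$ time) gives both $|\mathcal B_1|=\Oo(d^2 4^d n)$ and running time $\Oo(d^2 4^d n)$, as claimed. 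The main obstacle I anticipate is purely organizational rather than conceptual: getting the signs/parities and the "$-1$" self-overlap terms to line up across the several red/black/non-edge sub-cases of \cref{lem: refinement} without off-by-one parity errors, and making sure that the LCA of $\sg u$ and $\sg v$ in the \emph{binarized} $\mathcal T'$ still corresponds to a single original contraction step so that exactly one copy of each cross contribution is recorded; once the invariant is stated precisely as "$S_{uv}$ truncated at level $i$ equals $|N(u)\cap N(v)|\bmod 2$", the induction goes through essentially by unwinding \cref{clm: redeg}.
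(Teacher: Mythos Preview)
Your plan has a genuine gap. You propose to deposit, at the LCA of $\sg u$ and $\sg v$ in $\mathcal T'$, all the information needed to recover $|N(u)\cap N(v)|\bmod 2$, and you claim this is $\Oo(d^2)$ local data (red neighbors of $U$, red neighbors of $V$, the status of $UV$, plus ``short-distance corrections''). But look again at the formula you are invoking from the proof of \cref{clm: redeg}: when $UV$ is, say, a black edge and $U,V$ have no nearby red edges at all, one still has
\[
|N(u)\cap N(v)|\equiv p+p'+\sum_{\substack{W\neq U,V\\ UW,\,VW\text{ black}}}|W|\pmod 2.
\]
That last sum ranges over \emph{all} common black neighbors $W$ of $U$ and $V$ in $G_i$, of which there can be $\Theta(i)$ many. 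It is a single parity bit, so you could in principle store it on one edge of $\mathcal B_1$; the problem is \emph{computing} it. Nothing you maintain (the lists $L^i_W$, the pointers $q^i$, the bits $s_W$) lets you evaluate $\sum_W s_W$ over the common black neighborhood of two arbitrary parts in $\Oo_d(1)$ time, and maintaining it for every pair would cost $\Theta(n^2)$. So the ``compute everything at the LCA'' strategy cannot meet the stated time bound.

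The paper avoids this by a different accounting: instead of settling each pair $u,v$ at their LCA, it tracks each \emph{black edge} of the trigraph sequence and, at the step $i$ where that edge disappears (is merged or turns red), it records the contribution of every $2$-path that uses it. Concretely, at step $i$ with triple $(U,V,Z)$, for each $W$ with $UW\in\mathcal B$ the algorithm adds $\Oo(d\cdot 4^d)$ labeled edges to $\mathcal B_1$ covering paths of the shapes $U\!-\!W\!-\!U$, $W\!-\!U\!-\!W$, $U\!-\!U\!-\!W$, $W\!-\!U\!-\!W'$, $X\textcolor{red}{\sim}U\!-\!W$, $U\!-\!W\textcolor{red}{\sim}X$. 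The correctness argument is then a case analysis showing that for any common neighbor $w$ of $u,v$, the path $u\!-\!w\!-\!v$ is counted exactly once, at the step where the \emph{later}-surviving of the two edges $uw,wv$ disappears. This ``edge-disappearance'' viewpoint is what makes the common-black-neighbor sum disappear from the bookkeeping: each black neighbor $W$ contributes at its own step rather than all at once at the LCA. Your invariant (``$S_{uv}$ truncated at level $\ge i$ already equals $|N(u)\cap N(v)|$'') also cannot hold as stated, since strictly below the LCA no cross edges between the $u$-branch and the $v$-branch exist yet.
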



\begin{proof}
 As mentioned above, we assume that we have access to the red adjacencies in $\mathcal{T}$ through the lists $L_U^i$, as well as the parity bits $s_U:=|U|\pmod 2$ and the correspondence between the nodes of $\mathcal{T'}$ and the associated subsets in the refinement described in~\cref{lem: refinement}.
 We still assume that $\mathcal{B}$ is lifted up.
 Moreover, observe that it is easy to compute for every node $U'$ of $\tree'$ the parity of its cardinality $s'_{U'}:=|U'|\pmod 2$.
 
 Our main goal is to count (modulo $2$) the number of common neighbors between every pair of distinct vertices $u,v\in V(G)$.
 For such a pair, we distinguish different types of common vertices, and show that we take every of them into account either with a bit $\alpha_U$ or with a label $\beta_e$ for some $e\in \mathcal{B}_1$.
 We visit the nodes of $\mathcal{T}$ by decreasing order of their labels.
 We start with $\mathcal{B}_1:=\emptyset$ and for each node $U'\in \tree'$, $\alpha_{U'}:=0$.
 The general idea is that at step $i$, we have a partial knowledge of $G_i$: we know all the red adjacencies, thanks to the lists $L^i_U$, and we know some of the black edges thanks to $\bb$.
 At step $i$, we consider every black edge $e$ of $G_i$ that ``disappears'' in $G_{i-1}$, and compute the contribution of every path of length 2 in $G$ containing at least one of the edges of $e$ (recall that we can see $e\in E(G_i)$ as a set of edges of $E(G)$).
 Doing so, we consider each edge of $G$ only once, which prevents us from double counting.

Assume that the $i$-th contraction of the $d$-sequence described by $\tree$ is corresponds to the triple $(U,V,Z)$ with $U,V\in V(G_i)$ and $Z:=U\cup V\in V(G_{i-1})$. Then for every $W\in V(G_i)$ ($W$ can possibly be $V$) such that $UW\in \bb$, we do the following:

\begin{enumerate}[$(i)$]
 \item\label{it:case1} For every node $U_{(p,b_1, \ldots, b_k)}\subseteq U$, we replace the value of $\alpha_{U_{(p,b_1, \ldots, b_k)}}$ by 
 $\alpha_{U_{(p,b_1, \ldots, b_k)}}+s_W$. Similarly, for every node $W_{(p,b_1,\ldots,b_k)}\subseteq W$ we replace $\alpha_{W_{(p,b_1,\ldots,b_k)}}$ by $\alpha_{W_{(p,b_1,\ldots,b_k)}}+s_U$.

 \item\label{it:case2} We add in $\mathcal{B}_1$ between every two distinct nodes $U_{(p,b_1, \ldots, b_k)}\subseteq U$ and $U_{(p',b'_1, \ldots, b'_k)}\subseteq U$ an edge $e$ labeled by $\beta_e:=s_{W}$. Similarly we add in $\mathcal{B}_1$ between every two distinct nodes $W_{(p,b_1, \ldots, b_k)}$ and $W_{(p,b_1, \ldots, b_k)}$ an edge $e$ labeled by $\beta_e:=s_{U}$.
 (Together with the previous item, this is when we count the contribution of the paths of the form $U-W-U$ or $W-U-W$.)
 
 \item\label{it:case3} We add in $\mathcal{B}_1$ between every two distinct nodes $U_{(p,b_1, \ldots, b_k)}\subseteq Z$ and $W_{(p',b'_1, \ldots, b'_{k'})}\subseteq W$ an edge $e$ labeled by $\beta_e:=p+p' \pmod 2$.
   Informally these edges count the contribution of the paths of the form $U-W-W$ or $U-U-W$.
 
 \item\label{it:case4} For every $W'\in V(G_i)$ with $W'\neq W$ such that $UW' \in \bb$ ($W'$ is possibly equal to $V$), we add in $\bb_1$ between every two distinct nodes $W_{(p,b_1, \ldots, b_k)}\subseteq W$ and $W'_{(p',b'_1, \ldots, b'_{k'})}\subseteq W'$ an edge $e$ labeled by $\beta_e:=s_U$.
   These edges count the contribution of the paths of the form $W-U-W'$.
   (Together with the four previous cases, we counted so far the contribution of every $2$-path that takes only black edges of $G_i$ and ``disappearing'' in $G_{i-1}$.)

 
 \item\label{it:case5} For every $X\in L^i_U$, we add in $\bb_1$ between every two distinct nodes $X_{(p,b_1, \ldots, b_k)}\subseteq X$ and $U_{(p',b'_1, \ldots, b'_{k'})}\subseteq U$ an edge $e$ labeled by $\beta_e:=b_j$, where $j$ is the index of $U$ in $L^i_X$.
   These edges count the contribution of the paths of the form $X\textcolor{red}{\sim}U-W$.

 \item\label{it:case6} For every $X\in L^i_W$, we add in $\bb_1$ between every two distinct nodes $U_{(p,b_1, \ldots, b_k)}\subseteq U$ and $X_{(p',b'_1, \ldots, b'_{k'})}\subseteq X$ an edge $e$ labeled by $\beta_e:=b'_j$, where $j$ is the index of $W$ in $L^i_X$.
   These edges count the contribution of the paths of the form $U-W\textcolor{red}{\sim}X$. 

%
%
%
\end{enumerate}

Note that in this procedure, we can add multiple labeled edges for a same pair of nodes of $\tree'$. In this case, we replace all of them by a single edge whose label is the sum of all their labels.
Observe that at step $i$, we add at most $(1 + (d+1) + d(d+1) + d + d)2^{2d}=\Oo(d^2 4^d)$ labeled edges (the number of candidates for $W$ is at most $d+1$), hence $\mathcal{B}_1$ eventually has linear size in $n$.

We show that when the procedure terminates, the sum $S_{uv}$ written above is odd for every pair of adjacent vertices in $G^{[2]}$, and even for the other pairs. For this, consider a pair $u,v\in V(G)$ of distinct vertices of $G$. Our goal is to show that every $w\in N(u)\cap N(v)$ has a contribution of exactly $1$ in the sum $S_{uv}$.
In other words, we show that every path of length 2 of the form $u-w-v$ is counted exactly once in $S_{uv}$.
We say that the edge $UV$ \emph{disappears at step $i$} if $U, V\in V(G_i)$, $UV\in E(G_i)$ and $U'V'\notin E(G_{i-1})$, where $U', V' \in V(G_{i-1})$ are such that $U\subseteq U'$ and $V\subseteq V'$ (in particular it is the case when $U\cup V\in V(G_{i-1})$). Let $i\in [n]$, $w\in N(u)\cap N(v)$ and $U, V$ and $W$ be the sets of $V(G_i)$ that respectively contain $u,v$ and $w$ (they may be equal).  
Observe that for $i=n$ the edges $UW$ and $WV$ are in $E(G_n)$, and then disappear at some step $i$. Assume without loss of generality that $UW$ disappears after $VW$ does, and take $i$ minimal such that $UW\in E(G_i)$. We distinguish the following disjoint cases:
\begin{itemize}
 \item Assume first that both $UW$ and $VW$ disappear at step $i$, i.e., we also have $VW\in E(G_i)$. 
 \begin{itemize}
 \item Assume that $U=V$. As $UW$ disappears at step $i$, it means that $UV\in \bb$, either because $U$ is merged with some other set, or because $W$ does. T
   hen at step $i$ of our algorithm, we count the paths $u-w-v$ either in \pref{it:case1}, or in~\pref{it:case2} according to whether or not $u$ and $v$ lie in the same node of $\tree'$.
 \item If $U\neq V$, then it means that $W$ is merged with another set in $G_{i-1}$.
   Let $W'\in V(G_{i-1})$ be the set that contains $W$ as a subset.
  Then we have $UW'\in R(G_{i-1})$ and $VW'\in R(G_{i-1})$, and $UW, VW\in \bb$.
  In particular this means that we counted $u-w-v$ in \pref{it:case4} (with $W'$ playing the role of $Z$).
 \end{itemize}
 \item Now assume that $V=W$, i.e., that $VW$ disappeared because $VW$ were merged together.
 \begin{itemize}
  \item Assume that $UW$ disappears because $U$ and $W$ are merged together in $G_{i-1}$. Then the path $u-w-v$ is counted in \pref{it:case2}.
  \item Assume now that $U$ is merged with another set and contained in $U'\in V(G_{i-1})$.
    Then again, we claim that $u-w-v$ is counted in \pref{it:case2}.
    The case when $W$ is merged with another set in $G_{i-1}$ is symmetric. 
 \end{itemize}

 \item Now we assume that $U,V$ and $W$ are pairwise disjoint, and that $VW\in R(G_i)$.
 \begin{itemize}
  \item Assume first that $U$ and $W$ are merged into $Z:=U\cup W$ in $G_{i-1}$. Then as $V$ is a red neighbor of $W$ in $G_i$, observe that the path $u-w-v$ is counted in \pref{it:case6}.
  \item Assume now that $V$ and $W$ are merged into $Z:=V\cup W$ in $G_{i-1}$.
    Then as $V$ is a red neighbor of $W$ in $G_i$, observe that the path $u-w-v$ is counted in \pref{it:case1}.
  \item If instead $U$ and $V$ are merged into $Z:=U\cup V$ in $G_{i-1}$.
  Then again, the path $u-w-v$ is counted in \pref{it:case6}. 
  \item Eventually assume that the sets containing $u,v$ and $w$ are still disjoint in $G_{i-1}$, and call them respectively $U', V'$ and $W'$. Then $U'W', V'W' \in R(G_{i-1})$. Observe that as $UW$ disappeared, we have $V'=V$ and exactly one of the two equalities $W=W'$ and $U=U'$ holds. If $U\neq U'$ and $W=W'$, then observe that the path $u-w-v$ was counted in \pref{it:case6} (with $U'$ playing the role of $Z$), while if $U=U'$ and $W\neq W'$, it was counted in \pref{it:case5}. 
 \end{itemize}
\end{itemize}

Now it only remains to show that $w$ is counted in only one item of our algorithm. For this, observe that there is a unique $i$ such that the edge $uw$ is included in a black edge $UW\in \bb$.
We are done as every path $u-w-v$ is only counted in our algorithm at the step $i$ when $UW$ disappears (if it disappears before $VW$). 
\end{proof}

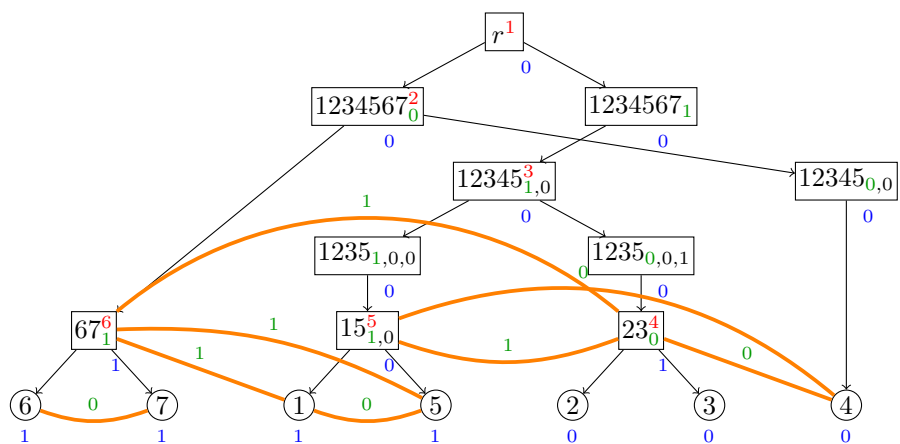
\begin{figure}[h!]
  \centering
  \begin{tikzpicture}[scale=0.9]
      \foreach \i/\l/\p in {1/6/1,2/7/1,3/1/1,4/5/1,5/2/0,6/3/0,7/4/0}{
        \node[draw,circle,inner sep=0.04cm,minimum size=0.4cm, label={[label distance=0.001cm]below:$_{\textcolor{blue}{\p}}$}] (l\l) at (\i*2,0) {$\l$} ;
      }
      \foreach \i/\j/\l/\n/\p in {1.5/1.1/$67_{\textcolor{green!60!black}{1}}^{\textcolor{red}{6}}$/1/1,3.5/1.1/$15_{\textcolor{green!60!black}{1},0}^{\textcolor{red}{5}}$/2/0,5.5/1.1/$23_{\textcolor{green!60!black}{0}}^{\textcolor{red}{4}}$/3/1, 3.5/2.2/$1235_{\textcolor{green!60!black}{1},0,0}$/4/0, 5.5/2.2/$1235_{\textcolor{green!60!black}{0},0,1}$/5/0, 4.5/3.3/$12345_{\textcolor{green!60!black}{1},0}^{\textcolor{red}{3}}$/6/0, 7/3.3/$12345_{\textcolor{green!60!black}{0},0}$/7/0, 3.5/4.4/$1234567_{\textcolor{green!60!black}{0}}^{\textcolor{red}{2}}$/8/0, 5.5/4.4/$1234567_{\textcolor{green!60!black}{1}}$/9/0, 4.5/5.5/$r^{\textcolor{red}{1}}$/10/0}{
        \node[draw,rectangle,inner sep=0.04cm,minimum size=0.5cm,label={[label distance=0.001cm]-70:$_{\textcolor{blue}{\p}}$}] (v\n) at (\i*2,\j) {\l} ;
      }
     \foreach \i/\j in {6/1,7/1, 1/2,5/2, 2/3,3/3, 4/7}{
       \draw[<-] (l\i) -- (v\j) ;
     }
     \foreach \i/\j in {2/4,3/5,4/6,5/6,1/8,7/8,8/10,6/9,9/10}{
       \draw[<-] (v\i) -- (v\j) ;
     }     
     
     \foreach \i/\j in {6/7,1/5}{
       \draw[line width=0.05cm,orange] (l\i) to [bend right=20] node[midway, below, label={above=1mm:$_{\textcolor{green!60!black}{0}}$}]{} (l\j) ;
     }
     
     \foreach \i/\j/\l/\e in {1/5/1/15,1/1/1/0,3/4/0/0}{
       \draw[line width=0.05cm,orange] (v\i) to [bend left=\e] node[midway, below, label={ $_{\textcolor{green!60!black}{\l}}$}]{} (l\j) ;
     }

     \foreach \i/\j/\l/\d/\e in {1/3/1/left/40,2/3/1/right/20}{
       \draw[line width=0.05cm,orange] (v\i) to [bend \d=\e] node[midway, below, label={ $_{\textcolor{green!60!black}{\l}}$}]{} (v\j) ;
     }     
     \draw[line width=0.05cm,orange] (v2) to [bend left] node[pos=0.4, below, label={ $_{\textcolor{green!60!black}{0}}$}]{} (l4) ;
     
  \end{tikzpicture}
  \caption{The result of the algorithm from \cref{lem: partial-edges} applied to the twin-decomposition of \cref{fig:tree}. The blue bits are the values of the $\alpha_U$s and the orange labeled edges represent the set $\bb_1$. One can check that the sums $S_{uv}$ give the adjacencies of $G^{[2]}$ pictured in \cref{fig:G2}.}
\label{fig:B1}
\end{figure}

\paragraph*{Computation of the set $\mathcal{B}'$ of transversal edges of $\mathcal{T'}$}

In the previous sections, we built a tree $\tree'$ together with a set of labeled $0,1$-edges~$\bb_1$ whose vertices are labeled by Booleans $\alpha_U$ for every $U\in V(\tree')$, such that for every two vertices $u,v\in V(G)$, $uv$ form an edge of $G^{[2]}$ if and only if the sum of the labels of the edges of $\bb_1$ joining the branch having $\sg{u}$ and the branch having $\sg{v}$ as leaves plus the sum of the labels $\alpha_W$ of the internal nodes $W$ that are common ancestors of $\sg{u}$ and $\sg{v}$ in $\tree'$ is odd.
We denote by $G_n^{[2]}, \ldots, G_1^{[2]}$ the contraction sequence of $G^{[2]}$ that corresponds to the tree $\mathcal{T'}$.

We now compute a set of transversal edges $\bb'$ such that $(\tree', \bb')$ is a twin-decomposition of $G^{[2]}$.
We proceed in two steps. 
\begin{enumerate}[$(1)$]
\item First, we visit the nodes of $\tree'$ by decreasing values of their labels (i.e., from bottom to top), and try to lift up the highest we can the information they bear. This way, we modify $\bb_1$, by removing the edges which were ``too high'' in $\tree'$ to be lifted up, and construct little by little another set of labeled transversal edges $\bb_2$ of $\tree'$ corresponding to the edges we ``blocked'', i.e., that could not be pushed.
  We will prove that these edges correspond exactly to the black and non-edges of $G^{[2]}_i$ that disappear in $G^{[2]}_{i-1}$.
  Thus they correspond to the only possible place for the edges of $\bb'$.
  It remains to decide their status in $G^{[2]}_i$ (i.e., whether they correspond to edges or non-edges in $G^{[2]}$).
\item In the last step, we visit the nodes of $\tree'$ this time by increasing values of their labels (i.e., from top to bottom).
  The goal is to communicate the contribution of the labels $\alpha_U$ and of the edges that were not removed yet from $\bb_1$ to the lower strata of the tree $\tree'$, in order to decide the status of each edge of $\bb_2$.
\end{enumerate}
Again we will only add or remove a constant number of edges from $\bb_1$ and $\bb_2$ at each step of (1) and (2), which ensures that both these sets always have linear size in $n$, and that our total computation time is also linear.

\begin{lemma}
\label{lemma: b1tob2}
One can dynamically compute bits $\alpha'_U$ for each $U\in \mathcal{T'}$ together with a set $\bb_2$ of labeled edges and remove edges of $\bb_1$ 
such that we eventually have the following properties:
 \begin{itemize}
  \item For every pair of distinct vertices $u,v\in V(G)$, the sum $S_{uv}$ is equal to the sum $S'_{uv}$ defined by:
  $$S'_{uv}:= \sum_{\substack{W\in V(\tree')\\
W\prec_{\tree'} \sg{u} \\
           W\prec_{\tree'} \sg{v}}}\alpha_W
+\sum_{\substack{UV\in \mathcal{B}_1 \\
           U\prec_{\tree'} \sg{u} \\
           V\prec_{\tree'} \sg{v}}}\beta_{UV}
+\sum_{\substack{UV\in \bb_2 \\
           U\prec_{\tree'} \sg{u} \\
           V\prec_{\tree'} \sg{v}}}\beta_{UV}.$$
  \item For every edge $e=UV\in \bb_2$ between $U,V\in V(G_i)$, $UV\notin R(G^{[2]}_i)$. Moreover, at step $i-1$, either $U$ and $V$ are merged together, or if $U'$ and $V'$ denote the disjoint sets of $V(G_{i-1})$ that respectively include $U$ and $V$, we have $U'V'\in R(G^{[2]}_{i-1})$.
  \item For every remaining edge $e=UV\in \bb_1$, we must have $UV\in R(G^{[2]}_i)$.
  \item For every $U,V,Z\in V(\mathcal{T'})$ such that $U$ and $V$ are the two children of $Z$ in $\mathcal{T'}$, if $\ell(Z)=i-1$ then $\alpha'_Z=1$ if and only if $UV\in R(G_i)$.
 \end{itemize}
Furthermore, in the end, $\bb_1$ and $\bb_2$ have size at most $\Oo(d^24^dn)$.
\end{lemma}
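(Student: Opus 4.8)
The lemma corresponds to the first of the two steps outlined above (the bottom-up one); the second (top-down) step is carried out by the next statement, so here I only need the bottom-up pass. The plan is to sweep $\tree'$ from the leaves to the root, processing internal nodes by decreasing label, and to maintain throughout the invariant that for every pair $u\neq v\in V(G)$ the quantity
\[
\sum_{\substack{W\in V(\tree')\\ W\prec_{\tree'}\sg u\\ W\prec_{\tree'}\sg v}}\alpha_W\;+\;\sum_{\substack{UV\in\bb_1\\ U\prec_{\tree'}\sg u\\ V\prec_{\tree'}\sg v}}\beta_{UV}\;+\;\sum_{\substack{UV\in\bb_2\\ U\prec_{\tree'}\sg u\\ V\prec_{\tree'}\sg v}}\beta_{UV},
\]
computed with the \emph{current} $\bb_1,\bb_2$ and the \emph{unchanged} $\alpha$'s, stays equal to the value $S_{uv}$ produced by \cref{lem: partial-edges}. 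As $\bb_2=\emptyset$ at the start, this is exactly \cref{prop:b1}, and when the sweep ends the invariant is precisely the first bullet (the terminal configuration being what the lemma calls $S'_{uv}$). It therefore suffices to describe the update at a single internal node and to check that it is \emph{sum-preserving}, that it touches only $\Oo_d(1)$ edges, and that it leaves $\bb_1,\bb_2$ in the claimed shape.

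When the node $Z$ being processed has children $U,V$ with $\ell(Z)=i-1$, so that $U,V\in V(G^{[2]}_i)$ are identified into $Z$ in $G^{[2]}_{i-1}$, I first read off from the pointers $q^i$ of \cref{lem: pointers} and the red-neighbour lists of \cref{lem: listeadj} whether $U$ and $V$ are pieces of the same original vertex (a refinement-coarsening), of two original vertices that merge (a genuine identification), or whether $Z$ is a binarisation node; I set $\alpha'_Z:=1$ exactly in the genuine-identification case where the two original vertices are joined by a red edge of $G_i$, and $\alpha'_Z:=0$ otherwise, which is the last bullet by definition. Next I dispose of the edge on $\{U,V\}$: if $UV\in\bb_1$ I move it, with its label, into $\bb_2$, since it can never be lifted higher once $U,V$ are merged. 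Finally, for every third node $W$ alive at border $i$ I \emph{lift} whenever I can: if $\bb_1$ contains both $UW$ and $VW$ with the same label $\beta$, I delete them and insert the single edge $ZW$ with label $\beta$ (sum-preserving, since every leaf below $Z$ lies below $U$ or below $V$); if the two labels differ, or only one of the two edges exists, I leave the edge(s) in $\bb_1$. Diagonal edges incident to $U$ or $V$ coming from the red-neighbour cases \pref{it:case5}--\pref{it:case6} of \cref{lem: partial-edges} are treated by the symmetric rule, and parallel edges accidentally created between the same two endpoints of $\tree'$ are collapsed by summing labels. Since $U$ and $V$ have only $\Oo_d(1)$ border-neighbours and $\Oo_d(1)$ incident $\bb_1$-edges (by the local shape of $\bb_1$ from \cref{lem: partial-edges}), each step examines and modifies $\Oo_d(1)$ edges; summed over the $\Oo_d(n)$ nodes of $\tree'$ this gives total time $\Oo_d(n)$ and keeps $|\bb_1|,|\bb_2|=\Oo(d^2 4^d n)$, the size bound being inherited from \cref{lem: partial-edges} as the sweep never increases the total edge count beyond a constant factor.

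The remaining bullets are read off from the terminal configuration. Sum-preservation of each move (a lift, the demotion of $UV$, or a label-merge of parallel edges) gives $S_{uv}=S'_{uv}$. An edge $UV$ survives in $\bb_1$ to the end only if no lift ever applied to it, which, by induction down the tree and using that the label of an edge out of $U$ in \cref{lem: partial-edges} records the parity of the common neighbours routed through a fixed red neighbour of $U$, happens exactly when the adjacency profiles of the leaves below $U$ and below $V$ disagree, i.e.\ exactly when $UV\in R(G^{[2]}_i)$: this is the third bullet. Dually, an edge enters $\bb_2$ only at the moment the pair it sits on is identified in $G^{[2]}$ or becomes red, so every $UV\in\bb_2$ satisfies $UV\notin R(G^{[2]}_i)$ together with the ``merged-or-turned-red in $G^{[2]}_{i-1}$'' alternative (the second bullet), and its label equals the common $G^{[2]}$-adjacency value of all leaf pairs it controls — which is precisely what the top-down step will consume.

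The main obstacle is the bookkeeping case analysis: one must run the lift-or-block dichotomy uniformly over the six kinds of $\bb_1$-edge created in \cref{lem: partial-edges} and over the possible natures of the merge at $Z$ (a genuine identification $P\cup Q$ — further split by whether $PQ$ is black, red, or absent in $G_i$ — versus a refinement-coarsening of two pieces of one original vertex, versus a binarisation node), and verify in every combination both that the move is sum-preserving and that the resulting membership in $\bb_1$ (respectively $\bb_2$) coincides with ``red edge of $G^{[2]}_i$'' (respectively ``black or non-edge of $G^{[2]}_i$ disappearing in $G^{[2]}_{i-1}$''). The genuinely delicate point is picking the $\alpha'_Z$ bits so that in the companion top-down step the status of each $\bb_2$-edge is recoverable from the surviving $\bb_1$-edges, the $\alpha$'s, and the $\alpha'$'s alone; the $\Oo_d(1)$-edges-per-step accounting, the size and running-time bounds, and the sum identity are then routine.
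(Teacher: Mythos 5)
Your high-level plan (bottom-up sweep over $\tree'$, a sum-preserving lift-or-block move at each node, $\Oo_d(1)$ work per step) matches the paper's, but the decision rule you use for lift-versus-block is not the paper's and, as stated, does not work. You decide whether to lift $UW,VW$ into a single edge $ZW$ by comparing the two $\bb_1$-labels, and when they differ (or only one edge exists) you ``leave the edge(s) in $\bb_1$.'' This breaks the second and third bullets: when the labels disagree and $W$ was not already a red neighbour, the pair $UW,VW$ is precisely a pair of black/non-edges of $G^{[2]}_i$ that \emph{disappears} into a red edge $ZW$ of $G^{[2]}_{i-1}$, so both edges must be moved into $\bb_2$ (and the new red adjacency must be recorded); leaving them in $\bb_1$ leaves non-red edges in $\bb_1$ and deprives $\bb_2$ of exactly the positions the final twin-decomposition $\bb'$ is extracted from. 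Symmetrically, you never add the label-$0$ placeholder edges to $\bb_2$ for \emph{non}-edges that disappear (the paper's cases (i)--(iv) all do this), and the top-down step of \cref{lem: b2tob} cannot later decide the status of an edge of $G^{[2]}$ whose position was never marked.

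The missing ingredient is the paper's dynamically maintained lists $J^i_U$, which record the red neighbourhood of each $U$ in $G^{[2]}_i$ and are proved correct by induction (\cref{clm: vrairouge}). All case distinctions are driven by membership in $J^i_U\cup J^i_V$, not by label comparison alone: one must separate (a) $W$ already red to $U$ or $V$ (block only the other side), (b) $W$ non-red with agreeing information (lift), and (c) $W$ non-red with disagreeing information (block both, append $W$ to $J^{i-1}_Z$ and $Z$ to $J^{i-1}_W$). The reason label comparison happens to coincide with sum comparison in case (b)/(c) is itself a consequence of the invariant that, once $W\notin J^i_U$, no $\bb_1$-edge remains strictly below $U$ and $W$ --- a fact you would need to prove and do not. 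The same lists give the $\Oo(d^22^d)$ bound on the number of edges touched per step and the correct definition of $\alpha'_Z$ (set to $1$ iff $U\in J^i_V$). You explicitly defer ``the bookkeeping case analysis'' and the choice of the $\alpha'_Z$ bits, but these, together with the correctness of the $J^i_U$ lists, \emph{are} the proof; what you have written is an outline of the strategy rather than an argument that establishes the four bullets.
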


\begin{proof}
  We will keep, at step $i$, $n-i+1$ lists $J^i_U$ for each node $U\in \bb_i(\tree')=V(G_i^{[2]})$, updated in time $\Oo(1)$ at each step.
  We will see that the list $J^i_U$ corresponds exactly to the red neighborhood of $U$ in $G^{[2]}_i$. 
 
 We start ($i=n$) with $J^i_{\sg{v}}:=\emptyset$ for each $u\in V(G)$ and $\alpha'_U:=0$ for every $U\in V(\mathcal T')$.
 Now let $i\in [n]$ and assume that the contraction between $G^{[2]}_i$ and $G^{[2]}_{i-1}$ is described by the triple $(U,V,Z)$ where $U,V\in V(G^{[2]}_i)$ and $Z\in V(G^{[2]}_{i-1})$. We also assume that the lists $J^i_U$ were already computed for every $U\in V(G^{[2]}_i)$.
 Again, we keep the convention for $\bb_2$ that if at some point, there are multiple edges of $\bb_2$ between two nodes $U,V\in V(\tree')$, then we replace them by a single edge whose label is the sum modulo $2$ of all their labels.
 At step $i$ of our algorithm we do the following:
 For every $W\in V(G^{[2]})\backslash\sg{U,V}$, we do not touch to the list $J^i_W$. We create the list $J^i_Z:=J^i_U\cup J^i_V\backslash \sg{U,V}$. These lists may be modified in what follows.
 \begin{enumerate}[$(i)$]
 \item\label{it2:case1} If $V\notin J_U^i$, and if $e=UV\in \bb_1$ with label $\beta_e$, then we remove $e$ from $\bb_1$ and add it in~$\bb_2$.
   Similarly, if $V\notin J^i_U$ and there is no edge between $U$ and $V$ in $\bb_1$, we add an edge $e=UV$ labeled with $\beta_e:=0$ in $\bb_2$. Intuitively, we mark there the position of every edge or non-edge of $G^{[2]}_i$ that disappears in $G^{[2]}_{i-1}$ because its two endpoints are merged together.
 \item\label{it2:case2} For every $W\in V(G^{[2]}_i)\backslash \sg{U,V}$ such that $W\notin J^i_U \cup J^i_V$, if $e_1:=UW$ and $e_2:=VW$ are in $\bb_1$ and $\beta_{e_1}=\beta_{e_2}$, then we remove $e_1$ and $e_2$ from $\bb_1$ and add the edge $e:=ZW$ in $\bb_2$ with label $\beta_e:=\beta_{e_1}$. Similarly, if $W\notin J^i_U \cup J^i_V$ and only one of the edges between $e_1$ and $e_2$ is in $\bb_1$, say $e_1$ with label $\beta_{e_1}=0$, then we remove it from $\bb_1$ and add in $\bb_2$ the edge $e:=ZW$ with label $\beta_e:=0$.
   Intuitively this corresponds to the case when the partial information we kept until now of the relations between $U$ and $W$ and between $V$ and $W$ is the same, so we ``lift it up'' in $\tree'$.
  \item\label{it2:case3} For every $W\in V(G^{[2]}_i)\backslash \sg{U,V}$ such that $W\notin J^i_U \cup J^i_V$, $e_1=UW\in \bb_1$ and $e_2=VW\in \bb_1$ have different labels $\beta_{e_1}\neq \beta_{e_2}$, we remove $e_1$ and $e_2$ from $\bb_1$ and add them in $\bb_2$. Moreover we add $W$ to the list $J^{i-1}_Z$ and reciprocally we add $Z$ to the list $J^{i-1}_W$. Similarly if we have $W\notin J^i_U \cup J^i_V$ and if $e_1=UW \in \bb_1$ but there is no edge between $V$ and $W$ in $\bb_1$, then we remove $e_1$ from $\bb_1$ and add in $\bb_2$ the edges $e_1$ and $e_2:=VW$ with label $\beta_{e_2}:=0$. We also apply the same changes to $J^{i-1}_W$ and $J^{i-1}_Z$. Intuitively this is where we mark the position of every edge or non-edge of $G^{[2]}_i$ that disappears to create a red edge where there was no error before.
  \item\label{it2:case4} For every $W\in V(G^{[2]}_i)\backslash \sg{U,V}$ such that $W\in J^i_U$ and $W\notin J^i_V$, if the edge $e=VW$ is in $\bb_1$, then we remove it from $\bb_1$ and add it in $\bb_2$. Similarly if there is no edge in $\bb_1$ between $V$ and $W$, we add in $\bb_2$ the edge $e:=VW$ labeled by $\beta_e:=0$. In both cases, we also add $Z$ in the list $J^{i-1}_W$.
  Intuitively this is where we mark the position of every edge or non-edge of $G^{[2]}_i$ that disappears because it is merged with some previous red edge of $G^{[2]}_i$.
  \item\label{it2:case5} If $U\in J_V^i$ then we set $\alpha'_Z:=1$.
 \end{enumerate}

That for every $u,v\in V(G), S_{uv}=S'_{uv}$ is immediate, as we only moved edges from $\bb_1$ to $\bb_2$, and merged some of the edges of $\bb_1$ in a way that does not change the value of $S_{uv}$.

However, it is not clear yet that the lists $J^i_U$ have bounded size at each step.
We show that indeed correspond to the red neighborhood of each vertex of $G^{[2]}_i$.
\begin{claim}
\label{clm: vrairouge}
 The list $J^i_U$ is exactly the red neighborhood of $U$ in $G^{[2]}_i$.
\end{claim}
\begin{proofofclaim}
 We show the claim by induction over decreasing values of $i\in [n]$. If $i=n$, then the claim is immediate, as there is no red edge in the partition of $G^{[2]}$ into singletons, and we initialized the lists $J^i_U$ to be empty.
 
 Assume now that the claim holds for some $i \in [2,n]$, and that the contraction between $G^{[2]}_i$ and $G^{[2]}_{i-1}$ is described by the triple $(U,V,Z)$ with $U,V\in V(G^{[2]}_i)$ and $Z\in V(G^{[2]}_{i-1})$.
 Then a vertex $W\in V(G^{[2]}_i)\backslash\sg{U,V}$ is a red neighbor of $Z$ in $G^{[2]}_{i-1}$ if and only if it was already a red neighbor of $U$ or $V$ in $G^{[2]}_i$, or if it has not the same relation with $U$ and $V$ in $G^{[2]}_i$. In the first case, $Z$ is indeed in $J^{i-1}_W$ thanks to~\pref{it2:case4}.
 Assume now without loss of generality that $UW\in E(G^{[2]}_i)$ while $VW\notin E(G^{[2]}_i)$. By~\cref{lem: partial-edges}, this means that if we take any $u\in U, v\in V, W\in W$, we have: $1=S'_{uw}\neq S'_{vw}=0$. Now observe that every common ancestor of $u$ and $w$ in $\tree'$ is also a common ancestor of $v$ and $w$, and vice versa.
 Hence we must have: 
 $$\sum_{\substack{U'W'\in \bb_1 \\
           U\prec U'\prec_{\tree'} \sg{u} \\
           W'\prec W'\prec_{\tree'} \sg{w}}}\beta_{U'W'}
+\sum_{\substack{U'W'\in \bb_2 \\
           U\prec U'\prec_{\tree'} \sg{u} \\
           W'\prec W'\prec_{\tree'} \sg{w}}}\beta_{U'W'}
           \neq 
\sum_{\substack{V'W'\in \bb_1 \\
           V\prec V'\prec_{\tree'} \sg{v} \\
           W'\prec W'\prec_{\tree'} \sg{w}}}\beta_{V'W'}
+\sum_{\substack{U'W'\in \bb_2 \\
           V\prec V'\prec_{\tree'} \sg{v} \\
           W'\prec W'\prec_{\tree'} \sg{w}}}\beta_{V'W'}.
$$

It is easy to see in the algorithm that as $W\notin J^i_U$, then there is no edge of $\bb_1$ between the subtree of $\tree'$ rooted in $W$ and the one rooted in $U$.
The same holds between $W$ and $V$.
Hence in the above inequality, the sums over edges of $\bb_1$ vanish.
One can also observe that the only edge in $\bb_2$ that may exist between the subtree rooted in $W$ and the one rooted in $U$ is $e_1:=UW$.
The same way, the only edge in $\bb_2$ between the subtree rooted in $W$ and the one rooted in $V$ is $e_1:=VW$.
Hence at least one of these edges must have been created in the algorithm.
It could not be in \pref{it2:case1} or \pref{it2:case4}, as we are not in these cases, so it must have been in \pref{it2:case4}.
Thus \pref{it2:case3} was executed according to $W$, so the lists $J_Z^{i-1}$ and $J_W^{i-1}$ have been correctly updated.
\end{proofofclaim}

By~\cref{clm: vrairouge}, we observe that the creation of an edge in $\bb_2$ corresponds exactly to the disappearance of the edges or non-edges of $G^{[2]}_i$, which proves the second item of the lemma. 
Moreover, observe that in the algorithm, every edge of $e=UV \in \bb_1$ such that $V\notin J^i_U$ is removed at one point from $\bb_1$, so by~\cref{clm: vrairouge} we are also done with the third item of the lemma.

Now by~\cref{clm: vrairouge} and the fact that $\tree'$ represents a contraction sequence as described in~\cref{lem: refinement}, we get that the size of each $L^i_U$ is $\Oo(d^22^d)$.
Hence, each vertex $U\in V(G^{[i]})$ has at most $\Oo(d^22^d)$ edges in $\bb_2$ linking it to another node of $\tree'$ with a higher label.
Hence at each step, $|\bb_1|+|\bb_2|$ increases by at most $\Oo(d^22^d)$, and eventually $|\bb_1|+|\bb_2|$ has the same order than at the start, i.e., $|\bb_1|+|\bb_2|=\Oo(d^24^dn)$.
The same way, we get that the execution time at each step of our algorithm is $\Oo(d^22^d)$.
So the total running time of the procedure is $\Oo(d^22^dn)$.
\end{proof}

Now we have everything in hand to conclude.
The set $\bb_2$ indicates the locations of the edges and non-edges of the graphs $G_i$ when they ``disappear.''
It only remains to give them the appropriate label: $1$ if it corresponds to an edge, and $0$ is it corresponds to a non-edge.
By construction, no edge of $\bb_1$ are placed ``below'' an edge of $\bb_2$, so we only need to communicate the information of the edges and the bits of the nodes that are ``above'' them.

\begin{lemma}
 \label{lem: b2tob}
 Given $\tree',\bb_1,\bb_2$ computed as before, we can construct in time $\Oo(d^24^dn)$ a set $\bb'$ of size $\Oo(d^22^dn)$ such that $(\tree', \bb')$ is a twin-decomposition of $G^{[2]}$ associated to the contraction sequence described in~\cref{lem: refinement}.
\end{lemma}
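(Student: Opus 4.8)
The plan is to read off $\bb'$ directly from $\bb_2$. The guiding observation is the following: by the second bullet of~\cref{lemma: b1tob2}, every $e=UV\in\bb_2$ sits at a step $i$ with $U,V\in V(G^{[2]}_i)$, $UV\notin R(G^{[2]}_i)$, and such that either $U$ and $V$ are merged in $G^{[2]}_{i-1}$ or their images $U',V'$ satisfy $U'V'\in R(G^{[2]}_{i-1})$; these are exactly the positions at which a black edge or a non-edge of some $G^{[2]}_i$ \emph{disappears}, hence the only admissible positions for the transversal edges of a lifted-up twin-decomposition of $G^{[2]}$ along $\tree'$. Since $UV\notin R(G^{[2]}_i)$, the adjacency in $G^{[2]}$ between the leaves below $U$ and the leaves below $V$ is uniform, and by~\cref{lem: partial-edges} combined with the first bullet of~\cref{lemma: b1tob2} this common Boolean equals the parity of $S'_{uv}$ for \emph{any} $u\leq U$, $v\leq V$; in particular it does not depend on the choice of $u,v$, and we call it $\chi(e)$. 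We then set $\bb':=\sg{e\in\bb_2:\chi(e)=1}$, each retained edge being interpreted, as usual, as a complete biclique of $G^{[2]}$.

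To see that $(\tree',\bb')$ is a twin-decomposition of $G^{[2]}$ associated to the contraction sequence of~\cref{lem: refinement}, one checks three things: $\bb'$ partitions $E(G^{[2]})$, because every edge of $G^{[2]}$ disappears at a single step, pinpointed by a single edge of $\bb_2$; no edge of $\bb'$ crosses a border, by our choice of positions (which makes $\bb'$ lifted up); and the width equals that of the contraction sequence carried by $\tree'$, namely $\Oo(d^2 2^d)$ by~\cref{lem: refinement}. Since $\tree'$ is a binary tree with $n$ leaves, $|V(\tree')|=\Oo(n)$, and the degeneracy bound for $\bb$-sets of twin-decompositions then gives $|\bb'|\leq (D+1)\,|V(\tree')|=\Oo(d^2 2^d n)$, with $D:=(d^2+d+1)2^{d+1}-1$.

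To compute $\chi$, and hence $\bb'$, within the claimed time bound, I would run a single top-down sweep of $\tree'$ (visiting its nodes by increasing label), dual to the bottom-up sweeps of~\cref{lem: partial-edges} and~\cref{lemma: b1tob2}. The sweep forwards, from each node to its two children, a constant amount of parity information — essentially a running $\mathbb F_2$-sum of the bits $\alpha_W$ and of the labels $\beta_e$ of the $\bb_1\cup\bb_2$-edges encountered along the way — arranged so that, once a $\bb_2$-edge $UV$ has both of its endpoints processed, $\chi(UV)=S'_{u^*v^*}$ for a representative leaf pair $u^*\leq U$, $v^*\leq V$ is read off in time $\Oo_d(1)$ from the information at $U$, at $V$, and at their least common ancestor $Z$ in $\tree'$; we add $UV$ to $\bb'$ exactly when this value is $1$. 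Since each node has $\Oo(d^2 2^d)$ out-edges in each of $\bb_1$ and $\bb_2$ (by the width bound $D$), each step costs $\Oo(d^2 4^d)$, and the whole sweep runs in time $\Oo(d^2 4^d n)$.

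The hard part is the correctness of this sweep: one must prove that for a $\bb_2$-edge $UV$ the parity $S'_{u^*v^*}$ is genuinely a bounded function of the data stored at and above $Z$, i.e.\ that no $\mathbb F_2$-contribution to $S'$ is hidden ``inside the rectangle'' determined by $U$ and $V$. This is exactly where the structural guarantees of~\cref{lemma: b1tob2} are used: the surviving edges of $\bb_1$ sit only at red-edge positions of the $G^{[2]}_i$'s, so by the ``no $\bb_1$-edge below a $\bb_2$-edge'' property none of them crosses inside that rectangle, while the $\bb_2$-edges are placed precisely where the adjacencies of the $G^{[2]}_i$'s stabilize, which — by an induction on decreasing step index mirroring~\cref{clm: vrairouge} — lets one locate the boundedly many $\bb_2$-edges that do cross inside and fold their labels into $\alpha_Z$. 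Once this is established, chaining~\cref{lem: pointers,lem: partial-edges,lemma: b1tob2} with the present construction yields a twin-decomposition of $G^{[2]}$ of width $\Oo(d^2 2^d)$ computed in time $\Oo(d^2 4^d n)$; replacing the parity bits by $\mathbb F_q$-valued counters and $2^d$ by $q^d$ throughout (as in~\cref{lem: refinement} and~\cref{rem: G2}) then gives the full statement of~\cref{thm: matmult}.
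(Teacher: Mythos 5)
Your proposal is correct and follows essentially the same route as the paper: a single top-down sweep of $\tree'$ that pushes the $\alpha$-bits and the labels of the surviving $\bb_1$-edges down onto the $\bb_2$-positions (which, by the second bullet of~\cref{lemma: b1tob2}, are exactly the disappearance points of black/non-edges of the $G^{[2]}_i$'s), after which one keeps $\bb':=\sg{e\in\bb_2 : \beta_e=1}$; the degeneracy argument for the size bound and the per-step cost analysis also match. The only point to tighten is that a single scalar ``running sum'' per node does not suffice — since a $\bb_1$-edge $ZW$ contributes to $S'_{uv}$ only for $v$ below $W$, the pushed-down information must be kept as actual transversal edges re-attached to the children at each step (as in item $(v)$ of the paper's update rule), which is consistent with, and bounded by, the red-degree bound you invoke.
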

\begin{proof}
 We want each node to communicate its parity bit $\alpha_U$ to its children, and to transfer similarly the information of the edges that are still in $\bb_1$ to the edges of $\bb_2$ placed below them in $\tree'$. 
 
 Hence we now visit the nodes of $\tree'$ by \emph{increasing} values of their labels (i.e., we consider the associated contraction sequence of $G^{[2]}$ starting at its end).
 Assume that the contraction done between $G^{[2]}_i$ and $G^{[2]}_{i-1}$ is described by the triple $(U,V,Z)$ where $U,V\in V(G^{[2]}_i)$ and $Z\in V(G^{[2]}_{i-1})$.
 Again, if at some point there are multiple edges of $\bb_1$ between two nodes, we just replace them by a single one and sum their labels.
 At step $i$ we do the following:
 \begin{enumerate}[$(i)$]
  \item Increment the values $\alpha_U$ and $\alpha_V$ by $\alpha_Z$. 
  \item If $e=UV\in \bb_2$, then increment its label by $\alpha_Z$. If its new label is $0$, remove it from $\bb_2$. Otherwise keep it in $\bb_2$.
  \item If $e=UV\notin \bb_2$ and if $\alpha'_Z=0$, then
  add in $\bb_2$ the edge $e:=UV$ whose label is $\beta_{e}:=\alpha_{Z}$.
  \item If $e=UV\notin \bb_2$ and $\alpha'_Z=1$, then
  add in $\bb_1$ the edge $e:=UV$ whose label is $\beta_{e}:=\alpha_{Z}$.
\item For every $W\in V(G_{i-1})\backslash Z$ such that $e=ZW\in \bb_1$, if $e'=UW\in \bb_2$, increment the label of $e'$ by $\beta_e$. Otherwise if $UW\notin \bb_2$, add the edge $e':=UW$ in $\bb_1$ with label $\beta_{e'}:=\beta_e$.
  Do the same according to $V$.
  Finally remove $e$ from $\bb_1$.
 \end{enumerate}
 
 At step $i$, we keep the property that if there is an edge $e=UV\in \bb_1$ between $U,V\in V(G^{[2]}_i)$, then $UV\in R(G^{[2]}_i)$.
 Thus we ensure that every edge $\bb_1$ will ``encounter enough edges of $\bb_2$ below it to make it disappear''. Hence when the algorithm terminates, we have $\bb_1=\emptyset$.
 Moreover as the red degrees in $ G^{[2]}_i$ are bounded by $\Oo(d^22^d)$, we also have at each step $|\bb_1| = \Oo(d^22^dn)$.
 Similarly the total running time is $\mathcal{O}(d^22^dn)$.
 
 Now observe that for every two distinct vertices $u,v\in V(G)$, the sum $S'_{uv}$ does not change at each step.
 Hence, in the end, this sum is 1 if and only if the (unique) edge $e$ of $\bb_2$ that connects an ancestor of $\sg{u}$ with an ancestor of $\sg{v}$ in $\tree'$ has label 1.
 If we set :$\bb':=\sg{e\in \bb_2, \beta_e=1}$, then by the first item of~\cref{lemma: b1tob2}, $(\tree',\bb')$ is indeed a twin-decomposition of $G^{[2]}$.  
\end{proof}

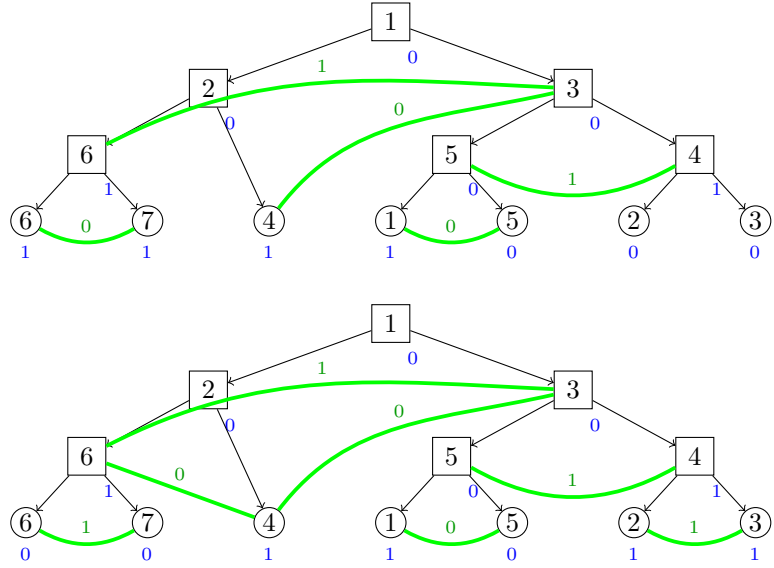
\begin{figure}[h!]
  \centering
  \begin{tikzpicture}[scale=0.8]
  \begin{scope}
   \foreach \i/\l/\p in {1/6/1,2/7/1,3/4/1,4/1/1,5/5/0,6/2/0,7/3/0}{
        \node[draw,circle,inner sep=0.04cm,minimum size=0.4cm, label={[label distance=0.001cm]below:$_{\textcolor{blue}{\p}}$}] (l\l) at (\i*2,0) {$\l$} ;
      }
      \foreach \i/\j/\l/\n/\p in {1.5/1.1/6/1/1, 4.5/1.1/5/2/0, 6.5/1.1/4/3/1, 2.5/2.2/2/4/0,  5.5/2.2/3/5/0, 4/3.3/1/6/0}{
        \node[draw,rectangle,inner sep=0.04cm,minimum size=0.5cm,label={[label distance=0.001cm]-70:$_{\textcolor{blue}{\p}}$}] (v\n) at (\i*2,\j) {\l} ;
      }
     \foreach \i/\j in {6/1,7/1, 4/4, 1/2, 5/2, 2/3, 3/3}{
       \draw[<-] (l\i) -- (v\j) ;
     }
     \foreach \i/\j in {1/4, 4/6, 2/5, 3/5, 5/6}{
       \draw[<-] (v\i) -- (v\j) ;
     }     
     
     \foreach \i/\j in {6/7,1/5}{
        \draw[line width=0.05cm,green] (l\i) to [bend right] node[midway, below, label={above=1mm:$_{\textcolor{green!60!black}{0}}$}]{} (l\j) ;
    }
      
      \foreach \i/\j/\l in {5/4/0}{
        \draw[line width=0.05cm,green] (v\i) to [bend right, out=-10] node[midway, below, label={ $_{\textcolor{green!60!black}{\l}}$}]{} (l\j) ;
      }

      \foreach \i/\j/\l in {5/1/1}{
        \draw[line width=0.05cm,green] (v\i) to [bend right, in=200, out=-10] node[midway, below, label={ $_{\textcolor{green!60!black}{\l}}$}]{} (v\j) ;
      }      

      \foreach \i/\j/\l in {2/3/1}{
        \draw[line width=0.05cm,green] (v\i) to [bend right] node[midway, below, label={ $_{\textcolor{green!60!black}{\l}}$}]{} (v\j) ;
      }      
      
  \end{scope}
  \begin{scope}[yshift=-5cm]
   \foreach \i/\l/\p in {1/6/0,2/7/0,3/4/1,4/1/1,5/5/0,6/2/1,7/3/1}{
        \node[draw,circle,inner sep=0.04cm,minimum size=0.4cm, label={[label distance=0.001cm]below:$_{\textcolor{blue}{\p}}$}] (l\l) at (\i*2,0) {$\l$} ;
      }
      \foreach \i/\j/\l/\n/\p in {1.5/1.1/6/1/1, 4.5/1.1/5/2/0, 6.5/1.1/4/3/1, 2.5/2.2/2/4/0,  5.5/2.2/3/5/0, 4/3.3/1/6/0}{
        \node[draw,rectangle,inner sep=0.04cm,minimum size=0.5cm,label={[label distance=0.001cm]-70:$_{\textcolor{blue}{\p}}$}] (v\n) at (\i*2,\j) {\l} ;
      }
     \foreach \i/\j in {6/1,7/1, 4/4, 1/2, 5/2, 2/3, 3/3}{
       \draw[<-] (l\i) -- (v\j) ;
     }
     \foreach \i/\j in {1/4, 4/6, 2/5, 3/5, 5/6}{
       \draw[<-] (v\i) -- (v\j) ;
     }
     \foreach \i/\j/\l in {6/7/1,1/5/0,2/3/1}{
        \draw[line width=0.05cm,green] (l\i) to [bend right] node[midway, below, label={above=1mm:$_{\textcolor{green!60!black}{\l}}$}]{} (l\j) ;
    }
     
    \draw[line width=0.05cm,green] (v1) to [] node[midway, below, label={ $_{\textcolor{green!60!black}{0}}$}]{} (l4) ;     

      \foreach \i/\j/\l in {5/4/0}{
        \draw[line width=0.05cm,green] (v\i) to [bend right, out=-10] node[midway, below, label={ $_{\textcolor{green!60!black}{\l}}$}]{} (l\j) ;
      }

      \foreach \i/\j/\l in {5/1/1}{
        \draw[line width=0.05cm,green] (v\i) to [bend right, in=200, out=-10] node[midway, below, label={ $_{\textcolor{green!60!black}{\l}}$}]{} (v\j) ;
      }      

      \foreach \i/\j/\l in {2/3/1}{
        \draw[line width=0.05cm,green] (v\i) to [bend right] node[midway, below, label={ $_{\textcolor{green!60!black}{\l}}$}]{} (v\j) ;
      }      
     
  \end{scope}

  \end{tikzpicture}
  \caption{Top: The result of the algorithm from \cref{lemma: b1tob2} applied to the couple $(\mathcal T', \bb_1)$ from \cref{fig:B1}. The green edges represent the set $\bb_2$. For the sake of clarity, we did not add the bits $\alpha'_U$. 
  Bottom: The result of the algorithm from \cref{lem: b2tob} applied to the triple $(\mathcal T', \bb_1, \bb_2)$ to the left.}
\label{fig:B2}
\end{figure}

Now if we put together \cref{lem: refinement,lem: listeadj,lem: pointers,lem: partial-edges,lemma: b1tob2,lem: b2tob} we get the following result which is exactly \cref{thm: matmult} when $q=2$:

\begin{theorem}
 \label{thm: matmult2}
 There exists a $\mathcal{O}(d^24^dn)$-time algorithm that, given a twin-decomposition $(\tree, \bb)$  of width $d$ of a graph $G$ with $n$ vertices, outputs a twin-decomposition of width $\Oo(d^22^d)$ of $G^{[2]}$. 
\end{theorem}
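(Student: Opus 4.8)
The plan is to chain the six dynamic-programming subroutines developed in this section into a single pipeline, carrying along two invariants at every stage: the running-time budget spent so far, and a bound on the width of the twin-decomposition being built. The key observation is that \cref{thm: matmult2} is essentially the conjunction of \cref{lem: refinement,lem: listeadj,lem: pointers,lem: partial-edges,lemma: b1tob2,lem: b2tob}, once one checks that these routines can all be scheduled consistently and that their running times and output sizes sum to the claimed bounds.

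Concretely, I would proceed as follows. First, apply \cref{rem: high} to assume, after an $\Oo(dn)$ preprocessing, that $\bb$ is lifted up, and use \cref{lem: listeadj} to dynamically maintain, in total time $\Oo(dn)$, the red-neighborhood lists $L^i_U$ of the trigraphs $G_i$; these are needed by every later step. Next, using the pointers $q^i[U,p,b_1,\ldots,b_k]$ of \cref{lem: pointers}, construct the ranked tree $\tree'$ of the refined partition sequence $\mathcal P'_n,\ldots,\mathcal P'_1$ in time $\Oo(d2^dn)$; by \cref{lem: refinement} this $\tree'$ completes into a $D$-sequence of $G^{[2]}$ with $D=(d^2+d+1)2^{d+1}-1=\Oo(d^22^d)$, so every trigraph $G^{[2]}_i$ arising along $\tree'$ has red degree at most $D$. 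Then run \cref{lem: partial-edges} to obtain the labeled edge set $\bb_1$ and the bits $\alpha_U$ so that adjacency of leaves $\sg{u},\sg{v}$ in $G^{[2]}$ is the parity of the sum $S_{uv}$ of \cref{prop:b1}, in time and size $\Oo(d^24^dn)$; invoke \cref{lemma: b1tob2} to convert $(\bb_1,\alpha)$ into $(\bb_1,\bb_2,\alpha,\alpha')$ with $S_{uv}=S'_{uv}$ and with $\bb_2$ marking exactly the black edges and non-edges of the $G^{[2]}_i$ that ``disappear'', in time $\Oo(d^22^dn)$; and finally \cref{lem: b2tob} to push the $\alpha_U$'s and the surviving $\bb_1$-edges down $\tree'$, output $\bb':=\sg{e\in\bb_2:\beta_e=1}$ of size $\Oo(d^22^dn)$ in time $\Oo(d^24^dn)$, and certify that $(\tree',\bb')$ is a twin-decomposition of $G^{[2]}$ associated to the contraction sequence of \cref{lem: refinement}. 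Summing, the $\Oo(d^24^dn)$ terms dominate the running time, while the width of $(\tree',\bb')$ is the maximum red degree along $\tree'$, namely $D=\Oo(d^22^d)$.

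The only point beyond bookkeeping — and it has essentially been discharged by the preceding lemmas — is the scheduling: the first four computations (the lists $L^i_U$, the pointers $q^i$, the tree $\tree'$, and $\bb_1$) must be performed concurrently, visiting the nodes of $\tree$ by decreasing label, so that at the step creating a node of $\tree'$ one already has in hand the red lists and the correspondence between that node and the sets $U_{(p,b_1,\ldots,b_k)}$; the last two passes then traverse $\tree'$ by decreasing and by increasing label, respectively. Doing everything on the fly in this way is what keeps every intermediate object of size $\Oo(d^24^dn)$ rather than $\Omega(n^2)$, which is the real constraint. With that in place the proof is just the assembly of the quoted lemmas and an addition of their complexity bounds.
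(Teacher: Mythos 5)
Your proposal is correct and matches the paper's proof, which is precisely the assembly of \cref{lem: refinement,lem: listeadj,lem: pointers,lem: partial-edges,lemma: b1tob2,lem: b2tob} with the same scheduling (the on-the-fly concurrent computation of $\tree'$ and $\bb_1$ while traversing $\tree$ by decreasing label, followed by the two passes over $\tree'$) and the same accounting of running times and of the width bound $D=(d^2+d+1)2^{d+1}-1=\Oo(d^22^d)$.
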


\begin{remark}
 The previous algorithm can also be adapted to compute a twin-decomposition of $G^2$ associated to the refinement of \cref{rem: G2} in the same complexity.
\end{remark}

\paragraph*{The algorithm for general values of $q\geq2$}

As mentioned in the overview of the previous proof, there are only few changes to do in order to prove \cref{thm: matmult} in its full generality.
We now explain which parts have to be modified.

We let $G$ be a graph with edges labeled by a function $\nu:E(G)\to \mathbb F_q$.
For every $l\in \mathbb F_q$ we let $G_l$ be the graph with vertex set $V(G)$ and edge set $\nu^{-1}(l)$.
For every $u\in V(G)$ and $U\subseteq V(G)$, we let $\degr^{(l)}_U(u):= |\sg{v\in U~:~\nu(uv)=l}|$.
Again let $(\mathcal T, \bb)$ be a $d$-sequence for $G$, with associated partition sequence $\mathcal P_n, \ldots, \mathcal P_1$.
The twin-decomposition $(\mathcal{T'}, \bb')$ we want to compute corresponds to the refinement of $\mathcal P_n, \ldots, \mathcal P_1$ where for every $i\in [n], U\in \mathcal P_i$, if $W_1, \ldots, W_k$ denote the at most $d$ red neighbors of $U$ in $G_i$, then 
$\mathcal P'_i$ contains all the following subsets of $U$ which are nonempty:

$$U_{(p,b_1, \ldots, b_k)} := \{x\in U~:~\forall i \in [k], \sum_{l\in \mathbb F_q} \degr^{(l)}_{W_i}(x)\cdot l = b_i $$
$$\text{and}~\sum_{l\in \mathbb F_q}\mathrm{deg}^{(l)}_U(x)\cdot l = p \},$$

for every value of $(p,b_1,\ldots,b_k)\in \mathbb{F}_q^{k+1}$, where for each $n\in \mathbb N$ and each $l\in \mathbb F_q$ we let $n\cdot l$ denote the sum of $l$ with itself $n$ times (which is not the same thing that the multiplication in $\mathbb F_q$). In particular, note that for each $n\in \mathbb N$ and $l\in \mathbb F_q$, $n\cdot l = (n \pmod q) \cdot l$.
As previously, note that each set $U$ is partitioned into at most $q^{d+1}$ subsets $U_{(p,b_1,\ldots,b_k)}$. The reader can check that the proof of \cref{lem: refinement} generalizes, i.e., that we can complete arbitrarily this refinement and obtain a $D$-sequence for $G^{[q]}$ with $D:=(d^2+d+1)q^{d+1}-1$. 
We claim that the proof of \cref{lem: pointers} also generalizes, with the additional factor $m(q)$ in the time complexity, as one needs to make a (constant) number of elementary computations in $\mathbb{F}_q$ at each update of the pointers $q[U,p,b_1,\ldots,b_k]$. The only change is that we must pre-compute the values $s_U:=|U|\pmod q$ for each $U\in V(\mathcal T)$.

We let $\lambda : E(G^{[q]})\to \mathbb F_q$ denote the labeling function of the edges of $G^{[q]}$.
We still need to compute some values $\alpha_U\in \mathbb F_q$ for $U\in V(\mathcal{T'})$ and a
set of labeled transversal edges $\bb_1$ of $\mathcal{T'}$, where for each $e\in \bb_1$, $\beta_e\in \mathbb F_q$ denotes the label of $e$ such that for every $u,v\in V(G^{[q]})$: 
$$\lambda(uv)=\sum_{\substack{W\in V(\tree')\\
W\prec_{\tree'} \sg{u} \\
           W\prec_{\tree'} \sg{v}}}\alpha_W 
+\sum_{\substack{UV\in \mathcal{B}_1 \\
           U\prec_{\tree'} \sg{u} \\
           V\prec_{\tree'} \sg{v}}}\beta_{UV}.
$$

The algorithm described in \cref{lem: partial-edges} can be generalized as desired, and we get a set $\bb_1$ as desired of size $\mathcal O(d^2 q^{2d}n)$.
Its running time becomes $\mathcal O(m(q)d^2q^{2d})$.
Eventually, \cref{lemma: b1tob2,lem: b2tob} generalize in a straightforward way, up to a multiplicative cost $m(q)$ for the time complexity.
Hence the overall complexity is $\Oo(m(q)d^2 q^{2d}n)$ so we are done with \cref{thm: matmult}.


\end{document}